\newenvironment{proof}{{\bf Proof:  }}{\hfill\rule{2mm}{2mm}}
\newenvironment{proofof}[1]{{\bf Proof of #1:  }}{\hfill\rule{2mm}{2mm}}
\numberwithin{figure}{section}
\numberwithin{equation}{section}
\newtheorem{theorem}{Theorem}[section]
\newtheorem{definition}[theorem]{Definition}
\newtheorem{remark}[theorem]{Remark}
\newtheorem{corollary}[theorem]{Corollary}
\newtheorem{lemma}[theorem]{Lemma}
\newtheorem{claim}[theorem]{Claim}
\newtheorem{fact}[theorem]{Fact}
\newtheorem{hypothesis}{Hypothesis}
\newtheorem{example}[theorem]{Example}
\newtheorem{proposition}{\hskip\parindent Proposition}[section]
\newcommand{\defeq}{:=}
\newcommand{\etal}{\emph{et al.}\xspace}
\newcommand{\N}{\mathbb{N}}
\newcommand{\Z}{\mathbb{Z}}
\newcommand{\R}{\mathbb{R}}
\newcommand{\E}{\mathbb{E}}
\newcommand{\W}{\ensuremath{\mathsf{W}}\xspace}
\newcommand{\Wm}{\ensuremath{\mathsf{W}^{-1}}\xspace}
\newcommand{\Wh}{\ensuremath{\mathsf{W}^{\frac{1}{2}}}\xspace}
\newcommand{\Wmh}{\ensuremath{\mathsf{W}^{-\frac{1}{2}}}\xspace}
\newcommand{\Lo}{\ensuremath{\mathsf{L}\xspace}}
\newcommand{\Lc}{\ensuremath{\mathcal{L}}\xspace}
\newcommand{\Mo}{\ensuremath{\mathsf{M}\xspace}}
\newcommand{\Mc}{\ensuremath{\mathcal{M}}\xspace}
\newcommand{\D}{\ensuremath{\mathsf{D}\xspace}}
\newcommand{\Dc}{\ensuremath{\mathcal{D}}\xspace}
\newcommand{\M}{\ensuremath{\mathsf{M}\xspace}}
\newcommand{\I}{\ensuremath{\mathsf{I}\xspace}}
\newcommand{\w}[1]{\ensuremath{w_{#1}}}
\newcommand{\f}[1]{\ensuremath{f_{#1}}}
\newcommand{\g}[1]{\ensuremath{g_{#1}}}
\newcommand{\ray}{\ensuremath{\mathsf{R}}\xspace}
\newcommand{\rayc}{\ensuremath{\mathcal{R}}\xspace}
\newcommand{\ve}{\varepsilon}
\newcommand{\vp}{\varphi}
\newcommand{\T}{\ensuremath{\mathsf{T}}\xspace}
\newcommand{\ra}{\rightarrow}
\newcommand{\spn}{\operatorname{span}}
\newcommand{\eps}{\epsilon}
\newcommand{\ceil}[1]{\ensuremath{\lceil #1 \rceil}}
\newcommand{\floor}[1]{\ensuremath{\lfloor #1 \rfloor}}
\newcommand{\Ito}{It\={o}\xspace}
\newcounter{note}[section]
\renewcommand{\thenote}{\thesection.\arabic{note}}
\newcommand{\initOneLiners}{%
    \setlength{\itemsep}{0pt}
    \setlength{\parsep }{0pt}
    \setlength{\topsep }{0pt}
}
\newcommand{\ignore}[1]{}
\newcommand*\samethanks[1][\value{footnote}]{\footnotemark[#1]}
\newcommand{\shortv}[1]{}
\newcommand{\sdp}{{\sf SDP }}
\newcommand{\sdpval}{{\sf SDPval}}
\newcommand{\opt}{{\sf OPT}}
\newcommand{\OPT}{{\sf OPT}}
\newcommand{\sse}{{\sf SSE}}
\newcommand{\SSEH}{Small-Set Expansion Hypothesis\xspace}
\newcommand{\ratio}{k\log k \log\log k \cdot \sqrt{\log r}}
\newcommand{\subjectto}{\text{subject to}}
\renewcommand{\vec}[1]{{\bm{#1}}}
\newcommand{\paren}[1]{\left(#1 \right )}
\newcommand{\Brac}[1]{\left[#1\right]}
\newcommand{\set}[1]{\left\{#1\right\}}
\newcommand{\Abs}[1]{\left\lvert#1\right\rvert}
\newcommand{\norm}[1]{\left\lVert#1\right\rVert}
\newcommand{\cL}{\mathcal L}
\newcommand{\cN}{\mathcal N}
\newcommand{\Ex}[1]{\E\Brac{#1}}
\newcommand{\bigO}{\mathcal{O}}
\newcommand{\bigo}[1]{\bigO\left(#1\right)}
\newcommand{\tbigO}{\tilde{\mathcal{O}}}
\newcommand{\tbigo}[1]{\tbigO\left(#1\right)}
\newcommand{\argmax}{{\sf argmax}}
\newcommand{\supp}{{\sf supp}}
\newcommand{\U}{\bar{u}}
\newcommand{\V}{\bar{v}}
\newcommand{\dr}{{\sf dr}}
\newcommand{\diam}{{\sf diam }}
\newcommand{\Nin}{N^{\sf in}}
\newcommand{\Nout}{N^{\sf out}}
\newcommand{\phiv}{\phi^{\sf V}}
\newcommand{\linf}{\lambda_{\infty}}
\newcommand{\mvert}{M^{\sf vert}}
\newcommand{\lapvert}{L^{\sf vert}}
\newcommand{\yes}{\textsc{Yes}\xspace}    
\newcommand{\no}{\textsc{No}\xspace}	
\newcommand{\vol}{{\sf vol }}
\newcommand{\one}{\textbf{1}}
\newcommand{\mper}{\,.}
\newcommand{\eig}{\gamma}
\newcommand{\lh}{\eig_2}
\newcommand{\markov}{heat\xspace}
\newcommand{\Markov}{Heat\xspace}
\newcommand{\rmin}{r_{\min}}
\newcommand{\rmax}{r_{\max}}
\newcommand{\eset}{\emptyset}
\newcommand{\normo}[1]{\norm{#1}_{\scriptstyle 1}}
\newcommand{\normb}[1]{\norm{#1}_{\scriptstyle \square}}
\newcommand{\Ind}[1]{\Isymb\Brac{#1}}
\newcommand{\Isymb}{\mathbb{I}}
\newcommand{\inprod}[1]{\left\langle #1\right\rangle}
\newcommand{\tmix}[1]{{\sf t}^{\sf mix}_{\delta}\paren{#1}}
\newtheorem{theorem*}{Theorem}
\newtheorem{SDP}[theorem]{SDP}
\newtheorem{reduction}[theorem]{Reduction}
\newenvironment{mybox}
{\center \noindent\begin{boxedminipage}{1.0\linewidth}}
{\end{boxedminipage}
\noindent
}
\newcommand{\savehyperref}[2]{\texorpdfstring{\hyperref[#1]{#2}}{#2}}
\renewcommand{\vec}[1]{\overset{{}_{\shortrightarrow}}{\mathbf{#1}}}
\newcommand{\hubert}[1]{\refstepcounter{note}$\ll${\sf Hubert's
Comment~\thenote:} {\sf \textcolor{blue}{#1}}$\gg$\marginpar{\tiny\bf HC~\thenote}}
\title{Spectral Properties of Hypergraph Laplacian and Approximation Algorithms
\footnote{A preliminary version of this paper appeared in STOC 2015~\cite{Louis15} and the current paper is the result of
a merge with~\cite{ChanTZ15}.}}
\author{T-H. Hubert Chan\thanks{Department of Computer Science, the University of Hong Kong. 
} 
\and Anand Louis\thanks{Princeton University.
Supported by the Simons Collaboration on Algorithms and Geometry. Part of the work was done while the author was a student at Georgia Tech and supported by Santosh Vempala's NSF award CCF-1217793.
} 
\and Zhihao Gavin Tang\samethanks[2] \and Chenzi Zhang\samethanks[2]}
\date{}
\begin{document}

\begin{titlepage}

\maketitle

\begin{abstract}
The celebrated Cheeger's Inequality \cite{am85,a86}  
establishes a bound on the edge expansion of a graph via its spectrum.
This inequality is central to a 
rich spectral theory of graphs, based on studying the eigenvalues and
eigenvectors of the adjacency matrix (and other related matrices) of graphs. 
It has remained open to define a suitable spectral model for hypergraphs whose 
spectra can be used to estimate various combinatorial properties of the hypergraph.

In this paper we introduce a new hypergraph
Laplacian operator generalizing the Laplacian matrix of graphs.
In particular, the operator is induced by a diffusion process on the hypergraph,
such that within each hyperedge, measure flows from vertices having maximum weighted measure to those having minimum.  
Since the operator is non-linear, we have to exploit other properties of the diffusion process to recover a spectral property concerning the ``second eigenvalue'' of the resulting Laplacian.  Moreover, we show that higher order spectral properties cannot hold in general using the current framework.

We consider a stochastic diffusion process, in which each vertex also experiences Brownian noise from outside the system.  We show a relationship between the second eigenvalue and the convergence behavior of the process.

We show that various hypergraph parameters like multi-way expansion and diameter
can be bounded using this operator's spectral properties.
Since higher order spectral properties do not hold for
the Laplacian operator, we instead use the concept of
procedural minimizers to consider higher order Cheeger-like inequalities.
For any $k \in \N$, we give a polynomial time algorithm to compute 
an $O(\log r)$-approximation to the $k$-th procedural minimizer,
where $r$ is the maximum cardinality of a hyperedge. We show that this approximation factor is optimal under the \sse~ hypothesis (introduced by \cite{rs10}) 
for constant values of $k$.

Moreover, using the factor preserving reduction from vertex expansion in graphs to hypergraph expansion,
we show that all our results for hypergraphs extend to vertex expansion in graphs.

\end{abstract}

\thispagestyle{empty}
\end{titlepage}

\tableofcontents

\section{Introduction}
\label{sec:intro}

There is a rich spectral theory of graphs, based on studying the eigenvalues and
eigenvectors of the adjacency and other related matrices of graphs
\cite{am85,a86,ac88,abs10,lrtv11,lrtv12,lot12}.
We refer the reader to \cite{chung97,mt06} for a comprehensive survey on Spectral Graph Theory.
A fundamental graph parameter is its  expansion or conductance defined for a graph $G = (V,E)$ as:
\[  \phi_G \defeq \min_{S \subset V} \frac{ \Abs{\partial S} }{  \min \set{ \vol(S), \vol(\bar{S}) } },    \]
where by $\vol(S)$ we denote the sum of degrees of the vertices in $S$, and
$\partial S$ is the set of edges in the cut induced by $S$.
Cheeger's inequality \cite{am85,a86}, a central inequality in Spectral Graph Theory, 
establishes a bound on expansion via the spectrum of the graph:
\[  \frac{\lambda_2}{2} \leq \phi_G \leq \sqrt{2 \lambda_2}, \]
where $\lambda_2$ is the second smallest eigenvalue of the normalized 
Laplacian matrix $\cL_G \defeq \W^{-1/2} (\W- A) \W^{-1/2}$,
and $A$ is the adjacency matrix of the graph and $\W$ is the diagonal matrix whose
$(i,i)$-th entry is the degree of vertex $i$.
This theorem and its many (minor) variants have played a major role in the design of
algorithms as well as in understanding the limits of computation \cite{js89,ss96,d07,arv09,abs10}.
We refer the reader to \cite{hlw06} for a comprehensive survey.

Edge expansion can be generalized to edge-weighted hypergraphs.
In a hypergraph $H = (V,E)$, an edge $e \in E$ is a non-empty subset of $V$.
The edges have non-negative weights indicated by 
$w : E \rightarrow \R_+$.
We say that $H$ is an $r$-graph (or $r$-uniform) if every edge contains exactly $r$ vertices.  (Hence,
a normal graph is a 2-graph.)
Each vertex $v \in V$ has weight $w_v := \sum_{e \in E: v \in e} w_e$. A subset $S$ of vertices has
weight $w(S) := \sum_{v \in S} w_v$, and the edges it cuts is $\partial S := \{e \in E : $ $e$
intersects both $S$ and $V \setminus S \}$.
The \emph{edge expansion} of $S \subset V$
is defined as $\phi(S) := \frac{w(\partial S)}{w(S)}$.  The expansion
of $H$ is defined as:

\begin{equation}
\phi_H := \min_{\eset \subsetneq S \subsetneq V} \max\{\phi(S), \phi(V \setminus S)\}.
\label{eq:hyper_exp}
\end{equation}

It has remained open to define a spectral model of hypergraphs, whose spectra 
can be used to estimate hypergraph parameters. 
Hypergraph expansion
and related hypergraph partitioning problems are of 
immense practical importance, having applications in parallel and distributed computing \cite{ca99}, VLSI circuit 
design and computer architecture \cite{kaks99,gglp00}, scientific computing \cite{dbh06} and other areas.
Inspite of this, hypergraph expansion problems haven't been studied as well as their graph counterparts
(see Section~\ref{sec:hyper-related} for a brief survey). 
Spectral graph partitioning
algorithms are widely used in practice for their efficiency and the high quality of solutions that they often
provide \cite{bs94,hl95}. 
Besides being of natural theoretical interest, a spectral theory of hypergraphs might also 
be relevant for practical applications.

The various spectral models for hypergraphs considered in the literature haven't been without shortcomings.
An important reason for this is that there is no canonical matrix representation 
of hypergraphs.
For an $r$-uniform hypergraph $H=(V,E)$ on the vertex set $V$ and having edge set $E \subseteq {V \choose r}$,
one can define the canonical $r$-tensor form $A$ as follows:
\[ A_{(i_1, \ldots, i_r)} \defeq \begin{cases} 1 & \set{i_1, \ldots, i_r} \in E \\ 0 & \textrm{otherwise}   \end{cases} \mper \]

This tensor form and its minor variants have been explored in the literature
(see Section~\ref{sec:hyper-related} for a brief survey),
but have not been understood very well.
Optimizing over tensors is NP-hard \cite{hl13}; even getting good approximations  
might be intractable \cite{bv09}.
Moreover,
the spectral properties of tensors seem to be unrelated to combinatorial properties
of hypergraphs (See Appendix~\ref{app:hyper-tensor}).

Another way to study a hypergraph, say $H = (V,E)$, is to replace each hyperedge $e \in E$ by a complete $2$-graph
or a low degree expander on the vertices of $e$ to obtain a $2$-graph $G = (V,E')$. If we let $r$ denote the size of the largest hyperedge
in $E$, then it is easy to see that the combinatorial properties  of $G$ and $H$, 
like min-cut, sparsest-cut, among others,  could be separated by a factor of $\Omega(r)$. 
Therefore, this approach will not be useful when $r$ is {\em large}.

In general, one cannot hope to have a linear operator for hypergraphs whose spectra captures hypergraph expansion in 
a Cheeger-like manner. 
This is because the existence of such an operator will imply the existence of a polynomial
time algorithm obtaining a $\bigo{\sqrt{\opt}}$ bound on hypergraph expansion, but we rule this out
by giving a lower bound of $\Omega(\sqrt{\opt\, \log r})$ for computing hypergraph expansion,
where $r$ is the size of the largest hyperedge 
(Theorem~\ref{thm:hyper-expansion-hardness-informal}).

Our main contribution is the definition of a new Laplacian operator for hypergraphs, obtained by generalizing the 
random-walk operator on graphs. 
Our operator does not require the hypergraph to be uniform (i.e. does not require all the hyperedges
to have the same size). 
We describe this operator in Section~\ref{sec:laplacian}   
(see also Figure~\ref{fig:hyper_diffusion}).
We present our main results about this hypergraph operator in Section~\ref{sec:laplacian} and  Section~\ref{sec:cheeger}.
Most of our results are independent of~$r$ (the size of the hyperedges), some of our bounds have a logarithmic
dependence on~$r$, and none of our bounds have a polynomial dependence on~$r$. 
All our bounds are generalizations of the corresponding bounds for $2$-graphs.

\subsection{Related Work}
\label{sec:hyper-related}
Freidman and Wigderson \cite{fw95} studied the canonical tensors of hypergraphs. They bounded the
second eigenvalue of such tensors for hypergraphs drawn randomly from various distributions 
and showed their connections to randomness dispersers. 
Rodriguez \cite{r09} studied the eigenvalues of a graph obtained by replacing each hyperedge 
by a clique (Note that this step incurs a loss of $\bigO(r^2)$, where $r$ is the size of the hyperedge).
Cooper and Dutle \cite{cd12} studied the roots of the characteristic polynomial of hypergraphs
and related it to its chromatic number. 
\cite{hq13,hq14} also studied the canonical tensor form of the hypergraph and related
its eigenvectors to some configured components of that hypergraph.
Lenz and Mubayi \cite{lm12,lm15,lm13b} related the eigenvector corresponding to the second largest eigenvalue of the canonical
tensor to hypergraph quasi-randomness.
Chung \cite{c93} defined a notion of Laplacian for hypergraphs and studied the relationship
between its eigenvalues and a very different notion of hypergraph cuts and homologies. 
\cite{prt12,pr12,p13,kkl14,skm14} studied the relation of simplicial complexes with
rather different notions of Laplacian forms,
and considered isoperimetric inequalities, homologies and mixing times.
Ene and Nguyen \cite{en14} studied the hypergraph multiway partition problem (generalizing the graph multiway partition problem)
and gave a $\frac{4}{3}$-approximation algorithm for 3-uniform hypergraphs.
Concurrent to this work, 
\cite{lm14b} gave approximation algorithms for hypergraph expansion, and more generally, hypergraph small set
expansion; they gave an $\tbigo{k\sqrt{\log n}}$-approximation algorithm and an $\tbigo{k\sqrt{\OPT\, \log r}}$
approximation bound for the problem of computing the set of vertices of size at most $\Abs{V}/k$ in a hypergraph 
$H = (V,E)$,  having the least expansion.

Bobkov, Houdr\'e and Tetali \cite{bht00} defined a Poincair\'e-type functional graph parameter called 
$\linf$ and showed that it relates to the vertex expansion of a graph in a Cheeger-like manner, i.e.
it satisfies $\frac{\linf}{2} \leq \phiv = \bigo{\sqrt{\linf}}$ where $\phiv$ is the vertex expansion of the 
graph (see Section~\ref{sec:vertex-results} for the definition of vertex expansion of a graph).
\cite{lrv13} gave an $\bigo{\sqrt{\OPT \log d}}$-approximation bound for computing the vertex
expansion in graphs having the largest vertex degree $d$. 
Feige \etal \cite{fhl08} gave an $\bigo{\sqrt{\log n}}$-approximation algorithm for computing the vertex expansion
of graphs (having arbitrary vertex degrees).

Peres \etal \cite{pssw09} study a ``tug of war'' Laplacian operator on graphs that is 
similar to our hypergraph \markov operator and use it to prove that 
every bounded real-valued Lipschitz function $F$ on a subset $Y$ of a length space $X$ admits a 
unique absolutely minimal extension to $X$. Subsequently a variant of this operator was used 
for analyzing the rate of convergence of local dynamics in bargaining 
networks \cite{cdp10}. 
\cite{lrtv11,lrtv12,lot12,lm14} study higher eigenvalues of graph Laplacians
and relate them to graph multi-partitioning parameters (see Section~\ref{sec:higher-cheeger}).

\section{Notation}
\label{sec:hyper-notation}

Recall that we consider an edge-weighted hypergraph $H = (V,E,w)$, where $V$ is the vertex set,
$E$ is the set of hyperedges and 
$w : E \to \R_+$ gives the edge weights.
We let $n := |V|$ and $m := |E|$.
The weight of a vertex $v \in V$ is
$w_v \defeq \sum_{e \in E: v \in e} w(e)$.
Without loss of generality, we assume that all vertices
have positive weights, since any vertex with zero weight can be removed.
We use $\R^V$ to denote the set of column vectors.
Given $f \in \R^V$,
we use $f_u$ or $f(u)$ (if we need to use the subscript to distinguish between different vectors) to indicate the coordinate
corresponding to $u \in V$.
We use $A^\T$ to denote the transpose of a matrix $A$.
For a positive
integer $s$, we denote $[s] := \{1,2,\ldots, s\}$.

We let $\I$ denote the identity matrix and $\W \in \R^{n \times n}$ denote the diagonal matrix whose
$(i,i)$-th entry is $w_i$.
We use $\rmin \defeq \min_{e \in E} \Abs{e}$ to denote the size of the smallest hyperedge
and use $\rmax \defeq \max_{e \in E} \Abs{e}$ to denote the size of the largest hyperedge.
Since, most of our bounds will only need $\rmax$, we use $r \defeq \rmax$ for brevity. 
We say that a hypergraph is {\em regular} if all its vertices have the same degree. 
We say that a hypergraph is {\em uniform} if all its hyperedges have the same cardinality.
Recall that the expansion $\phi_H$ of a hypergraph $H$ 
is defined in (\ref{eq:hyper_exp}).
We drop the subscript whenever the
hypergraph is clear from the context.

\noindent \emph{Hop-Diameter.}
A list of edges $e_1, \ldots, e_l$ such that $e_i \cap e_{i+1} \neq \eset$ for $i \in [l-1]$
is referred as a {\em path}. The length of a path is the number of edges in it. 
We say that a path $e_1, \ldots, e_l$ connects two vertices $u,v \in V$
if $u \in e_1$ and $v \in e_l$.
We say that the hypergraph is {\em connected} if for each pair of vertices $u,v \in V$, there
exists a path connecting them. 
The {\em hop-diameter} of a hypergraph, denoted by $\diam(H)$, is the smallest value $l \in \N$, such that
each pair of vertices $u,v \in V$ have a path of length at most $l$ connecting them.

For an $x \in \R$, we define $x^+ \defeq \max \set{x,0}$ and $x^- \defeq \max \set{-x,0}$.
For a vector $u$, we use $\norm{u} := \norm{u}_2$ to
denote its Euclidean norm; if $\norm{u} \neq 0$,
we define $\tilde{u} \defeq \frac{u}{\norm{u}}$.
We use $\one \in \R^V$ to denote the vector having $1$ in every coordinate. 
For a vector $x \in \R^V$, we define its support as the set of coordinates at which $x$ is non-zero, i.e.
$\supp(x) \defeq \set{i : x_i \neq 0}$.
We use $\Ind{\cdot}$ to denote the indicator variable, i.e. $\Ind{\mathcal{E}}$ is equal to $1$
if event $\mathcal{E}$ occurs, and is equal to $0$ otherwise.
We use $\chi_S \in \R^V$ to denote the indicator vector of the set $S \subset V$, i.e.
\[ \chi_S(v) = \begin{cases} 1 & v \in S \\ 0 & \textrm{otherwise}  \end{cases} \mper \]

In classical spectral graph theory, the edge expansion
is related to the \emph{discrepancy ratio}, which is defined as 

\[\D_w(f):=\frac{\sum_{e\in E} \w{e} \max_{u,v\in e}{(\f{u}-\f{v})^2}}{\sum_{u \in V} \w{u} \f{u}^2},\]

\noindent for each non-zero vector $f \in \R^V$.
Note that $0 \leq \D_w(f) \leq 2$, where the upper bound 
can be achieved, say, by a complete bipartite graph with $f$ having 1's on one side and $-1$'s on the other side.
Observe that if $f = \chi_S$ is the indicator vector for a subset $S \subset V$,
then $\D_w(f) = \phi(S)$.
In this paper, we use three isomorphic spaces described as follows.
As we shall see, sometimes it is more convenient to use one space to describe the results.

\noindent \textbf{Weighted Space.} This is the space
associated with the discrepancy ratio $\D_w$ to consider
edge expansion.
For $f,g \in \R^V$, the inner product
is defined as $\langle f, g \rangle_w := f^\T \W g$,
and the associated norm is $\| f \|_w := \sqrt{\langle f, f \rangle_w}$.  We use $f \perp_w g$ to denote $\langle f, g \rangle_w = 0$.

\noindent \textbf{Normalized Space.} 
Given $f \in \R^V$ in the weighted space,
the corresponding vector in the normalized space is $x := \Wh f$.
The normalized discrepancy ratio is
$\Dc(x) := \D_w(\Wmh x) = \D_w(f)$.

In the normalized space, the usual $\ell_2$ inner product and norm are used.
Observe that if $x$ and $y$ are the corresponding
normalized vectors for $f$ and $g$ in the weighted space,
then $\langle x, y \rangle = \langle f, g \rangle_w$.

A well-known result~\cite{chung97} is that the \emph{normalized
Laplacian} for a 2-graph can be defined as
$\Lc := \I - \Wmh A \Wmh$ (where $A$ is the symmetric matrix
giving the edge weights)
such that $\Dc(x)$ coincides with the 
\emph{Rayleigh quotient} of the Laplacian defined as follows:
$$\rayc(x) := \frac{\langle x, \Lc x \rangle}{\langle x, x \rangle}.$$

\noindent \textbf{Measure Space.}  This is the space
associated with the diffusion process that we shall define later.
Given $f$ in the weighted space,
the corresponding vector in the measure space
is given by $\vp := \W f$.  
Observe that a vector in the measure space can have negative coordinates.
We do not consider inner product explicitly in this space, and 
so there is no special notation for it.
However, we use the $\ell_1$-norm, which is not
induced by an inner product.
For vectors $\vp_i = \Wh x_i$, we have
\[\sqrt{w_{\min}} \cdot \|x_1 - x_2\|_2  \leq \|\vp_1 - \vp_2\|_1 \leq \sqrt{w(V)} \cdot \|x_1 - x_2\|_2,\]

\noindent where the upper bound comes from
the Cauchy-Schwarz inequality.

In the diffusion process, we consider how $\vp$ will move in the future.
Hence, unless otherwise stated,
all derivatives considered are actually right-hand-derivatives
$\frac{d \vp(t)}{dt} := \lim_{\Delta t \ra 0^+} \frac{\vp(t + \Delta t) - \vp(t)}{\Delta t}$.

\noindent \textbf{Transformation between Different Spaces.}
We use the Roman letter $f$ for vectors in
the weighted space, $x$ for vectors in the
normalized space, and Greek letter $\vp$
for vectors in the measure space.
Observe that 
an operator defined on one space
induces operators on
the other two spaces.
For instance, if $\Lo$ is an operator defined on the measure space,
then $\Lo_w := \Wm \Lo \W$  is the corresponding
operator on the weighted space 
and $\Lc := \Wmh \Lo \Wh$ is the one on the normalized space.
Moreover, all three operators have the same eigenvalues.
Recall that the Rayleigh quotients are defined
as $\ray_w(f) := \frac{\langle f, \Lo_w f \rangle_w}{\langle f, f \rangle_w}$
and $\rayc(x) := \frac{\langle x, \Lc x \rangle}{\langle x, x \rangle}$. For $\Wh f = x$, we have $\ray_w(f) = \rayc(x)$.

Given a set $S$ of vectors in the normalized
space, $\Pi_S$ is the orthogonal projection operator
onto the subspace spanned by $S$.  The orthogonal
projection operator $\Pi^w_S$ can also be defined for
the weighted space.

\section{Overview of Results}
\label{sec:hyper-defs}

A major contribution of this paper
is to define a hypergraph Laplacian operator $\Lc$
whose spectral properties are related to the
expansion properties of the underlying hypergraph.

\subsection{Laplacian and Diffusion Process}
\label{sec:lap_overview}

In order to gain insights on how to define the Laplacian
for hypergraphs,
we first illustrate that the Laplacian for $2$-graphs
can be related to a diffusion process.
Suppose edge weights $w$ of a $2$-graph
are given by the (symmetric) matrix $A$.

Suppose $\vp \in \R^V$ is some measure on the vertices,
which, for instance, can represent a probability distribution
on the vertices.  A random walk on the graph
can be characterized by the transition matrix $\M := A \Wm$.
Observe that each column of $\M$ sums to 1,
because we apply $\M$ to the column vector $\vp$
to get the distribution $\M \vp$ after one step of the random walk.

We wish to define a continuous diffusion process.
Observe that, at this moment, the measure vector $\vp$ is moving
in the direction of $\M \vp - \vp = (\M - \I) \vp$.
Therefore, if we define an operator $\Lo := \I - \M$
on the measure space, we have
the differential equation $\frac{d \vp}{d t} = - \Lo \vp$.


Using the transformation into
the weighted space $f = \Wm \vp$
and the normalized space $x = \Wmh \vp$,
we can define the corresponding operators
$\Lo_w := \Wm \Lo \W = \I - \Wm A$
and $\Lc := \Wmh \Lo \Wh = \I - \Wmh A \Wmh$,
which is exactly the normalized Laplacian for $2$-graphs.
In the literature, the (weighted) Laplacian is defined
as $\W - A$, which is $\W \Lo_w$ in our notation.  Hence,
to avoid confusion, we only consider the normalized Laplacian in this paper.

\noindent \emph{Interpreting the Diffusion Process.}
In the above diffusion process,  we
consider more carefully the
rate of change for the measure at a certain vertex $u$:

\begin{equation}
\frac{d \vp_u}{d t} = \sum_{v : \{u,v\} \in E} w_{uv} (f_v - f_u),
\label{eq:flow}
\end{equation} 

\noindent where $f = \Wm \vp$ is the weighted measure.
Observe that for a stationary distribution of the random walk,
the measure at a vertex $u$ should be proportional
to its (weighted) degree $w_u$.
Hence, given an edge $e = \{u,v\}$, 
equation (\ref{eq:flow}) indicates that
there should be a contribution of measure flowing from the vertex with higher $f$ value
to the vertex with smaller $f$ value.
Moreover, this contribution has rate
given by $c_e := w_e \cdot |f_u - f_v|$.

\noindent \emph{Generalizing Diffusion Rule to Hypergraphs.}
Suppose in a hypergraph $H = (V,E,w)$
the vertices have measure $\vp \in \R^V$ (corresponding to $f = \Wm \vp$).
For $e \in E$,
we define $I_e(f) \subseteq e$ as the vertices $u$
in $e$ whose $f_u = \frac{\vp_u}{w_u}$ values are minimum, $S_e(f) \subseteq e$ as those whose corresponding values
are maximum, and $\Delta_e(f) := \max_{u,v \in E} (f_u - f_v)$
as the discrepancy within edge $e$.  Then,
inspired from the case of $2$-graphs,
the diffusion process should satisfy the following rules.

\begin{compactitem}
\item[\textsf{(R1)}] When the measure distribution is at state $\vp$ (where $f = \Wm \vp$), there can be a positive rate of measure flow
from $u$ to $v$ due to edge $e \in E$ only if  $u \in S_e(f)$ and $v \in I_e(f)$.
\item[\textsf{(R2)}] For every edge $e \in E$,
the total rate of measure flow \textbf{due to $e$} from vertices
in $S_e(f)$ to $I_e(f)$ is $c_e := w_e \cdot \Delta_e(f)$.
\end{compactitem}

We shall later
elaborate how the rate $c_e$ of flow due to edge $e$ is distributed
among the pairs in $S_e(f) \times I_e(f)$.  Figure~\ref{fig:hyper_diffusion}
summarizes this framework.

\begin{figure}[H]
\begin{tabularx}{\columnwidth}{|X|}
\hline

\vspace{5pt}

Given a hypergraph $H=(V,E,w)$, 
we define the (normalized) Laplacian operator
as follows.  Suppose $x \in \R^V$ is in the
normalized space with the corresponding
$\vp := \Wh x$ in the measure space
and $f := \Wm \vp$ in the weighted space.

\begin{enumerate}
	\item \label{step:step1} For each hyperedge $e \in E$,
	let $I_e(f) \subseteq e$ be the set of vertices $u$
in $e$ whose $f_u = \frac{\vp_u}{w_u}$ values are minimum
and $S_e(f) \subseteq e$ be the set of vertices in $e$ whose corresponding values
are maximum. Let $\Delta_e(f) := \max_{u,v \in E} (f_u - f_v)$.

\item \label{step:step2} \emph{Weight Distribution.}
For each $e \in E$,
the weight $w_e$ is ``somehow'' distributed among
pairs in $S_e(f) \times I_e(f)$ satisfying \textsf{(R1)} and \textsf{(R2)}.
Observe that if $I_e = S_e$, then $\Delta_e = 0$, and it does not matter how
the weight $w_e$ is distributed.

For each $(u,v) \in S_e(f) \times I_e(f)$,
there exists $a^e_{uv} = a^e_{uv}(f)$
such that $\sum_{(u,v) \in S_e \times I_e} a^e_{uv} = w_e$,
and the rate of flow from $u$ to $v$ (due to $e$) is $a^e_{uv} \cdot \Delta_e$. 

For ease of notation, we let $a^e_{uv} = a^e_{vu}$.
Moreover, for other pairs $\{u',v'\}$ that
do not receive any weight from $e$, we let $a^e_{u'v'} = 0$.

\item \label{step:step3} 
The distribution of hyperedge weights induces a symmetric matrix $A_f$
as follows.
For $u \neq v$, $A_f(u,v) = a_{uv} := \sum_{e \in E} a^e_{uv} (f)$; the diagonal entries are chosen such that entries in  the row corresponding to vertex $u$ sum to $w_u$.  Observe that $A_f$ depends on $\vp$ because $f = \Wm \vp$.
\end{enumerate}

Then, the operator $\Lo(\vp) := (\I - A_f \Wm) \vp$
is defined on the measure space,
and the diffusion process is described by
$\frac{d \vp}{d t} = - \Lo \vp$.

This induces the (normalized) Laplacian $\Lc := \Wmh \Lo \Wh$,
and the operator $\Lo_w := \Wm \Lo \W$ on the weighted space.
\\
\hline 
\end{tabularx}
\caption{Defining Laplacian via Diffusion Framework}
\label{fig:hyper_diffusion}
\end{figure}

\noindent \textbf{How to distribute the weight $w_e$
in Step~(\ref{step:step2}) in Figure~\ref{fig:hyper_diffusion}?}
In order to satisfy rule \textsf{(R1)}, it turns out that
the weight cannot be distributed arbitrarily.  We show that the following
straightforward approaches will not work.

\begin{compactitem}
\item \emph{Assign the weight $w_e$ to just one pair $(u,v) \in S_e \times I_e$.}
For the case $|S_e| \geq 2$, after infinitesimal time, 
among vertices in $S_e$, only $\vp_u$ (and $f_u$) will decrease due to $e$.
This means $u$ will no longer be in $S_e$ after infinitesimal time,
and we will have to pick another vertex in $S_e$ immediately. However,
we will run into the same problem again if we try to pick another vertex
from $S_e$, and the diffusion process cannot continue.

\item \emph{Distribute the weight $w_e$ evenly among pairs
in $S_e \times I_e$.}\footnote{Through personal communication, Jingcheng Liu and Alistair Sinclair
have informed us that they also noticed that distributing the weight
of a hyperedge uniformly will not work, and discovered independently a
similar method for resolving ties.}
In Example~\ref{eg:Louis},
there is an edge $e_5 = \{a, b, c\}$
such that the vertex in $I_{e_5} = \{c\}$
receives measure from the vertices in $S_{e_5} = \{a,b\}$.
However, vertex $b$ also gives some measure to vertex 
$d$ because of the edge $e_2 = \{b, d\}$.
In the example, all vertices have the same weight.
Now, if $w_{e_5}$ is distributed evenly among $\{a,c\}$ and $\{b, c\}$,
then the measure of $a$ decreases more slowly than that of $b$
because $b$ loses extra measure due to $e_2$.
Hence, after infinitesimal time, $b$ will no longer be in $S_{e_5}$.
This means that the measure of $b$ should not have been decreased at all due to $e_5$,
contradicting the choice of distributing $w_{e_5}$ evenly.
\end{compactitem}

\noindent \textbf{What properties should the Laplacian operator have?}
Even though the weight distribution in Step~\ref{step:step2} does not
satisfy rule \textsf{(R1)}, some operator could still be defined.
The issue is whether such an operator would have any desirable properties.
In particular, the spectral properties of the Laplacian should
have be related to the expansion properties of the hypergraph.
Recall that the normalized discrepancy ratio $\Dc(x)$ is defined for
non-zero $x \in \R^V$, and is related to hypergraph edge expansion.

\begin{definition}[Procedural Minimizers]
\label{defn:proc_min}
Define $x_1 := \Wh \vec{1}$, where $\vec{1} \in \R^V$ is the all-ones vector; $\gamma_1 := \Dc(x_1) = 0$.
Suppose $\{(x_i, \gamma_i)\}_{i \in [k-1]}$ have
been constructed.  Define $\gamma_k := \min \{ \Dc(x) : \vec{0} \neq x \perp \{x_i:
i \in [k-1]\}\}$, and $x_k$ to be any such minimizer that attains $\gamma_k = \Dc(x_k)$.
\end{definition}

\noindent \emph{Properties of Laplacian in 2-graphs.}
For the case of 2-graphs, it is known that the discrepancy ratio $\Dc(x)$
coincides with the Rayleigh quotient $\rayc(x) := \frac{\langle  x , \Lc x \rangle}{\langle x, x \rangle}$
of the normalized Laplacian $\Lc$,
which can be interpreted as a symmetric matrix.
Hence, it follows that the sequence $\{ \gamma_i \}$ obtained by the procedural minimizers
also gives the eigenvalues of $\Lc$.
Observe that for a $2$-graph, 
the sequence $\{ \gamma_i \}$ is uniquely defined,
even though the minimizers $\{x_i\}$ might not be unique (even modulo scalar multiple)
in the case of repeated eigenvalues.
On the other hand, for hypergraphs, $\gamma_2$ is uniquely defined, but
we shall see in Example~\ref{ex:gamma_nonunique} that $\gamma_3$ could depend
on the choice of minimizer $x_2$.

\begin{theorem}[Diffusion Process and Laplacian]
\label{th:main1}
Given an edge-weighted hypergraph, a diffusion process
satisfying rules~\textsf{(R1)} and~\textsf{(R2)}
can be defined and uniquely induces a
normalized Laplacian $\Lc$ (that
is not necessarily linear) on the normalized
space having the following properties.

\begin{compactitem}
\item[1.] For all $\vec{0} \neq x \in \R^V$, the Rayleigh quotient $\frac{\langle x, \Lc x \rangle}{\langle x, x \rangle}$ coincides with the discrepancy ratio $\Dc(x)$.  This implies that
all eigenvalues of $\Lc$ are non-negative.

\item[2.] There is an operator $\Lo := \Wh \Lc \Wmh$ on the measure space
such that the diffusion process
can be described by the differential equation $\frac{d \vp}{d t} = - \Lo \vp$.

\item[3.] Any procedural minimizer $x_2$ attaining $\gamma_2 := \min_{\vec{0} \neq
x \perp \Wh \vec{1}} \Dc(x)$ satisfies $\Lc x_2 = \gamma_2 x_2$.
\end{compactitem}

However, there exists a hypergraph (Example~\ref{eg:gamma3}) such that
for all procedural minimizers $\{x_1, x_2\}$,
any procedural minimizer $x_3$ attaining
$\gamma_3 := \min_{\vec{0} \neq
x \perp \{x_1, x_2\}} \Dc(x)$ is not
an eigenvector of $\Pi_{\{x_1, x_2\}^\perp} \Lc$.
\end{theorem}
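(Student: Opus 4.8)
I would prove Theorem~\ref{th:main1} in two parts: first establish the existence and well-definedness of the diffusion process and the induced Laplacian operator $\Lc$ together with properties 1--3 (the ``positive'' part), and then construct the explicit counterexample witnessing the failure of the third procedural minimizer to be an eigenvector (the ``negative'' part).

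\textbf{Part 1: existence of the diffusion process and properties 1--3.} The first task is to show that one can consistently pick the weight distribution $\{a^e_{uv}(f)\}$ in Step~\ref{step:step2} so that rules \textsf{(R1)} and \textsf{(R2)} are simultaneously satisfiable at every moment of the process. The natural approach is the following: at a given state $\vp$ (with $f = \Wm\vp$), I would set up, \emph{within each equivalence class of vertices sharing a common $f$-value that participates in some $S_e$ or $I_e$}, a small min-cost-flow / balancing problem whose solution prescribes the $a^e_{uv}$; the key constraint is that among vertices in a maximal set $S_e$, all must lose measure (due to $e$) at the same rate so that none drops out of $S_e$ infinitesimally later, and symmetrically for $I_e$. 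This is precisely the phenomenon the paper illustrates in the two failed naive distributions. Concretely, I would argue that the set of valid instantaneous flow-rate vectors is described by a polytope (linear constraints: flows nonnegative, total out of $e$ equals $c_e = w_e\Delta_e(f)$, and ``equalization'' constraints forcing the net rate into each vertex of $e$ to respect the ties), that this polytope is nonempty, and that a canonical choice (e.g., the minimum-norm point, or the one given by a specific LP) yields a well-defined right-hand side $-\Lo\vp$. Existence and uniqueness of the trajectory then follows because the right-hand side is, on each ``cell'' where the combinatorial structure $\{(I_e,S_e)\}_e$ is fixed, an affine function of $\vp$, and one shows the trajectory can be continued across cell boundaries (a Filippov-type / piecewise-linear ODE argument). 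Once $\Lo$ is defined, $\Lc := \Wmh\Lo\Wh$ and $\Lo_w := \Wm\Lo\W$ are immediate. For property~1, I would verify the algebraic identity $\langle x,\Lc x\rangle = \langle f, \W f\rangle - \sum_{e} a_e (f_u-f_v)^2$-type expression and show it equals $\sum_{e\in E} w_e \,\Delta_e(f)^2 = \sum_{e\in E} w_e \max_{u,v\in e}(f_u-f_v)^2$, using $\sum_{(u,v)\in S_e\times I_e} a^e_{uv} = w_e$ and that $f_u - f_v = \Delta_e(f)$ on $S_e\times I_e$; dividing by $\|x\|^2 = \|f\|_w^2 = \sum_u w_u f_u^2$ gives $\rayc(x) = \Dc(x) \ge 0$, hence all eigenvalues are nonnegative. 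For property~2 the differential equation is the definition. For property~3, given a minimizer $x_2 \perp \Wh\vec 1$, I would use first-order optimality of the Rayleigh quotient $\rayc(\cdot)=\Dc(\cdot)$ on the subspace $(\Wh\vec 1)^\perp$: since $\Dc$ is the maximum of finitely many smooth quadratic forms, at the minimizer its (sub)gradient condition, combined with the scaling $\Dc(tx)=\Dc(x)$ and the fact that $\Lc x_2$ is (by the construction of $A_f$) exactly the vector realizing $\langle x_2,\Lc x_2\rangle = \Dc(x_2)\|x_2\|^2$, forces $\Lc x_2 = \gamma_2 x_2 + c\,\Wh\vec 1$; pairing with $\Wh\vec 1$ and using $\Lc(\Wh\vec 1)=0$ (a zero-discrepancy, minimum state) and $x_2 \perp \Wh\vec 1$ kills the extra term, giving $\Lc x_2 = \gamma_2 x_2$.

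\textbf{Part 2: the counterexample.} Here I would exhibit a concrete small hypergraph (the one labeled Example~\ref{eg:gamma3}), compute $x_1 = \Wh\vec 1$, determine $\gamma_2$ and \emph{all} minimizers $x_2$, and then for each such $x_2$ show the restricted problem $\min\{\Dc(x): 0\ne x\perp\{x_1,x_2\}\}$ has a minimizer $x_3$ with $\Pi_{\{x_1,x_2\}^\perp}\Lc x_3 \ne \gamma_3 x_3$. The cleanest route is to pick the hypergraph so that $\gamma_2$ has a \emph{one-dimensional} space of minimizers (so $x_2$ is essentially unique and the universal quantifier ``for all procedural minimizers $\{x_1,x_2\}$'' is vacuously easy), and then analyze $x_3$: on the two-dimensional orthogonal complement, $\Dc$ restricted there is a piecewise-quadratic function whose minimizer lies at a ``kink'' where the identity of $S_e$/$I_e$ for some edge changes; at such a kink $x_3$ is a minimizer but is not a stationary point of any single smooth branch, so it cannot satisfy $\Lc x_3 = \gamma_3 x_3$ even after projection. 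I would verify this by direct computation: evaluate $\Lc x_3$ (via the matrix $A_{f_3}$, which at the kink is itself determined only up to the flow polytope, but any choice gives the same $\Lc x_3$ since $\Delta_e=$ flow is well-defined), project onto $\{x_1,x_2\}^\perp$, and exhibit a coordinate where it differs from $\gamma_3 x_3$. I expect \textbf{the main obstacle} to be Part~1, specifically proving that the instantaneous flow-distribution polytope is always nonempty and that a canonical selection from it yields a right-hand side along which the ODE has a genuine (absolutely continuous) solution that can be continued for all time --- this is exactly the subtlety the paper foreshadows with its two failed naive schemes, and it requires a careful argument that the ``equalization'' constraints (no vertex leaves $S_e$ or $I_e$ immediately) are mutually consistent across all hyperedges sharing vertices. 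The counterexample in Part~2, by contrast, is a finite computation once the right small hypergraph is found.
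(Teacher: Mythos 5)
There are two genuine gaps in your plan, both at the heart of what the theorem actually asserts. First, the existence \emph{and uniqueness} of the induced operator is not something you can get from ``the feasible instantaneous rates form a nonempty polytope, pick a canonical point (e.g.\ minimum norm).'' The theorem claims the diffusion process \emph{uniquely} induces $\Lc$, and the paper's proof establishes exactly this: rules \textsf{(R1)}--\textsf{(R3)} force the total rate vector $r=\frac{df}{dt}$ to be the unique output of the recursive densest-subset peeling of Figure~\ref{fig:define_r} (Lemma~\ref{lemma:define_lap}), whose well-definedness rests on an LP/level-set analysis of the densest-subset problem (Lemma~\ref{lemma:convex_comb}, Corollary~\ref{cor:fig_define}) and on a flow argument showing a zero-surplus configuration exists. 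Your formalization of the ``equalization'' constraint is also off: you require that all vertices of $S_e$ lose measure \emph{due to $e$} at the same rate, but the correct constraint (the paper's rule \textsf{(R3)}) is on the total derivatives $r_u$ across all edges --- in Example~\ref{eg:Louis} the resolution is precisely that $b\in S_{e_5}$ loses \emph{nothing} due to $e_5$ while $a$ loses all of $c_{e_5}$, so per-edge equalization is exactly the kind of naive rule the paper shows breaks \textsf{(R1)}. You flag this part as the main obstacle, but it is not an obstacle to be deferred: it is the content of the theorem, and an arbitrary canonical selection from a polytope neither yields the claimed uniqueness nor the derivative identity $\sum_{e}c_e(r_I(e)-r_S(e))=\|r\|_w^2$ that the rest of the proof needs.

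Second, your argument for Property~3 does not go through as stated. You argue that since $\Lc x_2$ ``realizes'' $\langle x_2,\Lc x_2\rangle=\Dc(x_2)\|x_2\|^2$, first-order optimality forces $\Lc x_2=\gamma_2 x_2+c\,\Wh\vec{1}$. But \emph{every} admissible weight distribution over $S_e\times I_e$ (every subgradient selection of the max-quadratic numerator) satisfies that quadratic-form identity --- this is just Lemma~\ref{lemma:ray_disc} --- so the identity cannot single out the eigen-direction; stationarity only guarantees that \emph{some} subgradient selection is of the form $\gamma_2 x_2+c\,\Wh\vec{1}$, not the particular selection made by the diffusion process via \textsf{(R3)}. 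The paper closes this gap dynamically: the Rayleigh quotient is nonincreasing along the actual trajectory with equality iff $\Lc x\in\spn(x)$ (Lemma~\ref{lemma:deriv}, which again uses the \textsf{(R3)}-based identity above), and the flow direction stays orthogonal to $x_1$ (Lemma~\ref{lemma:lap_proj}), so a strict decrease at $x_2$ would produce a vector in $x_1^{\perp}$ with smaller discrepancy ratio, contradicting minimality. For the negative part, your outline (small hypergraph, unique $x_2$, direct computation of $\Pi\Lc x_3$) matches the paper's Example~\ref{eg:gamma3}, but note that ``the minimizer sits at a kink, hence cannot be an eigenvector of the projected operator'' is not a valid inference on its own; the conclusion there really does come from the explicit computation, and in the paper's example there are two symmetric minimizers $x_3$, both of which must be checked.
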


The first three statements are proved in Lemmas~\ref{lemma:ray_disc}, 
\ref{lemma:define_lap} and Theorem~\ref{th:hyper_lap}.
Example~\ref{eg:gamma3} suggests that the current approach cannot be generalized
to consider higher order eigenvalues of the Laplacian $\Lc$,
since any diffusion process satisfying rules~\textsf{(R1)} and~\textsf{(R2)}
uniquely determines the 
Laplacian $\Lc$.

We remark that for hypergraphs, the Laplacian $\Lc$ is non-linear.
In general,
non-linear operators can have more or fewer than $n$ eigenvalues.
Theorem~\ref{th:main1} implies that
apart from $x_1 = \Wh \vec{1}$, the Laplacian
has another eigenvector $x_2$, which is a procedural minimizer
attaining $\gamma_2$.
It is not clear if $\Lc$ has any other eigenvalues. 
We leave as an open problem the task of investigating if other eigenvalues exist.

\noindent \textbf{Diffusion Process and Steepest Descent.}  We
can interpret the above diffusion process in terms of
deepest descent with respect to the
following quadratic potential function on the weighted space:
$$\mathsf{Q}_w(f) := \frac{1}{2} \sum_{e \in E} w_e \max_{u,v \in e} (f_u - f_v)^2.$$

Specifically, we can imagine a diffusion process in which
the motion is leading to a decrease in the potential function.
For $2$-graphs, one can check that in fact we have
$\frac{df}{dt} = - \Wm \nabla_f \mathsf{Q}_w(f)$.
Hence, we could try to define $\Lo_w f$ as  $\Wm \nabla_f \mathsf{Q}_w(f)$.
Indeed,
Lemma~\ref{lemma:deriv} confirms that
our diffusion process implies that $\frac{d}{dt} \mathsf{Q}_w(f) = - \norm{\Lo_w f}^2_w$.
However, because of the maximum operator in the definition of
$\mathsf{Q}_w(\cdot)$, one eventually has to consider the issue of resolving ties
in order to give a meaningful definition of $\nabla_f \mathsf{Q}_w(f)$.

\noindent \textbf{Comparison to other operators.}
One could ask if there can be a ``better'' operator?
A natural operator that one would be tempted to try is the {\em averaging} operator,
which corresponds to a diffusion process that attempts
to transfer measure between \emph{all} vertices in a hyperedge
to approach the stationary distribution.
However, for each hyperedge $e \in E$,
the averaging operator will yield information about
$\E_{i,j \in e} \paren{f_i - f_j}^2$, instead of
$\max_{i,j \in e} \paren{f_i - f_j}^2$ that is related to edge expansion.
In particular, the averaging operator will have a gap of factor $\Omega(r)$ between the hypergraph 
expansion
and the square root of its second smallest eigenvalue.

\subsection{Diffusion Processes}

The diffusion process described
in Figure~\ref{fig:hyper_diffusion}
is given by
the differential equation
$\frac{d \vp}{d t} = - \Lo \vp$,
where $\vp \in \R^V$ is in the measure space.
The diffusion process is
deterministic,
and no measure enters or leaves the system.  We believe that
it will be of independent interest to consider
the case when each vertex can experience independent noise
from outside the system, for instance, in risk management 
applications~\cite{merton1969lifetime,merton1971optimum}.  Since
the diffusion process is continuous in nature,
we consider Brownian noise.

For some $\eta \geq 0$, we assume that
the noise experienced by each vertex $u$
follows the Brownian motion whose rate of variance
is $\eta w_u$.  Then, the measure $\Phi_t \in \R^V$
of the system is an \Ito process
defined by the stochastic
differential equation $d \Phi_t = - \Lo \Phi_t \, dt + \sqrt{\eta} \cdot \Wh \, d B_t$.
For $\eta = 0$, this reduces to the deterministic diffusion process in a closed system.

We consider the transformation into the normalized
space $X_t := \Wmh \Phi_t$, and obtain the corresponding
equation 
$d X_t = - \Lc X_t \, dt + \sqrt{\eta} \, d B_t$,
where $\Lc$ is the normalized Laplacian.
Observe that the random noise
in the normalized space is spherically symmetric.

\noindent \textbf{Convergence Metric.}
Given a measure vector $\vp \in \R^V$,
denote $\vp^* := \frac{\langle \vec{1}, \vp \rangle}{w(V)} \cdot \W \vec{1}$, which is the corresponding \emph{stationary} measure vector obtained by distributing
the total measure $\sum_{u \in V} \vp_u = \langle \vec{1}, \vp \rangle$ among the vertices such that each vertex $u$ receives an amount
proportional to its weight $w_u$.

For the normalized vector $x = \Wmh \vp$,
observe that $x^* := \Wmh \vp^* = \frac{\langle \vec{1}, \vp \rangle}{w(V)} \cdot \Wh \vec{1}$
is the projection of $x$ into the 
subspace spanned by $x_1 := \Wh \vec{1}$.
We denote by $\Pi$ the orthogonal projection
operator into the subspace orthogonal to $x_1$.

Hence, given $x = \Wmh \vp$,
we have $x = x^* + \Pi x$, where $x^*$ is the stationary component
and $\Pi x$ is the transient component.
Moreover, $\vp - \vp^* = \Wh \Pi x$.

%

We derive a relationship
between $\gamma_2$ and the system's convergence behavior.


\begin{theorem}[Convergence and Spectral Gap]
\label{th:stoch_dom}
Suppose $\gamma_2 = \min_{0 \neq x \perp x_1} \rayc(x)$.
Then, in the stochastic diffusion process described above,
for each $t \ge 0$,
the random variable $\|\Pi X_t \|_2$
is stochastically dominated by 
$\|\widehat{X}_t \|_2$,
where $\widehat{X}_t$
has distribution 
$e^{-\gamma_2 t} \Pi {X}_0 + \sqrt{\frac{\eta}{2 \gamma_2} \cdot(1 - e^{- 2 \gamma_2 t})} \cdot N(0,1)^V$,
 and $N(0,1)^V$ is the standard $n$-dimensional Guassian distribution with independent coordinates.
\end{theorem}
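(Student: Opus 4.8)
The plan is to analyze the one-dimensional radial process $R_t := \|\Pi X_t\|_2$ and compare it with the process $\widehat R_t := \|\widehat X_t\|_2$ obtained from the linear Ornstein--Uhlenbeck SDE $d\widehat X_t = -\gamma_2 \widehat X_t\,dt + \sqrt{\eta}\,dB_t$ restricted to the subspace orthogonal to $x_1$ (whose solution at time $t$ has exactly the stated Gaussian law with variance $\frac{\eta}{2\gamma_2}(1-e^{-2\gamma_2 t})$ per coordinate, plus the decaying mean $e^{-\gamma_2 t}\Pi X_0$). The starting observation is that the noise $d B_t$ in the normalized space is spherically symmetric and that $\Pi$ commutes with the Laplacian flow in the sense that $\Pi\Lc = \Lc\Pi$ on the relevant subspace (the stationary component $x^*$ is an eigenvector with eigenvalue $0$, so $\Pi X_t$ evolves by $d(\Pi X_t) = -\Pi\Lc X_t\,dt + \sqrt{\eta}\,\Pi\,dB_t = -\Lc(\Pi X_t)\,dt + \sqrt{\eta}\,\Pi\,dB_t$, using that $\Lc$ preserves the orthogonal complement of $x_1$). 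Thus it suffices to work entirely within $\{x_1\}^\perp$ with the spherically symmetric noise $\Pi\,dB_t$.

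Next I would derive an Itô formula for $R_t^2 = \langle \Pi X_t, \Pi X_t\rangle$. The drift contribution from the $-\Lc(\Pi X_t)\,dt$ term is $-2\langle \Pi X_t, \Lc(\Pi X_t)\rangle\,dt$; by Statement~1 of Theorem~\ref{th:main1} the Rayleigh quotient equals $\Dc$, and by the variational definition of $\gamma_2$ (Definition~\ref{defn:proc_min}, Statement~3 of Theorem~\ref{th:main1}) we have $\langle \Pi X_t, \Lc(\Pi X_t)\rangle = \Dc(\Pi X_t)\cdot\|\Pi X_t\|^2 \ge \gamma_2\|\Pi X_t\|^2$ since $\Pi X_t \perp x_1$. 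Hence the drift of $R_t^2$ is at most $-2\gamma_2 R_t^2\,dt$ plus the (dimension-dependent) quadratic-variation term $\eta(n-1)\,dt$ from the $(n-1)$-dimensional noise, and the martingale part has quadratic variation $4\eta R_t^2\,dt$. Passing to $R_t$ itself via Itô, one gets $dR_t \le \big(-\gamma_2 R_t + \frac{\eta(n-1)}{2R_t}\big)\,dt + \sqrt{\eta}\,dW_t$ for a standard one-dimensional Brownian motion $W_t$ (with the usual care at $R_t = 0$). The comparison target $\widehat R_t = \|\widehat X_t\|_2$ satisfies the \emph{same} SDE with the inequality replaced by equality and $\gamma_2 R_t$ in place of the drift — i.e. $d\widehat R_t = \big(-\gamma_2 \widehat R_t + \frac{\eta(n-1)}{2\widehat R_t}\big)\,dt + \sqrt{\eta}\,dW_t$ — because a spherically symmetric Gaussian of the prescribed covariance is precisely the law of a linear OU process, whose radial part is a Bessel-type diffusion with exactly that generator.

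With both $R_t$ and $\widehat R_t$ expressed as solutions of one-dimensional SDEs driven by the same Brownian motion, having identical diffusion coefficient $\sqrt\eta$ and drift coefficients $b(r) = -\gamma_2 r + \frac{\eta(n-1)}{2r}$ for $\widehat R$ versus a drift for $R$ that is pointwise $\le b(r)$, and with $R_0 = \|\Pi X_0\|_2 = \widehat R_0$, the classical one-dimensional comparison theorem for SDEs (e.g. the Ikeda--Watanabe / Yamada--Watanabe comparison result, valid here because the diffusion coefficient is constant hence Lipschitz and the drifts are locally Lipschitz away from $0$) yields $R_t \le \widehat R_t$ almost surely for all $t\ge 0$, which in particular gives the claimed stochastic domination. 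The main obstacle — and the step deserving the most care — is making the Itô computation and the comparison rigorous at $R_t = 0$, where the $1/R_t$ term is singular; I would handle this either by working with $R_t^2$ throughout and only taking square roots at the end, or by the standard localization argument stopping before $R_t$ hits $0$ and checking that the origin is not an issue (for $n \ge 2$ and $\eta > 0$ the radial process is a nondegenerate diffusion that is instantaneously reflected / never sticks at $0$; for $\eta = 0$ the statement reduces to the deterministic exponential decay $\|\Pi X_t\| \le e^{-\gamma_2 t}\|\Pi X_0\|$, which follows directly from Lemma~\ref{lemma:deriv}-type energy dissipation). A secondary technical point is confirming the spherical symmetry claim — that $\Pi\,dB_t$ acts as genuine $(n-1)$-dimensional isotropic noise — which is immediate once one fixes an orthonormal basis of $\{x_1\}^\perp$.
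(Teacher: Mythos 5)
Your proposal is correct, and its core is the same as the paper's: apply It\^o's formula to the squared transient norm, use Lemma~\ref{lemma:lap_proj} together with the variational characterization of $\gamma_2$ to lower-bound the drift term $\langle \Pi X_t, \Lc \Pi X_t\rangle \geq \gamma_2 \|\Pi X_t\|^2$, and then dominate by the Ornstein--Uhlenbeck process whose explicit solution gives the stated Gaussian law. Where you diverge is the final domination step. The paper stays with the squared norms $Y_t = \|\Pi X_t\|^2$ and $\widehat{Y}_t = \|\widehat{X}_t\|^2$ and builds a bespoke coupling of the two driving noises (independent increments when $Y_t < \widehat{Y}_t$, identified increments when $Y_t = \widehat{Y}_t$, using spherical symmetry so that $\langle \Pi X_t, dB_t\rangle$ and $\langle \widehat{X}_t, d\widehat{B}_t\rangle$ agree in law at touching points); this never introduces a $1/R$ term, so the origin causes no trouble. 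You instead pass to the radial processes and invoke the one-dimensional Ikeda--Watanabe comparison theorem for Bessel-type SDEs driven by a common Brownian motion, which is a legitimate and standard alternative, but it forces you to confront the singular drift $\eta(n-1)/(2R)$ at $R=0$ -- you correctly identify this as the delicate point, and your proposed fixes (localization away from $0$, or running the comparison at the level of $R^2$, plus the separate trivial treatment of $\eta=0$) are adequate; note that working with $R^2$ essentially collapses your argument back into the paper's. One small quantitative remark: with $(n-1)$-dimensional projected noise, the It\^o correction when passing from $R_t^2$ to $R_t$ actually yields a drift bound $-\gamma_2 R_t + \eta(n-2)/(2R_t)$, not $\eta(n-1)/(2R_t)$; this is harmless since it only makes the comparison against the $n$-dimensional OU radial (whose drift is $-\gamma_2 r + \eta(n-1)/(2r)$) easier. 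So both routes prove the theorem; the paper's coupling is slightly more self-contained and sidesteps the boundary analysis, while yours outsources the domination to a classical comparison theorem at the cost of the extra care at the origin.
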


\noindent \textbf{Mixing Time for Deterministic Diffusion Process.}
For the special case $\eta = 0$,
one can consider an initial probability measure $\vp_0 \in \R_+^V$
such that $\langle \vec{1}, \vp_0 \rangle = 1$.
We denote the the stationary distribution $\vp^* := \frac{1}{w(V)} \cdot \W \vec{1}$.
For $\delta > 0$,
the \emph{mixing time}
$\tmix{\vp_0}$ is the smallest time $\tau$ such that 
for all $t \geq \tau$,
$\normo{\vp_t - \vp^*} \leq \delta$.

\begin{theorem}[Upper Bound for Mixing Time]
\label{thm:hyperwalk-upper}
Consider the deterministic diffusion process with some
initial probability measure $\vp_0 \in \R_+^V$.
Then, for all $\delta > 0$, the mixing time
$\tmix{\vp_0} \leq \frac{1}{\gamma_2} \log \frac{1}{\delta \sqrt{\vp^*_{\min}}}$,
where $\vp^*_{\min} := \min_{u \in V} \vp^*(u)$.
\end{theorem}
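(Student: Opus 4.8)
The plan is to run the whole argument in the normalized space, where the deterministic diffusion reads $\frac{dx_t}{dt}=-\Lc x_t$ for $x_t:=\Wmh\vp_t$, show that the transient component $\Pi x_t$ decays at rate $\gamma_2$ in the $\ell_2$ norm, and then translate back to the $\ell_1$ measure norm. Translating back costs a factor $\sqrt{w(V)}$, bounding the initial transient component costs a factor $1/\sqrt{w_{\min}}$, and $\sqrt{w(V)}\cdot\frac1{\sqrt{w_{\min}}}=1/\sqrt{\vp^*_{\min}}$ is exactly what appears inside the logarithm.

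First I would record that the process conserves total measure. Each row of the matrix $A_f$ of Figure~\ref{fig:hyper_diffusion} sums to $w_u$, so $\vec 1^\T A_f\Wm=\vec 1^\T$, hence $\langle\vec 1,\Lo\vp\rangle=\langle\vec 1,(\I-A_f\Wm)\vp\rangle=0$ for every $\vp$; thus $\frac{d}{dt}\langle\vec 1,\vp_t\rangle=-\langle\vec 1,\Lo\vp_t\rangle=0$ and $\langle\vec 1,\vp_t\rangle=1$ for all $t\ge 0$. Consequently the stationary component $x_t^*=\frac{\langle\vec 1,\vp_t\rangle}{w(V)}\Wh\vec 1$ equals the constant $\Wmh\vp^*$, so $\frac{d}{dt}(\Pi x_t)=\frac{d}{dt}x_t=-\Lc x_t$ (a right-derivative, per the paper's convention) and $\vp_t-\vp^*=\Wh\Pi x_t$. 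I would also note $\langle x_1,\Lc x_t\rangle=\langle\Wh\vec 1,\Wmh\Lo\vp_t\rangle=\langle\vec 1,\Lo\vp_t\rangle=0$, so $\langle\Pi x_t,\Lc x_t\rangle=\langle x_t,\Lc x_t\rangle$.

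Next comes the differential inequality. Writing $N(f):=\sum_{e\in E}w_e\max_{u,v\in e}(f_u-f_v)^2$, part~1 of Theorem~\ref{th:main1} gives $\langle x,\Lc x\rangle=\Dc(x)\,\norm{x}_2^2=N(\Wmh x)$. Since $\Wmh x_t=\Wmh x_t^*+\Wmh\Pi x_t$ with $\Wmh x_t^*$ a multiple of $\vec 1$, and $N$ depends only on coordinate differences, $\langle x_t,\Lc x_t\rangle=N(\Wmh\Pi x_t)=\Dc(\Pi x_t)\,\norm{\Pi x_t}_2^2\ge\gamma_2\,\norm{\Pi x_t}_2^2$ by $\Pi x_t\perp x_1$ and the definition of $\gamma_2$ (trivially also when $\Pi x_t=0$). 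Hence $\frac{d}{dt}\norm{\Pi x_t}_2^2=2\langle\Pi x_t,\frac{d}{dt}\Pi x_t\rangle=-2\langle\Pi x_t,\Lc x_t\rangle\le-2\gamma_2\,\norm{\Pi x_t}_2^2$, and the comparison lemma (Gr\"onwall) yields $\norm{\Pi x_t}_2\le e^{-\gamma_2 t}\,\norm{\Pi x_0}_2$. To finish, bound $\norm{\Pi x_0}_2^2\le\norm{x_0}_2^2=\sum_{u\in V}\vp_0(u)^2/w_u\le\frac1{w_{\min}}\sum_u\vp_0(u)=\frac1{w_{\min}}$ (using $0\le\vp_0(u)\le 1$), and apply the Cauchy--Schwarz bound $\normo{\vp_t-\vp^*}=\normo{\Wh\Pi x_t}\le\sqrt{w(V)}\,\norm{\Pi x_t}_2$ from Section~\ref{sec:hyper-notation}. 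Combining, $\normo{\vp_t-\vp^*}\le\sqrt{w(V)/w_{\min}}\;e^{-\gamma_2 t}=e^{-\gamma_2 t}/\sqrt{\vp^*_{\min}}$, which is at most $\delta$ precisely when $t\ge\frac1{\gamma_2}\log\frac1{\delta\sqrt{\vp^*_{\min}}}$, as claimed.

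The main obstacle is not the chain of inequalities but justifying the calculus in the presence of the \emph{non-linear} operator $\Lc$: I must invoke the construction of the diffusion process to know that $\vp_t$ (hence $x_t$ and $\Pi x_t$) is well-defined and right-differentiable, that the one-sided product rule $\frac{d}{dt}\norm{\Pi x_t}_2^2=2\langle\Pi x_t,\frac{d}{dt}\Pi x_t\rangle$ applies, and that $\langle x,\Lc x\rangle=\Dc(x)\norm{x}_2^2$ holds pointwise (Theorem~\ref{th:main1}). Once those are granted, the two algebraic facts carrying the argument are the conservation identity $\langle\vec 1,\Lo\vp\rangle=0$ and the translation-invariance of $N$, after which only Gr\"onwall and Cauchy--Schwarz remain.
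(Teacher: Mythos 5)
Your proposal is correct and follows essentially the same route as the paper: the paper obtains the decay $\|\Pi X_t\|_2 \leq e^{-\gamma_2 t}\|\Pi X_0\|_2$ by specializing its stochastic-dominance lemma (Lemma~\ref{lemma:stoch_dom}) to $\eta=0$ — whose core, in that case, is exactly your differential inequality via $\langle x,\Lc x\rangle=\langle \Pi x,\Lc \Pi x\rangle\geq\gamma_2\|\Pi x\|_2^2$ (cf.\ Lemmas~\ref{lemma:deriv} and~\ref{lemma:lap_proj}) — and then applies the same norm comparison $\|\Phi_t-\vp^*\|_1\leq\sqrt{w(V)}\|\Pi X_t\|_2$ and the same initial bound $\|\Pi X_0\|_2^2\leq\langle\vp_0,\Wm\vp_0\rangle\leq 1/w_{\min}$. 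Your only deviation is deriving the $\ell_2$ decay directly by a deterministic Gr\"onwall argument rather than quoting the \Ito-calculus lemma, which is a presentational rather than substantive difference.
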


Observe that for a regular hypergraph (i.e., $w_u$ is the same for all $u \in V$),
Theorem~\ref{thm:hyperwalk-upper} says that
the mixing time can be $\bigo{\log n}$.
We believe that this fact might have applications in counting/sampling problems on hypergraphs
\`{a} la MCMC (Markov chain monte carlo) algorithms on graphs.

\noindent \textbf{Towards Local Clustering Algorithms for Hypergraphs}
We believe that the hypergraph diffusion process  
has applications in computing combinatorial properties and
sampling problems in hypergraphs.
As a concrete example, we show that the diffusion process can be useful  
towards computing sets of vertices having small expansion.
We show that if the diffusion process mixes slowly, then
the hypergraph must contain a set of vertices having small expansion. 
This is analogous to the corresponding fact for graphs,
and can be used as a tool to certify an upper bound on hypergraph expansion.

\begin{theorem}
\label{thm:hyper-walk-cut}
Given a hypergraph $H = (V,E,w)$ and a probability distribution $\vp_0 : V \to [0,1]$, 
let $\vp_t$ denote the probability distribution at time $t$ according to the 
diffusion process (Figure~\ref{fig:hyper_diffusion}) and $\vp^*$ be the stationary distribution.

Let $\delta > 0$.
Suppose initially $\norm{\vp_0 - \vp^*}_1 > \delta$
and for some time $T > 0$,
$\norm{\vp_T - \vp^*}_1 > \delta$.
Then,
there exists a set $S \subset V$ such that $\vp^*(S) \leq \frac{1}{2}$
and

\[ \phi(S) \leq \bigo{\frac{1}{T} \ln \frac{\norm{\vp_0 - \vp^*}_1}{\sqrt{\vp^*_{\min}} \cdot \delta}}.\]
\end{theorem}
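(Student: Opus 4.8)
The plan is to mimic the classical graph proof that slow mixing certifies a sparse cut, but carried out directly on the nonlinear hypergraph diffusion, using the potential function $\mathsf{Q}_w$ and the discrepancy ratio $\Dc$ as the bridge between the continuous dynamics and combinatorial cuts. The key observation is that the diffusion is steepest descent for $\mathsf{Q}_w$, so $\frac{d}{dt}\mathsf{Q}_w(f_t) = -\norm{\Lo_w f_t}^2_w$ (Lemma~\ref{lemma:deriv}), and moreover $\mathsf{Q}_w(f) = \tfrac12\Dc(f)\cdot\norm{f}_w^2$ once we work with the transient component. So slow decay of $\norm{\vp_t-\vp^*}_1$ forces $\Dc$ of the relevant vectors to be small along the trajectory, and a Cheeger-type rounding of any such vector produces the desired set $S$.

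\textbf{Step 1 (reduce $\ell_1$ to $\ell_2$ in the normalized space).} Write $\vp_t = \vp^* + \Wh\,\Pi x_t$ where $x_t = \Wmh\vp_t$ and $\Pi$ projects orthogonal to $x_1 = \Wh\vec 1$. The inequalities relating $\norm{\cdot}_1$ and $\norm{\cdot}_2$ in the measure space (stated in the Notation section) give $\norm{\vp_t-\vp^*}_1 \le \sqrt{w(V)}\,\norm{\Pi x_t}_2$ and $\norm{\vp_t-\vp^*}_1 \ge \sqrt{w_{\min}}\,\norm{\Pi x_t}_2 \ge \sqrt{\vp^*_{\min}}\,\norm{\Pi x_t}_2$ (up to constants; here one uses that $w_{\min}/w(V)$ controls $\vp^*_{\min}$). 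So the hypotheses translate to: $\norm{\Pi x_0}_2$ and $\norm{\Pi x_T}_2$ are both bounded below in terms of $\delta$ and $\vp^*_{\min}$.

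\textbf{Step 2 (bound the average Rayleigh quotient along the trajectory).} Since the dynamics preserves total measure, the stationary component $x^*_t$ is constant, so $\frac{d}{dt}x_t = \frac{d}{dt}\Pi x_t = -\Pi \Lc x_t = -\Lc x_t$ on the orthogonal complement, and $\frac{d}{dt}\norm{\Pi x_t}_2^2 = -2\langle x_t, \Lc x_t\rangle = -2\rayc(x_t)\norm{\Pi x_t}_2^2 \ge -2\Dc(\Pi x_t)\norm{\Pi x_t}_2^2$. Hence $\norm{\Pi x_T}_2^2 \ge \norm{\Pi x_0}_2^2 \exp(-2\int_0^T \Dc(\Pi x_t)\,dt)$. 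Combined with Step 1, this yields $\frac{1}{T}\int_0^T \Dc(\Pi x_t)\,dt = O\!\big(\tfrac1T \ln\tfrac{\norm{\vp_0-\vp^*}_1}{\sqrt{\vp^*_{\min}}\,\delta}\big)$, so there is a time $t^\star\in[0,T]$ with $\Dc(\Pi x_{t^\star})$ at most this quantity. (I would double-check the subtlety that $x_t\mapsto\Lc x_t$ is only right-differentiable and the coefficient matrix $A_{f_t}$ changes at discrete tie-breaking events; the monotonicity argument survives because $\Dc$ is continuous and the inequality $\rayc(x)\ge$ — wait, we want an upper bound, so we use $\rayc(x_t) = \Dc(x_t)/\text{(ratio of norms)}$; more carefully $\langle x_t,\Lc x_t\rangle = \Dc(\Pi x_t)\norm{\Pi x_t}_2^2$ exactly by statement~1 of Theorem~\ref{th:main1} applied to $x = \Pi x_t$ together with $\Lc x^*_t = 0$, so the inequality is in fact an equality and the bound is clean.)

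\textbf{Step 3 (round to a set).} Apply the Cheeger-type rounding for the hypergraph discrepancy ratio — a threshold/sweep cut on the vector $f^\star := \Wmh \Pi x_{t^\star}$, ordering vertices by $f^\star$ value and taking the best level set — which is exactly the ``easy direction'' of the hypergraph Cheeger inequality and produces a set $S$ with $\phi(S) = O(\Dc(f^\star)) = O(\gamma)$ where $\gamma$ is the bound from Step 2. Since $f^\star \perp_w \vec 1$, the standard argument lets us choose the level set on the side with $\vp^*(S)\le\tfrac12$ (split at a weighted median). This gives the claimed $\phi(S) \le O\!\big(\tfrac1T\ln\tfrac{\norm{\vp_0-\vp^*}_1}{\sqrt{\vp^*_{\min}}\,\delta}\big)$.

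\textbf{The main obstacle} I expect is Step~2: justifying the differential inequality rigorously given that $\Lc$ is nonlinear and only right-differentiable, with the ``active'' matrix $A_{f_t}$ jumping at the (countably many) times when the sets $S_e(f), I_e(f)$ change. I would handle this by invoking the properties of the diffusion process established for Theorems~\ref{thm:hyperwalk-upper} and~\ref{th:stoch_dom} — in particular that $t\mapsto \norm{\Pi x_t}_2^2$ is absolutely continuous with right-derivative $-2\langle x_t,\Lc x_t\rangle$ — so that integrating the right-derivative is valid despite the kinks. A secondary nuisance is keeping the constants honest when converting between $w_{\min}$, $\vp^*_{\min}$, and $w(V)$, but that is routine.
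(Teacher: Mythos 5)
Your Steps 1 and 2 follow the paper's argument essentially verbatim: reduce the $\ell_1$ hypotheses to lower bounds on $\norm{\Pi x_0}_2$ and $\norm{\Pi x_T}_2$, use $\frac{d}{dt}\norm{\Pi x_t}_2^2 = -2\langle x_t,\Lc x_t\rangle = -2\,\rayc(\Pi x_t)\norm{\Pi x_t}_2^2$ (Lemmas~\ref{lemma:deriv} and~\ref{lemma:lap_proj}), and integrate to bound the Rayleigh quotient somewhere on the trajectory by $\rho := \bigo{\frac{1}{T}\ln\frac{\norm{\vp_0-\vp^*}_1}{\sqrt{\vp^*_{\min}}\,\delta}}$. (The paper takes $t^\star = T$ using the monotonicity of $t\mapsto\rayc(\Pi x_t)$ from Lemma~\ref{lemma:deriv}~(3); your averaging argument over $[0,T]$ is an acceptable substitute, and your worry about right-differentiability is handled the same way the paper handles it.)

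The genuine gap is Step 3. You claim that a sweep cut on $f^\star$ yields $\phi(S) = \bigo{\Dc(f^\star)}$, calling this the ``easy direction'' of the hypergraph Cheeger inequality. That is backwards: the easy direction is that a sparse cut certifies a small discrepancy ratio ($\gamma_2 \le 2\phi_H$); rounding a vector into a cut is the hard direction and necessarily loses a square root. No sweep cut (or any rounding) can turn a vector with discrepancy ratio $\gamma$ into a set of expansion $\bigo{\gamma}$ in general --- already for the cycle $C_n$ (a $2$-graph), $\gamma_2 = \Theta(1/n^2)$ while every sweep cut has expansion $\Theta(1/n)$, so a linear rounding bound would contradict the known tightness of Cheeger's inequality. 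The tool the paper actually uses is Proposition~\ref{prop:hyper-sweep-rounding}: since $f^\star \perp_w \vec{1}$, it produces $S$ with $w(S)\le w(V)/2$ (hence $\vp^*(S)\le\frac12$) and $\phi(S) \le \D_w(f^\star) + 2\sqrt{\D_w(f^\star)/\rmin} = \bigo{\sqrt{\D_w(f^\star)}}$. With this corrected rounding, what your chain directly delivers is $\phi(S) = \bigo{\sqrt{\rho}}$, which matches the paper's penultimate inequality $\phi(S)\le\bigo{\sqrt{\rayc(\widehat{x}_T)}}$; the passage from there to the stated $\bigo{\rho}$ is precisely the last line of the paper's proof (and only improves on $\sqrt{\rho}$ when $\rho = \Omega(1)$). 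So the linear-in-$\Dc$ sweep-cut lemma you invoked does not exist, and without it your Step 3 must be replaced by Proposition~\ref{prop:hyper-sweep-rounding} and the accompanying square-root bookkeeping.
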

As in the case of graphs, 
this upper bound might be better than the guarantee obtained using an \sdp relaxation 
(\ref{cor:hyper-sparsest-informal}) in certain settings.

One could ask if the converse of the statement of Theorem~\ref{thm:hyper-walk-cut} is true, i.e.,
if the hypergraph $H=(V,E,w)$ has a ``sparse cut'', then is there a polynomial time computable probability 
distribution $\vp_0 : V \to [0,1]$ such that the diffusion process initialized with this
$\vp_0$ mixes ``slowly''? 
Theorem~\ref{thm:hyperwalk-lower-informal} shows that there exists such a distribution 
$\vp_0$, but it is not known if such a distribution can be computed in polynomial time. We leave this as 
an open problem.

\begin{theorem}[Lower bound on Mixing Time]
\label{thm:hyperwalk-lower-informal}
Given a hypergraph $H=(V,E,w)$, there exists a probability measure $\vp_0$ on $V$
such that $\normo{\vp_0 -\vp^*} \geq \frac{1}{2}$,
and for small enough $\delta$, 
\[ \tmix{\vp_0} = \Omega \paren{ \frac{1  }{\lh}  \ln \frac{\vp^*_{\min}}{\delta} } \mper \]
\end{theorem}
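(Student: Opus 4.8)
The plan is to start the diffusion exactly in the direction of the second eigenvector, so that the non‑linearity of $\Lo$ does no harm: the trajectory stays on a single line through the stationary measure and decays at the precise rate $\lh$. First I would record two elementary structural facts about the measure‑space operator $\Lo$, both read off directly from the construction in Figure~\ref{fig:hyper_diffusion}. (i) $\Lo$ is homogeneous, $\Lo(c\vp)=c\,\Lo(\vp)$ for every $c\in\R$: for $c>0$, scaling $\vp$ (hence $f=\Wm\vp$) leaves $I_e(f),S_e(f)$ and the weight distribution $\{a^e_{uv}\}$ unchanged while multiplying each $\Delta_e(f)$ by $c$, so $A_f$ is unchanged and $\Lo(c\vp)=(\I-A_f\Wm)(c\vp)=c\,\Lo(\vp)$; negating $\vp$ swaps $I_e$ with $S_e$ and reverses every flow, so $\Lo$ is also odd, and the two give full homogeneity. (ii) $\Lo$ is invariant under adding a multiple of the stationary vector $\W\one$: adding a constant to $f$ changes none of $I_e,S_e,\Delta_e$, hence not $A_f$, and $A_f\Wm(\beta\W\one)=\beta A_f\one=\beta\W\one$ since the rows of $A_f$ sum to the vertex weights, so the contribution of $\beta\W\one$ to $\Lo$ cancels.

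Next, let $x_2$ be a procedural minimizer attaining $\lh=\gamma_2$; by Theorem~\ref{th:main1} it satisfies $\Lc x_2=\lh x_2$, so $\vp_2:=\Wh x_2$ satisfies $\Lo\vp_2=\lh\,\vp_2$, and $\langle\one,\vp_2\rangle=\langle x_2,\Wh\one\rangle=\langle x_2,x_1\rangle=0$. I would then set $\vp_0:=\vp^*+c\,\vp_2$, where $c\neq 0$ is chosen with $\abs{c}$ maximal subject to $\vp_0$ having no negative coordinate (such a $c$ exists because $\vp_2$ has both signs and every coordinate of $\vp^*$ is at least $\vp^*_{\min}>0$); since $\langle\one,\vp_0\rangle=\langle\one,\vp^*\rangle=1$, $\vp_0$ is a probability measure. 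Using (ii) then (i), $\Lo\!\paren{\vp^*+c e^{-\lh t}\vp_2}=\Lo\!\paren{c e^{-\lh t}\vp_2}=c e^{-\lh t}\,\Lo(\vp_2)=\lh\, c e^{-\lh t}\vp_2$, so $t\mapsto \vp^*+c e^{-\lh t}\vp_2$ solves $\frac{d\vp}{dt}=-\Lo\vp$ with initial value $\vp_0$; by uniqueness of the diffusion trajectory this is $\vp_t$. Hence $\normo{\vp_t-\vp^*}=\abs{c}\,\normo{\vp_2}\,e^{-\lh t}$, which is continuous and strictly decreasing to $0$, so whenever $\delta<\abs{c}\,\normo{\vp_2}$ the mixing time equals $\tmix{\vp_0}=\frac{1}{\lh}\ln\frac{\abs{c}\,\normo{\vp_2}}{\delta}=\frac{1}{\lh}\ln\frac{\normo{\vp_0-\vp^*}}{\delta}$.

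It then remains to lower bound $\normo{\vp_0-\vp^*}=\abs{c}\,\normo{\vp_2}$. If $u_0$ is a coordinate at which the non‑negativity constraint defining $c$ is tight, then $\abs{c\,\vp_2(u_0)}=\vp^*(u_0)$, whence $\normo{\vp_0-\vp^*}\ge\abs{c\,\vp_2(u_0)}=\vp^*(u_0)\ge\vp^*_{\min}$; combined with the mixing‑time identity this already gives $\tmix{\vp_0}\ge\frac{1}{\lh}\ln\frac{\vp^*_{\min}}{\delta}$ whenever $\delta<\vp^*_{\min}$, i.e.\ for all sufficiently small $\delta$, which is the asserted bound.

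The step I expect to be the real obstacle is securing, for the \emph{same} $\vp_0$, that $\normo{\vp_0-\vp^*}\ge\tfrac12$. This holds easily when a constant fraction of $\vp^*$'s mass sits on vertices where $\vp_2$ is not much larger than $\vp^*$ coordinatewise (then the feasible $\abs{c}$ is a constant), but when $x_2$ is concentrated on low‑weight vertices the feasible scaling is forced to be tiny. Resolving this — choosing the sign of $c$ to be forced by the ``flatter'' side of $\vp_2$, or replacing the pure eigendirection perturbation by a more robust $\ell_1$‑far starting distribution (e.g.\ $\vp^*$ restricted to a threshold set of $x_2$) while still controlling, via a quantity such as $Z(t):=\sum_u\vp_t(u)\,f_2(u)$ with $f_2=\Wmh x_2$ and the inequality $\normo{\vp_t-\vp^*}\ge\abs{Z(t)}/\normi{f_2}$, how fast the non‑linear $\Lo$ can dissipate the $\lh$‑component — is the delicate part. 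Reconciling ``genuine probability distribution'', ``$\ell_1$‑distance $\ge\tfrac12$ from $\vp^*$'', and ``decays no faster than $e^{-\lh t}$'' in one choice of $\vp_0$ is the crux, and the $\vp^*_{\min}$ in the statement is precisely the toll for fitting an eigendirection perturbation inside the probability simplex.
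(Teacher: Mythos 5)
Your eigendirection construction is internally sound (the homogeneity and translation-invariance you cite are exactly Lemma~\ref{lemma:lap_proj}(3), and $\Lo \vp_2=\gamma_2\vp_2$ follows from Theorem~\ref{th:hyper_lap}), but it only proves a weaker statement than the one asked: you obtain a probability measure with $\normo{\vp_0-\vp^*}\ge \vp^*_{\min}$, whereas the theorem demands $\normo{\vp_0-\vp^*}\ge\frac12$, and as you yourself note, fitting a multiple of $\Wh x_2$ inside the probability simplex can force $\abs{c}$ to be tiny when $x_2$ is concentrated on a few coordinates. This is a genuine gap, not a technicality: the $\frac12$ bound is the whole point of the statement (it is the analogue of starting a random walk at a single vertex), and your fallback sketches (choosing the sign of $c$, or a threshold-set start controlled via $Z(t)=\sum_u\vp_t(u)f_2(u)$) are not carried out; in particular, once you leave the exact eigendirection you lose the closed-form trajectory, and you never invoke the tool that replaces it.

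The paper resolves precisely this obstacle by abandoning the exact eigendirection (Lemma~\ref{lemma:init_prob_dist}). Starting from any $y\perp x_1$ with $\rayc(y)=\gamma$, it shifts $z:=y+cx_1$ so that each of $\supp(z^+)$ and $\supp(z^-)$ carries at most half the total weight, keeps the positive part (after possibly negating so that $\norm{\Pi z^+}\ge\frac12\norm{y}$), and uses $\langle z^+,\Lc z^+\rangle\le\langle z,\Lc z\rangle=\langle y,\Lc y\rangle$ — a consequence of the max-based discrepancy form — to get $\rayc(\Pi z^+)\le 4\gamma$. Rescaling so that $\Wh\widehat z$ has total mass $1$ makes $\vp_0:=\Wh\widehat z$ automatically a probability distribution with $\normo{\vp_0-\vp^*}\ge\frac12$, because at least half of $\vp^*$'s mass sits outside $\supp(z^+)$. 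The price is that $\vp_0-\vp^*$ is no longer an eigendirection, so the exponential decay is controlled not by invariance of the trajectory but by Lemma~\ref{lemma:lb_diff_norm}: $\frac{d\norm{y_t}^2}{dt}=-2\rayc(y_t)\norm{y_t}^2\ge-8\gamma\norm{y_t}^2$, using that the Rayleigh quotient is non-increasing along the flow (Lemma~\ref{lemma:deriv}(3)) — a monotonicity fact your argument never uses but which is what makes a non-eigenvector start workable. The constant-factor losses (rate $4\gamma$ instead of $\gamma$, and $\sqrt{\vp^*_{\min}}$ rather than $\vp^*_{\min}$ inside the logarithm from the $\ell_1$--$\ell_2$ comparison) are absorbed into the $\Omega(\cdot)$. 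If you want to salvage your write-up, the missing ingredients are exactly these two: the factor-$4$ Rayleigh-quotient bound for the truncated vector, and the decay estimate along the nonlinear flow for a non-eigenvector via monotonicity of $\rayc$.
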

See Theorem~\ref{thm:hyperwalk-lower} for the formal statement of Theorem~\ref{thm:hyperwalk-lower-informal}.
We view the condition in Theorem~\ref{thm:hyperwalk-lower-informal} that the starting distribution 
$\vp_0$ satisfy $\normo{\vp_0 -\vp^*} \geq \frac{1}{2}$ as the analogue of a random walk in a graph starting from
some vertex.


\paragraph{Discretized Diffusion Operator and Hypergraph Diameter}
A well known fact about regular $2$-graphs is that the diameter
of a graph $G$ is at most 
$\bigo{\log n/ \paren{\log (1/(1 - \gamma_2)) }}$.

We define a discretized diffusion operator
$\Mo := \I - \frac{1}{2} \Lo$ on the measure space
in Section~\ref{sec:hyper-diam},
and use it to
prove an upper bound on the hop-diameter of a hypergraph.

\begin{theorem}
\label{thm:hyper-diam}
Given a hypergraph $H = (V,E,w)$, its hop-diameter $\diam(H)$ is  at most $\bigo{\frac{\log N_w}{\gamma_2}}$,
where $N_w := \max_{u \in V} \frac{w(V)}{w_u}$.
\end{theorem}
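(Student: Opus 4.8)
The plan is to mimic the classical graph argument, where one shows that a short random-walk (or diffusion) process from a single vertex spreads mass to every other vertex, using the spectral gap $\gamma_2$ to bound the rate of convergence to stationarity. Concretely, fix any two vertices $u,v \in V$; I want to exhibit a path of length $O(\log N_w / \gamma_2)$ connecting them. First I would run the discretized diffusion operator $\Mo := \I - \frac{1}{2}\Lo$ starting from the measure $\vp_0 := \chi_u^w$ concentrated on $u$ (i.e.\ the measure whose weighted version $f_0$ puts all mass on $u$, appropriately normalized so that $\langle \vec{1}, \vp_0\rangle = w_u$ or $1$). Using the transformation into the normalized space and the Rayleigh-quotient property (Theorem~\ref{th:main1}, part 1) together with the fact that $\gamma_2 = \min_{0 \neq x \perp x_1}\rayc(x)$, I expect to get the contraction estimate $\|\Pi X_t\|_2 \le (1 - \tfrac{\gamma_2}{2})^t \|\Pi X_0\|_2$ for the discrete operator $\Mo$ acting on the transient component, exactly as in Theorem~\ref{thm:hyperwalk-upper} but in discrete time. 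Translating back to $\ell_1$-distance in the measure space via the Cauchy--Schwarz bounds relating $\|\cdot\|_1$ and $\|\cdot\|_2$ recorded in the ``Measure Space'' paragraph, after $t = O(\log N_w / \gamma_2)$ steps the measure $\vp_t$ is within, say, $\tfrac{1}{2}\vp^*_{\min}$ in $\ell_1$ of the stationary measure $\vp^*$, which is strictly positive on every vertex.

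Next I would argue the support-growth step, which is the genuinely hypergraph-specific part. I claim that $\supp(\Mo^t \vp_0)$ is contained in the ball of hop-radius $t$ around $u$: applying one step of $\Mo$ can only move measure from a vertex to another vertex lying in a common hyperedge (this follows from rules \textsf{(R1)}, \textsf{(R2)} and the structure of $A_f$ in Figure~\ref{fig:hyper_diffusion}, since $A_f(u,v)\neq 0$ only when $u,v$ share a hyperedge), so inductively after $t$ steps no mass has traveled more than $t$ hops. Combining the two steps: once $\vp_t$ is $\tfrac{1}{2}\vp^*_{\min}$-close to $\vp^*$ in $\ell_1$, it must be positive on $v$ (since $\vp^*(v) \ge \vp^*_{\min}$), hence $v \in \supp(\vp_t)$, hence there is a path of length at most $t = O(\log N_w/\gamma_2)$ from $u$ to $v$. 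Since $u,v$ were arbitrary, $\diam(H) = O(\log N_w / \gamma_2)$, where the $N_w = \max_u w(V)/w_u$ factor enters precisely because $\vp^*_{\min} = \min_u w_u / w(V) = 1/N_w$ and it sits inside the logarithm.

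The main obstacle I anticipate is establishing the per-step contraction for the \emph{discrete} operator $\Mo$ rather than the continuous flow. For the continuous process $\frac{d\vp}{dt} = -\Lo\vp$ the decay $e^{-\gamma_2 t}$ follows cleanly from differentiating $\|\Pi X_t\|_2^2$ and using the Rayleigh quotient; but $\Mo$ is nonlinear, so $\|\Pi \Mo X\|_2 \le (1-\tfrac{\gamma_2}{2})\|\Pi X\|_2$ is not automatic and needs care. The likely route is to note that $\Mo$ is $1$-Lipschitz / nonexpansive (or that $\Lo$ is monotone in an appropriate sense, which should already be needed to make the continuous process well-defined earlier in the paper), and then to bound the decrease in $\|\Pi X\|_2^2$ over one step using $\langle X, \Lo X\rangle = \rayc(X)\|X\|^2 \ge \gamma_2 \|\Pi X\|^2$ together with a second-order term controlled by $\|\Lo X\| \le 2\|X\|$ (coming from $0 \le \Dc \le 2$); this is exactly the kind of estimate that Lemma~\ref{lemma:deriv}-type arguments supply. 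A cleaner alternative, if the machinery permits, is to sidestep the discrete operator entirely: run the continuous diffusion for time $O(\log N_w / \gamma_2)$ to get $\ell_1$-closeness to $\vp^*$, and separately observe that in the continuous process the support can only grow through shared hyperedges, so in time $t$ the support is within hop-distance $\lceil t \rceil$ (or some fixed multiple) of $u$; the stated bound with $\Mo$ would then follow by the same reasoning with the discretization used only to make ``support after $t$ steps $\subseteq$ hop-ball of radius $t$'' literally exact. Either way, the two ingredients — exponential $\ell_2$/$\ell_1$ convergence governed by $\gamma_2$, and linear support growth governed by hyperedge adjacency — are what drive the theorem.
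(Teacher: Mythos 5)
Your overall skeleton matches the paper's proof (discretize to $\Mo=\I-\tfrac12\Lo$, contract the component orthogonal to $x_1$ at rate governed by $\gamma_2$, observe that each application of $\Mo$ spreads support by at most one hop since $A_f(u,v)\neq 0$ forces $u,v$ to share a hyperedge, and let $N_w$ enter through the minimum of the stationary measure). But the step you yourself flag as the main obstacle — the one-step contraction $\norm{\Mc x}_2\le (1-\tfrac{\gamma_2}{2})^{1/2}\norm{x}_2$ for $x\perp x_1$ — is a genuine gap, and neither of your two suggested routes closes it. Writing $\norm{\Mc x}_2^2=\norm{x}_2^2-\langle x,\Lc x\rangle+\tfrac14\norm{\Lc x}_2^2$, the crude bound $\norm{\Lc x}_2\le 2\norm{x}_2$ only yields $\norm{\Mc x}_2^2\le 2\norm{x}_2^2-\gamma_2\norm{x}_2^2$, which is worse than no contraction at all: the second-order term must be compared to $\langle x,\Lc x\rangle$, not to $\norm{x}_2^2$. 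Your fallback of running the \emph{continuous} flow and claiming its support stays within hop-distance $O(t)$ also fails: the continuous dynamics behave like a heat semigroup (between ``merge/split'' events the flow is $e^{-t\widehat{L}}$ for a fixed graph Laplacian), so every vertex in the connected component acquires positive measure for every $t>0$; there is no finite propagation speed, which is exactly why the paper introduces the discretized operator in the first place.

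The missing idea, which is how the paper proves its Lemma on $\norm{\Mc x}_2$, is pointwise linearization: for each fixed $x$, the nonlinear operator agrees with a symmetric matrix, $\Lc x=\widehat{L}x$ with $\widehat{L}=\I-\Wmh\widehat{A}\Wmh$, where $\widehat{A}$ is the nonnegative symmetric weight-distribution matrix from the diffusion construction with row sums $w_u$; hence $\widehat{L}$ has spectrum in $[0,2]$ and $\widehat{M}=\I-\tfrac12\widehat{L}$ has spectrum in $[0,1]$. Expanding $x$ in the eigenbasis of $\widehat{M}$ gives $\norm{\Mc x}_2^2=\sum_i\lambda_i^2c_i^2\le\sum_i\lambda_ic_i^2=\langle x,\Mc x\rangle=\norm{x}_2^2-\tfrac12\langle x,\Lc x\rangle\le(1-\tfrac{\gamma_2}{2})\norm{x}_2^2$ for $x\perp x_1$ (equivalently, $\norm{\Lc x}_2^2\le 2\langle x,\Lc x\rangle$, which is the sharpening your sketch lacks). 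With this lemma in hand your argument goes through essentially as in the paper; the paper's endgame is only cosmetically different from your $\ell_1$ step, bounding $\langle\chi_u,\Mc^l\chi_v\rangle\ge\frac{\sqrt{w_uw_v}}{w(V)}-(1-\tfrac{\gamma_2}{2})^{l/2}\ge\frac{1}{N_w}-(1-\tfrac{\gamma_2}{2})^{l/2}>0$ directly via Cauchy--Schwarz, using that $\Mc$ fixes the $x_1$-component and preserves orthogonality to $x_1$ (Lemma~\ref{lemma:lap_proj}), a fact your iteration also needs and should cite explicitly.
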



\subsection{Cheeger Inequalities}
\label{sec:higher-cheeger}

We generalize the Cheeger's inequality~\cite{am85,a86} to hypergraphs.

\begin{theorem}[Hypergraph Cheeger Inequalities]
\label{thm:hyper-cheeger}
Given an edge-weighted hypergraph $H$,
its expansion $\phi_H$ is defined as in~(\ref{eq:hyper_exp}).  Then, we have
the following:
\[ \frac{\lh}{2} \leq \phi_H \leq 2 \sqrt{ \lh} \mper  \]
\end{theorem}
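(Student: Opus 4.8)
The plan is to mirror the classical proof of Cheeger's inequality for $2$-graphs, using the discrepancy ratio $\Dc$ in place of the Rayleigh quotient of a symmetric matrix, and exploiting the fact (Theorem~\ref{th:main1}, statement~3) that the procedural minimizer $x_2$ attaining $\gamma_2 = \lh$ is an honest eigenvector of $\Lc$ with $\rayc(x_2) = \Dc(x_2) = \lh$. The easy direction, $\lh/2 \le \phi_H$, follows by plugging a test vector into the variational characterization of $\lh$: given an optimal set $S$ with $\phi(S) = \phi(V\setminus S)$-style balance from~(\ref{eq:hyper_exp}), I would take $f := \chi_S - \alpha \one$ (equivalently the normalized vector $x := \Wh f$) with $\alpha$ chosen so that $f \perp_w \one$, i.e. $\alpha = w(S)/w(V)$. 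Since $\Dc(\chi_S) = \phi(S)$ and shifting by a multiple of $\one$ only decreases the numerator of $\Dc$ (each $\max_{u,v\in e}(f_u-f_v)^2$ is shift-invariant) while the denominator $\|f\|_w^2 = w(S)w(V\setminus S)/w(V)$ is at least $\tfrac12 \min\{w(S),w(V\setminus S)\}$, one gets $\lh \le \Dc(x) \le 2\phi(S) = 2\phi_H$, which is the claimed bound after dividing by~$2$.

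For the hard direction, $\phi_H \le 2\sqrt{\lh}$, I would run the standard threshold-rounding (sweep cut) argument on the eigenvector $x_2$. Write $f := \Wmh x_2$ in the weighted space, so $\Dc(f) = \lh$ and $f \perp_w \one$. By replacing $f$ with $f^+$ or $f^-$ and using that $f \perp_w \one$ forces both the positive and negative supports to have $w$-weight at most $w(V)/2$, I may assume $f \ge 0$, $w(\supp f) \le w(V)/2$, and $\Dc(f) \le \lh$ (here one uses the inequality $\max_{u,v\in e}(f_u^+ - f_v^+)^2 \le \max_{u,v\in e}(f_u-f_v)^2$ termwise, which holds because truncation at $0$ is $1$-Lipschitz). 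Now order the vertices by $f$-value and consider the threshold sets $S_t := \{u : f_u > t\}$ for $t$ drawn from the right distribution (the classical choice draws $t$ with density proportional to $2t$ on $[0,\max f]$, i.e. picks $t^2$ uniform in $[0,\max_u f_u^2]$). The numerator of the expected cut weight is $\sum_{e} w_e \Pr_t[e \in \partial S_t] = \sum_e w_e (\,\max_{u\in e} f_u^2 - \min_{u\in e} f_u^2\,) = \sum_e w_e (\max_{u\in e}f_u + \min_{u\in e}f_u)(\max_{u\in e}f_u - \min_{u\in e}f_u)$, and by Cauchy--Schwarz this is at most $\sqrt{\sum_e w_e(\max_{u\in e}f_u + \min_{u\in e}f_u)^2}\cdot\sqrt{\sum_e w_e(\max_{u\in e}f_u - \min_{u\in e}f_u)^2} \le \sqrt{2\sum_{u}w_u f_u^2}\cdot\sqrt{\Dc(f)\sum_u w_u f_u^2} = \sqrt{2\lh}\cdot \sum_u w_u f_u^2$, while the expected denominator $\E_t[w(S_t)] = \sum_u w_u f_u^2$; hence some threshold $t$ yields $\phi(S_t) \le \sqrt{2\lh}$. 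Since $w(S_t) \le w(\supp f) \le w(V)/2$, this set witnesses $\phi_H \le \sqrt{2\lh} \le 2\sqrt{\lh}$.

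The main obstacle — and the one point where hypergraphs differ from $2$-graphs — is the step bounding $\sum_e w_e(\max_{u\in e}f_u + \min_{u\in e}f_u)^2$ by $2\sum_u w_u f_u^2$ and, more subtly, justifying that $\Dc(f^+) \le \Dc(f)$ with the \emph{correct} normalization; for a $2$-graph the analogous estimate is immediate since each edge has exactly two endpoints, but for a hyperedge $e$ one must argue $(\max_{u\in e}f_u + \min_{u\in e}f_u)^2 \le 2(\max_{u\in e}f_u^2 + \min_{u\in e}f_u^2) \le 2\sum_{u\in e} f_u^2$ and then sum, absorbing the per-vertex contributions into $w_u = \sum_{e\ni u} w_e$. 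I expect this to go through cleanly because only the two extreme values in each hyperedge enter $\Dc$ (this is exactly the feature that distinguishes the max-based discrepancy ratio from an averaging operator, as noted after Theorem~\ref{th:main1}), but care is needed to ensure no factor of $r$ sneaks in. The rest is the routine probabilistic sweep-cut bookkeeping; I would not grind through it here.
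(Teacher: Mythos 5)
Your overall route is the paper's own: the easy direction via the test vector $\chi_S$ minus its $w$-weighted mean is exactly the argument in the paper, and the hard direction is the same sweep cut on the squared, truncated minimizer that the paper packages as Proposition~\ref{prop:hyper-1d} and Proposition~\ref{prop:hyper-sweep-rounding}; your Cauchy--Schwarz bookkeeping would even give the marginally cleaner bound $\sqrt{2\gamma_2}$ in place of the paper's $\gamma_2+2\sqrt{\gamma_2/\rmin}$.

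There is, however, a genuine gap in your reduction to a nonnegative vector. The assertion that $f\perp_w \vec{1}$ ``forces both the positive and negative supports to have $w$-weight at most $w(V)/2$'' is false: take $f$ slightly positive on a heavy set and very negative on a light set. As a consequence, your sweep sets $S_t\subseteq\supp(f^+)$ may have weight larger than $w(V)/2$, and then $\phi(S_t)$ does not bound $\phi_H$ under definition~(\ref{eq:hyper_exp}), since the complement's expansion can be much larger. The paper avoids this by first shifting $g:=f+c\vec{1}$ with $c$ a $w$-weighted median, so that both $w(\supp(g^+))$ and $w(\supp(g^-))$ are at most $w(V)/2$, and uses orthogonality only to get $\|g\|_w\ge\|f\|_w$, hence $\D_w(g)\le\D_w(f)$. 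Relatedly, your justification for passing to $f^+$ (truncation is $1$-Lipschitz) controls only the numerator; the denominator shrinks as well, and $\D_w(f^+)\le\D_w(f)$ can genuinely fail. The correct step is two-sided: the numerators of $g^+$ and $g^-$ sum to at most that of $g$ while the denominators sum exactly, so $\min\{\D_w(g^+),\D_w(g^-)\}\le\D_w(g)$, and you keep whichever side achieves the minimum. (A minor further point: your per-edge estimate $(\max_{u\in e}f_u+\min_{u\in e}f_u)^2\le 2\sum_{u\in e}f_u^2$ needs two distinct vertices in $e$; restricting the Cauchy--Schwarz sum to edges with $\max>\min$ makes this harmless.) With the median shift and the two-sided truncation inserted, the rest of your computation goes through and recovers the theorem.
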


However, to consider higher-order Cheeger inequalities for hypergraphs,
at this moment, we cannot use the spectral properties of the Laplacian $\Lc$.
Moroever, the sequence $\{\gamma_i\}$ generated by procedural
minimizers might not be unique.  We consider the following parameters.

\noindent \textbf{Orthogonal Minimaximizers.}  
Define $\xi_k := \min_{x_1, \ldots x_k} \max_{i \in [k]} \Dc(x_i)$
and $\zeta_k := \min_{x_1, \ldots x_k}  \max \{\Dc(x) : x \in \spn\{x_1, \ldots x_k\}\}$, where the minimum is over $k$ non-zero mutually orthogonal vectors $x_1, \ldots, x_k$ in
the normalized space.  (All involved minimum and maximum 
can be attained because $\Dc$ is continuous and all vectors could be
chosen from the surface of a unit ball, which is compact.)

For 2-graphs, the three parameters $\xi_k = \gamma_k = \zeta_k$
coincide with the eigenvalues of the normalized Laplacian~$\Lc$.  
Indeed,
most proofs in the literature concerning expansion and Cheeger inequalities  (e.g.,~\cite{lot12,DBLP:conf/stoc/KwokLLGT13})
just need to use the underlying properties of $\gamma_k$, $\xi_k$ and $\zeta_k$ with
respect to the discrepancy ratio,
without explicitly using the spectral properties of the Laplacian.
However, the three parameters
can be related to one another in the following lemma,
whose proof is in Section~\ref{sec:min}.

\begin{lemma}[Comparing Discrepancy Minimizers]
\label{lemma:min}
Suppose $\{\gamma_k\}$ is some sequence produced by the procedural minimizers.
For each $k \geq 1$, $\xi_k \leq \gamma_k \leq \zeta_k \leq k \xi_k$.  
In particular, $\gamma_2 = \zeta_2$, but
it is possible that $\xi_2 < \gamma_2$.
\end{lemma}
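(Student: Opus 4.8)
The plan is to establish the chain $\xi_k\le\gamma_k\le\zeta_k\le k\xi_k$ by three separate arguments, each exploiting a different structural feature of the discrepancy ratio $\Dc$, and then to settle the two special claims for $k=2$. For the two left inequalities, fix a procedural-minimizer sequence $(x_i,\gamma_i)_{i\in[k]}$ as in Definition~\ref{defn:proc_min}. The vectors $x_1,\dots,x_k$ are nonzero and mutually orthogonal with $\Dc(x_i)=\gamma_i$, and the numbers $\gamma_i$ are nondecreasing (each $\gamma_{i+1}$ is a minimum over a subset of the feasible region defining $\gamma_i$), so $\max_{i\in[k]}\Dc(x_i)=\gamma_k$; feeding this tuple into the definition of $\xi_k$ gives $\xi_k\le\gamma_k$. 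For $\gamma_k\le\zeta_k$, let $y_1,\dots,y_k$ be an optimal tuple for $\zeta_k$ and $U:=\spn\{y_1,\dots,y_k\}$, which is $k$-dimensional. The linear map $U\to\R^{k-1}$, $y\mapsto(\langle y,x_1\rangle,\dots,\langle y,x_{k-1}\rangle)$, has nontrivial kernel, so there is a nonzero $y^{\ast}\in U$ with $y^{\ast}\perp\{x_1,\dots,x_{k-1}\}$; then $\gamma_k\le\Dc(y^{\ast})\le\zeta_k$, the first inequality because $y^{\ast}$ is feasible for $\gamma_k$ and the second because $y^{\ast}\in U$.

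For the right inequality $\zeta_k\le k\xi_k$ — the crux — the key observation is that the numerator of $\Dc$ in the normalized space,
\[ Q(x):=\sum_{e\in E}w_e\max_{u,v\in e}\bigl((\Wmh x)_u-(\Wmh x)_v\bigr)^2, \]
for which $\Dc(x)=Q(x)/\norm{x}^2$, is a nonnegatively-weighted sum of pointwise maxima of squares of linear forms in $x$ — hence convex — and is $2$-homogeneous: $Q(tx)=t^2Q(x)$. Let $x_1,\dots,x_k$ be an optimal tuple for $\xi_k$, normalized so $\norm{x_i}=1$; then $Q(x_i)=\Dc(x_i)\le\xi_k$. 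For any nonzero $x=\sum_{i\in[k]}c_ix_i$ in their span, convexity applied to the uniform convex combination together with $2$-homogeneity gives
\[ Q(x)=Q\Bigl(\sum_{i\in[k]}\tfrac1k(kc_ix_i)\Bigr)\le\sum_{i\in[k]}\tfrac1k(kc_i)^2Q(x_i)=k\sum_{i\in[k]}c_i^2Q(x_i)\le k\xi_k\sum_{i\in[k]}c_i^2=k\xi_k\norm{x}^2, \]
using orthonormality in the last step. Hence $\Dc(x)\le k\xi_k$ throughout $\spn\{x_1,\dots,x_k\}$, so $\zeta_k\le k\xi_k$.

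For $k=2$, the chain already gives $\gamma_2\le\zeta_2$, so it suffices to prove $\zeta_2\le\gamma_2$. Take $x_1:=\Wh\vec 1$ and $x_2$ a procedural minimizer attaining $\gamma_2$, which are orthogonal; put $f_2:=\Wmh x_2$. For a nonzero $x=c_1x_1+c_2x_2$ the weighted-space vector is $f:=\Wmh x=c_1\vec 1+c_2f_2$, and since adding the constant vector $c_1\vec 1$ leaves every difference $f_u-f_v=c_2(f_{2,u}-f_{2,v})$ unchanged, the numerator of $\Dc(x)$ equals $c_2^2$ times that of $\Dc(x_2)$. Moreover $x_1\perp x_2$ is precisely $\langle\vec 1,f_2\rangle_w=0$, so $\norm{f}_w^2=c_1^2\,w(V)+c_2^2\norm{f_2}_w^2\ge c_2^2\norm{f_2}_w^2$; therefore $\Dc(x)=\D_w(f)\le\D_w(f_2)=\gamma_2$. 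Thus the maximum of $\Dc$ over $\spn\{x_1,x_2\}$ equals $\gamma_2$, giving $\zeta_2\le\gamma_2$ and hence $\gamma_2=\zeta_2$. For the final assertion, I would exhibit a small example showing $\xi_2<\gamma_2$ is possible: for the single hyperedge $e=\{a,b,c\}$ with unit weights one computes $\gamma_2=\tfrac32$ (attained at $f=(2,-1,-1)$), while the orthogonal pair $x_1=(1,1,-1)$, $x_2=(1,0,1)$ has $\Dc$-values $\tfrac43$ and $\tfrac12$, so $\xi_2\le\tfrac43<\gamma_2$.

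I expect the main obstacle to be the $\zeta_k\le k\xi_k$ step — not the computation, which is immediate once one notices that the discrepancy numerator $Q$ is convex and $2$-homogeneous, but the conceptual point that this must stand in for the eigenvalue variational characterization available for $2$-graphs, so the whole argument has to be run purely in terms of $\Dc$ and orthogonality. The identity $\gamma_2=\zeta_2$ is also a little delicate: it rests entirely on the shift-invariance afforded by $x_1=\Wh\vec 1$, and one must take care — consistently with the lemma permitting $\xi_k<\gamma_k<\zeta_k$ for larger $k$ — that this reasoning does not carry past $k=2$.
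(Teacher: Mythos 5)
Your proof is correct, and for the chain $\xi_k\le\gamma_k\le\zeta_k\le k\xi_k$ and the identity $\gamma_2=\zeta_2$ it follows essentially the same route as the paper (Claims~\ref{claim:xg}, \ref{claim:gz1}, \ref{claim:zx} and~\ref{claim:gz2}): the monotonicity of the $\gamma_i$'s, the dimension-counting intersection argument for $\gamma_k\le\zeta_k$, and the shift-invariance argument for $k=2$ are identical, while your convexity-plus-$2$-homogeneity (Jensen) bound for $\zeta_k\le k\xi_k$ is just a repackaging of the paper's Cauchy--Schwarz step $(\sum_i\alpha_i(f_i(u)-f_i(v)))^2\le k\sum_i\alpha_i^2(f_i(u)-f_i(v))^2$, so it is equally valid but buys nothing new. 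The one genuinely different piece is the witness for $\xi_2<\gamma_2$: the paper uses the $4$-vertex, $5$-edge hypergraph of Example~\ref{ex:xg} (with $\gamma_2=\tfrac23$ and $\xi_2\le\tfrac13$), whose verification of $\gamma_2$ is delegated to a case-enumerating computation, whereas your single-hyperedge example $e=\{a,b,c\}$ is checkable by hand: for $f\perp\vec 1$ with spread $s$ between the largest and smallest coordinate one has $\|f\|_2^2\le\tfrac23 s^2$, giving $\gamma_2=\tfrac32$ (attained at $(2,-1,-1)$), while the orthogonal pair $(1,1,-1),(1,0,1)$ certifies $\xi_2\le\tfrac43<\tfrac32$. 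Your example is simpler and self-contained, so it is a legitimate (arguably preferable) substitute for that part of the lemma.
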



Given a parameter $k \in \N$, the multi-way small-set expansion problem asks to compute
$k$ disjoint sets $S_1, S_2, \ldots, S_k$ that all have small expansion.
This problem has a close connection with the Unique
Games Conjecture~\cite{rs10,abs10}. 
In recent works, higher eigenvalues of Laplacians were used to bound small-set expansion in $2$-graphs~\cite{lrtv12,lot12}.  In particular, the
following result is achieved.

\begin{fact}[Higher-Order Cheeger Inequalities for 2-Graphs]
\label{thm:graph-higher-cheeger}
There exists an absolute constant $c > 0$ such that for any $2$-graph $G=(V,E,w)$  and any integer $k < \Abs{V}$, there exist
 $\Theta(k)$  non-empty disjoint sets $S_1, \ldots, S_{\floor{ck}} \subset V$ such that 
\[ \max_{i \in [ck]} \phi(S_i) = \bigo{ \sqrt{\gamma_{k} \log k } } \mper \]
Moreover, for any $k$ disjoint non-empty sets $S_1, \ldots, S_k \subset V$
\[ \max_{i \in [k]} \phi(S_i) \geq \frac{\gamma_k}{2} \mper  \]
\end{fact}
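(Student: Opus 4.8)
The plan is to treat the two inequalities separately: the lower bound $\max_i \phi(S_i) \ge \gamma_k/2$ is the easy ``Rayleigh quotient'' direction, while the upper bound is the higher-order Cheeger inequality of~\cite{lot12, lrtv12}, whose core is a spectral localization argument.

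\textbf{Lower bound.} For a $2$-graph the three parameters coincide, $\xi_k = \gamma_k = \zeta_k$, so it suffices to bound $\xi_k = \min \max_{i \in [k]} \D_w(f_i)$, the minimum over non-zero $f_1, \dots, f_k \in \R^V$ that are mutually orthogonal in $\langle \cdot, \cdot \rangle_w$. Given $k$ disjoint non-empty sets $S_1, \dots, S_k$, their indicators $\chi_{S_1}, \dots, \chi_{S_k}$ are non-zero and $w$-orthogonal (disjointness gives $\langle \chi_{S_i}, \chi_{S_j} \rangle_w = \sum_{v \in S_i \cap S_j} w_v = 0$ for $i \ne j$), and $\D_w(\chi_{S_i}) = \phi(S_i)$ as already noted. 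Hence $\gamma_k = \xi_k \le \max_i \phi(S_i)$, a fortiori the claimed $\gamma_k/2$ bound. (For completeness, a variant that survives the non-linearity in the hypergraph setting at the cost of a factor $2$: the $k$-dimensional span of $\{\chi_{S_i}\}$ meets the $(n-k+1)$-dimensional subspace of $\R^V$ that is $w$-orthogonal to the first $k-1$ procedural minimizers (Definition~\ref{defn:proc_min}) in a non-zero $v = \sum_i c_i \chi_{S_i}$, whence $\D_w(v) \ge \gamma_k$, while a direct edge-by-edge estimate gives $\D_w(v) \le 2 \max_i \phi(S_i)$.)

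\textbf{Upper bound.} Since $G$ is a $2$-graph, $\gamma_1 \le \dots \le \gamma_k$ are the $k$ smallest eigenvalues of $\Lc$ with orthonormal eigenvectors $x_1, \dots, x_k$; set $f_i := \Wmh x_i$, so $\langle f_i, f_j \rangle_w = \delta_{ij}$ and $\D_w(f_i) = \gamma_i \le \gamma_k$. First I would form the spectral embedding $F : V \to \R^k$, $F(u) := (f_1(u), \dots, f_k(u))$, which satisfies: (i) $\sum_u w_u \|F(u)\|_2^2 = k$; (ii) isotropy, $\sum_u w_u F(u) F(u)^\T = \I_k$; and (iii) smoothness, $\sum_{\{u,v\} \in E} w_{uv} \|F(u) - F(v)\|_2^2 = \sum_{i \in [k]} \gamma_i \le k \gamma_k$. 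The heart of the proof is a localization lemma: from any embedding obeying (i)--(iii) one extracts $\floor{ck}$ pairwise disjoint vertex sets $V_1, \dots, V_{\floor{ck}}$ and non-negative functions $g_j$ with $\supp(g_j) \subseteq V_j$ and $\D_w(g_j) = \bigo{\gamma_k \log k}$. Following~\cite{lot12}, one locates $\Omega(k)$ pairwise disjoint ``heavy balls'' in $\R^k$ — regions each carrying an $\Omega(1/k)$ fraction of the mass $\sum_u w_u \|F(u)\|^2$ — where isotropy (ii) forces $\Omega(k)$ such disjoint regions to exist, a random choice of radii charges the total ``boundary mass'' against the Dirichlet energy (iii) up to a $\log k$ factor, and $g_j$ is a Lipschitz ``tent'' supported on the preimage of the $j$-th ball. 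Finally, apply the standard graph Cheeger rounding (one-dimensional threshold sweep) to each $g_j$: some sweep set $S_j \subseteq \supp(g_j) \subseteq V_j$ has $\phi(S_j) \le \sqrt{2\,\D_w(g_j)} = \bigo{\sqrt{\gamma_k \log k}}$. Disjoint supports make the $S_j$ pairwise disjoint and non-empty, and there are $\floor{ck}$ of them.

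\textbf{Main obstacle.} The localization lemma is where all the difficulty is concentrated; the lower bound, the three embedding identities, and Cheeger rounding are routine. Extracting $\Omega(k)$ \emph{disjointly supported} low-energy functions while paying only a $\log k$ overhead — rather than the $\poly(k)$ one gets from a naive partition — requires the careful random partitioning / region-growing analysis of~\cite{lot12, lrtv12}: isotropy is what prevents the embedded mass from collapsing into fewer than $\Omega(k)$ regions, and the trade-off between region volume and boundary energy is precisely where the $\sqrt{\log k}$ in the final bound enters.
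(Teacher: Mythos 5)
First, a point of comparison: the paper does not prove this statement at all --- it is imported verbatim as a known Fact from~\cite{lrtv12,lot12} --- so there is no internal proof to measure you against. Your sketch follows the standard route of those papers: the spectral embedding $F(u)=(f_1(u),\dots,f_k(u))$ with the three identities (total mass $k$, isotropy from $w$-orthonormality, Dirichlet energy $\sum_i\gamma_i\le k\gamma_k$), the random-partition/localization lemma producing $\Omega(k)$ disjointly supported functions with discrepancy $O(\gamma_k\log k)$, and a one-dimensional sweep (as in Proposition~\ref{prop:hyper-1d}) inside each support. That is the correct architecture, and you correctly identify the localization lemma as the place where all the work and the $\sqrt{\log k}$ live; as a standalone proof it is of course incomplete, since that lemma is invoked rather than proved, but for a statement the paper itself only cites this is a reasonable level of detail.

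The one genuine flaw is your primary argument for the lower bound. The identity $\xi_k=\gamma_k$ for $2$-graphs, which you take from the paper's own remark, is false, and so is the factor-free conclusion you draw from it. Take the graph with two vertices $u,v$ joined by a unit-weight edge, so $w_u=w_v=1$ and $\gamma_2=2$: the orthogonal pair $x_1=e_u$, $x_2=e_v$ gives $\Dc(e_u)=\Dc(e_v)=1$, hence $\xi_2\le 1<2=\gamma_2$; moreover the disjoint sets $S_1=\{u\}$, $S_2=\{v\}$ have $\max_i\phi(S_i)=1<\gamma_2$, so the strengthening ``$\gamma_k\le\max_i\phi(S_i)$'' cannot hold and the factor $2$ in the Fact is genuinely needed. (What is true for symmetric operators is $\zeta_k=\gamma_k=\lambda_k$ by Courant--Fischer; $\xi_k$ only satisfies $\xi_k\le\gamma_k\le k\xi_k$, exactly as in Lemma~\ref{lemma:min}.) Your parenthetical fallback, however, is the right proof: a nonzero $v=\sum_i c_i\chi_{S_i}$ in the span of the indicators that is $w$-orthogonal to the first $k-1$ procedural minimizers satisfies $\D_w(v)\ge\gamma_k$, while disjointness of the supports gives the edge-by-edge bound $\D_w(v)\le 2\max_i\phi(S_i)$. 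This is precisely how the paper proves its hypergraph generalization, Theorem~\ref{thm:hyper-higher-cheeger}(b), via Claims~\ref{claim:gz1} and~\ref{claim:zx}, so you should promote the fallback to the main argument and drop the $\xi_k$ route.
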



\ignore{

Moreover, it was shown that a graph's $\lambda_k$ 
is small if and only if the graph has roughly $k$ sets each having small expansion.
This fact can be viewed as a generalization of the Cheeger's inequality to higher 
eigenvalues and partitions.

}

We prove the following generalizations to hypergraphs
(see Theorems~\ref{thm:hyper-sse} 
and~\ref{thm:hyper-higher-cheeger} for formal statements).

%
%

\begin{theorem}[Small Set Expansion]
\label{thm:hyper-sse-informal}
Given hypergraph $H = (V,E,w)$ and parameter $k < \Abs{V}$, 
suppose $f_1, f_2, \ldots, f_k$ are $k$ orthonormal vectors in the weighted space
	such that $\max_{s \in [k]} \D_w(f_s) \leq \xi$.
Then,
there exists a set $S \subset V$ 
such that $\Abs{S} = \bigo{ \Abs{V}/k}$ satisfying 
\[ \phi(S) = \bigo{ \ratio \cdot \sqrt{ \xi}  } , \]
where $r$ is the size of the largest hyperedge in $E$.
\end{theorem}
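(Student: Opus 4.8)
To prove the theorem, the plan is to run the standard ``isotropic spectral embedding $\to$ dimension reduction $\to$ localization into many light pieces $\to$ single-cut Cheeger rounding'' pipeline behind the higher-order Cheeger inequalities for $2$-graphs (Fact~\ref{thm:graph-higher-cheeger}), confining the hypergraph-specific loss to the single step where a maximum over the $\le\binom{r}{2}$ pairs of a hyperedge must be controlled, which costs a $\sqrt{\log r}$ factor. First I would pass to the spectral embedding $F\colon V\to\R^k$, $F(v):=(f_1(v),\dots,f_k(v))$. Weighted orthonormality of the $f_s$ gives the isotropy identity $\sum_{v\in V}w_v\,F(v)F(v)^\T=\I$, hence $\sum_v w_v\norm{F(v)}^2=k$; and, since the maximum of a sum is at most the sum of maxima, $\sum_{e\in E}w_e\max_{a,b\in e}\norm{F(a)-F(b)}^2\le\sum_{s\in[k]}\sum_e w_e\max_{a,b\in e}(f_s(a)-f_s(b))^2=\sum_s\D_w(f_s)\le k\xi$. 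Next I would collapse the embedding to a single real function: draw a standard Gaussian $z\in\R^k$ and put $g(v):=\langle z,F(v)\rangle$, so that $\E_z\big[\sum_v w_v g(v)^2\big]=\sum_v w_v\norm{F(v)}^2=k$, while for each $e$ the increments $g(a)-g(b)=\langle z,F(a)-F(b)\rangle$, $a,b\in e$, are (correlated) Gaussians of variance at most $D_e:=\max_{a,b\in e}\norm{F(a)-F(b)}^2$, whence the tail bound for the maximum of $\le\binom{r}{2}$ Gaussians gives $\E_z\big[\max_{a,b\in e}(g(a)-g(b))^2\big]=\bigo{D_e\log r}$. Fixing a realization $z$ for which both quantities are within constants of their means leaves one function $g\colon V\to\R$ with $\sum_v w_v g(v)^2=\Omega(k)$ and hyperedge Dirichlet energy $\sum_e w_e\max_{a,b\in e}(g(a)-g(b))^2=\bigo{k\xi\log r}$; this is the only place the hyperedge size ever enters.

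The heart of the argument is to localize $g$ into $\Omega(k)$ pieces that are simultaneously \emph{light} (supports of weight $\bigo{w(V)/k}$) and of controlled discrepancy ratio. Sort the vertices by $g$-value, cut the range of $g$ at weighted $\Theta(1/k)$-quantiles into $\Theta(k)$ consecutive intervals $I_1,\dots,I_m$, and for each $j$ let $g_j\ge 0$ be the ``tent'' supported on $I_j$, i.e.\ the distance from $g(v)$ to $\R\setminus I_j$, so $g_j$ vanishes off $I_j$ and is $1$-Lipschitz in $g$. The $g_j$ have pairwise disjoint supports of weight $\Theta(w(V)/k)$, and because $I_1,\dots,I_m$ tile the range of $g$, for every hyperedge $e$ we get $\sum_j\max_{a,b\in e}\abs{g_j(a)-g_j(b)}\le\max_{a,b\in e}\abs{g(a)-g(b)}$, and hence, the terms being non-negative, $\sum_j\max_{a,b\in e}(g_j(a)-g_j(b))^2\le\big(\sum_j\max_{a,b\in e}\abs{g_j(a)-g_j(b)}\big)^2\le\max_{a,b\in e}(g(a)-g(b))^2$; summing against $w_e$, the total hyperedge Dirichlet energy of the $g_j$ is $\bigo{k\xi\log r}$, so some light piece $g_J$ has energy $\bigo{\xi\log r}$. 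The remaining and delicate point is to also guarantee that this light $g_J$ retains a non-negligible share of the $\ell_2^2$-mass $\sum_v w_v g(v)^2=\Omega(k)$: when $g$ is concentrated or varies in large jumps the naive quantile intervals degenerate and the tents collapse, so these configurations have to be handled separately --- aligning jumps with interval boundaries, discarding near-constant intervals, re-choosing the quantiles --- and it is exactly this balancing that forces the polynomial-in-$k$ and lower-order logarithmic overhead, leaving $\D_w(g_J)=\bigo{(\ratio)^2\cdot\xi}$.

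Finally I would apply the one-sided sweep-cut rounding behind Theorem~\ref{thm:hyper-cheeger} to $g_J\ge 0$: among the level sets $\{v:g_J(v)>t\}$, $t>0$, some $S$ satisfies $\phi(S)\le\sqrt{2\,\D_w(g_J)}=\bigo{\ratio\cdot\sqrt{\xi}}$, and since $S\subseteq\supp(g_J)$ we have $w(S)\le w(\supp(g_J))=\bigo{w(V)/k}$, i.e.\ $S$ has the asserted size $\bigo{\Abs{V}/k}$, completing the proof. The main obstacle is the localization step: producing $\Omega(k)$ disjointly supported functions of small Dirichlet energy is routine, and so is producing a handful of functions of small support, but forcing $\Omega(k)$ functions to be light \emph{and} to have a bounded discrepancy ratio requires controlling the degenerate configurations of $g$ and redistributing $\ell_2^2$-mass among the pieces; that, together with the $\sqrt{\log r}$ already spent in the Gaussian projection, is what produces the final factor $\ratio$.
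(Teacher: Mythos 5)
Your first two steps (the isotropic spectral embedding and the single Gaussian projection to one dimension, paying the $\sqrt{\log r}$ for the maximum over $\le\binom{r}{2}$ pairs in a hyperedge) are sound and match techniques the paper also uses elsewhere. The genuine gap is the localization step, and it is not a technicality you can defer: after you collapse to one real function $g$ with $\sum_v w_v g(v)^2=\Omega(k)$ and energy $\bigo{k\xi\log r}$, your averaging argument only bounds the \emph{total energy} of the tent pieces $g_j$; it says nothing about their \emph{total $\ell_2^2$ mass}. A tent supported on a quantile interval is bounded by half the interval length, whereas $g$ itself can be far larger than the interval lengths on the region where its mass lives, so $\sum_j\sum_v w_v g_j(v)^2$ can be an arbitrarily small fraction of $\sum_v w_v g(v)^2$. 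Consequently the piece of smallest energy may have enormous discrepancy ratio, and no choice of ``light piece'' is guaranteed to satisfy $\D_w(g_J)=\bigo{\mathrm{poly}(k)\,\xi\log r}$. The information that would rule this out is the spreading property of the $k$-dimensional embedding, i.e.\ the isotropy identity $\sum_{i}w_i\inprod{u_j,u_i}^2=\norm{u_j}^2$ (Lemma~\ref{lem:hyper-spectral-embedding}(3)), and this is exactly what a single one-dimensional projection destroys: a lone function with small Rayleigh quotient simply does not certify the existence of an $O(n/k)$-sized sparse set (think of a graph whose only sparse cut is balanced), so any correct argument must keep and use the $k$-dimensional structure at the localization stage. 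Your sentence acknowledging the difficulty and asserting that ``aligning jumps, discarding near-constant intervals, re-choosing the quantiles'' costs only $\mathrm{poly}(k)$ is precisely the missing proof, and I do not see how to complete it along these lines.

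This is also where your route departs from the paper's. The paper never projects to one dimension before localizing: it normalizes the embedding, runs the hypergraph orthogonal separator of \cite{lm14b} on the unit vectors $\tilde u_i$ with $\tau=k$, and sweeps the vector $X_i=\norm{u_i}^2\cdot\Ind{\tilde u_i\in\widehat S}$ (Algorithm~\ref{alg:hyper-sse}). The support bound $\Abs{S}=\bigo{n/k}$ comes from $\Pr[\tilde u_i\in\widehat S]=\Theta(1/k)$; the denominator $\sum_i w_iX_i=\Omega(1)$ is exactly where isotropy enters, via the second-moment computation and Paley--Zygmund (Lemma~\ref{lem:hyper-sse-denom}); and the numerator is controlled through the separator's cut probability together with the $E_1/E_2$ split of hyperedges by norm disparity (Lemma~\ref{lem:hyper-sse-num}, Claim~\ref{claim:E1E2}), which is where the $\sqrt{\log r}$ and the $k\log k\log\log k$ of the final bound $\bigo{\ratio\cdot\sqrt{\xi}}$ arise. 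A smaller point: the theorem's size guarantee is on cardinality $\Abs{S}$, while your weighted quantiles only bound $w(\supp(g_J))$; that is fixable, but the mass-retention issue above is not, so as written the proposal does not establish the theorem.
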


\begin{theorem}[Higher-Order Cheeger Inequalities for Hypergraphs]
\label{thm:hyper-higher-cheeger-informal}
There exist absolute constants $c > 0$ such that the following holds.
Given a hypergraph $H=(V,E,w)$  and any integer $k < \Abs{V}$,
suppose $f_1, f_2, \ldots, f_k$ are $k$ orthonormal vectors in the weighted space
	such that $\max_{s \in [k]} \D_w(f_s) \leq \xi$.
Then,
 there exists
 $\Theta(k)$  non-empty disjoint sets $S_1, \ldots, S_{\floor{ck}} \subset V$ such that 
\[ \max_{i \in [ck]} \phi(S_i) = \bigo{ k^2 \log k \log \log k \cdot \sqrt{\log r} \cdot  \sqrt{\xi} } \mper \]

\noindent Moreover, for any $k$ disjoint non-empty sets $S_1, \ldots, S_k \subset V$
\[ \max_{i \in [k]} \phi(S_i) \geq \frac{\zeta_k}{2} \mper  \]

\end{theorem}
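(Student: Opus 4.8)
I would handle the two directions separately, leaning on Theorem~\ref{thm:hyper-cheeger} to round a single real-valued function and on the $2$-graph higher-order Cheeger machinery of \cite{lrtv12,lot12} for the combinatorial scaffolding. \textbf{The lower bound $\max_i\phi(S_i)\ge\zeta_k/2$} is the routine half: given $k$ disjoint non-empty sets $S_1,\dots,S_k$, the vectors $\Wh\chi_{S_1},\dots,\Wh\chi_{S_k}$ are nonzero and pairwise orthogonal in the normalized space (disjoint supports), so it suffices to bound $\D_w(f)$ for $f:=\sum_i b_i\chi_{S_i}$ and arbitrary reals $b_i$. Writing $M_e:=\max_{u\in e}f_u$ and $m_e:=\min_{u\in e}f_u$, the numerator of $\D_w(f)$ is $\sum_e w_e(M_e-m_e)^2$, and $(M_e-m_e)^2\le 2M_e^2+2m_e^2$; for non-monochromatic $e$, whenever $M_e\neq 0$ it is attained by a vertex in some $S_p$ with $e\in\partial S_p$ (and likewise for $m_e$), so one can charge $2w_eM_e^2$ and $2w_em_e^2$ to those two sets. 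Since a given $S_\ell$ only ever receives charges of the one sign matching $b_\ell$, and only from hyperedges in $\partial S_\ell$, summing gives $\sum_e w_e(M_e-m_e)^2\le 2\sum_\ell b_\ell^2\,w(\partial S_\ell)=2\sum_\ell b_\ell^2\,\phi(S_\ell)\,w(S_\ell)\le 2(\max_\ell\phi(S_\ell))\,\norm{f}_w^2$, whence $\zeta_k\le 2\max_i\phi(S_i)$.

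\textbf{The upper bound} is the substantial half, and the plan is to ``hypergraph-ify'' the spectral-embedding plus randomized-partition argument. First I would form the embedding $F:V\to\R^k$, $F(u):=(f_1(u),\dots,f_k(u))$: orthonormality of the $f_s$ gives total mass $\sum_u w_u\norm{F(u)}^2=k$, while $\norm{F(u)-F(v)}^2=\sum_s(f_s(u)-f_s(v))^2$ for each fixed pair $(u,v)$ gives $\sum_e w_e\max_{u,v\in e}\norm{F(u)-F(v)}^2\le\sum_s\sum_e w_e\max_{u,v\in e}(f_s(u)-f_s(v))^2=\sum_s\D_w(f_s)\le k\xi$, so the ``stretched energy'' of $F$ is at most a $\xi$-fraction of its mass. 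Next I would compress to $O(\log k)$ dimensions by a Johnson--Lindenstrauss projection (changing all the squared distances above by at most a constant factor) and run a randomized ball-carving / padded partition of the projected space, tuned to produce $\Omega(k)$ regions $R_1,R_2,\dots$, each holding an $\Omega(1/k)$ fraction of the mass and of small internal diameter, with only a small fraction of the stretched energy lost to cross-region hyperedges. On each region I would take a radial ``tent'' function $g_j$ supported on $R_j$; the standard estimate gives $\D_w(g_j)=\bigo{\poly(k)\cdot\log r\cdot\xi}$, after which the single-function hypergraph threshold-rounding underlying Theorem~\ref{thm:hyper-cheeger}, applied with the superlevel set taken inside $\supp(g_j)=R_j$, produces $S_j\subseteq R_j$ with $\phi(S_j)=\bigo{\sqrt{\D_w(g_j)}}$. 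The $S_j$ are pairwise disjoint since the $R_j$ are, and optimizing the polynomial and logarithmic factors over the $\Theta(k)$ regions delivers $\max_j\phi(S_j)=\bigo{k^2\log k\log\log k\cdot\sqrt{\log r}\cdot\sqrt\xi}$. (Alternatively one could iterate the single-small-set statement, Theorem~\ref{thm:hyper-sse-informal}, $\Theta(k)$ times, paying one extra factor of $k$ for disjointness, as in the graph case.)

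\textbf{The main obstacle} is the interplay between the random partition and the hyperedges. A hyperedge $e$ is ``cut'' the moment \emph{any} two of its up to $r$ vertices land in different regions, so the probability that $e$ survives a partition of scale $\Delta$ is governed by its entire $\le r$-point image in the embedding, not merely by $\max_{u,v\in e}\norm{F(u)-F(v)}$, and a crude union bound over the $\binom r2$ pairs is far too wasteful. The right ingredient is a CKR-style random partition whose padding guarantee is that a set of at most $r$ points of diameter $\rho$ stays inside one cell with probability $1-\bigo{(\rho/\Delta)\sqrt{\log r}}$ --- the distortion being $\bigo{\sqrt{\log r}}$ rather than $\bigo{\sqrt{\log n}}$ precisely because the governing spreading estimate only ever involves $r$ points --- equivalently, a hyperedge-aware orthogonal separator with distortion $\bigo{\sqrt{\log r\cdot\log(1/\beta)}}$. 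Constructing this partition, and then carefully tracking how the discrepancies of the localized tent functions degrade by $\poly(k)\cdot\log r$ through the carving, is the technical heart of the argument; by comparison the embedding identities above, the $\log r$-free single-function rounding supplied by Theorem~\ref{thm:hyper-cheeger}, and the $\zeta_k/2$ lower bound are all routine.
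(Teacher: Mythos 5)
Your lower bound is correct and is essentially the paper's own argument (Claim~\ref{claim:zx} applied to the indicator vectors of the disjoint sets, using disjointness of supports to get the constant $2$); your charging proof of $\D_w(f)\le 2\max_i\phi(S_i)$ for every $f$ in the span is just a rephrasing of that claim.

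For the upper bound, your outline (spectral embedding $F(u)=(f_1(u),\dots,f_k(u))$ plus a hyperedge-aware random partition) matches the paper's strategy, but as written it has a genuine gap: the ingredient you yourself call ``the technical heart'' --- a random partition/separator in which an entire hyperedge of up to $r$ points is cut with probability proportional to its embedded diameter times only $\sqrt{\log r}$ --- is neither constructed nor cited, and nothing in your sketch substitutes for it. The paper does not construct it either; it imports exactly this object as a black box, the ``orthogonal separator'' of Louis--Makarychev (Fact~\ref{lem:gen-orth-sep}), whose third property bounds $\Pr[e \text{ is cut}]$ by $O\!\left(\tfrac{\alpha\tau\log\tau\log\log\tau\sqrt{\log|e|}}{\sqrt{1-\beta}}\right)\max_{u,v\in e}\norm{\tilde u-\tilde v}$. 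If you likewise invoke that fact, your plan becomes viable, but two further steps you assert without argument also need the paper's machinery: (i) getting $\Theta(k)$ disjoint regions each carrying an $\Omega(1/k)$ fraction of the mass is not automatic from ball carving --- the paper needs the isotropy property $\sum_i w_i\inprod{u_j,u_i}^2=\norm{u_j}^2$ (Lemma~\ref{lem:hyper-spectral-embedding}(3)) to discard overweight separators, then a disjointification-and-merge step (Lemmas~\ref{lemma_pro_all_cover} and~\ref{lemma:enough_subsets}); and (ii) your ``tent function $g_j$ with $\D_w(g_j)=O(\poly(k)\log r\cdot\xi)$, then round by Theorem~\ref{thm:hyper-cheeger}'' is a localization lemma for the hypergraph discrepancy ratio that you would have to prove, since a hyperedge straddling a region boundary contributes its full $\max$-over-pairs term. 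The paper sidesteps this entirely: within each merged block it does a direct sweep cut on the values $\norm{u_i}^2$, bounding $\phi(\hat B_j)\le\nu(B_j)/\mu(B_j)$ (Lemma~\ref{lemma:mu_nu_ratio}) and controlling $\E[\max_j\nu(B_j)]\le O(D)\,k\sqrt{\xi}$ by Cauchy--Schwarz (Lemma~\ref{lem:hyper-spectral-embedding}(4) and Lemma~\ref{lemma_bounding_cut}); the $\sqrt{\xi}$ thus comes from a linear (first-power) estimate, not from Cheeger-rounding a localized Rayleigh quotient, which is both simpler and avoids the extra loss-tracking your route requires. So: right skeleton, but the separator must be taken from Fact~\ref{lem:gen-orth-sep} rather than left as future work, and the mass-spreading and localization steps need the arguments above (or the paper's $\mu/\nu$ sweep) to be supplied.
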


\subsection{Hardness via Vertex Expansion in $2$-Graphs}
\label{sec:vertex-results}

Given a graph $G = (V,E,w)$ having maximum vertex degree $d$ and a 
set $S \subset V$, its internal boundary $\Nin(S)$, and external boundary $\Nout(S)$
is defined as follows:

$\Nin(S) \defeq \set{v \in S : \exists u \in \bar{S} \textrm{such that } \set{u,v} \in E }$ and

$\Nout(S) \defeq \set{v \in \bar{S} : \exists u \in S \textrm{such that } \set{u,v} \in E }$.

The vertex expansion $\phiv(S)$ of a set $S$  is defined as 
\[ \phiv(S) \defeq \frac{ \Abs{\Nin(S)} + \Abs{\Nout(S)}  }{\Abs{S}} \mper \]
Vertex expansion is a fundamental graph parameter that has 
applications both as an algorithmic primitive and as a tool for proving communication lower bounds
\cite{lt80,l80,bt84,ak95,sm00}.


Bobkov \etal~ \cite{bht00} defined a Poincair\'e-type  functional graph parameter as follows.
Given an undirected graph $G = (V,E)$, denote $v \sim u$ if $\{v,u\}\in E$,
and define

\[ \linf \defeq \min_{ f \in \R^V} 
\frac{ \sum_{u \in V} \max_{v \sim u} \paren{f_u - f_v}^2 }{\sum_{u \in V} f_u^2 - \frac{1}{n} \paren{\sum_{u \in V} f_u }^2 } \mper \]

Observe that the expression to be minimized does not change
if the same constant is added to every coordinate.
Hence, without loss of generality, we can assume that
the above minimization is taken over all non-zero vectors such that $f \perp \vec{1}$.
Therefore, we can equivalently write 

\begin{equation}
\linf = \min_{0 \neq f \perp \vec{1}} \D^V(f),
\label{eq:linf}
\end{equation}

\noindent where $\D^V(\cdot)$ is the discrepancy ratio for vertex expansion:

\[ \D^V(f) := \frac{ \sum_{u \in V} \max_{v \sim u} \paren{f_u - f_v}^2 }{\sum_{u \in V} f_u^2 }.\]

If $\chi_S$ is the characteristic vector of the susbet $S$ of vertices,
then it follows that $\phiv(S) = \D^V(\chi_S)$.
We can see that there are many similarities with edge expansion,
and indeed
a Cheeger-type Inequality for vertex expansion in graphs was proved
in~\cite{bht00}.

\begin{fact}[\cite{bht00}]
For an un-weighted graph $G = (V,E)$,
\[  \frac{\linf}{2} \leq  \phiv_G \leq \sqrt{2 \linf } \mper  \]
\end{fact}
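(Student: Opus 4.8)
The plan is to reprove both directions by the classical Cheeger template; note that routing through the hypergraph Cheeger inequality (Theorem~\ref{thm:hyper-cheeger}) via a vertex-expansion-to-hypergraph reduction would only give $\phiv_G = \bigo{\sqrt{\linf}}$ with a worse constant, so a direct argument is needed for the stated constants $\tfrac12$ and $\sqrt2$.

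\emph{The lower bound $\linf/2 \le \phiv_G$} is immediate from a single test vector. Let $S$ attain $\phiv_G$ with $\Abs{S}\le n/2$, and use $f := \chi_S - \tfrac{\Abs{S}}{n}\one$ in~(\ref{eq:linf}). It is admissible since $f\perp\one$; as the numerator of $\D^V$ depends only on the differences $f_u-f_v$, it equals $\sum_u\max_{v\sim u}(\chi_S(u)-\chi_S(v))^2$, which counts exactly the vertices incident to a cut edge, i.e.\ $\Abs{\Nin(S)}+\Abs{\Nout(S)}$ (a disjoint union, since $\Nin(S)\subseteq S$ and $\Nout(S)\subseteq\bar S$); its denominator is $\Abs{S}\bigl(1-\Abs{S}/n\bigr)\ge\Abs{S}/2$. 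Hence $\linf\le\D^V(f)\le 2\phiv(S)=2\phiv_G$.

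\emph{The upper bound $\phiv_G\le\sqrt{2\linf}$} is a sweep-cut (threshold) rounding. By compactness of $\{f:\Norm{f}=1,\ f\perp\one\}$ and continuity of $\D^V$, fix a minimiser $f$, $\D^V(f)=\linf$. Shift $f$ by a median $m$ of the multiset $\{f_u\}_{u\in V}$, chosen so that both $\{u:f_u>m\}$ and $\{u:f_u<m\}$ have at most $n/2$ vertices; the shift leaves the numerator of $\D^V$ unchanged and, using $f\perp\one$, only enlarges the denominator, so $g:=f-m\one$ still satisfies $\D^V(g)\le\linf$. Now sweep the squared levels of $g$: for a random threshold one takes the set $\{u:g_u>t\}$ when $t\ge 0$ and $\{u:g_u\le t\}$ when $t<0$; this always lies inside one of the two half-supports, hence has at most $n/2$ vertices, and its $\phiv$-numerator equals the number of vertices of $G$ incident to an edge crossing level $t$ (using $\Nin$/$\Nout$ symmetry under complementation). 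Restricting to one side, write $h\ge0$ for $g$ there, $M:=\max_u h_u^2$, and $S_t:=\{u:h_u^2>t\}$; for the threshold on $h^2$ uniform in $[0,M]$ one gets the exact co-area identities
\[
\E\,\Abs{S_t}=\tfrac1M\sum_u h_u^2,\qquad
\E\,\Abs{\Nin(S_t)}=\tfrac1M\sum_u\bigl(h_u^2-\min_{v\sim u}h_v^2\bigr)^{+},\qquad
\E\,\Abs{\Nout(S_t)}=\tfrac1M\sum_u\bigl(\max_{v\sim u}h_v^2-h_u^2\bigr)^{+}.
\]
Factoring $h_u^2-h_v^2=(h_u-h_v)(h_u+h_v)$, applying Cauchy--Schwarz, and bounding $\sum_u(h_u-h_{v(u)})^2$ by the $\D^V$-numerator, one obtains $\E\bigl[\Abs{\Nin(S_t)}+\Abs{\Nout(S_t)}\bigr]\le\sqrt{2\linf}\cdot\E\,\Abs{S_t}$; then some level $t$ gives $\phiv(S_t)\le\sqrt{2\linf}$ with $\Abs{S_t}\le n/2$, which is the claim.

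I expect the last step — the Cauchy--Schwarz bookkeeping for $\E\,\Abs{\Nout(S_t)}$ — to be the main obstacle, for reasons peculiar to \emph{vertex} (as opposed to edge) expansion. The vertex boundary is asymmetric ($\Nin(S)\subseteq S$ but $\Nout(S)\subseteq\bar S$), so the two contributions must be treated separately; and the naive bound on the $\Nout$ term produces $\sum_u h_{v^{+}(u)}^2$, where $v^{+}(u)$ is a largest-valued neighbour of $u$, a quantity that can be as large as $(\text{maximum degree})\cdot\sum_u h_u^2$, destroying degree-independence. The fix is to observe that on the relevant vertices $h_{v^{+}(u)}^2=h_u^2+\bigl(\max_{v\sim u}h_v^2-h_u^2\bigr)$, so $\sum_u h_{v^{+}(u)}^2$ is itself controlled by $\E\,\Abs{\Nout(S_t)}$ and $\E\,\Abs{S_t}$, yielding a self-referential quadratic inequality whose solution carries no maximum-degree factor. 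A second subtlety is that pushing the constant down to exactly $\sqrt2$ forces one to run the sweep against the \emph{unsplit} $g$ (hence the signed-level construction above) rather than splitting $g$ into positive and negative parts and taking the better one: the max-type numerator $\sum_u\max_{v\sim u}(\cdot)^2$ is only subadditive under such a split up to a factor $2$, which would cost $\sqrt2$. Everything else (the co-area identities, the averaging over $t$) is routine.
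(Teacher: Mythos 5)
This Fact is not proved in the paper at all: it is imported verbatim from \cite{bht00}, so there is no internal proof to compare your argument against; the closest the paper comes is the indirect route you correctly dismiss (Reduction~\ref{red:hyper-vert} plus Theorem~\ref{thm:linf-eig2} and Proposition~\ref{prop:hyper-sweep-rounding}), which only yields $\phiv_G = \bigo{\sqrt{\linf}}$ with worse constants. Your lower-bound direction is correct and complete: the test vector $\chi_S - \frac{\Abs{S}}{n}\one$ in (\ref{eq:linf}) gives exactly $\linf \le 2\phiv(S)$ as you compute.

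The upper bound, however, has a genuine gap, and it sits precisely at the step you defer. After the median shift, the only properties of $g$ that your bookkeeping invokes are the balanced sign supports and $\sum_u \max_{v\sim u}(g_u-g_v)^2 \le \linf \sum_u g_u^2$; from these, co-area plus Cauchy--Schwarz is supposed to deliver $\E\bigl[\Abs{\Nin(S_t)}+\Abs{\Nout(S_t)}\bigr]\le\sqrt{2\linf}\cdot\E\Abs{S_t}$. No such implication can hold. Take a star with center $c$ and $d\ge 2$ leaves, and $g_c=1$, $g\equiv 0$ on the leaves: the sign supports are balanced, the premise holds with $\lambda:=1+d$ (numerator $1+d$, denominator $\sum_u g_u^2=1$), yet the only nonempty sweep set is $\{c\}$, whose vertex expansion is $\Abs{\Nin}+\Abs{\Nout}=1+d$, which exceeds $\sqrt{2(1+d)}$. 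So the expectation inequality you claim cannot be derived from those ingredients alone; any proof of the constant $\sqrt{2}$ must use more about the true minimizer $f$ than its Rayleigh-type quotient, and your sketch never does. The proposed repair also does not rescue the constant: writing $h_{v^{+}(u)}^2=h_u^2+\bigl(\max_{v\sim u}h_v^2-h_u^2\bigr)$ and solving the resulting quadratic inequality does remove the maximum-degree factor, but it only yields a bound of the shape $\E\bigl[\Abs{\Nin(S_t)}+\Abs{\Nout(S_t)}\bigr]\le\bigl(c_1\sqrt{\linf}+c_2\linf\bigr)\E\Abs{S_t}$ (on the star it degenerates to the trivial bound by the numerator), i.e.\ a Cheeger-type inequality $\phiv_G=\bigo{\sqrt{\linf}}$ with constants far from $\sqrt{2}$. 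In short: the easy direction stands, the hard direction is a plan whose pivotal inequality is false at the level of generality at which you argue, so as written the Fact must continue to rest on the argument of \cite{bht00}.
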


Given the similarities between vertex expansion in $2$-graphs
and hyperedge expansion, one could imagine that a diffusion process
can be defined with respect to vertex expansion in order
to construct a similar Laplacian operator, which
would have $\linf$ as an eigenvalue.  However, instead of repeating
the whole argument and analysis, we remark that
there is a well known reduction from vertex expansion in 
$2$-graphs to hyperedge expansion.

%

\begin{mybox}
\begin{reduction}~
\label{red:hyper-vert}

{\sf Input}: Undirected $2$-graph $G=(V,E)$.

{\sf Output}:We construct hypergraph $H = (V,E')$ as follows. For every vertex $v \in V$, we add the (unit-weighted)
hyperedge $\set{v} \cup \Nout(\set{v})$ to $E'$.
\end{reduction}
\end{mybox}

\begin{fact}[\cite{lm14b}]
\label{thm:hyper-vert-exp}
Given a graph $G=(V,E,w)$ of maximum degree $d$ and minimum degree $c_1 d$ (for some constant $c_1$), 
the hypergraph $H = (V,E')$ obtained from Reduction~\ref{red:hyper-vert}
has hyperedges of cardinality at most $d+1$ and,
\[ c_1 \phi_H(S) \leq  \frac{1}{d+1} \cdot \phiv_G(S) \leq \phi_H(S)  \qquad \forall S \subset V \mper \]
\end{fact}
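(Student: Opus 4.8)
The strategy is a direct translation: I will rewrite both $\phi_H(S)$ and $\phiv_G(S)$ as ratios with the \emph{same} numerator, and then absorb the discrepancy between the two denominators into the constants $c_1$ and $\frac{1}{d+1}$. First I would record the two structural facts about Reduction~\ref{red:hyper-vert}. The hyperedge attached to $v$ is the closed neighbourhood $e_v = \set{v}\cup\Nout(\set{v})$, so $\Abs{e_v} = 1+\deg_G(v) \le d+1$, which is the cardinality claim. Since every hyperedge of $H$ has unit weight, the weight of a vertex $v$ in $H$ is simply the number of hyperedges through it (counting $E'$ with multiplicity), and $v\in e_u$ exactly when $u=v$ or $u\sim v$; hence $w_v = 1+\deg_G(v)$ in $H$.

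Next, fix a nonempty proper $S\subsetneq V$. The key step is to identify which hyperedges $S$ cuts: $e_v$ meets both $S$ and $\bar{S}$ iff either $v\in S$ and $v$ has a neighbour in $\bar{S}$ (i.e.\ $v\in\Nin(S)$), or $v\in\bar{S}$ and $v$ has a neighbour in $S$ (i.e.\ $v\in\Nout(S)$). Since $\Nin(S)\subseteq S$ and $\Nout(S)\subseteq\bar{S}$ are disjoint and each cut hyperedge contributes weight $1$, this gives $w(\partial S)=\Abs{\Nin(S)}+\Abs{\Nout(S)}$, while $w(S)=\sum_{v\in S}(1+\deg_G(v))$. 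Consequently
\[ \phi_H(S)=\frac{\Abs{\Nin(S)}+\Abs{\Nout(S)}}{\sum_{v\in S}(1+\deg_G(v))}, \qquad \phiv_G(S)=\frac{\Abs{\Nin(S)}+\Abs{\Nout(S)}}{\Abs{S}}. \]

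Finally I would bound the denominator of $\phi_H(S)$: using $c_1 d\le\deg_G(v)\le d$ for every $v$ (and noting $c_1\le 1$, since the minimum degree cannot exceed the maximum), we obtain $(1+c_1 d)\Abs{S}\le\sum_{v\in S}(1+\deg_G(v))\le(d+1)\Abs{S}$. Replacing the denominator by its upper bound $(d+1)\Abs{S}$ yields $\phi_H(S)\ge\frac{1}{d+1}\phiv_G(S)$; replacing it by its lower bound $(1+c_1 d)\Abs{S}$ and then using $c_1(d+1)\le 1+c_1 d$ yields $c_1\phi_H(S)\le\frac{c_1}{1+c_1 d}\phiv_G(S)\le\frac{1}{d+1}\phiv_G(S)$. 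Combining the two estimates is the claimed two-sided bound. The proof is short and essentially bookkeeping; the only place that needs care is the cut-edge characterisation — one must check that $\Nin(S)$ and $\Nout(S)$ enumerate the cut hyperedges exactly once each, which is why it is convenient to index the hyperedges of $H$ by their defining vertex rather than by the set they happen to equal — together with using the two degree inequalities in the correct directions when passing between the two denominators.
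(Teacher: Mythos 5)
Your proof is correct. Note that the paper itself gives no argument for this statement: it is imported as a Fact from~\cite{lm14b}, so there is no in-paper proof to compare against. Your direct bookkeeping — identifying the cut hyperedges of $H$ with $\Nin(S)\cup\Nout(S)$ (indexing hyperedges by their defining vertex, which also handles possible repeated hyperedges), computing $w_v = 1+\deg_G(v)$, and then sandwiching the denominator $\sum_{v\in S}(1+\deg_G(v))$ between $(1+c_1 d)\Abs{S}$ and $(d+1)\Abs{S}$ together with $c_1\le 1$ — is exactly the standard argument behind the cited fact, and all the delicate points (exact enumeration of cut edges, the direction of each degree bound) are handled correctly.
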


\begin{remark}
The dependence on the degree in Fact~\ref{thm:hyper-vert-exp} is only because vertex expansion and hypergraph
expansion are normalized differently. The vertex expansion of a set $S$ is defined as the number of vertices in the boundary 
of $S$ divided by the cardinality of $S$, whereas the hypergraph expansion of a set $S$ is defined as the number 
hyperedges crossing $S$ divided by the sum of the degrees of the vertices in $S$. 
\end{remark}

Using Fact~\ref{thm:hyper-vert-exp},
we can apply our results for hypergraph edge expansion
to vertex expansion in $d$-regular 2-graphs.
In particular, we relate $\linf$ with the
parameter $\gamma_2$ associated with the hypergraph
achieved in Reduction~\ref{red:hyper-vert}.

\begin{theorem}
\label{thm:linf-eig2}
Let $G = (V,E)$ be a undirected $d$-regular $2$-graph with parameter~$\linf$, and let $H = (V,E')$ be the hypergraph obtained
in Reduction~\ref{red:hyper-vert} having parameter~$\eig_2$. Then,
\[  \frac{\eig_2}{4} \leq  \frac{\linf}{d} \leq \eig_2 \mper    \]
\end{theorem}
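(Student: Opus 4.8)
The plan is to reduce the theorem to a pointwise comparison between the two discrepancy ratios -- $\D^V$ for vertex expansion of $G$ and $\D_w$ for hyperedge expansion of $H$ -- and then to exploit the $d$-regularity of $G$ to line up the two variational problems that define $\linf$ and $\eig_2$. First I would record the structure of $H$: since $G$ is $d$-regular, each hyperedge $e_v := \{v\}\cup\Nout(\{v\})$ has exactly $d+1$ vertices, and each vertex $u$ lies in exactly $d+1$ of them (namely $e_u$, together with $e_w$ for the $d$ neighbours $w$ of $u$), so $w_u = d+1$ for every $u$; thus $H$ is a regular hypergraph. The payoff is that for $x=\Wh f$ the condition $x\perp\Wh\vec{1}$ amounts to $(d+1)\sum_u f_u = 0$, i.e.\ to ordinary orthogonality $f\perp\vec{1}$, so that, by the definition of $\eig_2$ and the regularity of $H$,
\[ \eig_2 \;=\; \min_{\vec{0}\neq f\perp\vec{1}}\ \frac{\sum_{v\in V}\max_{a,b\in e_v}(f_a-f_b)^2}{(d+1)\sum_{u\in V}f_u^2}, \]
while by (\ref{eq:linf}) we have $\linf = \min_{\vec{0}\neq f\perp\vec{1}}\D^V(f)$ with $\D^V(f)=\frac{\sum_u\max_{v\sim u}(f_u-f_v)^2}{\sum_u f_u^2}$; crucially these two minima range over the same set of test vectors.

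Next I would prove the key pointwise estimate: writing $g(f):=\sum_{v}\max_{a,b\in e_v}(f_a-f_b)^2$ and $h(f):=\sum_u\max_{v\sim u}(f_u-f_v)^2$, one has $h(f)\le g(f)\le 4\,h(f)$ for every $f\in\R^V$. The left inequality is immediate: for each $u$ and each neighbour $w$ of $u$ both $u$ and $w$ lie in $e_u$, so $\max_{v\sim u}(f_u-f_v)^2\le\max_{a,b\in e_u}(f_a-f_b)^2$; summing over $u$ gives it. For the right inequality, fix $v$ and let $a,b\in e_v$ attain the maximum there; since $a,b\in\{v\}\cup\Nout(\{v\})$, the triangle inequality gives $|f_a-f_b|\le|f_a-f_v|+|f_v-f_b|\le 2\max_{w\sim v}|f_v-f_w|$, hence $\max_{a,b\in e_v}(f_a-f_b)^2\le 4\max_{w\sim v}(f_v-f_w)^2$; summing over $v$ finishes it. Dividing through by $(d+1)\sum_u f_u^2$, this reads $\tfrac14(d+1)\,\D_w(f)\le\D^V(f)\le(d+1)\,\D_w(f)$ for all $f$ -- the ``fractional'' version of Fact~\ref{thm:hyper-vert-exp}, whose special case $f=\chi_S$ it is; the factor $4$ (rather than the exact factor $d+1$ of Fact~\ref{thm:hyper-vert-exp}) enters precisely because a general $f$ can have two-sided spread inside a closed neighbourhood.

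Finally I would take optima. Evaluating $\D^V(f)\le(d+1)\D_w(f)$ at a minimizer of $\D_w$ over $\{f\perp\vec{1}\}$ gives $\linf\le(d+1)\eig_2$, and evaluating $\D_w(f)\le\tfrac{4}{d+1}\D^V(f)$ at a minimizer of $\D^V$ over the same set gives $\eig_2\le\tfrac{4}{d+1}\linf$; together these say $\tfrac{\eig_2}{4}\le\tfrac{\linf}{d+1}\le\eig_2$, which is the claimed bound (with $d+1$ in place of $d$ in the denominators -- the left inequality in the form stated, $\tfrac{\eig_2}{4}\le\tfrac{\linf}{d}$, follows since $\tfrac{\linf}{d}\ge\tfrac{\linf}{d+1}$, and I suspect $d+1$ is the intended normalization throughout, consistently with Fact~\ref{thm:hyper-vert-exp}). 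The only two ingredients with any content are the factor-$4$ triangle-inequality estimate and the observation that the regularity of $H$ collapses $\Wh$-orthogonality to ordinary orthogonality to $\vec{1}$; I expect the latter to be the point to handle with care, because without it $\linf$ and $\eig_2$ would be minimized over different feasible sets and could not be compared term by term.
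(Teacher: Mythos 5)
Your proof is correct and follows essentially the same route as the paper's: rewrite $\eig_2$ as the discrepancy-ratio minimum over $f\perp\vec{1}$ (regularity of $H$ makes weighted and ordinary orthogonality coincide), observe the easy one-sided comparison of numerators, and obtain the factor $4$ from the triangle inequality, which is exactly the paper's use of $(x+y)^2\le 4\max\{x^2,y^2\}$ inside each hyperedge $\{v\}\cup N(v)$. Your attention to the $d$ versus $d+1$ normalization is in fact slightly more careful than the paper's own proof, which writes the denominator with $d$ even though each vertex of $H$ has weight $d+1$; the discrepancy only affects the stated constants, not the substance.
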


The computation of $\linf$ is not known to be tractable.
For graphs having maximum vertex degree $d$, 
\cite{lrv13} gave a $\bigo{\log d}$-approximation algorithm for computing $\linf$, and 
showed that there exists an absolute constant $C$ such that is $\sse$-hard 
to get better than a $C \log d$-approximation to $\linf$.
Indeed, such a hardness result implies that
the hyperedge expansion and the spectral gap $\gamma_2$ cannot be efficiently
approximated.
See Section~\ref{sec:vert-exp} for a definition of \sse~hypothesis.
Specifically, we show the following hardness results
for computing hyperedge expansion (see Theorem~\ref{thm:hyper-expansion-hardness})
and $\gamma_2$ (see Theorem~\ref{thm:hyper-eigs-lower}).

\begin{theorem}[Informal Statement]
\label{thm:hyper-expansion-hardness-informal}
Given a hypergraph $H$, it is \sse-hard to get better than an $\bigo{\sqrt{\phi_H \cdot \frac{\log r}{r}}}$ bound
on hypergraph expansion in polynomial time.
(Note that this is non-trivial only when $\phi_H \leq \frac{\log r}{r}$.)
\end{theorem}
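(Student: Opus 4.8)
The plan is to obtain this hardness by transporting the known \sse-hardness of approximating \emph{vertex expansion} in bounded-degree $2$-graphs over to hypergraph expansion, using Reduction~\ref{red:hyper-vert} together with the expansion-preservation guarantee of Fact~\ref{thm:hyper-vert-exp}. The starting point is the result of \cite{lrv13} (recalled in Section~\ref{sec:vert-exp}): assuming the \sse~hypothesis, there is an absolute constant $K>0$ such that for every sufficiently small $\epsilon>0$ and every degree bound $d$, no polynomial-time algorithm distinguishes $d$-regular graphs $G$ with $\phiv_G\le\epsilon$ from those with $\phiv_G\ge K\sqrt{\epsilon\log d}$. (This is the $\sqrt{\OPT\log d}$ form of the hardness; it implies in particular the $C\log d$-inapproximability of $\linf$ via the vertex Cheeger inequality.) We may take the hard instances to be $d$-regular, so that Fact~\ref{thm:hyper-vert-exp} applies with $c_1=1$ and $r:=d+1$.

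Given such an instance $G$, I would apply Reduction~\ref{red:hyper-vert} to produce a hypergraph $H=(V,E')$ all of whose hyperedges have cardinality exactly $r=d+1$. By Fact~\ref{thm:hyper-vert-exp} we have $\phi_H(S)=\Theta(\phiv_G(S)/r)$ for every $S\subset V$, hence $\phi_H=\Theta(\phiv_G/r)$. Translating the two cases through this identity: in the \yes~case $\phi_H\le\phi^\star:=\Theta(\epsilon/r)$, and in the \no~case $\phi_H\ge\Omega\!\paren{\sqrt{\epsilon\log d}/r}$. Using $\log d=\Theta(\log r)$ and substituting $\epsilon=\Theta(r\phi^\star)$, the \no-case bound becomes $\phi_H\ge\Omega\!\paren{\sqrt{r\phi^\star\log r}/r}=\Omega\!\paren{\sqrt{\phi^\star\cdot\tfrac{\log r}{r}}}$. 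Thus, under \sse, no polynomial-time algorithm distinguishes $\phi_H\le\phi^\star$ from $\phi_H\ge\Omega\!\paren{\sqrt{\phi^\star\log r/r}}$; equivalently, one cannot certify in polynomial time an upper bound on hypergraph expansion better than $\bigo{\sqrt{\phi_H\cdot\tfrac{\log r}{r}}}$, which is the claim. The parenthetical restriction $\phi_H\le\log r/r$ is exactly the regime in which $\sqrt{\phi_H\log r/r}\ge\phi_H$, i.e. in which the stated bound is weaker than the (vacuously trivial) bound $\phi_H$ itself, so the statement is meaningful only there.

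The main obstacle is not conceptual but lies in the normalization bookkeeping of the previous paragraph. Vertex expansion normalizes the boundary by $\Abs{S}$, whereas hypergraph expansion normalizes the cut by $w(S)=\sum_{v\in S}w_v$, and this is precisely the $\Theta(1/r)$ factor recorded in Fact~\ref{thm:hyper-vert-exp}; one has to be careful that this factor multiplies $\OPT$ \emph{inside} the square root (turning the vertex-side $\sqrt{\epsilon\log d}$ into $\sqrt{(r\phi^\star)\log r}$) rather than outside it, since that is exactly what produces the $\sqrt{\log r/r}$ shape instead of $\sqrt{\log r}$. A secondary, routine point is confirming that the \cite{lrv13} hard instances can be taken (near-)regular with maximum degree $d$ controlling $r$; if they are only bounded-degree with degrees in $[c_1 d,d]$ for a constant $c_1$, Fact~\ref{thm:hyper-vert-exp} still applies and the constant $c_1$ is absorbed into the $\Omega(\cdot)$ and $\Theta(\cdot)$. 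No further ideas beyond these two checks are needed.
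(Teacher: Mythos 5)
Your proposal is correct and follows essentially the same route as the paper: reduce from the \sse-hardness of vertex expansion in bounded-degree graphs (\cite{lrv13}, Fact~\ref{fact:lrv13}) via Reduction~\ref{red:hyper-vert} and Fact~\ref{thm:hyper-vert-exp}, with the $\Theta(1/r)$ normalization factor entering inside the square root to yield the $\sqrt{\phi_H\log r/r}$ gap. The only difference is that the paper tracks the $\min\{10^{-10},\cdot\}-\eta$ form of the \no-case bound explicitly, which is what produces the regime condition $\widehat{\ve}\, r\log r\in[\eta^2,c_2]$ in the formal statement; your informal treatment of that regime is consistent with this.
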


\begin{theorem}[Informal Statement]
\label{thm:hyper-eigs-lower-informal}
When $\gamma_2 \leq \frac{1}{r}$,
it is \sse-hard to output a number $\widehat{\gamma}$ in polynomial time
such that $\gamma_2 \leq \widehat{\gamma} = \bigo{\gamma_2 \log r}$.
\end{theorem}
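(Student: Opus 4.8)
The plan is to transfer the \sse-hardness of approximating the Poincaré parameter $\linf$ of bounded-degree $2$-graphs, established in \cite{lrv13}, to the spectral gap $\gamma_2$ of the hypergraph Laplacian, using Reduction~\ref{red:hyper-vert} together with the constant-factor two-sided estimate of Theorem~\ref{thm:linf-eig2}. From \cite{lrv13} I would extract the following gap form of the hardness: there is an absolute constant $C$ so that, under the \sse\ hypothesis, for every sufficiently small $\eta>0$ (thought of as a function of the degree) no polynomial-time algorithm distinguishes $d$-regular $2$-graphs $G$ with $\linf(G)\le\eta$ from $d$-regular $2$-graphs with $C\eta\log d\le\linf(G)\le C'\eta\log d$, for another absolute constant $C'$; the matching upper bound in the soundness case is the standard output of such a reduction, and if the reduction only yields near-regular graphs I would first apply a standard degree-regularizing gadget (which also keeps the $\Omega(d)$ minimum degree needed by Fact~\ref{thm:hyper-vert-exp}).

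Given such a $G$, I form $H=(V,E')$ via Reduction~\ref{red:hyper-vert}, so that $H$ has maximum hyperedge size $r=d+1$ and hence $\log r=\Theta(\log d)$. By Theorem~\ref{thm:linf-eig2}, $\tfrac{\linf(G)}{d}\le\gamma_2(H)\le\tfrac{4\linf(G)}{d}$, so the completeness case gives $\gamma_2(H)\le 4\eta/d$ and the soundness case gives $\tfrac{C\eta\log d}{d}\le\gamma_2(H)\le\tfrac{4C'\eta\log d}{d}$. Choosing $\eta$ small enough, e.g.\ $\eta\le\tfrac{1}{16\,C'\log d}$, forces $\gamma_2(H)\le 1/r$ in \emph{both} cases, so every instance produced lies in the promised regime, while the two cases remain separated by a multiplicative $\Theta(\log d)=\Theta(\log r)$.

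To finish, suppose for contradiction that some polynomial-time algorithm, on every hypergraph $H$ with $\gamma_2(H)\le 1/r$, outputs $\widehat\gamma$ with $\gamma_2(H)\le\widehat\gamma\le\alpha\,\gamma_2(H)\log r$ for an absolute constant $\alpha$ with $8\alpha<C$. On the instances above it outputs $\widehat\gamma\le\alpha\cdot\tfrac{4\eta}{d}\log r\le\tfrac{8\alpha\eta\log d}{d}$ in the completeness case and $\widehat\gamma\ge\gamma_2(H)\ge\tfrac{C\eta\log d}{d}$ in the soundness case; since $8\alpha<C$ these ranges are disjoint, so comparing $\widehat\gamma$ against $\tfrac{C\eta\log d}{2d}$ distinguishes the two cases, contradicting the hardness of $\linf$. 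This proves the informal statement; the formal version (Theorem~\ref{thm:hyper-eigs-lower}) records explicit constants, and a parallel argument using Fact~\ref{thm:hyper-vert-exp} and the hypergraph Cheeger inequality (Theorem~\ref{thm:hyper-cheeger}) gives the expansion hardness of Theorem~\ref{thm:hyper-expansion-hardness-informal}.

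The step I expect to be the crux is the parameter accounting in the middle paragraph: the reduction must keep $\gamma_2(H)$ below $1/r$ simultaneously in the completeness and soundness cases, so that all produced instances satisfy the promise, while still exhibiting a $\Theta(\log r)$ gap. This is precisely where it is used that the $\linf$-hardness comes with an adjustable, arbitrarily small completeness parameter but a fixed $\Theta(\log d)$ multiplicative gap, that $r=\Theta(d)$, and that Theorem~\ref{thm:linf-eig2} loses only a constant factor in both directions; the remaining routine points are the two-sidedness of the soundness bound for $\linf$ and the verification that regularizing the \sse-hard graphs preserves the hypotheses of Theorem~\ref{thm:linf-eig2} and Fact~\ref{thm:hyper-vert-exp}.
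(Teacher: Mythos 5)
Your route is sound in outline but is not the one the paper takes, and it leans on two ingredients the paper never establishes. The paper's formal proof (Theorem~\ref{thm:hyper-eigs-lower}) never touches $\linf$ or Theorem~\ref{thm:linf-eig2}: it first proves hardness of hypergraph \emph{expansion} (Theorem~\ref{thm:hyper-expansion-hardness}, via Fact~\ref{fact:lrv13} on vertex expansion, Reduction~\ref{red:hyper-vert} and Fact~\ref{thm:hyper-vert-exp}), and then transfers that gap to $\gamma_2$ through the hypergraph Cheeger inequality, using $\frac{\gamma_2}{2}\le\phi_H\le\gamma_2+2\sqrt{\gamma_2/\rmin}=\bigo{\sqrt{\gamma_2/r}}$ under the promise $\gamma_2\le\frac1r$ and $\rmin=\Omega(r)$; the \yes case gives $\gamma_2\le 2\widehat\ve$ and the \no case gives $\gamma_2=\Omega(\widehat\ve\log r)$. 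Your shortcut via $\frac{\linf}{d}\le\gamma_2\le\frac{4\linf}{d}$ is attractive because it loses only constants and skips $\phi_H$ altogether, and it matches the informal remark in Section~\ref{sec:vertex-results}; the arithmetic in your middle and final paragraphs is fine.

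The two points where you go beyond what is available: (i) you need a gap version of the $\linf$-hardness with a two-sided soundness bound $C\eta\log d\le\linf\le C'\eta\log d$, but the only hardness recorded in the paper (Fact~\ref{fact:lrv13}) is one-sided, a \emph{lower} bound on $\phiv$ in the \no case; the upper bound is exactly what you use to force $\gamma_2\le\frac1r$ on the \no instances, and calling it ``the standard output of such a reduction'' is not a proof. (A trivial bound such as $\linf=\bigo{1}$ only yields $\gamma_2=\bigo{1/r}$, not $\gamma_2\le\frac1r$.) The paper's route does not need any soundness upper bound because it only uses the promise $\gamma_2\le\frac1r$ inside the Cheeger-inequality step and folds it into the theorem statement. (ii) Theorem~\ref{thm:linf-eig2} is stated and proved only for $d$-regular graphs, whereas the hard instances of Fact~\ref{fact:lrv13} are merely near-regular; your ``standard degree-regularizing gadget'' is unspecified and has to preserve both $\linf$ (or $\phiv$) up to constants and the structure needed for Reduction~\ref{red:hyper-vert}, which is not automatic (e.g.\ self-loops do not change $N^{\sf out}$). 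The paper avoids this by using Fact~\ref{thm:hyper-vert-exp}, which only requires maximum degree $d$ and minimum degree $c_1 d$. So: genuinely different and salvageable, but as written the argument rests on a strengthened form of the \cite{lrv13} result and a regularization step that you would have to supply; the paper's detour through $\phi_H$ and Proposition~\ref{prop:hyper-sweep-rounding} is precisely what lets it avoid both.
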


\ignore{

\hubert{I propose to remove the following Theorem,
as the corrected version is quite clumsy and
 it is just rephrasing Theorem~\ref{thm:hyper-expansion-hardness-informal}.}

\begin{theorem}
\label{thm:hyper-nonlinear}
There exists absolute constants $c_1, c_2 > 0$ such that the following holds.
Given a hypergraph $H=(V,E,w)$, assuming the \sse~ hypothesis, there exists no
polynomial time algorithm to return $\lambda$, such that 
\[  c_1 \lambda \leq \phi_H \leq c_2 \sqrt{\lambda}.\]
\end{theorem}

}

\subsection{Approximation Algorithms}

We do not know how to efficiently find
orthonormal vectors $f_1, f_2, \ldots, f_k$ in
the weighted space that attain $\xi_k$.
In view of Theorems~\ref{thm:hyper-sse-informal}
and~\ref{thm:hyper-higher-cheeger-informal},
we consider
approximation algorithms
to find $k$ such vectors to minimize $\max_{i \in [k]} \D_w(f_i)$.

\noindent \textbf{Approximate Procedural Minimizers.}
Our approximation algorithms are based on the following
result on finding approximate procedural minimizers.

\begin{theorem}
\label{thm:hyper-eigs-alg}
Suppose for $k \geq 2$,
$\{f_i\}_{i \in [k-1]}$ is a set of orthonormal vectors
in the weighted space, and
define $\gamma := \min \{ \D_w(f) : \vec{0} \neq f \perp_w \{f_i:
i \in [k-1]\}\}$.
Then, there is a randomized procedure that
produces a non-zero vector $f$ that
is orthogonal to $\{f_i\}_{i \in [k-1]}$
in polynomial time,
such that with high probability,
$\D_w(f) = \bigo{\gamma \log r}$, 
where $r$ is the size of the largest hyperedge.
\end{theorem}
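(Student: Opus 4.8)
The plan is to obtain the approximate procedural minimizer through a semidefinite programming relaxation of the minimization $\min\{\D_w(f): \vec 0 \neq f \perp_w \{f_i\}_{i\in[k-1]}\}$, followed by a rounding step tailored to the $\max$-discrepancy of hyperedges. First I would write down the natural vector program: we seek vectors $\{v_u\}_{u\in V}\subset \R^d$ together with auxiliary scalars, minimizing $\sum_{e\in E} w_e \max_{u,v\in e}\norm{v_u - v_v}^2$ subject to $\sum_u w_u \norm{v_u}^2 = 1$ and the orthogonality-type constraints $\sum_u w_u \inprod{v_u, v_u^{(i)}}$ encoding $f\perp_w f_i$, where $v_u^{(i)}$ are the (scalar) coordinates of the known vectors $f_i$. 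The $\max$ inside each hyperedge is itself a convex constraint once we introduce, for each $e$, a variable $t_e$ with $t_e \geq \norm{v_u - v_v}^2$ for all $u,v\in e$; this keeps the program a genuine SDP. Its optimum is at most $\gamma$ (plug in the one-dimensional optimal $f$), so it suffices to round an SDP solution of value $O(\gamma)$ into a vector $f$ with $\D_w(f) = O(\gamma\log r)$.

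The core of the argument is the rounding. Given an SDP solution, I would use a Gaussian-projection / random-hyperplane style step: pick a random Gaussian direction $g$, set $f_u := \inprod{g, v_u}$, and then correct for the orthogonality constraints by projecting $f$ onto $\{f_i\}_{i\in[k-1]}^{\perp_w}$ in the weighted space; since $k-1$ is a fixed finite set and the $v_u^{(i)}$ are exactly orthogonal to the SDP vectors up to the small slack, this projection changes the Rayleigh quotient by only a constant factor (this is the standard "the projection of an approximate minimizer is still an approximate minimizer" manipulation, analogous to Lemma~\ref{lemma:min}). The denominator behaves well in expectation: $\E\sum_u w_u f_u^2 = \sum_u w_u \norm{v_u}^2 = 1$. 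The numerator is where the $\log r$ enters: for a single hyperedge $e$, $\max_{u,v\in e}(f_u-f_v)^2 = \max_{u,v\in e}\inprod{g, v_u - v_v}^2$, and although each individual $(f_u-f_v)^2$ has expectation $\norm{v_u-v_v}^2 \le t_e$, the maximum over the $\binom{|e|}{2} \le \binom r2$ pairs is a maximum of (correlated) squared Gaussians, which by a union bound over tail events is $O(\log r)$ times $t_e$ with the right concentration. Summing over $e$ and taking a weighted expectation gives $\E[\text{numerator}] = O(\gamma \log r)$, and a Markov-type argument (repeating the randomized rounding and keeping the best trial) yields a single $f$ with $\D_w(f) = O(\gamma\log r)$ with high probability.

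The step I expect to be the main obstacle is making the "$\max$ of correlated squared Gaussians costs a factor $O(\log r)$" estimate rigorous while simultaneously controlling the denominator, because the same random $g$ appears in numerator and denominator, so the ratio is not simply the ratio of expectations. I would handle this by first conditioning on the denominator being within a constant factor of its expectation (which holds with constant probability by Paley--Zygmund, using that $\sum_u w_u f_u^2$ is a positive quadratic form in $g$ with bounded fourth moment), and then bounding $\Pr[\text{numerator} > C\gamma\log r]$ by a union bound: $\Pr[\max_{u,v\in e}\inprod{g,v_u-v_v}^2 > \lambda t_e] \le \binom r2 \cdot 2e^{-\lambda/2}$, so choosing $\lambda = \Theta(\log r)$ plus a Markov bound over the $w_e$-weighted sum makes the bad event have probability bounded away from $1$ once combined with the good-denominator event. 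A secondary subtlety is the exact treatment of the orthogonality slack in the SDP versus the exact weighted-orthogonality required of the output $f$; I would argue that after the deterministic weighted-space projection onto $\{f_i\}^{\perp_w}$ the value changes by at most a constant factor, invoking the same kind of bound used to prove $\gamma_k \le k\xi_k$ in Lemma~\ref{lemma:min}. Everything else --- feasibility of the SDP, the $\le \gamma$ bound on its optimum, and boosting success probability by independent repetition --- is routine.
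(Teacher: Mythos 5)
Your overall route---an SDP whose objective is $\sum_{e\in E} w_e \max_{u,v\in e}\norm{v_u - v_v}^2$ with the normalization $\sum_u w_u\norm{v_u}^2 = 1$ and weighted-orthogonality constraints, followed by a single Gaussian projection $f(u) := \inprod{z, v_u}$, an $O(\log r)$ Massart-type bound on the per-edge maximum of correlated squared Gaussians for the numerator, and a Paley--Zygmund bound for the denominator, combined via a union bound and independent repetition---is exactly the paper's proof (SDP~\ref{sdp:eig-k}, Algorithm~\ref{alg:hyper-eigs-rounding}, Fact~\ref{fact:appGauss}, Fact~\ref{lem:squaregaussian}, Lemma~\ref{lem:hyper-eigs-rounding}); in particular your worry about the numerator and denominator sharing the same Gaussian is resolved there by a plain union bound over the two bad events rather than by conditioning.

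The one step of yours that would not survive scrutiny is the treatment of orthogonality. You propose to round first and then ``correct'' by projecting $f$ onto $\{f_i\}_{i\in[k-1]}^{\perp_w}$, asserting that this projection changes the discrepancy ratio by only a constant factor, by analogy with Lemma~\ref{lemma:min}. That assertion is unjustified and false in general: the theorem places no bound on $\D_w(f_i)$ for the given orthonormal vectors, so $\spn\{f_i\}$ may contain vectors of arbitrarily large discrepancy. Writing $f = f' + h$ with $f' \perp_w \spn\{f_i\}$ and $h\in\spn\{f_i\}$, the projection to $f'$ can inflate the numerator through the $\max_{u,v\in e}(h_u-h_v)^2$ terms and simultaneously collapse the denominator $\norm{f'}_w^2 = \norm{f}_w^2 - \norm{h}_w^2$; Claim~\ref{claim:zx} and Lemma~\ref{lemma:min} give no control over this. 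Fortunately the correction is unnecessary: impose the orthogonality constraints as the exact linear vector equations $\sum_{u} w_u f_i(u)\, v_u = \vec{0}$ (equivalently, linear equations on the Gram matrix, exactly constraint~(\ref{eq:hyper-sdp-orth}) of SDP~\ref{sdp:eig-k}); then by linearity of the Gaussian projection, $\inprod{f, f_i}_w = \inprod{\sum_u w_u f_i(u) v_u, z} = 0$ holds with probability $1$, with no slack and no loss in the objective---this is the paper's Lemma~\ref{lemma:sdp_feas}. Replacing your projection step by this observation and keeping the rest of your analysis recovers the paper's argument.
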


Using the procedure in Theorem~\ref{thm:hyper-eigs-alg}
as a subroutine for generating procedural minimizers,
we can show that the resulting vectors provide
an $\bigo{k \log r}$-approximation to $\xi_k$.

\begin{theorem}[Approximating $\xi_k$]
\label{th:approx_xik_informal}
There exists a randomized polynomial time algorithm that, given a hypergraph $H = (V,E,w)$
and a parameter $k < \Abs{V}$, outputs $k$ orthonormal vectors $f_1, \ldots, f_k$
in the weighted space
such that with high probability, for each $i \in [k]$,
\[ \D_w(f_i) \leq \bigo{i \log r\,\cdot \xi_i}.  \]
\end{theorem}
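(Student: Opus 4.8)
The plan is to build the $k$ vectors one at a time, mimicking the procedural minimizer construction of Definition~\ref{defn:proc_min}, but replacing each exact minimization step by an invocation of the approximate subroutine of Theorem~\ref{thm:hyper-eigs-alg}. Concretely, I would run the following algorithm. Set $f_1 := \Wh \vec{1} / \|\Wh \vec{1}\|_w$ (after normalizing, so $\D_w(f_1) = 0$). For $i = 2, \ldots, k$, having already produced orthonormal $f_1, \ldots, f_{i-1}$ in the weighted space, apply Theorem~\ref{thm:hyper-eigs-alg} with this set as the ``already chosen'' vectors; it returns in randomized polynomial time a nonzero $f$ with $f \perp_w \{f_1, \ldots, f_{i-1}\}$ and, with high probability, $\D_w(f) = \bigo{\gamma_i' \log r}$, where $\gamma_i' := \min\{\D_w(g) : \vec{0} \neq g \perp_w \{f_1,\ldots,f_{i-1}\}\}$ is the ``current'' procedural minimizer value relative to the vectors chosen so far. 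Normalize to get $f_i$. Output $f_1, \ldots, f_k$. Running $k$ independent invocations and taking a union bound over the $k$ failure events (each of probability, say, $\le 1/k^2$ after standard amplification) gives overall success with high probability, and the total running time is polynomial.

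The correctness analysis hinges on bounding $\gamma_i'$ — the exact minimum discrepancy orthogonal to the \emph{approximate} vectors $f_1,\ldots,f_{i-1}$ we have actually produced — in terms of $\xi_i$. The key observation is that $\xi_i$ is a global min-max: $\xi_i = \min_{g_1,\ldots,g_i} \max_{j \in [i]} \D_w(g_j)$ over orthonormal $i$-tuples. Fix an optimal tuple $g_1^\star, \ldots, g_i^\star$ achieving $\xi_i$. Since the subspace $\spn_w\{g_1^\star,\ldots,g_i^\star\}$ has dimension $i$ and $\spn_w\{f_1,\ldots,f_{i-1}\}$ has dimension $i-1$, there is a nonzero vector $g$ in their intersection (in the weighted-space inner product), hence $g \perp_w \{f_1,\ldots,f_{i-1}\}$. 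Writing $g = \sum_{j=1}^i c_j g_j^\star$ and using the standard convexity/subadditivity bound for the discrepancy ratio of a linear combination of orthonormal vectors (the same estimate used to relate $\zeta_k$ to $\xi_k$ in Lemma~\ref{lemma:min}, giving $\D_w\big(\sum_j c_j g_j^\star\big) \le i \max_j \D_w(g_j^\star)$ when $\|\sum_j c_j g_j^\star\|_w = 1$), we get $\D_w(g) \le i\, \xi_i$. Therefore $\gamma_i' \le \D_w(g) \le i\,\xi_i$, and the subroutine returns $f_i$ with $\D_w(f_i) = \bigo{\gamma_i' \log r} = \bigo{i \log r \cdot \xi_i}$, which is exactly the claimed bound.

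The main obstacle I expect is the step bounding $\D_w$ of a unit-norm linear combination of orthonormal vectors by $i$ times the worst individual discrepancy. Because $\D_w$ involves a $\max_{u,v \in e}(\cdot)^2$ inside the numerator (rather than a quadratic form), this is not the trivial linearity argument available for $2$-graphs; one must expand $(\sum_j c_j (g_j)_u - \sum_j c_j (g_j)_v)^2 \le i \sum_j c_j^2 ((g_j)_u - (g_j)_v)^2$ by Cauchy--Schwarz, pull the $\max_{u,v\in e}$ through the sum (using $\max \sum \le \sum \max$), and then use orthonormality plus $\sum_j c_j^2 = 1$ in the denominator — this is precisely the inequality $\zeta_k \le k\xi_k$ from Lemma~\ref{lemma:min}, which I would simply cite. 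A secondary issue is making sure the dimension-counting argument for the existence of $g$ is carried out in the correct (weighted) inner product space and that the high-probability guarantees of Theorem~\ref{thm:hyper-eigs-alg} can be amplified to survive the union bound over $i \in [k]$; both are routine. Assembling these pieces yields the theorem.
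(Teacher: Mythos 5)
Your proposal is correct and follows essentially the same route as the paper: iterate the subroutine of Theorem~\ref{thm:hyper-eigs-alg}, and bound the current exact minimum $\gamma_i'$ by $i\,\xi_i$ via a dimension count inside the span of an optimal $i$-tuple together with the Cauchy--Schwarz estimate of Claim~\ref{claim:zx} (the paper routes this through the $\zeta_i$-optimal tuple and then $\zeta_i \le i\xi_i$, which is the same inequality). One trivial correction: the first vector in the \emph{weighted} space should be $f_1 = \vec{1}/\norm{\vec{1}}_w$ (so that $\D_w(f_1)=0$), not $\Wh\vec{1}$, which is its normalized-space counterpart.
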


\noindent \textbf{Algorithmic Applications.}
Applying Theorem~\ref{th:approx_xik_informal},
we readily have approximation algorithms
for the problems in Theorems~\ref{thm:hyper-cheeger},~\ref{thm:hyper-sse-informal}
and~\ref{thm:hyper-higher-cheeger-informal}.

\begin{corollary}[Hyperedge Expansion] 
\label{cor:hyper-sparsest-informal}
There exists a randomized polynomial time algorithm that given a hypergraph $H = (V,E,w)$,
outputs a set $S \subset V$ such that $\phi(S) = \bigo{\sqrt{\phi_H \log r}}$
with high probability, where $r$ is the size of the largest hyperedge in $E$.
\end{corollary}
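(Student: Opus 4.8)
The plan is to combine the algorithmic routine of Theorem~\ref{thm:hyper-eigs-alg} (specialized to $k=2$) with the rounding half of the hypergraph Cheeger inequality (Theorem~\ref{thm:hyper-cheeger}). First I would run the randomized procedure of Theorem~\ref{thm:hyper-eigs-alg} with $k = 2$ and with $\{f_1\}$ set to the normalized all-ones vector $f_1 := \one/\sqrt{w(V)}$, which is a unit vector in the weighted space corresponding to the stationary direction. For this choice, $\gamma := \min\{\D_w(f) : \vec{0} \neq f \perp_w f_1\}$ is exactly $\gamma_2$, since $\langle f, g\rangle_w = \langle \Wh f, \Wh g\rangle$ and $\D_w(f) = \Dc(\Wh f)$. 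The procedure does not require knowledge of $\gamma_2$; in polynomial time it outputs a non-zero vector $f \perp_w f_1$ with $\D_w(f) = \bigo{\gamma_2 \log r}$ with high probability.

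Second, I would apply a threshold (sweep) rounding to $f$. This is precisely the combinatorial step underlying the upper bound $\phi_H \le 2\sqrt{\gamma_2}$ in Theorem~\ref{thm:hyper-cheeger}: after replacing $f$ by an appropriate translate/truncation so that $\supp(f)$ has weight at most $w(V)/2$ and $f$ has a fixed sign on its support, one considers the level sets $S_t := \{u \in V : f_u^2 > t\}$, draws $t$ at random with density proportional to $dt$, and bounds $\E[w(\partial S_t)]/\E[w(S_t)]$ by $\sqrt{2\,\D_w(f)}$ using, for each hyperedge $e$, the inequality $\max_{u,v\in e}|f_u^2 - f_v^2| \le \max_{u,v\in e}|f_u - f_v|\cdot\max_{u,v\in e}(|f_u|+|f_v|)$ together with Cauchy--Schwarz. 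Hence some level set $S$ among the at most $n$ distinct ones satisfies $\phi(S) \le \sqrt{2\,\D_w(f)}$, and we return the best such $S$; this gives $\phi(S) \le \sqrt{2\,\D_w(f)} = \bigo{\sqrt{\gamma_2 \log r}}$.

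Finally, I would invoke the easy direction of Theorem~\ref{thm:hyper-cheeger}, namely $\gamma_2 \le 2\phi_H$, to convert this into $\phi(S) = \bigo{\sqrt{\phi_H \log r}}$, the claimed guarantee. Every step runs in polynomial time, and the only randomness is that of Theorem~\ref{thm:hyper-eigs-alg}, so the high-probability statement is inherited.

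I expect the main obstacle to be isolating the sweep-rounding step as a clean algorithmic lemma: the proof of Theorem~\ref{thm:hyper-cheeger} yields a good level set of the \emph{optimal} vector, and one must check that (i) the argument applies verbatim to any vector $f$ of small discrepancy ratio, not only a minimizer, and (ii) the good set is found efficiently — immediate, as there are at most $n$ level sets to test. A minor technical point is the normalization/truncation reducing to the one-sided case $w(\supp(f)) \le w(V)/2$, and observing that the constraint $f \perp_w f_1$ is used only to ensure $\D_w(f)$ is comparable to $\gamma_2$ and plays no role in the rounding itself.
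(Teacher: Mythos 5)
Your proposal is correct and follows essentially the same route as the paper: run the procedure of Theorem~\ref{thm:hyper-eigs-alg} with $k=2$ against $f_1 \in \spn(\vec{1})$ to obtain $f \perp_w \vec{1}$ with $\D_w(f) = \bigo{\gamma_2 \log r}$, round it by the sweep cut of Propositions~\ref{prop:hyper-1d} and~\ref{prop:hyper-sweep-rounding} (which indeed apply to any vector of small discrepancy ratio, not only a minimizer), and conclude via the easy direction $\gamma_2 \leq 2\phi_H$. The only minor correction is that the orthogonality $f \perp_w \vec{1}$ is also used inside the rounding step (to recenter $f$ without increasing its discrepancy ratio), not merely to relate $\D_w(f)$ to $\gamma_2$; since your vector satisfies it, the argument goes through unchanged.
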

We note that Corollary~\ref{cor:hyper-sparsest-informal} also follows directly from \cite{lm14b}.

Many theoretical and practical applications require multiplicative approximation
guarantees for hypergraph sparsest cut. 
In a seminal work,
Arora, Rao and Vazirani \cite{arv09} gave a $\bigo{\sqrt{\log n}}$-approximation
algorithm for the (uniform) sparsest cut problem in graphs. 
\cite{lm14b} gave a $\bigo{\sqrt{\log n }}$-approximation algorithm for hypergraph expansion.

\begin{corollary}[Small Set Expansion]
\label{cor:hyper-sse-informal}
There exists a randomized polynomial time algorithm that given hypergraph $H = (V,E,w)$ and parameter $k < \Abs{V}$, 
produces a set $S \subset V$ 
such that with high probability, $\Abs{S} = \bigo{\frac{n}{k}}$ and 
\[ \phi(S) = \bigo{ k^{1.5} \log k \log \log k \cdot \log r \cdot \sqrt{ \xi_k}  } , \]
where $r$ is the size of the largest hyperedge in $E$.
\end{corollary}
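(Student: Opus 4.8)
The plan is to chain the two main algorithmic results that precede the statement: use Theorem~\ref{th:approx_xik_informal} to manufacture $k$ orthonormal test vectors that approximately achieve $\xi_k$, and then feed them into the small-set rounding of Theorem~\ref{thm:hyper-sse-informal}.

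First I would run the randomized polynomial-time algorithm of Theorem~\ref{th:approx_xik_informal} on $(H,k)$; with high probability this returns orthonormal vectors $f_1,\dots,f_k$ in the weighted space with $\D_w(f_i)\le\bigo{i\log r\cdot\xi_i}$ for each $i\in[k]$. I would then observe that $\{\xi_i\}$ is non-decreasing in $i$ — the first $i$ coordinates of any optimal $k$-tuple for $\xi_k$ form a feasible $i$-tuple, so $\xi_i\le\xi_k$ — whence $\xi:=\max_{s\in[k]}\D_w(f_s)=\bigo{k\log r\cdot\xi_k}$.

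Next I would invoke Theorem~\ref{thm:hyper-sse-informal} (in its constructive form; cf.\ Theorem~\ref{thm:hyper-sse}) on the tuple $f_1,\dots,f_k$ with this value of $\xi$, obtaining a set $S\subset V$ with $\Abs{S}=\bigo{\Abs{V}/k}$ and $\phi(S)=\bigo{\ratio\cdot\sqrt{\xi}}$. Plugging in $\sqrt{\xi}=\bigo{\sqrt{k}\cdot\sqrt{\log r}\cdot\sqrt{\xi_k}}$ gives
\[
\phi(S)=\bigo{k\log k\log\log k\cdot\sqrt{\log r}\cdot\sqrt{k}\cdot\sqrt{\log r}\cdot\sqrt{\xi_k}}=\bigo{k^{1.5}\log k\log\log k\cdot\log r\cdot\sqrt{\xi_k}},
\]
which is exactly the claimed bound. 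To finish I would do the usual bookkeeping: boost the per-vector success probability of Theorem~\ref{th:approx_xik_informal} and take a union bound over the $k$ events, so that all the $\D_w(f_i)$ bounds (hence the bound on $\xi$) hold simultaneously with high probability, and note that both subroutines run in polynomial time.

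The one place that needs care — rather than routine calculation — is making sure the composition is genuinely algorithmic: Theorem~\ref{thm:hyper-sse-informal} is phrased as an existence statement, so I would need to appeal to its formal version, which should realize the set $S$ by a polynomial-time rounding of the given orthonormal vectors. Everything else (monotonicity of $\{\xi_i\}$, the arithmetic collapse $\ratio\cdot\sqrt{k\log r\cdot\xi_k}$, and the union bound for probability amplification) is straightforward.
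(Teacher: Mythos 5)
Your proposal is exactly the paper's intended argument: the paper derives this corollary by feeding the $k$ approximate minimizers from Theorem~\ref{th:approx_xik_informal} (giving $\max_s \D_w(f_s) = \bigo{k\log r\cdot\xi_k}$) into the polynomial-time rounding of Theorem~\ref{thm:hyper-sse}, and the arithmetic collapses to the stated $k^{1.5}\log k\log\log k\cdot\log r\cdot\sqrt{\xi_k}$ bound just as you compute. The only loose end you leave implicit is that Theorem~\ref{thm:hyper-sse} succeeds only with constant probability, but since $\Abs{S}$ and $\phi(S)$ are efficiently checkable, independent repetition boosts this to high probability, which is routine.
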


In contrast, a polynomial-time algorithm
is given in~\cite{lm14b}
that returns a subset $S$ with size $\bigo{ \frac{n}{k}}$
whose expansion is at most $\bigo{k \log k \log \log k \cdot \sqrt{\log n}}$ times
the smallest expansion over all vertex sets of size at most 
$\frac{n}{k}$.

\begin{corollary}[Multi-way Hyperedge Expansion]
\label{cor:hyper-higher-cheeger-informal}
There exist absolute constants $c, c' > 0$ such that the following holds.
There exists a randomized polynomial time algorithm that given hypergraph $H = (V,E,w)$ 
and parameter $k < \Abs{V}$,
produces
 $\Theta(k)$  non-empty disjoint sets $S_1, \ldots, S_{\floor{ck}} \subset V$ such that 
with high probability,
\[ \max_{i \in [ck]} \phi(S_i) = \bigo{ k^{2.5} \log k \log \log k \cdot {\log r} \cdot  \sqrt{\xi_k} } \mper \]
\end{corollary}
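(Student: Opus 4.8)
The plan is to compose the approximation algorithm for orthogonal discrepancy minimizers (Theorem~\ref{th:approx_xik_informal}) with the higher-order Cheeger inequality for hypergraphs (Theorem~\ref{thm:hyper-higher-cheeger-informal}), using the fact that the proof of the latter is constructive once the $k$ input vectors are fixed.

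First I would invoke Theorem~\ref{th:approx_xik_informal} on $H$ with the given parameter $k < \Abs{V}$ to obtain, in randomized polynomial time, orthonormal vectors $f_1, \ldots, f_k$ in the weighted space with $\D_w(f_i) \le \bigo{i \log r \cdot \xi_i}$ for every $i \in [k]$, with high probability. Since the orthogonal minimaximizers are monotone, $\xi_1 \le \xi_2 \le \cdots \le \xi_k$ (imposing an extra orthogonality constraint can only raise the min--max value), this yields the uniform bound $\max_{i \in [k]} \D_w(f_i) \le \bigo{k \log r \cdot \xi_k} =: \xi$.

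Next I would feed these $k$ orthonormal vectors into the argument behind Theorem~\ref{thm:hyper-higher-cheeger-informal}, which produces $\Theta(k)$ disjoint non-empty sets $S_1, \ldots, S_{\floor{ck}}$ with $\max_i \phi(S_i) = \bigo{k^2 \log k \log\log k \cdot \sqrt{\log r}\cdot\sqrt{\xi}}$. Substituting $\xi = \bigo{k \log r \cdot \xi_k}$ and using $\sqrt{\xi} = \bigo{\sqrt{k}\cdot\sqrt{\log r}\cdot\sqrt{\xi_k}}$ collapses this to $\max_{i \in [ck]} \phi(S_i) = \bigo{k^{2.5}\log k \log\log k \cdot \log r \cdot \sqrt{\xi_k}}$, exactly the claimed guarantee; the overall success probability is the product of the two high-probability events and is still high, and the running time is polynomial.

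The main obstacle is to certify that the second step is genuinely algorithmic rather than merely existential: one must check that the random-projection / localization / rounding pipeline used to prove Theorem~\ref{thm:hyper-higher-cheeger-informal} (in the style of the higher-order Cheeger inequalities for $2$-graphs) turns the input vectors into the sets $S_1, \ldots, S_{\floor{ck}}$ in polynomial time on a hypergraph, where the discrepancy functional $\D_w$ is only sub-additive rather than quadratic. Everything else is bookkeeping of the absolute constants $c, c'$ and of the polylogarithmic factors.
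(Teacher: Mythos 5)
Your proposal is correct and follows essentially the same route as the paper: the paper likewise runs the approximate procedural-minimizer algorithm (Theorem~\ref{th:approx_xik_informal}) to obtain $k$ orthonormal vectors with $\max_i \D_w(f_i) = \bigo{k\log r\cdot \xi_k}$ and then feeds them into the constructive procedure behind Theorem~\ref{thm:hyper-higher-cheeger-informal} (the orthogonal-separator based Algorithm of Section~\ref{sec:higher_order_cheeger}, stated explicitly as a randomized polynomial-time procedure in Theorem~\ref{thm:hyper-higher-cheeger}(a)), so your ``main obstacle'' is already resolved in the paper. The bookkeeping $\sqrt{\xi} = \bigo{\sqrt{k\log r\cdot\xi_k}}$ yielding the $k^{2.5}\log k\log\log k\cdot\log r\cdot\sqrt{\xi_k}$ bound matches the paper's derivation.
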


In contrast, for $2$-graphs, 
a polynomial-time
bi-criteria approximation algorithm~\cite{lm14}
outputs $(1 - \eps)k$ disjoint subsets
such that each subset has expansion at most $O_\eps(\sqrt{\log n \log k})$ times the optimal value.

\subsection{Sparsest Cut with General Demands}
\label{sec:sparest_overview}

An instance of the problem consists of a hypergraph $H = (V,E,w)$
with edge weights $w$ and
a collection $T = \{(\{s_i,t_i\}, D_i): i \in [k]\}$
of demand pairs, where each pair $\{s_i, t_i\}$ has demand $D_i$.
For a subset $S \subset V$,
its expansion with respect to $T$ is 

\[\Phi(S) := \frac{w(\partial S)}{\sum_{i \in [k]} D_i \Abs{  \chi_S(s_i) - \chi_S(t_i)}}.\]

The goal is to find $S$ to minimize $\Phi(S)$.  We denote
$\Phi_H := \min_{S \subset V} \Phi(S)$.

Arora, Lee and Naor \cite{aln05}  gave a 
$\bigo{\sqrt{\log k} \log \log k }$-approximation algorithm for the sparsest cut in 2-graphs with general demands.
We give a similar bound for the sparsest cut in hypergraphs with general demands.

\begin{theorem}
\label{thm:hyper-sparsest-nonuniform}
There exists a randomized polynomial time algorithm that given 
an instance of the hypergraph Sparsest Cut problem with hypergraph  $H=(V,E,w)$ and $k$ demand pairs in 
$T = \{(\{s_i,t_i\}, D_i): i \in [k]\}$, 
outputs a set $S \subset V$ such that with high probability,
\[ \Phi(S) \leq \bigo{\sqrt{\log k \log r} \log \log k } \Phi_H ,\]
where $r = \max_{e \in E} \Abs{e} $.
\end{theorem}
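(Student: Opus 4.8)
The plan is to follow the semidefinite-programming approach to Sparsest Cut with general demands of Arora--Lee--Naor~\cite{aln05}, adapted to hypergraphs; the only genuinely new ingredient is that a hyperedge is cut as soon as \emph{some} pair of its vertices is separated, and handling this in the rounding costs an extra $\bigo{\sqrt{\log r}}$ on top of the graph bound $\bigo{\sqrt{\log k}\log\log k}$. I would start from the relaxation: assign $x_v\in\R^V$ to each $v\in V$ and
\[ \text{minimize}\quad \sum_{e\in E} w_e\max_{u,v\in e}\norm{x_u-x_v}^2\quad\text{subject to}\quad \sum_{i\in[k]}D_i\norm{x_{s_i}-x_{t_i}}^2=1, \]
together with the $\ell_2^2$-triangle inequalities $\norm{x_u-x_v}^2+\norm{x_v-x_z}^2\ge\norm{x_u-x_z}^2$ for all $u,v,z\in V$, so that the $x_v$ form a negative-type metric. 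Taking $x_v=\chi_S(v)\,\mathbf{e}_1$ gives $\max_{u,v\in e}\norm{x_u-x_v}^2=\Ind{e\in\partial S}$ and $\norm{x_{s_i}-x_{t_i}}^2=\abs{\chi_S(s_i)-\chi_S(t_i)}$, so the optimum $\sdpval$ of the SDP is at most $\Phi_H$. It then suffices to round a solution of value $\sdpval$ to a cut $S$ with $\Phi(S)=\bigo{\sqrt{\log k\log r}\log\log k}\cdot\sdpval$.

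The naive idea --- embed the negative-type metric into $\ell_1$ with ALN distortion $\bigo{\sqrt{\log k}\log\log k}$ and use threshold cuts --- fails for hypergraphs, since for a line map $F$ the $\ell_1$-value $\max_{u,v\in e}\norm{F(u)-F(v)}_1$ of a hyperedge can be far below the cut cost $\sum_j\max_{u,v\in e}\abs{F_j(u)-F_j(v)}$ that threshold cuts actually pay. Instead I would use the ALN rounding itself, which builds its $L_1$ embedding as a mixture of \emph{Gaussian threshold cuts} over geometric scales (measured descent): an elementary step fixes a scale $\sigma$, a subset $A$ of vertices well-spread at scale $\sigma$, a random Gaussian direction $g$, and thresholds the values $\langle g,x_v\rangle$. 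The ALN analysis ensures that, aggregated over scales, each demand pair $\{s_i,t_i\}$ is separated with probability at least $\frac{1}{\bigo{\sqrt{\log k}\log\log k}}$ times its (normalized) SDP contribution; restricting the metric to the $\le 2k$ terminals is what turns $\log n$ into $\log k$, while one checks that the construction only ever uses the ambient $\{x_v\}$, so the resulting random cut is defined on all of $V$ and hyperedges through non-terminal vertices still make sense.

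The new estimate is for the numerator. A single Gaussian threshold cut at scale $\sigma$ separates some pair of a hyperedge $e$ with probability proportional (at that scale) to the \emph{spread} $\max_{u\in e}\langle g,x_u\rangle-\min_{u\in e}\langle g,x_u\rangle$ of $e$ along $g$, and the Gaussian maximal inequality gives $\E_g\!\left[\max_{u\in e}\langle g,x_u\rangle-\min_{u\in e}\langle g,x_u\rangle\right]=\bigo{\sqrt{\log\abs{e}}}\cdot\max_{u,v\in e}\norm{x_u-x_v}_2=\bigo{\sqrt{\log r}}\cdot\max_{u,v\in e}\norm{x_u-x_v}_2$. Summed over scales this yields $\E[w(\partial S)]=\bigo{\sqrt{\log r}}\cdot\sdpval$ times the same normalization that appears in the demand bound. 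Combining the two estimates, $\E[w(\partial S)]\big/\E\big[\sum_i D_i\abs{\chi_S(s_i)-\chi_S(t_i)}\big]=\bigo{\sqrt{\log k}\log\log k\cdot\sqrt{\log r}}\cdot\sdpval$; since a ratio of expectations is attained by some outcome (take the cut in the support minimizing $w(\partial S)-\lambda\cdot\mathrm{demand}(S)$ for the relevant $\lambda$, which can be enumerated in polynomial time), and $\sdpval\le\Phi_H$, this produces a cut $S$ with $\Phi(S)=\bigo{\sqrt{\log k\log r}\log\log k}\cdot\Phi_H$; repeating and keeping the best cut gives the high-probability guarantee.

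The main obstacle I anticipate is making the per-hyperedge union bound cooperate with the multi-scale machinery: one must bound the expected cut contribution of each hyperedge simultaneously across all scales while leaving the ALN demand-separation analysis untouched, and argue this costs only an overall $\sqrt{\log r}$ rather than $\sqrt{\log r}$ per scale --- which should hold because at a given scale the relevant vertices of $e$ have squared pairwise lengths $\bigo{\sigma}$, making the Gaussian-spread bound uniform over the (at most $\abs{e}$) vertices of $e$ active at that scale. A secondary nuisance is the book-keeping needed to reconcile the terminal-restriction (which yields the $\log k$ dependence) with the requirement that cuts be evaluated on all of $V$.
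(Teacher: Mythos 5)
Your relaxation and the source of the $\sqrt{\log r}$ factor (a Gaussian maximal inequality applied to the at most $\binom{r}{2}$ pairs inside a hyperedge) are exactly right, and your observation that SDP value is at most $\Phi_H$ matches the paper. The gap is in the step you yourself flag as the main obstacle: you propose to open up the Arora--Lee--Naor \emph{rounding} (the measured-descent mixture of Gaussian threshold cuts over geometric scales) and to redo its analysis so that each hyperedge's expected cut cost, aggregated over all scales, is $\bigo{\sqrt{\log r}}\cdot\sdpval$ ``times the same normalization that appears in the demand bound.'' That claim is asserted, not proved, and it is genuinely delicate: the scales and the sets $A$ in measured descent are defined relative to the terminal metric and carry scale-dependent normalizations, a single hyperedge can be active at many scales, and nothing in your sketch rules out picking up a factor per scale (i.e.\ an extra $\log$) or interfering with the demand-separation guarantee you want to leave ``untouched.'' As written, the central estimate of your proof is missing.

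The paper avoids this entirely by using ALN at the level of its \emph{embedding} theorem rather than its rounding: for a negative-type metric and the $\le 2k$ terminals there is a $1$-Lipschitz map $f:(V,\ell_2^2)\to\ell_2$ whose restriction to the terminals has distortion $\Lambda=\bigo{\sqrt{\log k}\log\log k}$ (Fact~\ref{thm:aln}). One then projects along a \emph{single} Gaussian direction, $x(v)=\langle z,f(v)\rangle$, and takes the best sweep cut of $x$. The numerator is controlled in one shot by the same Massart-type bound you invoke, $\E\bigl[\max_{u,v\in e}|x(u)-x(v)|\bigr]\le \bigo{\sqrt{\log r}}\max_{u,v\in e}\norm{f(u)-f(v)}\le \bigo{\sqrt{\log r}}\max_{u,v\in e}\norm{\bar u-\bar v}^2$ (using $1$-Lipschitzness), while each demand pair contributes $\E|x(s_i)-x(t_i)|=\sqrt{2/\pi}\,\norm{f(s_i)-f(t_i)}\ge \sqrt{2/\pi}\,\Lambda^{-1}\norm{\bar s_i-\bar t_i}^2$; Markov plus a Paley--Zygmund argument make both bounds hold simultaneously with constant probability, and the sweep-cut averaging (as in Proposition~\ref{prop:hyper-1d}) extracts the set $S$. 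Your worry that ``the naive idea fails'' applies to coordinate-wise thresholding of an $\ell_1$ embedding, but it does not force you into the multi-scale machinery: the $\ell_2$ form of ALN plus one Gaussian projection already reconciles hyperedge cut costs with demand separation. To repair your write-up, either carry out the multi-scale per-hyperedge accounting rigorously (substantial extra work), or switch to the embedding-plus-single-projection route, after which your remaining steps (ratio extraction and repetition for high probability) go through.
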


\subsubsection{Discussion}
We stress that none of our bounds have a polynomial dependence on $r$, the size of the largest hyperedge
(Theorem~\ref{thm:hyper-sse-informal} has a dependence on $\tbigo{\min \set{r,k}}$). 
In many of the practical
applications, the typical instances have $r = \Theta(n^{\alpha})$ for some $\alpha = \Omega(1)$;
in such cases having bounds of ${\sf poly}(r)$ would not be of any practical utility.
All our results generalize the corresponding results for 2-graphs.

\subsection{Organization}

We formally define the diffusion process and our Laplacian operator in
Section~\ref{sec:laplacian}.
We prove the existence of a non-trivial eigenvalue
for the Laplacian operator in Theorem~\ref{th:hyper_lap}.

In Section~\ref{sec:diffusion},
we define the stochastic diffusion process,
and prove our bounds on the mixing time (Theorem~\ref{thm:hyperwalk-upper} and Theorem~\ref{thm:hyperwalk-lower-informal}).
We define a discrete diffusion operator and give a 
bound on the hypergraph diameter (Theorem~\ref{thm:hyper-diam}) in Section~\ref{sec:hyper-diam}.

In Section~\ref{sec:cheeger},
we prove the basic hypergraph Cheeger inequality (Theorem~\ref{thm:hyper-cheeger})
and also the higher-order variants
(Theorem~\ref{thm:hyper-sse-informal} and 
Theorem~\ref{thm:hyper-higher-cheeger-informal}).

In Section~\ref{sec:vert-exp},
we explore the relationship between hyperedge expansion and vertex expansion
in 2-graphs.  Using hardness results for vertex expansion,
we prove our hardness results for computing hypergraph eigenvalues (Theorem~\ref{thm:hyper-eigs-lower-informal}) 
and for hypergraph expansion (Theorem~\ref{thm:hyper-expansion-hardness-informal}).

In Section~\ref{sec:hyper-eigs-poly-alg},
we give our approximation algorithm for procedural minimizers (Theorem~\ref{thm:hyper-eigs-alg}).
We present our algorithm for sparsest cut with general demands 
(Theorem~\ref{thm:hyper-sparsest-nonuniform}) in Section~\ref{sec:hyper-sparsestcut}.

\section{Defining Diffusion Process and Laplacian for Hypergraphs}
\label{sec:laplacian}

A classical result in spectral graph theory
is that for a $2$-graph whose edge weights
are given by the adjacency matrix $A$,
the parameter $\gamma_2 := \min_{\vec{0} \neq x \perp  \Wh \vec{1}} \Dc(x)$ is an eigenvalue of the normalized
Laplacian \mbox{$\Lc := \I - \Wmh A \Wmh$}, 
where a corresponding minimizer $x_2$ is an eigenvector
of $\Lc$.
Observe that $\gamma_2$ is also an eigenvector
on the operator $\Lo_w := \I - \Wm A$ induced on the weighted space.
However, 
in the literature, the (weighted) Laplacian is defined
as $\W - A$, which is $\W \Lo_w$ in our notation.  Hence,
to avoid confusion, we only consider the normalized Laplacian in this paper.

In this section, we generalize the result to hypergraphs.
Observe that any result for the normalized space has an equivalent counterpart in the weighted space, and vice versa.

\begin{theorem}[Eigenvalue of Hypergraph Laplacian]
\label{th:hyper_lap}
For a hypergraph with edge weights $w$,
there exists a normalized Laplacian $\Lc$ such that the
normalized discrepancy ratio $\Dc(x)$ coincides
with the corresponding Rayleigh quotient $\rayc(x)$.
Moreover, 
the parameter $\gamma_2 := \min_{\vec{0} \neq x \perp  \Wh \vec{1}} \Dc(x)$ is an eigenvalue of $\Lc$,
where any minimizer $x_2$ is a corresponding eigenvector.
\end{theorem}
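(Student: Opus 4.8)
The plan is to construct the operator $\Lc$ explicitly from the weight-distribution rule of Figure~\ref{fig:hyper_diffusion}, verify the Rayleigh quotient identity, and then obtain the eigenvalue statement by a variational argument. First I would note that for any fixed $f$ (equivalently $x = \Wh f$ via $\vp = \Wh x$, $f = \Wm\vp$), the construction in Steps~(\ref{step:step1})--(\ref{step:step3}) produces a symmetric matrix $A_f$ with row sums $w_u$, so that $\Lo(\vp) := (\I - A_f\Wm)\vp$ is well defined pointwise; setting $\Lc x := \Wmh \Lo(\Wh x)$ gives a (non-linear, but positively homogeneous of degree $1$) operator on the normalized space. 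The first task is to check $\langle x, \Lc x\rangle = \Dc(x)\,\langle x, x\rangle$. Expanding, $\langle x, \Lc x\rangle = \langle \vp, \Wm\vp\rangle - \langle \Wm\vp, A_f \Wm\vp\rangle = \langle f, \W f\rangle - f^\T A_f f$. Using the row-sum normalization of $A_f$ and symmetry, $f^\T(\W - A_f)f = \sum_{u\neq v} a_{uv}(f_u - f_v)^2/2$... wait, more precisely $= \tfrac12\sum_{u,v} a_{uv}(f_u-f_v)^2$, and since $a_{uv} = \sum_e a^e_{uv}$ with each edge $e$ distributing total weight $w_e$ across pairs in $S_e(f)\times I_e(f)$ — all of which realize the full discrepancy $\Delta_e(f) = \max_{u,v\in e}(f_u - f_v)$ — this telescopes to $\sum_{e\in E} w_e \Delta_e(f)^2 = \sum_{e} w_e \max_{u,v\in e}(f_u-f_v)^2$. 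Dividing by $\langle f, f\rangle_w = \langle x,x\rangle$ yields exactly $\D_w(f) = \Dc(x)$. This also shows all eigenvalues (and indeed all Rayleigh quotients) are nonnegative.

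The eigenvalue claim for $\gamma_2$ is where the real work lies, because $\Lc$ is not linear, so one cannot simply invoke the spectral theorem. I would argue as follows. Let $x_2$ attain $\gamma_2 = \min\{\Dc(x) : 0\neq x \perp \Wh\vec{1}\}$ on the unit sphere of the hyperplane $x_1^\perp$, where $x_1 := \Wh\vec{1}$. The key observation is that $x_1$ itself satisfies $A_{f}$-row-sums $= \W\vec{1}$ so $\Lc x_1 = \Wmh(\I - A_f\Wm)\W\vec{1} = \Wmh(\W\vec{1} - A_f\vec{1}) = 0$; hence $\langle \Lc x, x_1\rangle$ considerations are clean. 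Now fix the matrix $A := A_{f_2}$ at the minimizer $f_2 := \Wmh x_2$. Consider the auxiliary \emph{linear} operator $\tilde{\Lc} := \I - \Wmh A \Wmh$ (a genuine symmetric matrix, the normalized Laplacian of the 2-graph with weights $A$). I claim that $\Lc$ and $\tilde\Lc$ agree to first order around $x_2$ in directions that do not disturb the tie structure, and more importantly that the discrepancy ratio of $\tilde\Lc$ is an \emph{upper bound} for $\Dc$ pointwise while agreeing at $x_2$: for any $x$ with $g := \Wmh x$, $\langle x, \tilde\Lc x\rangle = g^\T(\W - A)g = \tfrac12\sum_{u,v}a_{uv}(g_u-g_v)^2 = \sum_{e} w_e \max_{u,v\in e}(f_{2,u}-f_{2,v})\cdot(\text{something})$ — here I must be careful; the honest bound is $g^\T(\W-A)g \le \sum_e w_e \max_{u,v\in e}(g_u - g_v)^2 = \D_w(g)\langle g,g\rangle_w$, with equality when $g = f_2$ (or a scalar multiple). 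Therefore the Rayleigh quotient of the \emph{linear} operator $\tilde\Lc$ restricted to $x_1^\perp$ is minimized, among all such $x$, at $x = x_2$ — because it lies below $\Dc$ everywhere and touches it at $x_2$, where $\Dc$ is already minimized over $x_1^\perp$. Since $\tilde\Lc$ is a symmetric matrix with $\tilde\Lc x_1 = 0$, the minimizer of its Rayleigh quotient over $x_1^\perp$ is an eigenvector of $\tilde\Lc$ with eigenvalue equal to that minimum value, which is $\Dc(x_2) = \gamma_2$. Thus $\tilde\Lc x_2 = \gamma_2 x_2$. Finally, I check that $\Lc x_2 = \tilde\Lc x_2$: this holds because $\Lc x_2 = \Wmh(\I - A_{f_2}\Wm)\Wh x_2 = \Wmh(\I - A\Wm)\Wh x_2 = \tilde\Lc x_2$ directly from the definition — the non-linearity only enters through the $f$-dependence of $A_f$, and we evaluated at $f = f_2$. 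Hence $\Lc x_2 = \gamma_2 x_2$, as desired.

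The main obstacle, and the step I would spend the most care on, is the equality-case analysis in the pointwise bound $g^\T(\W - A_{f_2})g \le \D_w(g)\langle g,g\rangle_w$ and the claim that this is tight \emph{exactly} in the direction $x_2$: one must verify that for each edge $e$, the weight $a^e_{uv}$ (determined by the resolution of ties in Step~(\ref{step:step2})) is supported only on pairs $(u,v)$ achieving the maximum gap $\Delta_e(f_2)$, so that $\sum_{u,v} a^e_{uv}(f_{2,u} - f_{2,v})^2 = w_e \Delta_e(f_2)^2$ exactly, while for a generic $g$ the same sum is $\le w_e\max_{u,v\in e}(g_u - g_v)^2$ since $a^e_{uv}$ is a sub-probability measure (scaled by $w_e$) on pairs in $e$. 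A secondary subtlety is making sure the minimizer $x_2$ exists and that the "touch from below" argument is valid even when $\tilde\Lc$ has $\gamma_2$ as a repeated eigenvalue; compactness of the sphere handles existence, and the argument only needs \emph{that} $x_2$ is a minimizer of $\tilde\Lc$'s Rayleigh quotient on $x_1^\perp$, which forces it into the corresponding eigenspace regardless of multiplicity. I would also remark that this argument does \emph{not} extend to $\gamma_3$ precisely because fixing $A_{f_3}$ and linearizing no longer lets one control the interaction with the already-chosen $x_2$, which is the phenomenon Example~\ref{eg:gamma3} exhibits.
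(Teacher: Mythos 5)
Your first half is fine: the identity $\rayc(x)=\Dc(x)$ follows exactly as in the paper's Lemma~\ref{lemma:ray_disc}, since each $a^e_{uv}$ is supported on pairs realizing $\Delta_e(f)$. The gap is in the eigenvalue argument. You freeze $A:=A_{f_2}$, form the linear surrogate $\widetilde{\mathcal{L}}:=\I-\Wmh A\Wmh$ with Rayleigh quotient $\widetilde{\mathcal{R}}$, correctly note $\widetilde{\mathcal{R}}(x)\le\Dc(x)$ for all $x$ with equality at $x_2$, and then conclude that $x_2$ minimizes $\widetilde{\mathcal{R}}$ over $x_1^\perp$, so the spectral theorem applies. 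That deduction is backwards: from $\widetilde{\mathcal{R}}\le\Dc$ and $\Dc\ge\gamma_2$ on $x_1^\perp$ you get no lower bound on $\widetilde{\mathcal{R}}$; a minorant that touches the objective at the objective's minimizer need not be minimized there. Moreover the intermediate claim is genuinely false, not merely unproved. Take the hypergraph with a single unit-weight hyperedge $e=\{1,2,3\}$, so $w_u=1$ for every $u$. Then $\gamma_2=3/2$ with minimizer $f_2\propto(2,-1,-1)^\T$; the tie between vertices $2$ and $3$ forces the even split $a^e_{12}=a^e_{13}=\tfrac12$, and the frozen Laplacian $\widetilde{\mathcal{L}}$ has eigenvalues $0,\ 1/2,\ 3/2$. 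The vector $(0,1,-1)^\T\perp\vec{1}$ has $\widetilde{\mathcal{R}}=1/2<\gamma_2$, so $x_2$ is \emph{not} a minimizer of $\widetilde{\mathcal{R}}$ on $x_1^\perp$ (it is an eigenvector, for the eigenvalue $3/2$, but your variational route does not establish that). So the ``touch from below'' step cannot be repaired by a more careful equality-case analysis; the inequality simply points the wrong way.

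What is actually needed is a first-order argument in the specific direction the operator itself provides, which is how the paper proceeds. By Lemma~\ref{lemma:deriv}(3), along the diffusion $\frac{d}{dt}\rayc(x)\le 0$, with equality \emph{iff} $\Lc x\in\spn(x)$ (the equality case of Cauchy--Schwarz applied to $\|f\|_w^2\|\Lo_w f\|_w^2-\langle f,\Lo_w f\rangle_w^2$), and by Lemma~\ref{lemma:lap_proj}(1) the flow direction $-\Lc x$ lies in $x_1^\perp$. Hence if $\frac{d}{dt}\rayc(x_2)<0$, then for small $\eps>0$ the vector $x_2-\eps\,\Lc x_2$ is a nonzero element of $x_1^\perp$ with strictly smaller discrepancy ratio, contradicting the definition of $x_2$; so the derivative vanishes, forcing $\Lc x_2\in\spn(x_2)$ and then $\Lc x_2=\gamma_2 x_2$ since $\rayc(x_2)=\gamma_2$. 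In other words, the information you must extract from minimality of $x_2$ is that the particular direction $-\Lc x_2$ (which stays in the constraint set) is not a descent direction for $\Dc$, and that requires the derivative computations of Lemma~\ref{lemma:deriv} (including the identity $\sum_e c_e(r_I(e)-r_S(e))=\|r\|_w^2$ from Lemma~\ref{lemma:define_lap}), which your sketch does not carry out; the frozen linear matrix plus the spectral theorem is not a substitute.
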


However, 
we show in Example~\ref{eg:gamma3}
that the above result for our Laplacian does not hold for $\gamma_3$.

\noindent \textbf{Intuition from Random Walk and Diffusion Process.}
We further elaborate the intuition described in Section~\ref{sec:lap_overview}.  
Given a $2$-graph whose edge weights
$w$ are given by the (symmetric) matrix $A$,
we illustrate the relationship between the Laplacian
and a diffusion process in an underlying measure space,
in order to gain insights on how to define the Laplacian
for hypergraphs.

Suppose $\vp \in \R^V$ is some measure on the vertices,
which, for instance, can represent a probability distribution
on the vertices.  A random walk on the graph
can be characterized by the transition matrix $\M := A \Wm$.
Observe that each column of $\M$ sums to 1,
because we apply $\M$ to the column vector $\vp$
to get the distribution $\M \vp$ after one step of the random walk.

We wish to define a continuous diffusion process.
Observe that, at this moment, the measure vector $\vp$ is moving
in the direction of $\M \vp - \vp = (\M - \I) \vp$.
Therefore, if we define an operator $\Lo := \I - \M$
on the measure space, we have
the differential equation $\frac{d \vp}{d t} = - \Lo \vp$.

To be mathematically precise, we are considering how $\vp$
will move in the future.  Hence, unless otherwise stated,
all derivatives considered are actually right-hand-derivatives
$\frac{d \vp(t)}{dt} := \lim_{\Delta t \ra 0^+} \frac{\vp(t + \Delta t) - \vp(t)}{\Delta t}$.

Using the transformation into
the weighted space $f = \Wm \vp$
and the normalized space $x = \Wmh \vp$,
we can define the corresponding operators
$\Lo_w := \Wm \Lo \W = \I - \Wm A$
and $\Lc := \Wmh \Lo \Wh = \I - \Wmh A \Wmh$,
which is exactly the normalized Laplacian for $2$-graphs.

\noindent \textbf{Generalizing the Diffusion Rule from 2-Graphs to
Hypergraphs.}  We
consider more carefully the
rate of change for the measure at a certain vertex $u$:
$\frac{d \vp_u}{d t} = \sum_{v : \{u,v\} \in E} w_{uv} (f_v - f_u)$, where $f = \Wm \vp$ is the weighted measure.
Observe that for a stationary distribution of the random walk,
the measure at a vertex $u$ should be proportional
to its (weighted) degree $w_u$.
Hence, given an edge $e = \{u,v\}$, by comparing the values $f_u$ and $f_v$, measure should move from the vertex with higher $f$ value
to the vertex with smaller $f$ value, at the rate given by $c_e := w_e \cdot |f_u - f_v|$.

To generalize this to a hypergraph $H = (V,E)$,
for $e \in E$ and measure $\vp$ (corresponding to $f = \Wm \vp$),
we define $I_e(f) \subseteq e$ as the vertices $u$
in $e$ whose $f_u = \frac{\vp_u}{w_u}$ are minimum, $S_e(f) \subseteq e$ as those whose corresponding values
are maximum, and $\Delta_e(f) := \max_{u,v \in E} (f_u - f_v)$
as the discrepancy within edge $e$.  Then,
the diffusion process obeys the following rules.

\begin{compactitem}
\item[\textsf(R1)] When the measure distribution is at state $\vp$ (where $f = \Wm \vp$), there can be a positive rate of measure flow
from $u$ to $v$ due to edge $e \in E$ only if  $u \in S_e(f)$ and $v \in I_e(f)$.
\item[\textsf(R2)] For every edge $e \in E$,
the total rate of measure flow \textbf{due to $e$} from vertices
in $S_e(f)$ to $I_e(f)$ is $c_e := w_e \cdot \Delta_e(f)$.
In other words,
the weight $w_e$ is distributed among $(u,v) \in S_e(f) \times I_e(f)$ such that 
for each such $(u,v)$,
there exists $a^e_{uv} = a^e_{uv}(f)$
such that $\sum_{(u,v) \in S_e \times I_e} a^e_{uv} = w_e$,
and
the rate of flow from $u$ to $v$ (due to $e$) is $a^e_{uv} \cdot \Delta_e$. (For ease of notation, we write $a^e_{uv} = a^e_{vu}$.)
Observe that if $I_e = S_e$, then $\Delta_e = 0$ and it does not matter how
the weight $w_e$ is distributed.
\end{compactitem}

Observe that the distribution of hyperedge weights will induce a symmetric matrix $A_f$
such that for $u \neq v$, $A_f(u,v) = a_{uv} := \sum_{e \in E} a^e_{uv} (f)$, and the diagonal entries are chosen such that entries in  the row corresponding to vertex $u$ sum to $w_u$.
Then, the operator $\Lo(\vp) := (\I - A_f \Wm) \vp$
is defined on the measure space to obtain
the differential equation $\frac{d \vp}{d t} = - \Lo \vp$.
As in the case for $2$-graph, we show in Lemma~\ref{lemma:ray_disc} that
the corresponding operator $\Lo_w$ on the weighted space
and the normalized Laplacian $\Lc$ are induced such that
$\D_w(f) = \ray_w(f)$ and $\Dc(x) = \rayc(x)$,
which hold no matter how the weight $w_e$ of hyperedge $e$
is distributed among edges in $S_e(f) \times I_e(f)$.

\begin{lemma}[Rayleigh Quotient Coincides with Discrepancy Ratio]
\label{lemma:ray_disc}
Suppose $\Lo_w$ on the weighted space is defined such that
rules \textsf{(R1)} and \textsf{(R2)} are obeyed.
Then, 
the Rayleigh quotient associated with $\Lo_w$ satisfies
that for any $f$ in the weighted space,
$\ray_w(f) = \D_w(f)$.
By considering the isomorphic normalized space,
we have for each $x$, $\rayc(x) = \Dc(x)$.
\end{lemma}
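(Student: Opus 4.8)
The plan is to verify directly that for any $f$ in the weighted space, the quantity $\ray_w(f) = \frac{\langle f, \Lo_w f\rangle_w}{\langle f, f\rangle_w}$ equals $\D_w(f) = \frac{\sum_{e\in E} w_e \max_{u,v\in e}(f_u - f_v)^2}{\sum_{u\in V} w_u f_u^2}$. Since both ratios have the same denominator $\langle f, f\rangle_w = \sum_u w_u f_u^2$, it suffices to show that the numerators agree, i.e.\ that $\langle f, \Lo_w f\rangle_w = \sum_{e\in E} w_e \Delta_e(f)^2$ where $\Delta_e(f) = \max_{u,v\in e}(f_u - f_v)$. Recall $\Lo_w = \Wm \Lo \W$ with $\Lo(\vp) = (\I - A_f \Wm)\vp$, so that $\Lo_w f = (\I - \Wm A_f) f = \Wm(\W - A_f) f$, and hence $\langle f, \Lo_w f\rangle_w = f^\T \W \cdot \Wm (\W - A_f) f = f^\T (\W - A_f) f$.

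Next I would expand $f^\T(\W - A_f) f$ using the structure of $A_f$. Since each row of $\W - A_f$ sums to zero (the diagonal of $A_f$ was chosen so that row sums equal $w_u$), the matrix $\W - A_f$ is a weighted graph Laplacian for the (state-dependent) edge weights $a_{uv} = \sum_{e\in E} a^e_{uv}(f)$, so the standard identity gives
\[
f^\T(\W - A_f) f = \sum_{u < v} a_{uv} (f_u - f_v)^2 = \sum_{e\in E} \sum_{(u,v)\in S_e(f)\times I_e(f)} a^e_{uv} (f_u - f_v)^2.
\]
Now I invoke rules \textsf{(R1)} and \textsf{(R2)}: rule \textsf{(R1)} guarantees that $a^e_{uv}$ is nonzero only when $u \in S_e(f)$ and $v \in I_e(f)$, which means $f_u - f_v = \Delta_e(f)$ exactly (vertices in $S_e$ attain the maximum $f$-value on $e$, those in $I_e$ the minimum); rule \textsf{(R2)} guarantees $\sum_{(u,v)\in S_e\times I_e} a^e_{uv} = w_e$. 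Therefore the inner sum collapses to $\Delta_e(f)^2 \sum_{(u,v)} a^e_{uv} = w_e \Delta_e(f)^2$, and summing over $e$ yields $\sum_e w_e \Delta_e(f)^2$, as desired. The case $I_e(f) = S_e(f)$ (so $\Delta_e = 0$) contributes zero on both sides regardless of the weight distribution, consistent with the remark that the distribution is immaterial then.

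Finally, the statement about the normalized space is just a change of variables: for $x = \Wh f$ we have $\Dc(x) = \D_w(f)$ by definition of the normalized discrepancy ratio, and $\rayc(x) = \ray_w(f)$ by the transformation identities recorded in the "Transformation between Different Spaces" paragraph (since $\Lc = \Wmh \Lo_w \Wh$ and $\langle x, x\rangle = \langle f, f\rangle_w$), so $\rayc(x) = \Dc(x)$ follows immediately. I do not anticipate a genuine obstacle here; the only point requiring care is the bookkeeping in the expansion of $f^\T(\W - A_f)f$ and the observation that \textsf{(R1)} is precisely what forces $f_u - f_v$ to equal $\Delta_e(f)$ on every pair that carries positive weight — this is the crux that makes the identity hold for \emph{any} valid weight distribution, not just a particular one.
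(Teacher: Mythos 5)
Your proof is correct and follows essentially the same route as the paper: reduce to the numerator identity $\langle f, \Lo_w f\rangle_w = f^\T(\W - A_f)f$, expand this as the weighted-Laplacian quadratic form $\sum_{u<v} a_{uv}(f_u - f_v)^2$, and use \textsf{(R1)} (support in $S_e\times I_e$, so each contributing pair attains $\Delta_e(f)$) together with \textsf{(R2)} ($\sum a^e_{uv} = w_e$) to collapse each hyperedge's contribution to $w_e\Delta_e(f)^2$. The only difference is that you spell out the row-sum/Laplacian bookkeeping and the degenerate case $S_e = I_e$ explicitly, which the paper leaves implicit.
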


\begin{proof}
It suffices to show that $\langle f, \Lo_w f \rangle_w =
\sum_{e \in E} w_e \max_{u,v \in e} (f_u - f_v)^2$.

Recall that $\vp = \W f$,
and $\Lo_w = \I - \Wm A_f$,
where $A_f$ is chosen as above
to satisfy rules~\textsf{(R1)} and~\textsf{(R2)}.

Hence,
it follows that

$\langle f, \Lo_w f \rangle_w = f^\T (\W - A_f) f
= \sum_{uv \in {V \choose 2}} a_{uv} (f_u - f_v)^2$

$= \sum_{uv \in {V \choose 2}} \sum_{e \in E: \{uv, vu\} \cap S_e \times I_e \neq \emptyset} a^e_{uv} (f_u - f_v)^2
= \sum_{e \in E} w_e \max_{u,v \in e} (f_u - f_v)^2,
$
as required.
\end{proof}

\subsection{Defining Diffusion Process to Construct Laplacian}
\label{sec:disp}

Recall that $\vp \in \R^V$ is the measure vector, where each coordinate
contains the ``measure'' being dispersed.  Observe that
we consider a closed system here, and hence $\langle \vec{1}, \vp \rangle$ remains invariant.
To facilitate the analysis, we also consider the weighted measure $f := \Wm \vp$.

Our goal is to define a diffusion process
that obeys rules~\textsf{(R1)} and~\textsf{(R2)}.
Then, the operator on the measure space
is given by $\Lo \vp := - \frac{d \vp}{d t}$.
By observing that the weighted space is achieved
by the transformation $f = \Wm \vp$,
the operator on the weighted space is
given by $\Lo_w f := - \frac{d f}{d t}$.

In Figure~\ref{fig:define_r}, we give a procedure
that takes $f \in \R^V$ and returns $r = \frac{d f}{d t} \in \R^V$.
This defines $\Lo_w f = - r$,
and the Laplacian is induced $\Lc := \Wh \Lo_w \Wmh$
on the normalized space $x = \Wh f$.

Suppose we have the measure vector $\vp \in \R^V$
and the corresponding weighted vector $f = \Wm \vp$.
Observe that even though we call $\vp$ a measure vector,
$\vp$ can still have negative coordinates.
We shall construct a vector $r \in \R^V$ that is
supposed to be $\frac{df}{dt}$.
For $u \in V$ and $e \in E$,
let $\rho_u(e)$ be the rate of change of
the measure $\vp_u$ due to edge $e$.  Then, $\rho_u :=
\sum_{e \in E} \rho_u(e)$ gives the rate of change of $\vp_u$.

We show that $r$ and $\rho$ must satisfy certain
constraints because of
rules~\textsf{(R1)} and~\textsf{(R2)}.
Then, it suffices to show that there exists 
a unique $r \in \R^V$ that satisfies all the constraints.

First, since $\frac{df}{dt} = \Wm \frac{d \vp}{dt}$,
we have for each vertex $u \in V$,
$r_u = \frac{\rho_u}{w_u}$.

Rule \textsf{(R1)} implies the following constraint:

for $u \in V$ and $e \in E$,
$\rho_u(e) < 0$ only if $u \in S_e(f)$,
and $\rho_u(e) > 0$ only if $u \in I_e(f)$.

Rule \textsf{(R2)} implies the following constraint:

for each $e \in E$, we have
$\sum_{u \in I_e(f)} \rho_u(e) = - \sum_{u \in S_e(f)} \rho_u(e) = w_e \cdot \Delta_e(f)$.

\noindent \textbf{Construction of $A_f$.}
Observe that for each $e \in E$,
once all the $\rho_u(e)$'s are determined,
the weight $w_e$ can be distributed among
edges in $S_e \times I_e$ by considering
a simple flow problem on the complete bipartite graph,
where each $u \in S_e$ is a source with
supply $-\frac{\rho_u(e)}{\Delta_e}$,
and each $v \in I_e$ is a sink with demand
$\frac{\rho_v(e)}{\Delta_e}$.  Then, from any feasible flow,
we can set $a^e_{uv}$ to be the flow along the edge $(u,v) \in S_e \times I_e$.

\noindent \textbf{Infinitesimal Considerations.}
In the previous discussion, we argue that
if a vertex $u$ is losing measure due to edge $e$,
then it should remain in $S_e$ for infinitesimal time,
which holds only if the rate of change of $f_u$ is the maximum among vertices in $S_e$.
A similar condition should hold if the vertex
$u$ is gaining measure due to edge $e$.  This translates
to the following constraints.

\noindent Rule \textsf{(R3)} First-Order Derivative Constraints:
\begin{compactitem}
\item If $\rho_u(e) < 0$, then $r_u \geq r_v$ for all $v \in S_e$.
\item If $\rho_u(e) > 0$, then $r_u \leq r_v$ for all $v \in I_e$.
\end{compactitem}

We remark that rule \textsf{(R3)}
is only a necessary condition in order
for the diffusion process to satisfy
rule~\textsf{(R1)}.  Even though $A_f$ might not be unique,
we shall show that these rules are sufficient
to define a unique $r \in \R^V$, which is returned
by the procedure in Figure~\ref{fig:define_r}.

Moreover, observe that if $f = \alpha g$ for some $\alpha > 0$, then
in the above flow problem to determine the symmetric matrix, we can still have $A_f = A_g$.
Hence, even though the resulting
$\Lo_w(f) := (\I - \Wm A_f) f$ might not be linear,
we still have \mbox{$\Lo_w (\alpha g) = \alpha \Lo_w(g)$}.

\begin{figure}[H]
\begin{tabularx}{\columnwidth}{|X|}
\hline

\vspace{5pt}

Given a hypergraph $H=(V,E,w)$
and a vector $f \in \R^V$ in the weighted
space, we describe a procedure to return
$r \in \R^V$ that is supposed to be $r = \frac{df}{dt}$
in the diffusion process.

\begin{enumerate}
	\item  Define an equivalence relation on $V$
	such that $u$ and $v$ are in the same equivalence class \emph{iff} $f_u = f_v$.  
	
	\item We consider
	each such equivalence class $U \subset V$
	and define the $r$ values for vertices in $U$.
	Denote $E_U := \{e \in E: \exists u \in U, u \in I_e \cup S_e\}$.

	Recall that $c_e := w_e \cdot \max_{u,v \in E} (f_u - f_v)$.  For $F \subset E$, denote $c(F) := \sum_{e \in F} c_e$.
	
	For $X \subset U$,
	define $I_X := \{e \in E_U : I_e \subseteq X\}$
	and $S_X := \{e \in E_U: S_e \cap X \neq \emptyset\}$.

	Denote $C(X) := c(I_X) - c(S_X)$
	and $\delta(X) := \frac{C(X)}{w(X)}$.

\item Find any $P \subset U$ such that $\delta(P)$ is maximized.

For all $u \in P$, set $r_u := \delta(P)$.

\item Recursively, find the $r$ values
for the remaining points $U' := U \setminus P$ using $E_{U'} := E_U \setminus (I_P \cup S_P)$.
\end{enumerate}

\\
\hline 
\end{tabularx}
\caption{Determining the Vector $r = \frac{df}{dt}$}
\label{fig:define_r}
\end{figure}

\noindent \textbf{Uniqueness of Procedure.}
In step~(3) of Figure~\ref{fig:define_r},
there could be more than one choice of $P$
to maximize $\delta(P)$.  
In Section~\ref{sec:densest}, we give
an efficient algorithm to find such a $P$.
Moreover, we shall show that
the procedure will return the same $r \in \R^V$ no
matter what choice the algorithm makes.
In Lemma~\ref{lemma:define_lap},
we prove that rules \textsf{(R1)}-\textsf{(R3)}
imply that $\frac{df}{dt}$ must equal to such an $r$.

\subsection{A Densest Subset Problem}
\label{sec:densest}

In step~(3) of Figure~\ref{fig:define_r},
we are solving the following variant of the densest subset
problem restricted to some set $U$ of vertices,
with multi-sets $I := \{e \cap U: e \in E, I_e(f) \cap U \neq \emptyset\}$
and $S := \{e \cap U: e \in E, S_e(f) \cap U \neq \emptyset\}$.

\begin{definition}[Densest Subset Problem]
The input is a hypergraph $H_U = (U, I \cup S)$,
where we allow multi-hyperedges in
$I \cup S$.  Each $v \in U$ has weight $w_v > 0$,
and each $e \in I \cup S$ has value $c_e > 0$.

For $X \subset U$,
define $I_X := \{e \in I: e \subset X\}$
and $S_X := \{e \in S: e \cap X \neq \emptyset\}$.

The output is a non-empty $P \subset U$
such that $\delta(P) := \frac{c(I_P) - c(S_P)}{w(P)}$ is maximized,
and we call such $P$ a \emph{densest subset}.
\end{definition}

We use an LP similar to the one given by 
Charikar~\cite{Charikar00} used for the basic densest subset problem.

\begin{equation*}
\begin{array}{ll@{}ll}
\text{maximize}  & c(x) := \sum_{e \in I}{c_e x_e} - \sum_{e \in S}{c_e x_e} \quad & &\\
\text{subject to}& \sum_{v\in U} w_v y_v  = 1 & & \\
				 & x_e \leq y_v & \forall e \in I, v \in e & \\
         & x_e \geq y_v & \forall e \in S, v \in e & \\

                 & y_v, x_e \geq 0 & \forall v\in U, e \in I \cup S &
\end{array}
\end{equation*}

We analyze this LP using a similar approach given in~\cite{BalalauBCGS15}.
Given a subset $P \subset U$,
we define the following feasible solution
$z^P = (x^P, y^P)$.

$x^P_{e} =
\begin{cases}
\frac{1}{w(P)} & \mbox{if $e \in I_P \cup S_P$} \\
0 & \mbox{otherwise.}
\end{cases}
$

$y^P_v =
\begin{cases}
\frac{1}{w(P)} & \mbox{if $v \in P$} \\
0 & \mbox{otherwise.}
\end{cases}
$

Feasibility of $z^P$ can be verified easily
and it can be checked that the objective value is $c(x^P) = \delta(P)$.

Given a feasible solution $z = (x,y)$,
we say that a non-empty $P$ is a \emph{level set} of $z$
if there exists $r > 0$ such that $P = \{v \in U: y_v \geq r \}$.

The following lemma has a proof similar to~\cite[Lemma 4.1]{BalalauBCGS15}.

\begin{lemma}
\label{lemma:convex_comb}
Suppose $z^* = (x^*, y^*)$ is an optimal (fractional) solution of the LP.
Then, every (non-empty) level set $P$ of $z^*$
is a densest set and $\delta(P) = c(x^*)$.
\end{lemma}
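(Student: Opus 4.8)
The plan is to follow the threshold‑rounding argument of Charikar, adapted to handle the two kinds of hyperedges ($I$‑edges with constraints $x_e\le y_v$ and $S$‑edges with constraints $x_e\ge y_v$). First I would record the structure of an optimal solution $z^*=(x^*,y^*)$: since every $c_e>0$ and the LP is a maximization, at optimality we may assume $x^*_e=\min_{v\in e}y^*_v$ for every $e\in I$ (otherwise raising $x^*_e$ toward that minimum preserves feasibility and strictly increases the objective), and $x^*_e=\max_{v\in e}y^*_v$ for every $e\in S$ (otherwise lowering $x^*_e$ down to that maximum preserves feasibility and strictly increases the objective, as $x_e$ enters with a negative coefficient there).

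Next, for $r>0$ set $P_r:=\{v\in U : y^*_v\ge r\}$. The key is a ``layer‑cake'' computation: integrating indicators over $r$ gives
\[ \int_0^\infty w(P_r)\,dr=\sum_{v\in U} w_v y^*_v = 1, \]
and, using the structural normalization above together with the facts $e\in I_{P_r}\iff \min_{v\in e}y^*_v\ge r$ and $e\in S_{P_r}\iff \max_{v\in e}y^*_v\ge r$,
\[ \int_0^\infty c(I_{P_r})\,dr=\sum_{e\in I}c_e x^*_e, \qquad \int_0^\infty c(S_{P_r})\,dr=\sum_{e\in S}c_e x^*_e. \]
Subtracting, $\int_0^\infty C(P_r)\,dr=c(x^*)$ with $C(X)=c(I_X)-c(S_X)$; combining with the first identity yields $\int_0^\infty\big(C(P_r)-c(x^*)\,w(P_r)\big)\,dr=0$.

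Then I would invoke LP optimality in the reverse direction: for each $r$ with $P_r\neq\emptyset$ the solution $z^{P_r}$ is feasible with objective $\delta(P_r)$, so $\delta(P_r)\le c(x^*)$, i.e. $C(P_r)-c(x^*)w(P_r)\le 0$. Since the integrand is nonpositive yet integrates to zero, and $r\mapsto P_r$ is a step function taking each of its finitely many distinct values (these are exactly the level sets of $z^*$) on a half‑open interval of positive length, the integrand must vanish on each such interval. Hence $\delta(P)=c(x^*)$ for every level set $P$. Finally $c(x^*)=\max_{\emptyset\neq P\subseteq U}\delta(P)$ — the upper bound from feasibility of $z^P$ for arbitrary $P$, the lower bound from the level sets just analyzed — so each level set attains the optimum $\delta$ and is therefore a densest set.

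The step requiring the most care is the passage from the integral identity to the pointwise statement ``for \emph{every} level set'': one must observe that $P_r$ is piecewise constant with breakpoints only at the finitely many values of $y^*$, so each level set is realized on an interval of positive length, and a nonpositive integrand of zero integral is then forced to be zero on each of these intervals (not merely almost everywhere). The only other point needing attention is the $I$ versus $S$ asymmetry (min‑driven versus max‑driven), which is exactly why the structural normalization of $x^*$ is needed in both directions at the outset.
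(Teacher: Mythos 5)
Your proof is correct, but it takes a genuinely different route from the paper. The paper argues by induction on the number of level sets of $z^*$: it peels off the support $P=\supp(y^*)$, writes $z^*$ as a convex combination $z^* = \alpha\, w(P)\, z^{P} + (1-\alpha\, w(P))\,\widehat z$ with $\alpha=\min\{y^*_v : v\in P\}$, checks that the residual $\widehat z$ is feasible (this is where it uses the same structural fact you establish at the outset, namely $x^*_e=\min_{v\in e}y^*_v$ for $e\in I$ and $x^*_e=\max_{v\in e}y^*_v$ for $e\in S$, forced by optimality since every $c_e>0$), concludes by linearity that both pieces are optimal, and recurses. You instead run a layer-cake/threshold argument: the normalization constraint and the structural fact give $\int_0^\infty\bigl(C(P_r)-c(x^*)\,w(P_r)\bigr)\,dr=0$, while feasibility of each $z^{P_r}$ gives pointwise nonpositivity of the integrand, and since $r\mapsto P_r$ is a step function whose finitely many values (the level sets) each occupy an interval of positive length, the integrand vanishes on each piece, giving $\delta(P)=c(x^*)$ for every level set simultaneously; together with $\delta(Q)\le c(x^*)$ for all $Q$ this makes every level set densest. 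Your treatment of the one delicate point — upgrading a zero integral of a nonpositive piecewise-constant function to vanishing on every piece — is exactly right, and you correctly handle the $I$/$S$ asymmetry (min-driven versus max-driven indicators). What each approach buys: your averaging argument is shorter, avoids the induction and the feasibility check for the residual solution $\widehat z$, and handles all level sets in one stroke; the paper's decomposition is more structural, exhibiting $z^*$ explicitly as a convex combination of the indicator solutions $z^{P}$ of its level sets, which is the form reused to derive the corollary that unions and intersections of densest subsets are densest (a fact your argument also yields, but only via the same trick of averaging two optimal solutions and then applying the lemma).
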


\begin{proof}
Suppose $z^* = (x^*,y^*)$ is an optimal solution.  
We prove the result by induction on the number $k$
of level sets of $z^*$, which is also the number of distinct non-zero values found in the coordinates of $y^*$.  
For the base case when $k=1$, $z^*$ only has one level set $P = \supp(y^*)$.
Because $\sum_{v \in U} w_v y^*_v = 1$,
it follows that we must have $z^* = z^P$, and hence
$P$ must be a densest set and the result holds for $k=1$.

For the inductive step, suppose $y^*$ has $k \geq 2$ non-zero 
distinct values in its coordinates.
Let $P := \supp(y^*)$ and $\alpha := \min\{y^*_v: v \in P\}$. Observe
$P$ is a level set of $z^*$ and $\alpha \cdot w(P) \leq  \sum_{v \in U} w_v y^*_v = 1$.
Moreover, observe that if $x^*_e > 0$, then $x^*_e \geq \alpha$.

Define $\widehat{z} = (\widehat{x}, \widehat{y})$ as follows.

$\widehat{x}_{e} =
\begin{cases}
\frac{x^*_{e} - \alpha}{1 - \alpha \cdot w(P)} & \mbox{if $x^*_{e} > 0$} \\
0 & \mbox{otherwise.}
\end{cases}
$

$\widehat{y}_v =
\begin{cases}
\frac{y^*_v - \alpha}{1 - \alpha \cdot w(P)} & \mbox{if $v \in P$} \\
0 & \mbox{otherwise.}
\end{cases}
$

Hence, $z^* = \alpha \cdot w(P) \cdot z^S + (1 - \alpha \cdot w(P)) \widehat{z}$,
and the number of level sets of $\widehat{z}$ is exactly $k-1$.
In particular,
the level sets of $z^*$ are $P$ together with those of $\widehat{z}$.

Hence, to complete
the inductive step, it suffices to show that $\widehat{z}$ is
a feasible solution to the LP.
To see why this is enough, observe that
the objective function is
linear, $c(x^*) = \alpha \cdot w(P) \cdot  c(x^P) + (1 - \alpha \cdot w(P)) \cdot c(\widehat{x})$.
Hence, if both $z^P$ and $\widehat{z}$ are feasible, then
both must be optimal.  Then, the inductive hypothesis on $\widehat{z}$ can be
used to finish the inductive step.

Hence, it remains to check the feasibility 
of $\widehat{z}$.

First, $\sum_{v \in U} w_v \widehat{y}_v = \sum_{v \in P} w_v \frac{y^*_v - \alpha}{1 - \alpha \cdot w(P)} = 1$.

Observe in the objective value,
we want to increase $x_e$ for $e \in I$ and
decrease $x_e$ for $e \in S$. Hence, the optimality of $z^*$ implies that

$x^*_{e} =
\begin{cases}
\min_{v \in e} y^*_v & \mbox{if $e \in I$} \\
\max_{v \in e} y^*_v & \mbox{if $e \in S$.}
\end{cases}
$

For $x^*_e = 0$, then $\widehat{x}_e = 0$ and the corresponding inequality
is satisfied.

Otherwise,  $x^*_e \geq \alpha$, we have

$\widehat{x}_{e} = \frac{x^*_{e} - \alpha}{1 - \alpha \cdot w(P)} =
\begin{cases}
\frac{\min_{v \in e} y^*_v - \alpha}{1 - \alpha \cdot w(P)} = \min_{v \in e} \widehat{y}_v & \mbox{if $e \in I$} \\
\frac{\max_{v \in e} y^*_v - \alpha}{1 - \alpha \cdot w(P)} = \max_{v \in e} \widehat{y}_v & \mbox{if $e \in S$.}
\end{cases}
$

Therefore,
$\widehat{z}$ is feasible and this completes the inductive step.
\end{proof}

Given two densest subsets $P_1$ and $P_2$,
it follows that
that $\frac{z^{P_1} + z^{P_2}}{2}$ is an optimal
LP solution.  Hence, by considering its level sets,
Lemma~\ref{lemma:convex_comb} implies
the following corollary.

\begin{corollary}[Properties of Densest Subsets]
\begin{enumerate}
\item Suppose $P_1$ and $P_2$ are both densest subsets.  Then,
$P_1 \cup P_2$ is also a densest subset. Moreover,
if $P_1 \cap P_2$ is non-empty, then it is also a densest subset.
\item The maximal densest subset is unique and contains
all densest subsets.
\end{enumerate}
\end{corollary}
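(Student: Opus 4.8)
The plan is to derive everything from Lemma~\ref{lemma:convex_comb} together with the convexity of the LP feasible region and linearity of its objective. First I would set up the dictionary between densest subsets and optimal LP solutions. For any nonempty $P \subseteq U$ the solution $z^P$ is feasible with objective value $\delta(P)$, so the LP optimum is at least $\max_P \delta(P)$; conversely, any optimal solution $z^*$ has at least one nonempty level set, which by Lemma~\ref{lemma:convex_comb} is a densest subset with $\delta$ equal to $c(x^*)$, the LP optimum. Hence the LP optimum equals the maximum density, and a nonempty $P$ is a densest subset if and only if $z^P$ is an optimal LP solution.

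For part (1), given densest subsets $P_1$ and $P_2$, both $z^{P_1}$ and $z^{P_2}$ are optimal, so by convexity $z := \tfrac12(z^{P_1} + z^{P_2})$ is again an optimal solution. The key step is then to read off the level sets of $z$. Its $y$-coordinates take the value $\tfrac12\!\big(\tfrac{1}{w(P_1)} + \tfrac{1}{w(P_2)}\big)$ on $P_1 \cap P_2$, the value $\tfrac{1}{2 w(P_1)}$ on $P_1 \setminus P_2$, the value $\tfrac{1}{2 w(P_2)}$ on $P_2 \setminus P_1$, and $0$ elsewhere; a short case analysis on the relative sizes of $w(P_1)$ and $w(P_2)$ shows that $P_1 \cup P_2 = \supp(y)$ is always a level set of $z$ (take the threshold equal to the smallest positive $y$-value), and that $P_1 \cap P_2$, when nonempty, is the level set obtained at the maximal $y$-value. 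Applying Lemma~\ref{lemma:convex_comb} to the optimal solution $z$ then immediately yields that $P_1 \cup P_2$ is a densest subset, and that $P_1 \cap P_2$ is one whenever it is nonempty.

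For part (2), since $U$ is finite there are only finitely many densest subsets; taking their union and applying the union part of (1) repeatedly shows that this union $P^*$ is itself a densest subset. By construction $P^*$ contains every densest subset, so it is the unique maximal densest subset.

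I expect the only mildly delicate point to be the case analysis verifying that $P_1 \cup P_2$ and (when nonempty) $P_1 \cap P_2$ genuinely occur as level sets of the averaged solution once the weights $w(P_1)$ and $w(P_2)$ are allowed to differ; everything else is a routine consequence of LP convexity and Lemma~\ref{lemma:convex_comb}.
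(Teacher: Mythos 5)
Your proposal is correct and follows essentially the same route as the paper: the paper also observes that $\tfrac12(z^{P_1}+z^{P_2})$ is an optimal LP solution and applies Lemma~\ref{lemma:convex_comb} to its level sets, which are exactly $P_1\cup P_2$ (threshold at the smallest positive $y$-value) and, when nonempty, $P_1\cap P_2$ (threshold at the largest value, which is automatically strictly largest since it is the sum of the two positive values). Your verification of the dictionary between densest subsets and optimal LP solutions, and the iteration for part (2), match the paper's intent; no gap.
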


The next two lemmas show that the procedure defined
in Figure~\ref{fig:define_r} will return the same $r \in \R^V$,
no matter which densest subset is returned in step~(3).
Lemma~\ref{lemma:densest_remain} implies that if $P$
is a maximal densest subset in the given instance,
then the procedure will assign $r$ values to the vertices 
in $P$ first and each $v \in P$ will receive $r_v := \delta(P)$.

\begin{lemma}[Remaining Instance]
\label{lemma:densest_remain}
Suppose in an instance $(U, I \cup S)$ with density function $\delta$,
some densest subset $X$ is found,
and the remaining instance $(U', I' \cup S')$
is defined with $U' := U \setminus X$,
$I' := \{e \cap U': e \in I \setminus I_X\}$, 
$S' := \{e \cap U': e \in S \setminus S_X\}$
and the corresponding density function $\delta'$.
Then, for any $Y \subset U'$, $\delta'(Y) \leq \delta(X)$,
where equality holds \emph{iff} $\delta(X \cup Y) = \delta(X)$.
\end{lemma}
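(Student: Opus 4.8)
The plan is to unwind the definitions of $\delta'$ on the remaining instance and relate $C'(Y) = c(I'_Y) - c(S'_Y)$ to quantities in the original instance, then compare to $\delta(X)$ using the maximality/optimality properties of densest subsets already established. First I would observe that for $Y \subset U' = U \setminus X$ we have $w(X \cup Y) = w(X) + w(Y)$ since $X$ and $Y$ are disjoint. The key identity to establish is that $C(X \cup Y) = C(X) + C'(Y)$, where $C(\cdot)$ refers to the original instance and $C'(\cdot)$ to the remaining one. To see this, I would carefully track how the index sets change: an $I$-edge $e$ contributes to $I_{X \cup Y}$ iff $I_e \subseteq X \cup Y$; such edges split into those already counted in $I_X$ (namely $I_e \subseteq X$) and the rest, which after restriction to $U'$ become exactly the edges of $I'$ whose restricted copy lies inside $Y$ — i.e. $I'_Y$. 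A symmetric but slightly more delicate bookkeeping handles the $S$-edges: $e \in S_{X\cup Y}$ iff $S_e \cap (X \cup Y) \neq \emptyset$, which happens iff $S_e \cap X \neq \emptyset$ (so $e \in S_X$) or ($S_e \cap X = \emptyset$ and $S_e \cap Y \neq \emptyset$), and the latter edges are precisely those that survive into $S'$ and meet $Y$, i.e. $S'_Y$. Summing the $c_e$ values and using that $S_X$ and the surviving $S$-edges are disjoint gives $c(S_{X \cup Y}) = c(S_X) + c(S'_Y)$, and similarly $c(I_{X \cup Y}) = c(I_X) + c(I'_Y)$; subtracting yields $C(X \cup Y) = C(X) + C'(Y)$.

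Granting this identity, the rest is a short computation. Since $X$ is a densest subset, $\delta(X \cup Y) \leq \delta(X)$, i.e. $\frac{C(X) + C'(Y)}{w(X) + w(Y)} \leq \frac{C(X)}{w(X)}$. Cross-multiplying (all weights positive) gives $w(X) C'(Y) \leq w(Y) C(X)$, hence $\delta'(Y) = \frac{C'(Y)}{w(Y)} \leq \frac{C(X)}{w(X)} = \delta(X)$, which is the claimed inequality. Moreover equality $\delta'(Y) = \delta(X)$ holds iff equality holds throughout, which by the same cross-multiplication is equivalent to $\frac{C(X) + C'(Y)}{w(X) + w(Y)} = \frac{C(X)}{w(X)}$, i.e. $\delta(X \cup Y) = \delta(X)$. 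This establishes the ``iff'' clause.

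The main obstacle I anticipate is the bookkeeping in the identity $C(X \cup Y) = C(X) + C'(Y)$, specifically making sure the multi-set structure and the definition of the remaining instance (edges restricted to $U'$, with $I_X$ and $S_X$ removed from the edge multi-sets) are handled consistently — in particular that an edge counted in $S_X$ is genuinely removed and not double-counted, and that restricting an $S$-edge to $U'$ does not accidentally make it empty in a way that changes whether it meets $Y$. I would be careful to note that if $e \in S \setminus S_X$ then $S_e \cap X = \emptyset$, so restricting $e$ to $U'$ loses none of the ``$S$-relevant'' vertices of $e$ that lie in $Y$; the analogous point for $I$-edges is that $e \in I \setminus I_X$ means $I_e \not\subseteq X$, and after restriction $I_e \cap U' = I_e \setminus X$ is the set that must be checked against $Y$. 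Once these case distinctions are spelled out the identity is routine, and the arithmetic with the density ratios is immediate.
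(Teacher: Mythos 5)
Your proposal is correct and follows essentially the same route as the paper: the identities $c(I'_Y)=c(I_{X\cup Y})-c(I_X)$ and $c(S'_Y)=c(S_{X\cup Y})-c(S_X)$ (your $C(X\cup Y)=C(X)+C'(Y)$), followed by comparing $\delta'(Y)$ with $\delta(X)$ via the maximality of $X$, with the equality case falling out of the same algebra. The paper phrases the final step as a mediant-type equivalence $\delta'(Y)\bowtie\delta_M \iff \delta(X\cup Y)\bowtie\delta_M$ for $\bowtie\in\{<,=,>\}$, which is exactly your cross-multiplication argument.
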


\begin{proof}
Denote $\delta_M := \delta(X) = \frac{c(I_X) - c(S_X)}{w(X)}$.

Observe that
$c(I'_Y) = c(I_{X \cup Y}) - c(I_X)$
and
$c(S'_Y) = c(S_{X \cup Y}) - c(S_X)$.

Hence, we have 
$\delta'(Y) = \frac{c(I'_Y) - c(S'_Y)}{w(Y)} = \frac{\delta(X \cup Y) \cdot w(X \cup Y) - \delta_M \cdot w(X)}{w(X \cup Y) - w(X)}$.

Therefore, for each $\bowtie \, \in \{ <, = , > \}$,
we have $\delta'(Y) \bowtie \delta_M$ \emph{iff} $\delta(X \cup Y) \bowtie \delta_M$.

We next see how this implies the lemma.
For $\bowtie$ being ``$>$'', we know $\delta'(Y) > \delta(X)$ is impossible, because
this implies that $\delta(X \cup Y) > \delta(X)$, violating the
assumption that $X$ is a densest subset.

For $\bowtie$ being ``$=$'', this gives
$\delta'(Y) = \delta(X)$ \emph{iff} $\delta(X \cup Y) = \delta(X)$, as required.
\end{proof}

\begin{corollary}[Procedure in Figure~\ref{fig:define_r} is well-defined.]
\label{cor:fig_define}
The procedure defined
in Figure~\ref{fig:define_r} will return the same $r \in \R^V$,
no matter which densest subset is returned in step~(3).
In particular, if $P$
is the (unique) maximal densest subset in the given instance,
then the procedure will assign $r$ values to the vertices 
in $P$ first and each $v \in P$ will receive $r_v := \delta(P)$.
Moreover, after $P$ is removed from the instance,
the maximum density in the remaining instance is strictly less than $\delta(P)$.
\end{corollary}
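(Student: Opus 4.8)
The plan is to prove Corollary~\ref{cor:fig_define} by combining the structural facts about densest subsets (from the preceding Corollary) with the ``remaining instance'' Lemma~\ref{lemma:densest_remain}, and then arguing by induction on $|U|$ that the output $r$ is independent of all choices made.

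First I would establish the two explicit claims in the statement about the maximal densest subset $P$. Suppose the algorithm instead picks some densest subset $X$ in step~(3). Since $P$ is the unique maximal densest subset, $X \subseteq P$, and (by the first part of the preceding Corollary) $X \cup (P \setminus X)$-type unions of densest subsets are densest; more directly, $P$ itself is a valid choice. I would show that whichever densest subset $X$ is picked first, the vertices of $P$ all end up receiving the value $\delta(P)$: the vertices in $X$ get $r_v = \delta(X) = \delta(P)$ immediately; and applying Lemma~\ref{lemma:densest_remain} with this $X$, for $Y := P \setminus X$ we have $\delta(X \cup Y) = \delta(P) = \delta(X)$, so $\delta'(Y) = \delta(X) = \delta(P)$, meaning $P \setminus X$ is a densest subset of the remaining instance with density exactly $\delta(P)$ — and since by the same lemma no subset of the remaining instance has density exceeding $\delta(P)$, $P \setminus X$ is in fact a maximal-density subset there, so it (or a superset of it that is still inside the original $P$, by maximality of $P$) gets assigned $\delta(P)$ next. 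Iterating, every vertex of $P$ receives $\delta(P)$, and once $P$ is fully removed, Lemma~\ref{lemma:densest_remain} (equality case) shows that every remaining subset $Y$ has $\delta'(Y) < \delta(P)$ strictly, because $Y \not\subseteq P$ would give $\delta(P \cup Y) \le \delta(P)$ with equality only if $P \cup Y$ is densest, contradicting maximality of $P$ unless $Y \subseteq P$.

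Next I would set up the induction on $|U|$ to prove the $r$ vector is choice-independent. The base case $|U| = 0$ is trivial. For the inductive step: by the argument above, regardless of which densest subset is chosen in step~(3), the set of vertices that eventually receive the top value $\delta(P)$ is exactly $P$ (the maximal densest subset), and each receives precisely $\delta(P)$; so the portion of $r$ supported on $P$ is determined. Moreover, once $P$ has been peeled off (in whatever order its pieces were removed), the remaining instance $(U', I' \cup S')$ with $U' = U \setminus P$ is the \emph{same} regardless of the order — because removing densest subsets in pieces versus all at once yields the same leftover multi-hyperedge structure (hyperedges with $I_e \subseteq P$ or $S_e \cap P \neq \emptyset$ are exactly those removed, independent of order). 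Then the inductive hypothesis applied to $(U', I' \cup S')$, which has strictly fewer vertices, gives that $r$ restricted to $U'$ is also choice-independent. Combining the two pieces finishes the induction.

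The main obstacle I anticipate is the bookkeeping in the inductive step: I must verify carefully that peeling off a non-maximal densest subset $X$ first, then later removing $P \setminus X$ (possibly itself split across several recursive calls), produces exactly the same residual instance and the same assigned values as peeling off $P$ in one shot. This requires checking that $I_X \cup I_{(P\setminus X)\text{-pieces}}$ reassembles to $I_P$ and likewise for $S$, i.e., that the ``remove all hyperedges touched by $P$'' operation commutes with being performed incrementally — which follows from the definitions of $I_X$ and $S_X$ being monotone in $X$, but needs to be stated cleanly. Once that commutation is in hand, everything else is a direct application of Lemma~\ref{lemma:densest_remain} and the uniqueness of the maximal densest subset.
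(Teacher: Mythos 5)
Your proposal is correct and follows essentially the route the paper intends: the paper states Corollary~\ref{cor:fig_define} as a direct consequence of Lemma~\ref{lemma:densest_remain} together with the uniqueness of the maximal densest subset, which is exactly the argument you flesh out (any chosen densest subset $X$ lies in $P$, the equality case of the lemma shows $P\setminus X$ remains the unique maximal densest subset of the residual instance with the same density, and strict decrease follows once $P$ is exhausted). Your extra care about the commutation of incremental removals and the induction on $|U|$ just makes explicit details the paper leaves implicit, and those checks go through as you describe.
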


\subsection{Densest Subset Procedure Defines Laplacian}

We next show that rules~\textsf{(R1)} to \textsf{(R3)}
imply that in the diffusion process, $\frac{df}{dt}$
must equal to the vector $r \in \R^V$ returned
by the procedure described in Figure~\ref{fig:define_r}.

We denote
$r_S(e) := \max_{u \in S_e} r_u$ and $r_I(e) := \min_{u \in I_e} r_u$.

\begin{lemma}[Defining Laplacian from Diffusion Process]
\label{lemma:define_lap}
Given a measure vector $\vp \in \R^V$ (and the corresponding $f = \Wm \vp$
in the weighted space),
rules~\textsf{(R1)} to \textsf{(R3)}
uniquely determine $r = \frac{df}{dt} \in \R^V$ (and $\rho = \W r$),
which can be found by the procedure
described in Figure~\ref{fig:define_r}.
This defines the operators $\Lo_w f := -r$
and \mbox{$\Lo \vp := - \W r$}.
The normalized Laplacian is also induced
$\Lc := \Wmh \Lo \Wh$.

Moreover,
$\sum_{e \in E} c_e (r_I(e) - r_S(e)) = \sum_{u \in V} \rho_u r_u = \|r\|^2_w$.
\end{lemma}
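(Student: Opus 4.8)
The plan is to establish the lemma in two stages: first, that the procedure in Figure~\ref{fig:define_r} is consistent with the necessary constraints imposed by rules \textsf{(R1)}--\textsf{(R3)}, so it is the unique candidate for $r = \frac{df}{dt}$; and second, to verify the energy identity $\sum_{e \in E} c_e(r_I(e) - r_S(e)) = \sum_{u \in V}\rho_u r_u = \|r\|_w^2$.

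For the first stage, I would argue by induction on the recursive structure of the procedure, using Corollary~\ref{cor:fig_define}. Let $P$ be the maximal densest subset of the current instance on $U$, with common value $\delta(P)$. I first show that \emph{any} $r$ satisfying \textsf{(R1)}--\textsf{(R3)} must assign $r_u = \delta(P)$ for all $u \in P$: on one hand, summing the defining relation $\rho_u = \sum_e \rho_u(e)$ over $u \in P$ and using rule \textsf{(R2)} (which pins the total flow due to each edge $e \in E_P$) together with rule \textsf{(R1)} (which forces flows into and out of $P$ to have the right sign) gives $\sum_{u \in P} \rho_u \ge C(P) = \delta(P)\, w(P)$, with the sign constraints from \textsf{(R1)} ensuring edges in $I_P$ deposit all their measure inside $P$ while edges in $S_P$ remove theirs; on the other hand, if some $u \in P$ had $r_u < \delta(P)$, then some $u' \in P$ has $r_{u'} > \delta(P)$, and rule \textsf{(R3)} applied to an edge straddling the boundary between high- and low-$r$ vertices of $P$ contradicts the level-set structure of densest subsets (Corollary after Lemma~\ref{lemma:convex_comb}) — this is where the maximality of $P$ and the strict drop in density afterward (Corollary~\ref{cor:fig_define}) are crucial. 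Once the $r$-values on $P$ are fixed, rule \textsf{(R1)} forces $\rho_u(e) = 0$ for every vertex outside $P$ relative to edges in $I_P \cup S_P$ (their $S_e$ or $I_e$ endpoints in $P$ are now strictly separated in $r$-value from the rest), so the remaining instance decouples exactly as the procedure prescribes, and the induction proceeds. Conversely, one checks that the $r$ produced by the procedure does satisfy \textsf{(R1)}--\textsf{(R3)} — the harder direction, since one must exhibit a valid flow assignment $\{a^e_{uv}\}$ realizing the prescribed $\rho_u(e)$, which is precisely the feasibility of the bipartite transportation problem noted in the ``Construction of $A_f$'' paragraph, feasible because supplies and demands balance by rule \textsf{(R2)}.

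For the second stage, the middle equality $\sum_{u \in V}\rho_u r_u = \|r\|_w^2$ is immediate from $\rho = \W r$, since $\sum_u \rho_u r_u = \sum_u w_u r_u^2 = \langle r, r\rangle_w$. For the first equality, I would expand $\sum_u \rho_u r_u = \sum_u r_u \sum_{e} \rho_u(e) = \sum_{e \in E} \sum_{u \in e} \rho_u(e) r_u$ and group by hyperedge. For a fixed $e$, by rule \textsf{(R1)} only vertices in $S_e$ contribute negatively and only vertices in $I_e$ contribute positively, so $\sum_{u \in e}\rho_u(e) r_u = \sum_{v \in I_e}\rho_v(e) r_v - \sum_{u \in S_e}(-\rho_u(e)) r_u$. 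Now invoke the key structural fact from the procedure: all vertices of $I_e$ that actually receive flow share the common $r$-value $r_I(e)$, and all vertices of $S_e$ that actually lose flow share the value $r_S(e)$ — this follows because the procedure assigns a single $\delta$-value to each equivalence class of the level-set decomposition, and an edge $e$ is only processed at one stage. Hence $\sum_{u \in e}\rho_u(e) r_u = r_I(e)\sum_{v \in I_e}\rho_v(e) - r_S(e)\sum_{u \in S_e}(-\rho_u(e)) = r_I(e)\, c_e - r_S(e)\, c_e$ by rule \textsf{(R2)}, and summing over $e$ gives $\sum_{e} c_e(r_I(e) - r_S(e))$ as claimed.

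The main obstacle I anticipate is the rigorous justification, in the first stage, that rule \textsf{(R3)} truly forces a single $r$-value across each densest block $P$ — i.e., ruling out internally inhomogeneous solutions. This requires carefully combining the sign constraints of \textsf{(R1)}, the exact flow totals of \textsf{(R2)}, and the monotonicity of \textsf{(R3)} with the fact (from the densest-subset analysis) that any sub- or super-set of $P$ has density $\le \delta(P)$, and deducing that splitting $P$ into a higher-$r$ part and a lower-$r$ part is infeasible because the flow bookkeeping across the split would require the sub-part to be strictly denser. The rest — the energy identity and the flow-feasibility check — is routine bookkeeping once the structure of $r$ is in hand.
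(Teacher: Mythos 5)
Your overall plan mirrors the paper's (an aggregate flow bound against the densest-subset structure for uniqueness, a flow-feasibility step for existence, and rule \textsf{(R3)} for the energy identity), and your second stage is correct: by \textsf{(R3)} every vertex actually receiving (resp.\ losing) measure through $e$ has $r$-value exactly $r_I(e)$ (resp.\ $r_S(e)$), so grouping by edges gives $\sum_{e}c_e(r_I(e)-r_S(e))=\sum_u\rho_u r_u=\|r\|_w^2$; this is a clean per-edge alternative to the paper's grouping by level sets. However, the first stage has two genuine gaps. The step you yourself flag as the ``main obstacle'' --- that \textsf{(R1)}--\textsf{(R3)} force $r_u=\delta(P)$ on all of the maximal densest subset $P$ --- is left unproved, and the mechanism you sketch (applying \textsf{(R3)} to a single edge straddling high- and low-$r$ vertices of $P$) is not where the contradiction comes from. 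The paper's argument instead looks at the set $T$ of vertices attaining the \emph{maximum} $r$-value in the equivalence class: \textsf{(R3)} implies the net rate into $T$ is exactly $c(I_T)-c(S_T)=C(T)$, so the common maximum value equals $\delta(T)\le\delta_M$; combined with your aggregate bound $\sum_{u\in P}w_u r_u\ge w(P)\,\delta_M$ and the maximality of $P$ this forces equality, $T=P$, and $r\equiv\delta_M$ on $P$. Without this (or an equivalent) argument, uniqueness is not established.

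The second gap is the existence direction, which you dismiss as routine: you propose to realize ``the prescribed $\rho_u(e)$'' via the per-edge bipartite flow from the ``Construction of $A_f$'' paragraph, but nothing prescribes the per-edge values $\rho_u(e)$ --- the procedure of Figure~\ref{fig:define_r} only outputs the per-vertex totals $r_u$. One must show there \emph{exist} values $\rho_u(e)$ consistent with \textsf{(R1)}--\textsf{(R3)} whose per-edge totals are $\pm c_e$ (rule \textsf{(R2)}) and whose per-vertex totals are $w_u\delta_M$ for every $u\in P$. Mere balance of total supply and demand is not sufficient for this constrained transportation problem (edge $e$ may only deliver to $I_e$ and draw from $S_e$); a Hall-type feasibility condition is needed, and this is exactly where the densest-subset property enters a second time. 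The paper proves it by taking a configuration of minimum total surplus, considering the set $T'$ of vertices reachable from a surplus vertex in the auxiliary digraph $G_\rho$, and deriving $\delta(T')>\delta_M$, contradicting that $\delta_M$ is the maximum density; the zero-surplus configuration is then computed by a standard flow problem. Some argument of this kind (invoking $\delta(X)\le\delta_M$ for all $X$) is missing from your proposal, so as written the existence half --- and hence the claim that the operators $\Lo_w$, $\Lo$, $\Lc$ are well defined by a diffusion process obeying \textsf{(R1)}--\textsf{(R3)} --- is not justified.
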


\begin{proof}
As in Figure~\ref{fig:define_r},
we consider each equivalence class $U$,
where all vertices in a class have the same $f$ values.

For each such equivalence
class $U \subset V$,
define $I_U := \{e \in E: \exists u \in U, u \in I_e\}$
and $S_U := \{e \in E: \exists u \in U, u \in S_e\}$.
Notice that each $e$ is in exactly one such $I$'s and one
such $S$'s.

As remarked in Section~\ref{sec:disp},
for each $e \in E$, once all $\rho_u(e)$ is defined
for all $u \in S_e \cup I_e$,
it is simple to determine $a^e_{uv}$ for $(u,v) \in S_e \times I_e$ by considering a flow problem on the bipartite graph $S_e \times I_e$.  The ``uniqueness'' part of the proof
will show that $r = \frac{df}{dt}$ must be some unique value,
and the ``existence'' part of the proof shows that
this $r$ can determine the $\rho_u(e)$'s.

\noindent \textbf{Considering Each Equivalence Class $U$.}
We can consider each equivalence class $U$ independently by analyzing $r_u$ and
$\rho_u(e)$ for $u \in U$ and $e \in I_U \cup S_U$ that satisfy rules~\textsf{(R1)} to \textsf{(R3)}.

\noindent \textbf{Proof of Uniqueness.}
 We next show that rules~\textsf{(R1)} to \textsf{(R3)}
 imply that $r$ must take a unique value that
 can be found by the procedure in Figure~\ref{fig:define_r}.

For each $e \in I_U \cup S_U$,
recall that $c_e := w_e \cdot \Delta_e(f)$,
which is the rate of flow due to $e$ into $U$ (if $e \in I_U$) or
out of $U$ (if $e \in S_U$).
For $F \subseteq I_U \cup S_U$, denote $c(F) := \sum_{e \in F} c_e$.

Suppose $T$ is the set of vertices that have
the maximum $r$ values within the equivalence class,
i.e., for all $u \in T$, $r_u = \max_{v \in U} r_v$.
Observe that to satisfy rule~\textsf{(R3)}, for $e \in I_U$,
there is positive rate $c_e$ of measure flow into $T$ due to $e$
\emph{iff} $I_e \subseteq T$; otherwise, the entire rate $c_e$ will flow
into $U \setminus T$.
On the other hand, for $e \in S_U$,
if $S_e \cap T \neq \emptyset$, then there is a rate $c_e$
of flow out of $T$ due to $e$; otherwise,
the rate $c_e$ flows out of $U \setminus T$.

Based on this observation,
we define for $X \subset U$,
$I_X := \{e \in I_U: I_e \subseteq X\}$
and $S_X := \{e \in S_U: S_e \cap X \neq \emptyset\}$.
Note that these definitions are consistent with $I_U$ and $S_U$.
We denote $C(X) := c(I_X) - c(S_X)$.

To detect which vertices in $U$ should have the largest $r$ values,
we define $\delta(X) := \frac{C(X)}{w(X)}$,
which, loosely speaking, is the average weighted (with respect to $\W$) measure rate going into vertices in $X$. 
Observe that if $r$ is feasible, then the definition of $T$
implies that for all $v \in T$, $r_v = \delta(T)$.

Corollary~\ref{cor:fig_define} implies
that the procedure in Figure~\ref{fig:define_r}
will find the unique maximal densest subset $P$
with $\delta_M := \delta(P)$.

We next show that $T = P$.  Observe that for 
all edges $e \in I_P$ have $I_e \subset P$,
and hence,
there must be at least rate of $c(I_P)$ going into $P$;
similarly, there is at most rate of $c(S_P)$ going out of $P$.
Hence, we have $\sum_{u \in P} w_u r_u \geq c(I_P) - c(S_P) =  w(P) \cdot \delta(P)$.
Therefore, there exists $u \in P$ such that
$\delta(P) \leq r_u \leq \delta(T)$, where
the last inequality holds because every vertex $v \in T$ 
is supposed to have the maximum rate
 $r_v = \delta(T)$.
This implies that $\delta(T) = \delta_M$, $T \subseteq P$
and the maximum $r$ value is $\delta_M = \delta(T) = \delta(P)$.  Therefore,
the above inequality becomes 
$w(P) \cdot \delta_M \geq \sum_{u \in P} w_u r_u \geq w(P) \cdot \delta(P)$,
which means equality actually holds.  This implies that
every vertex $u \in P$ has the maximum rate $r_u = \delta_M$,
and so $T = P$.

\noindent \emph{Recursive Argument.} Hence, it follows that
the set $T$ can be uniquely identified
in Figure~\ref{fig:define_r}
as the set of vertices have maximum $r$ values,
which is also the unique maximal densest subset.
Then, the uniqueness argument can be applied
recursively for the smaller instance with
$U' := U \setminus T$, $I_{U'} := I_U \setminus I_T$,
$S_{U'} := S_U \setminus S_T$.

\noindent \textbf{Proof of Existence.}  We show
that once $T$ is identified in Figure~\ref{fig:define_r},
it is possible to
assign for each $v \in T$ and edge $e$ where $v\in I_e \cup S_e$,
the values $\rho_v(e)$ such that $\delta_M = r_v = \sum_{e} \rho_v(e)$.

Consider an arbitrary configuration $\rho$
in which edge $e \in I_T$ supplies a rate of $c_e$ to vertices in $T$,
and each edge $e \in S_T$ demands a rate of $c_e$ from vertices in $T$.
Each vertex $v \in T$ is supposed to gather a net rate of $w_v \cdot \delta_M$,
where any deviation is known as the \emph{surplus} or \emph{deficit}.

Given configuration $\rho$, define a directed graph $G_\rho$ with vertices in $T$
such that there is an arc $(u,v)$ if non-zero measure rate can be transferred from $u$ to $v$.
This can happen in one of two ways:
(i) there exists $e \in I_T$ containing both $u$ and $v$ such that
$\rho_u(e) > 0$, or
(ii) there exists $e \in S_T$ containing both $u$ and $v$ such that
$\rho_v(e) < 0$.

Hence, if there is a directed path from a vertex $u$ with non-zero surplus
to a vertex $v$ with non-zero deficit, then the surplus at vertex $u$ (and the
deficit at vertex $v$) can be decreased.

We argue that a configuration $\rho$ with minimum surplus must have zero surplus.
(Observe that the minimum can be achieved because $\rho$ comes from a compact set.)
Otherwise, suppose there is at least one vertex with positive surplus,
and let $T'$ be all the vertices that are reachable from some vertex with positive surplus in the
directed graph $G_\rho$. Hence, it follows that
for all $e \notin I_{T'}$, for all $v \in T'$, $\rho_v(e) = 0$,
and for all $e \in S_{T'}$, for all $u \notin T'$, $\rho_u(e)=0$.
This means that the rate going into $T'$ is $c(I_{T'})$ and all comes from $I_{T'}$,
and the rate going out of $T'$ is $c(S_{T'})$.  Since no vertex
in $T'$ has a deficit and at least one has positive surplus,
it follows that $\delta(T') > \delta_M$, which is a contradiction.

After we have shown that a configuration $\rho$ with zero surplus exists,
it can be found by a standard flow problem, in which each $e \in I_{T}$
has supply $c_e$, each $v \in {T}$ has demand $w_v \cdot \delta_M$,
and each $e \in S_{T}$ has demand $c_e$.  Moreover, in the flow network,
there is a directed edge $(e,v)$ if $v \in I_e$ and $(v,e)$ if $v \in S_e$.
Suppose in a feasible solution, there is a flow
with magnitude $\theta$ along a directed edge.
If the flow is in the direction $(e,v)$, then
$\rho_v(e) = \theta$; otherwise, if it is in the direction $(v,e)$,
then $\rho_v(e) = - \theta$.

\noindent \emph{Recursive Application.}
The feasibility argument
can be applied recursively to the smaller instance
defined on $(U', I_{U'}, S_{U'})$
with the corresponding density function $\delta'$.
Indeed, Corollary~\ref{cor:fig_define}
implies that
that $\delta_M' := \max_{\emptyset \neq Q \subset U'} \delta'(Q) < \delta_M$.

\vspace{10pt}

\noindent \textbf{Claim.} $\sum_{e \in E} c_e (r_I(e) - r_S(e)) = \sum_{u \in V} \rho_u r_u$.

Consider $T$ defined above with $\delta_M = \delta(T) = r_u$ for $u \in T$.

Observe that $\sum_{u \in T} \rho_u r_u = (c(I_T) - c(S_T)) \cdot \delta_M
= \sum_{e \in I_T} c_e \cdot r_I(e) - \sum_{e \in S_T} c_e \cdot r_S(e)$,
where the last equality is due to rule~\textsf{(R3)}.

Observe that every $u \in V$ will be in exactly one such $T$, and
every $e \in E$ will be accounted for exactly once in each of $I_T$ and $S_T$, ranging over all $T$'s.  Hence, summing over all $T$'s gives the result.
\end{proof}

\noindent \textbf{Comment on the Robustness of Diffusion Process.} 
Recall that in Section~\ref{sec:lap_overview},
we mention that if the weight distribution
is not carefully designed in Figure~\ref{fig:hyper_diffusion},
then the diffusion process cannot actually continue.
The following lemma implies that our diffusion process
resulting from the procedure in Figure~\ref{fig:define_r}
will be robust.

\begin{lemma}
In the diffusion process resulting from Figure~\ref{fig:define_r}
with the differential equation $\frac{df}{dt} = - \Lo_w f$,
at any time $t_0$, there exists some $\eps > 0$
such that $\frac{df}{dt}$
is continuous in $(t_0, t_0 + \eps)$.
\end{lemma}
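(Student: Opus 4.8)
The plan is to show that just to the right of $t_0$ the trajectory $f(\cdot)$ stays inside a single ``combinatorial regime'' --- a fixed partition of $V$ into $f$-equivalence classes together with fixed sets $I_e(f),S_e(f)$ for all $e\in E$ --- and that inside such a regime the map $f\mapsto\frac{df}{dt}=-\Lo_w f$ is a fixed \emph{linear} operator $L$ obtained by freezing the combinatorial data in Figure~\ref{fig:define_r}. Since $t\mapsto f(t)$ is continuous, $\frac{df}{dt}=Lf(t)$ is then continuous on $(t_0,t_0+\eps)$.

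First I would note that $f(\cdot)$ is continuous (indeed Lipschitz) near $t_0$: along the flow $\frac{d}{dt}\norm{f}_w^2=-2\langle f,\Lo_w f\rangle_w=-2\,\D_w(f)\,\norm{f}_w^2\le 0$ by Lemma~\ref{lemma:ray_disc}, so $\norm{f}_w$ does not increase, and the rates $c_e=w_e\Delta_e(f)$ --- hence $r=-\Lo_w f$ --- are bounded in terms of $\norm{f}$ and the fixed edge weights. Then, run the procedure of Figure~\ref{fig:define_r} at time $t_0$: it refines each $f(t_0)$-equivalence class $U$ into layers $P^U_1,P^U_2,\dots$ on which the right-derivative $r(t_0)$ takes the \emph{strictly decreasing} values $\delta(P^U_1)>\delta(P^U_2)>\cdots$ (Corollary~\ref{cor:fig_define}, Lemma~\ref{lemma:define_lap}). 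Let $\mathcal{P}$ be this refined partition. For any two vertices $u,v$ in distinct classes of $\mathcal{P}$, the continuous function $t\mapsto f_u(t)-f_v(t)$ is nonzero just to the right of $t_0$: it is already nonzero at $t_0$ if $u,v$ came from different $f(t_0)$-classes, and it vanishes at $t_0$ with nonzero right-derivative $\delta(P^U_i)-\delta(P^U_j)$ if $u\in P^U_i$, $v\in P^U_j$ with $i\ne j$. Choosing $\eps$ small enough for the finitely many such pairs --- and for the finitely many strict inequalities among these values that determine, for each edge $e$, which layer carries the smallest and which the largest $f$-value among the vertices of $e$ --- we conclude that on $(t_0,t_0+\eps)$ the $f$-equivalence classes refine $\mathcal{P}$ and every $I_e(f(t))$, $S_e(f(t))$ equals a fixed ``contracted'' set $I^\star_e$, $S^\star_e$.

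With $I^\star_e,S^\star_e$ frozen, every $c_e=w_e\Delta_e(f)$, and hence every density $\delta(X)$ used by Figure~\ref{fig:define_r}, is an affine function of $f$; on a subinterval where the equivalence classes actually equal $\mathcal{P}$, the procedure computes $r$ as a fixed linear map $r=Lf$, the flow obeys $\frac{df}{dt}=Lf$, and since the subspace ``constant on each class of $\mathcal{P}$'' is $L$-invariant and contains $f(t_0)$, vertices inside a layer stay tied. To finish, one must rule out that the layer partition keeps refining on every right-neighborhood of $t_0$. I would argue that on each maximal subinterval as above $f$ --- and therefore every affine-in-$f$ density $\delta(X)$ viewed as a function of $t$ --- is real-analytic (it solves the linear ODE $\frac{df}{dt}=Lf$), so the ``transition times'' at which the layer partition changes are isolated; moreover $\frac{df}{dt}$ does not jump across a transition: if a class $C$ splits as $C=X\sqcup(C\setminus X)$ with $X$ the new maximal densest subset of $C$'s (contracted) instance, then continuity gives $\delta(X)(t_1)=\delta(C)(t_1)=\max_{Y\subseteq C}\delta(Y)(t_1)$, and Lemma~\ref{lemma:densest_remain} then forces $\delta'(C\setminus X)(t_1)=\delta(X)(t_1)$, so all layer-values meet their one-sided limits at $t_1$ (merges are handled symmetrically). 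After shrinking $\eps$ so that $(t_0,t_0+\eps)$ contains no transition time, the regime is literally constant there and $\frac{df}{dt}=Lf(t)$ is continuous.

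The crux is exactly this last point: establishing that the layered structure produced at $t_0$ is stable under the flow for a short time rather than being refined further instantaneously. The subtlety is that densest subsets may be nested, so a layer $C$ output at $t_0$ need not be the \emph{unique} densest subset of its own instance at later times, and one therefore cannot simply invoke continuity of an \argmax. The analyticity-between-transitions observation together with the no-jump property from Lemma~\ref{lemma:densest_remain} is the route I would use to close this; it is worth checking whether the strict-decrease statement of Corollary~\ref{cor:fig_define}, applied already at $t_0$, rules out an immediate refinement outright, which would reduce the whole lemma to a one-line continuity argument.
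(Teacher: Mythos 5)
Your proposal is essentially the paper's own argument: within a fixed class structure the process evolves smoothly (the paper treats each class as a super vertex, you freeze a linear operator $L$), and the crux you identify is resolved exactly as the paper resolves it, namely any subset $X$ of a layer $T$ that could split off on its own must satisfy $\delta(X)=\delta(T)=\delta_M$, whereupon Lemma~\ref{lemma:densest_remain} (via Corollary~\ref{cor:fig_define}) gives $\delta'(T\setminus X)=\delta_M$, so the rates cannot jump and the ``work'' of Figure~\ref{fig:define_r} is preserved. As for your closing question: the strict-decrease statement does not rule out an immediate refinement outright; the paper, like you, permits such refinements and argues they are harmless via the equal-density/no-jump property, and its proof is no more rigorous than your sketch on the issue of transition times possibly accumulating.
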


\begin{proof}
Observe that as long as the equivalence classes
induced by $f$ do not change, then each of them act as a super vertex, and hence the diffusion process goes smoothly.

At the very instant that equivalence classes merge into some $U$,
Figure~\ref{fig:define_r} is actually used
to determine whether the vertices will stay together 
in the next moment.

An equivalence class can be split in two ways.
The first case is that the equivalence class $U$ is peeled off layer by layer
in the recursive manner described above,
because they receive different $r$ values.
In particular, the (unique) maximal densest subset $T$
is such a layer.  

The second case is more subtle, 
because it is possible that vertices within $T$ could
be split in the next moment.
For instance, there could be
a proper subset $X \subsetneq T$ whose $r$ values might be
marginally larger than the rest after infinitesimal time.  

The potential issue is that
if the vertices in $X$ go on their own, then
the vertices $X$ and also the vertices
in $T \setminus X$ might experience a sudden jump in their rate $r$, thereby nullifying the ``work'' performed in Figure~\ref{fig:define_r}

Fortunately, this cannot happen, because
if the set $X$ could go on its own, it must be the case
that $\delta_M = \delta(T) = \delta(X)$.
Corollary~\ref{cor:fig_define} states
that in this case, after $X$ is separated on its own,
then in the remaining instance,
we must still have $\delta'(T \setminus X) = \delta_M$.
Hence, the behavior of the remaining vertices
is still consistent with the $r$ value
produced in Figure~\ref{fig:define_r},
and the
$r$ value cannot suddenly jump.

Hence, we can conclude that if equivalence classes merge or split at time $t_0$,
there exists some $\eps>0$ such that $\frac{df}{dt}$
is continuous in $(t_0, t_0 + \eps)$, until the next time
equivalence classes merge or split.
\end{proof}

\subsection{Spectral Properties of Laplacian}
\label{sec:eigen}

We next consider the spectral properties
of the normalized Laplacian $\Lc$ induced
by the diffusion process defined in Section~\ref{sec:disp}.

\begin{lemma}[First-Order Derivatives]
\label{lemma:deriv}
Consider the diffusion process satisfying rules~\textsf{(R1)}
to~\textsf{(R3)} on 
the measure space with $\vp \in \R^V$, which
corresponds to $f = \Wm \vp$ in the weighted space.
Suppose 
$\Lo_w$ is the induced operator on the weighted space such that
$\frac{d f}{d t} = - \Lo_w f$.
Then, we have the following derivatives.

\begin{compactitem}
\item[1.] $\frac{d \|f\|^2_w}{dt} = - 2 \langle f, \Lo_w f \rangle_w$.
\item[2.] $\frac{d \langle f, \Lo_w f \rangle_w}{dt} 
= - 2 \|\Lo_w f \|^2_w$.
\item[3.] Suppose $\ray_w(f)$ is the Rayleigh quotient
with respect to the operator $\Lo_w$ on the weighted space.
Then, for $f \neq 0$, $\frac{d \ray_w(f)}{dt} = -\frac{2}{\|f\|^4_w} \cdot
(\|f\|^2_w \cdot \|\Lo_w f\|^2_w - \langle f , \Lo_w f \rangle^2_w) \leq 0$,
by the Cauchy-Schwarz inequality
on the $\langle \cdot , \cdot \rangle_w$ inner product, where equality
holds \emph{iff} $\Lo_w f \in \spn(f)$.
\end{compactitem}

\end{lemma}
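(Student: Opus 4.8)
The plan is to prove part~2 first and then deduce parts~1 and~3 from it by elementary calculus. Part~1 is immediate: since $\|f\|^2_w = f^\T \W f$ and $\W$ does not depend on $t$, the (right-hand) derivative is $\frac{d}{dt}\|f\|^2_w = 2\langle f, \frac{df}{dt}\rangle_w = 2\langle f, -\Lo_w f\rangle_w = -2\langle f, \Lo_w f\rangle_w$. This manipulation is valid on any interval on which $\frac{df}{dt}$ exists, and the preceding lemma (robustness of the diffusion process) guarantees that at every time $t_0$ there is an $\eps>0$ such that $\frac{df}{dt}$ is continuous on $(t_0,t_0+\eps)$ --- in fact the equivalence classes, and hence all the combinatorial data $I_e$, $S_e$, $A_f$ that define $\Lo_w$, are frozen on that interval. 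All identities below are to be read on such a forward interval.

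For part~2 I would first invoke Lemma~\ref{lemma:ray_disc} (more precisely, the identity $\langle f,\Lo_w f\rangle_w = \sum_{e\in E} w_e \max_{u,v\in e}(f_u-f_v)^2$ established in its proof) to write $\langle f,\Lo_w f\rangle_w = \sum_{e\in E} w_e\,\Delta_e(f)^2$, where $\Delta_e(f)=\max_{u\in e}f_u-\min_{u\in e}f_u\ge 0$. Differentiating this finite sum term by term gives $\frac{d}{dt}\langle f,\Lo_w f\rangle_w = \sum_{e\in E} 2w_e\,\Delta_e(f)\,\frac{d}{dt}\Delta_e(f)$, so the crux is the derivative of $\Delta_e(f)$. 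On the frozen interval $(t_0,t_0+\eps)$ all vertices of $S_e$ share the common maximum value within $e$ and lie in one equivalence class, hence move with a single common rate equal to $r_S(e):=\max_{u\in S_e}r_u$ with $r:=\frac{df}{dt}=-\Lo_w f$; similarly all vertices of $I_e$ move with rate $r_I(e):=\min_{u\in I_e}r_u$. Therefore $\frac{d}{dt}\Delta_e(f)=r_S(e)-r_I(e)=-(r_I(e)-r_S(e))$, and writing $c_e:=w_e\,\Delta_e(f)$ yields $\frac{d}{dt}\langle f,\Lo_w f\rangle_w=-2\sum_{e\in E}c_e\,(r_I(e)-r_S(e))$. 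Now the last displayed identity of Lemma~\ref{lemma:define_lap} states exactly that $\sum_{e\in E}c_e\,(r_I(e)-r_S(e))=\sum_{u\in V}\rho_u r_u=\|r\|^2_w$, and $\|r\|^2_w=\|\Lo_w f\|^2_w$ since $r=-\Lo_w f$; this gives part~2, $\frac{d}{dt}\langle f,\Lo_w f\rangle_w=-2\|\Lo_w f\|^2_w$.

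For part~3, with $f\ne 0$ I would apply the quotient rule to $\ray_w(f)=\langle f,\Lo_w f\rangle_w/\|f\|^2_w$ and substitute parts~1 and~2, obtaining $\frac{d}{dt}\ray_w(f)=\frac{(-2\|\Lo_w f\|^2_w)\|f\|^2_w-\langle f,\Lo_w f\rangle_w\,(-2\langle f,\Lo_w f\rangle_w)}{\|f\|^4_w}=-\frac{2}{\|f\|^4_w}\big(\|f\|^2_w\|\Lo_w f\|^2_w-\langle f,\Lo_w f\rangle^2_w\big)$. The parenthesised quantity is non-negative by the Cauchy--Schwarz inequality for the inner product $\langle\cdot,\cdot\rangle_w$, so $\frac{d}{dt}\ray_w(f)\le 0$, with equality precisely when $f$ and $\Lo_w f$ are linearly dependent, i.e.\ $\Lo_w f\in\spn(f)$ (the degenerate case $\Lo_w f=0$ being included).

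The main obstacle, and essentially the only non-routine point, is making rigorous the differentiation of $\Delta_e(f)$: a priori $\Delta_e(f)$ is only piecewise smooth in $t$, since $I_e$, $S_e$ and the matrix $A_f$ jump whenever equivalence classes merge or split. The resolution is to work exclusively with right-hand derivatives and to quote the robustness lemma proved immediately above, which supplies at every time a forward interval on which the combinatorial structure is constant and $\frac{df}{dt}$ is continuous; on such an interval every vertex of $S_e$ (resp.\ $I_e$) genuinely moves at the common rate $r_S(e)$ (resp.\ $r_I(e)$), and all the computations above are classical. I would emphasize that parts~1--3 are therefore local-in-$t$ statements, which is all that the subsequent sections require.
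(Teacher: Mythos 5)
Your overall route is exactly the paper's: part~1 by direct differentiation, part~2 by differentiating $\langle f,\Lo_w f\rangle_w=\sum_{e\in E} w_e\,\Delta_e(f)^2$ and then invoking the identity $\sum_{e\in E}c_e\,(r_I(e)-r_S(e))=\sum_{u\in V}\rho_u r_u=\|r\|^2_w$ from Lemma~\ref{lemma:define_lap}, and part~3 by the quotient rule plus Cauchy--Schwarz in $\langle\cdot,\cdot\rangle_w$. However, the one step you yourself single out as non-routine is justified by an incorrect claim. Membership of all of $S_e$ in a single equivalence class at time $t_0$ does \emph{not} give them a common rate: the procedure of Figure~\ref{fig:define_r} peels an equivalence class into layers receiving \emph{different} $r$-values, and rule~\textsf{(R3)} only forces the vertices that actually lose measure due to $e$ to have the maximal rate within $S_e$. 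The discussion around Example~\ref{eg:Louis} (vertex $b$ loses extra measure through $e_2$ and leaves $S_{e_5}$ immediately) is precisely a situation where two vertices of $S_e$ move at different rates; for the same reason the data $S_e$, $I_e$, $A_f$ is not ``frozen'' on a forward interval --- these sets can shrink instantly after $t_0$, which is also more than the robustness lemma asserts (it gives continuity of $\frac{df}{dt}$ on $(t_0,t_0+\eps)$, not constancy of the equivalence classes).

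What actually saves the computation is not a common rate but the one-sided envelope (Danskin-type) fact that the paper invokes: the right-derivative of $t\mapsto\max_{u\in e}f_u(t)$ equals $\max_{u\in S_e}r_u=r_S(e)$, since vertices outside $S_e$ are strictly below the maximum and irrelevant for small $\Delta t>0$, while among the argmax set the largest rate wins after infinitesimal time; dually, the right-derivative of $\min_{u\in e}f_u$ is $\min_{u\in I_e}r_u=r_I(e)$. Note that the asymmetry in your own formulas (max over $S_e$ but min over $I_e$) only makes sense under this reading, not under a ``single common rate'' reading. With that correction, your conclusion $\frac{d}{dt}\Delta_e(f)=r_S(e)-r_I(e)$ and the rest of the argument (the identity from Lemma~\ref{lemma:define_lap}, the quotient rule, and the Cauchy--Schwarz equality condition $\Lo_w f\in\spn(f)$) coincide with the paper's proof.
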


\begin{proof}
For the first statement,
$\frac{d \|f\|_w^2}{d t} = 2 \langle f, \frac{d f}{d t} \rangle_w
= - 2 \langle f, \Lo_w f \rangle_w$.

For the second statement,
recall from Lemma~\ref{lemma:ray_disc}
that $\langle f, \Lo_w f \rangle_w = \sum_{e \in E} w_e \max_{u,v \in e} (f_u - f_v)^2$.
Moreover, recall also that
$c_e = w_e \cdot  \max_{u,v \in e} (f_u - f_v)$.
Recall that $r = \frac{df}{dt}$,
$r_S(e) = \max_{u \in S_e} r_u$
and $r_I(e) = \min_{u \in I_e} r_u$.

Hence, by the Envelope Theorem, $\frac{d \langle f, \Lo_w f \rangle_w}{dt} =
2 \sum_{e \in E} c_e \cdot (r_S(e) - r_I(e))$.
From Lemma~\ref{lemma:define_lap},
this equals $- 2 \|r\|^2_w = - 2 \|\Lo_w f\|^2_w$.

Finally, for the third statement, we have

$\frac{d}{dt} \frac{\langle f, \Lo_w f \rangle_w}{\langle f, f \rangle_w} = \frac{1}{\|f\|^4_w} (\| f \|^2_w \cdot \frac{d \langle f, \Lo_w f \rangle_w}{ dt} -  \langle f, \Lo_w f \rangle_w \cdot \frac{d \|f\|^2_w}{dt})
= -\frac{2}{\|f\|^4_w} \cdot
(\|f\|^2_w \cdot \|\Lo_w f\|^2_w - \langle f , \Lo_w f \rangle^2_w)$,
where the last equality follows from the first two statements.
\end{proof}

We next prove some properties
of the normalized Laplacian $\Lc$ with respect to orthogonal
projection in the normalized space.

\begin{lemma}[Laplacian and Orthogonal Projection]
\label{lemma:lap_proj}
Suppose $\Lc$ is the normalized Laplacian 
defined in Lemma~\ref{lemma:define_lap}.
Moreover, denote $x_1 := \Wh \vec{1}$, and
let $\Pi$ denote the orthogonal projection
into the subspace that is orthogonal to $x_1$.
Then, for all $x$, we have the following:
\begin{compactitem}
\item[1.] $\Lc(x) \perp x_1$,
\item[2.] $\langle x, \Lc x \rangle = \langle \Pi x, \Lc \Pi x \rangle$.
\item[3.] For all real numbers $\alpha$ and $\beta$,
$\Lc(\alpha x_1 + \beta x) = \beta \Lc(x)$.
\end{compactitem}
\end{lemma}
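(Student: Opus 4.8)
The plan is to establish the three parts in the order (1), (3), (2): once (1) and (3) are in hand, part~(2) follows quickly, although it also admits a short self-contained proof via Lemma~\ref{lemma:ray_disc}. Throughout, for $x$ in the normalized space I write $\vp := \Wh x$ in the measure space and $f := \Wm \vp = \Wmh x$ in the weighted space, and I use the description from Lemma~\ref{lemma:define_lap}: $\Lo_w f = -r$ and $\Lc x = \Wmh \Lo \vp = -\Wh r$, where $r = \frac{df}{dt} \in \R^V$ (with $\rho = \W r$) is the \emph{unique} vector satisfying rules \textsf{(R1)}--\textsf{(R3)}, together with a compatible flow configuration $\{\rho_u(e)\}$, and $\Lc = \Wh \Lo_w \Wmh$.

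For part~(1) I would invoke conservation of total measure in the (closed) diffusion process. Rule \textsf{(R2)} says that for each hyperedge $e$ we have $\sum_{u \in I_e(f)} \rho_u(e) = -\sum_{u \in S_e(f)} \rho_u(e) = c_e$ and $\rho_u(e)=0$ for all other $u$, so $\sum_{u \in V} \rho_u(e) = 0$; summing over $e \in E$ gives $\sum_{u \in V} w_u r_u = \sum_{u \in V} \rho_u = 0$. Hence $\langle \Lc x, x_1\rangle = \langle -\Wh r,\, \Wh \vec{1}\rangle = -r^{\T} \W \vec{1} = -\sum_{u \in V} w_u r_u = 0$.

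For part~(3) the idea is that the transformation $f \mapsto \alpha \vec{1} + \beta f$ changes the data driving the procedure in Figure~\ref{fig:define_r} only in a controlled way, after which one appeals to the uniqueness in Lemma~\ref{lemma:define_lap}. Since $\alpha x_1 + \beta x = \Wh(\alpha \vec{1} + \beta f)$ and $\Lc = \Wh \Lo_w \Wmh$, it suffices to show $\Lo_w(\alpha \vec{1} + \beta f) = \beta \Lo_w(f)$, equivalently $\frac{d}{dt}(\alpha \vec{1} + \beta f) = \beta r$. The case $\beta = 0$ is immediate: every $\Delta_e$ vanishes, so $r = 0$. For $\beta \ne 0$, the equivalence classes of $\alpha \vec{1} + \beta f$ coincide with those of $f$; if $\beta > 0$ then $S_e, I_e$ are unchanged and $\Delta_e$ scales by $\beta$, while if $\beta < 0$ the roles of $S_e$ and $I_e$ are interchanged and $\Delta_e$ scales by $|\beta|$; in both cases each $c_e$ scales by $|\beta|$. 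Taking the flow configuration $\{\rho_u(e)\}$ that witnesses \textsf{(R1)}--\textsf{(R3)} for $f$ and setting $\rho'_u(e) := \beta \rho_u(e)$ (so $r' := \beta r$), one checks that $(\beta r, \beta \rho)$ witnesses \textsf{(R1)}--\textsf{(R3)} for $\alpha \vec{1} + \beta f$: rule \textsf{(R1)} holds because multiplying by $\beta$ preserves signs exactly when $S_e, I_e$ are preserved ($\beta>0$) and flips them exactly when $S_e, I_e$ are swapped ($\beta<0$); rule \textsf{(R2)} holds because the flow magnitudes scale by $|\beta|$, matching the scaling of $c_e$; and rule \textsf{(R3)} holds because scaling the $r$-values by $\beta$ preserves their order when $\beta>0$ and reverses it when $\beta<0$, again in step with the preservation/swap of $S_e, I_e$. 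By uniqueness, $\frac{d}{dt}(\alpha \vec{1} + \beta f) = \beta r$, hence $\Lc(\alpha x_1 + \beta x) = \Wh \Lo_w(\alpha \vec{1} + \beta f) = \beta \Wh \Lo_w f = \beta \Lc x$. I expect this to be the step requiring the most care, precisely because of the sign bookkeeping in the $\beta<0$ case where $S_e$ and $I_e$ swap.

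For part~(2), write $x = x^* + \Pi x$ where $x^* := \frac{\langle x, x_1\rangle}{\langle x_1, x_1\rangle}\, x_1$ is a scalar multiple of $x_1$. Applying part~(3) to $\Pi x$ with $\beta = 1$ gives $\Lc x = \Lc(x^* + \Pi x) = \Lc(\Pi x)$, and part~(1) gives $\Lc(\Pi x) \perp x_1$; therefore $\langle x, \Lc x\rangle = \langle x^* + \Pi x,\, \Lc(\Pi x)\rangle = \langle \Pi x, \Lc(\Pi x)\rangle$. (Alternatively, $\Pi x$ corresponds to $f - c\vec{1}$ for a scalar $c$, so Lemma~\ref{lemma:ray_disc} gives directly $\langle \Pi x, \Lc \Pi x\rangle = \sum_{e \in E} w_e \max_{u,v \in e}((f_u - c) - (f_v - c))^2 = \sum_{e \in E} w_e \max_{u,v \in e}(f_u - f_v)^2 = \langle x, \Lc x\rangle$.)
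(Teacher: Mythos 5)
Your proposal is correct and takes essentially the same route as the paper: conservation of total measure for part (1), scale-invariance of the diffusion data combined with the uniqueness of $r$ under rules \textsf{(R1)}--\textsf{(R3)} for part (3), and invariance of $\langle x, \Lc x\rangle$ under shifts along $x_1$ (your parenthetical alternative is exactly the paper's argument via Lemma~\ref{lemma:ray_disc}) for part (2). If anything, your sign bookkeeping for $\beta<0$ in part (3) is more careful than the paper's one-line claim that $\Delta_e(\alpha \vec{1} + \beta f) = \beta \Delta_e(f)$, which as written holds only for $\beta \geq 0$.
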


\begin{proof}
For the first statement, observe that since the diffusion process
is defined on a closed system, the total measure given by $\sum_{u \in V} \vp_u$ does not change.
Therefore, $0 = \langle \vec{1}, \frac{d \vp}{d t} \rangle = \langle \Wh \vec{1}, \frac{d x}{d t} \rangle$,
which implies that $\Lc x = - \frac{d x}{d t} \perp x_1$.

For the second statement,
observe that from Lemma~\ref{lemma:ray_disc},
we have:

$\langle x, \Lc x \rangle = \sum_{e \in E} w_e
\max_{u,v \in e} (\frac{x_u}{\sqrt{w_u}} - \frac{x_v}{\sqrt{x_v}})^2 
= \langle (x + \alpha x_1), \Lc (x + \alpha x_1) \rangle $,
where the last equality holds for all real numbers $\alpha$.
It suffices to observe that $\Pi x = x + \alpha x_1$, for some suitable real $\alpha$.

For the third statement, it is more convenient
to consider transformation into the weighted space $f = \Wmh x$.
It suffices to show that $\Lo_w (\alpha \vec{1} + \beta f) = \beta \Lo_w(f)$.
This follows immediately because
in the definition of the diffusion process,
it can be easily checked that $\Delta_e(\alpha \vec{1} + \beta f) = \beta \Delta_e(f)$.
\end{proof}

\begin{proofof}{Theorem~\ref{th:hyper_lap}}
Suppose $\Lc$ is the normalized Laplacian
induced by the diffusion process in Lemma~\ref{lemma:define_lap}.
Let $\gamma_2 := \min_{\vec{0} \neq x \perp \Wh \vec{1}} \rayc(x)$
be attained by some minimizer $x_2$.
We use the isomorphism between the three spaces:
$\Wmh \vp = x = \Wh f$.

The third statement of Lemma~\ref{lemma:deriv}
can be formulated in terms of the normalized space,
which states that $\frac{d \rayc(x)}{d t} \leq 0$,
where equality holds \emph{iff} $\Lc x \in \spn(x)$.

We claim that $\frac{d \rayc(x_2)}{d t} = 0$.
Otherwise, suppose $\frac{d \rayc(x_2)}{d t} < 0$.
From Lemma~\ref{lemma:lap_proj},
we have $\frac{dx}{dt} = - \Lc x \perp \Wh \vec{1}$.
Hence, it follows that at this moment, the current normalized
vector is at position $x_2$, and is moving
towards the direction given by
$x' := \frac{d x}{dt}|_{x=x_2}$ such that
$x' \perp \Wh \vec{1}$, and $\frac{d \rayc(x)}{d t}|_{x=x_2} < 0$.
Therefore, for sufficiently small $\eps > 0$,
it follows that $x_2' := x_2 + \eps x'$ is a non-zero vector
that is perpendicular to $\Wh \vec{1}$
and $\rayc(x_2') < \rayc(x_2) = \gamma_2$, contradicting the definition of $x_2$.

Hence, it follows that $\frac{d \rayc(x_2)}{d t} = 0$,
which implies that $\Lc x_2 \in \spn(x_2)$.
Since $\gamma_2 = \rayc(x_2) = \frac{\langle x_2, \Lc x_2 \rangle}{\langle x_2,  x_2 \rangle}$,
it follows that $\Lc x_2 = \gamma_2 x_2$, as required.
\end{proofof}
\section{Diffusion Processes}
\label{sec:diffusion}

In Section~\ref{sec:laplacian},
we define a diffusion process in a closed system with respect
to a hypergraph according to the
equation $\frac{d \vp}{d t} = - \Lo \vp$,
where $\vp \in \R^V$ is the measure vector,
and $\Lo$ is the corresponding operator on the measure space.
In this section, we consider related diffusion processes.
In the stochastic diffusion process,
on the top of the diffusion process,
each vertex is subject to independent Brownian noise.
We also consider a discretized diffusion operator,
which we use to analyze the hop-diameter of a hypergraph.

\subsection{Stochastic Diffusion Process}
\label{sec:stochastic}

We analyze the process using \Ito calculus,
and the reader can refer to the textbook by
{\O}ksendal~\cite{oksendalSDE} for relevant background.

\noindent \textbf{Randomness Model.}
We consider the standard multi-dimensional Wiener process
$\{B_t \in \R^V: t \geq 0\}$ with independent Brownian
motion on each coordinate.
Suppose the variance of the Brownian motion 
experienced by each vertex is proportional to its weight.
To be precise, there exists $\eta \geq 0$
such that for each vertex $u \in V$,
the Brownian noise introduced to $u$ till time $t$ is $\sqrt{\eta w_u} \cdot B_t(u)$,
whose variance is $\eta w_u t$.
It follows that the net amount of measure
added to the system till time $t$
is $\sum_{u \in V} \sqrt{\eta w_u} \cdot B_t(u)$,
which has normal distribution  $N(0, \eta t \cdot w(V))$.
Observe that the special case for $\eta = 0$ is just the
diffusion process in a closed system.

This random model induces an \Ito process on the measure space
given by the following stochastic differential equation:

\[d \Phi_t = - \Lo \Phi_t \, dt + \sqrt{\eta} \cdot \Wh \, d B_t, \]
with some initial
measure $\Phi_0$

By the transformation into the normalized space
$x := \Wmh \vp$, we consider the corresponding
stochastic differential equation in the normalized space:

\[d X_t = - \Lc X_t \, dt + \sqrt{\eta} \, d B_t,\]
where $\Lc$ is the normalized Laplacian from Lemma~\ref{lemma:define_lap}.  Observe that the random noise
in the normalized space is spherically symmetric.

\noindent \textbf{Convergence Metric.}
Given a measure vector $\vp \in \R^V$,
denote $\vp^* := \frac{\langle \vec{1}, \vp \rangle}{w(V)} \cdot \W \vec{1}$, which is the measure vector obtained by distributing
the total measure $\sum_{u \in V} \vp_u = \langle \vec{1}, \vp \rangle$ among the vertices such that each vertex $u$ receives an amount
proportional to its weight $w_u$.

For the normalized vector $x = \Wmh \vp$,
observe that $x^* := \Wmh \vp^* = \frac{\langle \vec{1}, \vp \rangle}{w(V)} \cdot \Wh \vec{1}$
is the projection of $x$ into the 
subspace spanned by $x_1 := \Wh \vec{1}$.
We denote by $\Pi$ the orthogonal projection
operator into the subspace orthogonal to $x_1$.

Hence, to analyze how far the measure is from being stationary,
we consider the vector $\Phi_t - \Phi^*_t$,
whose $\ell_1$-norm is
$\|\Phi_t - \Phi^*_t\|_1 \leq \sqrt{w(V)} \cdot \| \Pi X_t \|_2$.
As random noise is constantly delivered to the system,
we cannot hope to argue that these random quantities approach zero as $t$ tends to infinity.  However,
we can show that these random variables are stochastically
dominated by distributions with bounded mean and variance
as $t$ tends to infinity.  The following lemma states that a larger value 
of $\gamma_2$ implies that the measure is closer to being stationary.

\begin{lemma}[Stochastic Dominance]
\label{lemma:stoch_dom}
Suppose $\gamma_2 = \min_{0 \neq x \perp x_1} \rayc(x)$.
Then, in the stochastic diffusion process described above,
for each $t \ge 0$,
the random variable $\|\Pi X_t \|_2$
is stochastically dominated by 
$\|\widehat{X}_t \|_2$,
where $\widehat{X}_t$
has distribution 
$e^{-\gamma_2 t} \Pi {X}_0 + \sqrt{\frac{\eta}{2 \gamma_2} \cdot(1 - e^{- 2 \gamma_2 t})} \cdot N(0,1)^V$,
 and $N(0,1)^V$ is the standard $n$-dimensional Guassian distribution with independent coordinates.
\end{lemma}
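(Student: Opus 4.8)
The plan is to follow only the transient component $Y_t := \Pi X_t$ and reduce the assertion to a one‑dimensional comparison between $\|Y_t\|_2^2$ and the squared norm of a suitable Ornstein--Uhlenbeck process, the key input being the variational bound $\rayc(x)\ge\gamma_2$ valid for every $x\perp x_1$. Throughout I take for granted, as in the statement, that the \Ito process $X_t$ exists; this is justified because $\Lc$ is piecewise linear, hence globally Lipschitz, so $dX_t=-\Lc X_t\,dt+\sqrt{\eta}\,dB_t$ has a unique strong solution (see \cite{oksendalSDE}).

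First I would write down the equation governing $Y_t$. Applying the linear projection $\Pi$ to the normalized SDE and invoking Lemma~\ref{lemma:lap_proj} (parts 1 and 3, which give $\Lc X_t\perp x_1$, hence $\Pi\Lc X_t=\Lc X_t$, and $\Lc X_t=\Lc(\Pi X_t)$), one gets
\[ dY_t=-\Lc(Y_t)\,dt+\sqrt{\eta}\,\Pi\,dB_t,\qquad Y_0=\Pi X_0, \]
a diffusion supported on the $(n-1)$‑dimensional subspace $x_1^{\perp}$; since the noise is isotropic, $\Pi\,dB_t$ is an isotropic Brownian motion on that subspace. Applying \Ito's formula to $U_t:=\|Y_t\|_2^2$ (and using $\Pi Y_t=Y_t$, $\mathrm{tr}(\Pi)=n-1$) yields
\[ dU_t=\bigl(-2\langle Y_t,\Lc Y_t\rangle+\eta\,(n-1)\bigr)\,dt+2\sqrt{\eta}\,\langle Y_t,dB_t\rangle. \]
Because $Y_t\perp x_1$, the definition $\gamma_2=\min_{0\ne x\perp x_1}\rayc(x)$ gives $\langle Y_t,\Lc Y_t\rangle=\rayc(Y_t)\,\|Y_t\|_2^2\ge\gamma_2 U_t$ (and trivially $\ge 0=\gamma_2 U_t$ when $Y_t=0$), so the drift of $U_t$ is at most $\eta n-2\gamma_2 U_t$, while the martingale part is a continuous martingale of quadratic variation $\int_0^t 4\eta U_s\,ds$, hence equal to $\int_0^t 2\sqrt{\eta U_s}\,d\beta_s$ for a one‑dimensional Brownian motion $\beta$ by Dambis--Dubins--Schwarz (after the usual enlargement of the probability space).

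On the other side I would let the comparison process be the $\R^V$‑valued process $d\widehat{X}_t=-\gamma_2\widehat{X}_t\,dt+\sqrt{\eta}\,dB_t$ with $\widehat{X}_0=\Pi X_0$; solving explicitly gives $\widehat{X}_t\stackrel{d}{=}e^{-\gamma_2 t}\Pi X_0+\sqrt{\tfrac{\eta}{2\gamma_2}(1-e^{-2\gamma_2 t})}\cdot N(0,1)^V$, exactly the claimed law, and its squared norm $V_t:=\|\widehat{X}_t\|_2^2$ is a square‑root (CIR‑type) diffusion $dV_t=(\eta n-2\gamma_2 V_t)\,dt+2\sqrt{\eta V_t}\,d\beta_t$ with the same diffusion coefficient $u\mapsto 2\sqrt{\eta u}$ and $V_0=U_0=\|\Pi X_0\|_2^2$. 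This coefficient is $\tfrac12$‑Hölder, so pathwise uniqueness holds for the CIR equation (whence its law is determined by $V_0$, i.e.\ matches that of $\|\widehat X_t\|^2$), and $U_t$ is a supersolution of it — its drift is pointwise at most $\eta n-2\gamma_2 u$. The one‑dimensional comparison theorem for SDEs then gives $U_t\le V_t$ almost surely for every $t\ge 0$; taking square roots, $\|\Pi X_t\|_2$ is stochastically dominated by $\|\widehat{X}_t\|_2$.

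The step I expect to be the main obstacle is this last one: $U_t$ does not solve an \emph{autonomous} one‑dimensional SDE (its drift involves $\rayc(Y_t)$, not $U_t$ alone), so one must use the version of the comparison principle that bounds a semimartingale obeying a drift \emph{inequality} by the genuine solution of the dominating CIR equation, and one must deal with the degeneracy of $\sqrt{\eta u}$ at $u=0$ (the comparison equation keeps $V_t\ge 0$, with $0$ at worst instantaneously reflecting, which is harmless since only an upper bound is needed). A secondary bookkeeping point is the dimension shift $\eta(n-1)\le\eta n$, i.e.\ that comparing against the full $n$‑dimensional Ornstein--Uhlenbeck process — so the distribution matches the statement verbatim with $N(0,1)^V$ — only loses in the safe direction.
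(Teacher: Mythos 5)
Your proof is correct, and its skeleton coincides with the paper's: both arguments apply \Ito's formula to the squared norm of the transient component, use Lemma~\ref{lemma:lap_proj} to discard the stationary direction, bound the drift via $\langle \Pi X_t, \Lc \Pi X_t\rangle \ge \gamma_2 \|\Pi X_t\|_2^2$ (absorbing the harmless $\eta(n-1)\le \eta n$ slack), and compare against the Ornstein--Uhlenbeck process $d\widehat{X}_t=-\gamma_2\widehat{X}_t\,dt+\sqrt{\eta}\,d\widehat{B}_t$ whose explicit solution gives the stated Gaussian law. Where you diverge is the final dominance step. The paper couples the two scalar \Ito processes directly: it writes $d\widehat{Y}_t=-2\gamma_2\widehat{Y}_t\,dt+\eta n\,dt+2\sqrt{\eta}\langle\widehat{X}_t,d\widehat{B}_t\rangle$ and argues that when $Y_t=\widehat{Y}_t$ the two noise increments $\langle\Pi X_t,dB_t\rangle$ and $\langle\widehat{X}_t,d\widehat{B}_t\rangle$ have the same (norm-determined) distribution and can be matched, so the ordering is preserved; this is short and self-contained but stated informally. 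You instead reduce both martingale parts to a common one-dimensional Brownian motion via the Dambis--Dubins--Schwarz representation, view $U_t=\|\Pi X_t\|_2^2$ as a supersolution of the CIR equation $dV_t=(\eta n-2\gamma_2 V_t)\,dt+2\sqrt{\eta V_t}\,d\beta_t$, and invoke the Ikeda--Watanabe comparison theorem (valid for a drift \emph{inequality} with adapted drift, with the $\tfrac12$-H\"older diffusion coefficient handled by the Yamada--Watanabe condition), using pathwise/weak uniqueness to identify the law of the comparison process with that of $\|\widehat{X}_t\|_2^2$. This buys a fully standard, citable justification of exactly the point the paper treats by a sketched coupling, at the cost of importing heavier machinery; you also correctly flagged the two delicate spots (non-autonomous drift, degeneracy at $0$). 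The only caveat is your parenthetical claim that $\Lc$ is globally Lipschitz so the SDE has a unique strong solution: that is not established in the paper (which, like your proof, simply takes the stochastic process as given), so treat it as an assumption rather than a fact.
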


\begin{proof}
Consider the function $h : \R^V \ra \R$ given by
$h(x) := \| \Pi x \|_2^2 = \| x -  x^*\|_2^2$,
where $x^* := \frac{\langle x_1, x \rangle}{w(V)} \cdot x_1$ and $x_1 := \Wh \vec{1}$.
Then, one can check that
the gradient is $\nabla h(x) = 2 \, \Pi x$,
and the Hessian is 
$\nabla^2 h(x) = 2 (\I - \frac{1}{w(V)} \cdot \Wh J \Wh)$,
where $J$ is the matrix where every entry is 1.

Define the \Ito process $Y_t := h(X_t) = \langle \Pi X_t,  \Pi X_t \rangle$.
By the \Ito's lemma,
we have 

$d Y_t = \langle \nabla h(X_t), d X_t \rangle
+ \frac{1}{2} (d X_t)^\T \nabla^2 h(X_t) \, (d X_t)$.

To simplify the above expression, we make the substitution
$d X_t = - \Lc X_t \, dt + \sqrt{\eta} \, d B_t$.
From Lemma~\ref{lemma:lap_proj},
we have for all $x$, $\Lc x \perp x_1$ and
$\langle x , \Lc x \rangle = \langle \Pi x, \Lc \Pi x \rangle$.

Moreover, the convention for the product of differentials is
$0 = dt \cdot dt = dt \cdot d B_t(u) 
=  d B_t(u) \cdot d B_t(v)$ for $u \neq v$,
and $d B_t(u)\cdot d B_t(u) = dt$.
Hence, only the diagonal entries of the Hessian are relevant.

We have
$d Y_t = - 2 \langle \Pi X_t, \Lc \Pi X_t \rangle \, dt
+ \eta \sum_{u \in V} (1 - \frac{w_u}{w(V)}) \, dt
+ 2 \sqrt{\eta} \cdot \langle \Pi X_t, dB_t \rangle$.
Observing that $\Pi X_t \perp x_1$, from the definition of
$\gamma_2$, we have 
$\langle \Pi X_t, \Lc \Pi X_t \rangle \geq 
\gamma_2 \cdot \langle \Pi X_t, \Pi X_t \rangle$.
Hence, we have the following inequality:
$d Y_t \leq - 2 \gamma_2 Y_t \, dt
+ \eta n \, dt
+ 2 \sqrt{\eta} \cdot \langle \Pi X_t, dB_t \rangle$.

We next define another \Ito process $\widehat{Y_t} := \langle \widehat{X}_t, \widehat{X}_t \rangle$ with
initial value $\widehat{X}_0 := \Pi X_0$ and
stochastic differential equation:
$d \widehat{Y}_t = - 2 \gamma_2 \widehat{Y}_t \, dt
+ \eta n \, dt
+ 2 \sqrt{\eta} \cdot \langle \widehat{X}_t, d\widehat{B}_t \rangle$.

We briefly explain why $Y_t$ is stochastically dominated by
$\widehat{Y}_t$ by using a simple coupling argument.
If $Y_t < \widehat{Y_t}$, then we can choose $d B_t$ and $d\widehat{B}_t$ to be independent.
If $Y_t = \widehat{Y_t}$, observe that
$\langle \Pi X_t, dB_t \rangle$ and 
$\langle \widehat{X}_t, d\widehat{B}_t \rangle$ have the same distribution, because both $d B_t$ and $d\widehat{B}_t$
are spherically symmetric.  Hence,
in this case, we can choose a coupling between $d B_t$ and $d\widehat{B}_t$ such that
$\langle \Pi X_t, dB_t \rangle = \langle \widehat{X}_t, d\widehat{B}_t \rangle$.

Using \Ito's lemma, one can verify that the
above stochastic differential equation can be derived from
the following equation involving $\widehat{X}_t$:
$d \widehat{X}_t = - \gamma_2 \widehat{X}_t \, dt 
+ \sqrt{\eta} \, d \widehat{B}_t$.

Because $d \widehat{B}_t$ has independent coordinates,
it follows that the equation can be solved independently
for each vertex $u$.  Again, using
the \Ito lemma,
one can verify that $d(e^{\gamma_2 t} X_t) = \sqrt{\eta} \cdot e^{\gamma_2 t}
\, d \widehat{B}_t$.
Therefore, we have the solution
$\widehat{X}_t = e^{-\gamma_2 t} \widehat{X}_0 + \sqrt{\eta} \cdot e^{- \gamma_2 t} \int_{0}^t e^{\gamma_2 s} \, d \widehat{B}_s$,
which has the same distribution 
as: 

$e^{-\gamma_2 t} \widehat{X}_0 + \sqrt{\frac{\eta}{2 \gamma_2} \cdot(1 - e^{- 2 \gamma_2 t})} \cdot N(0,1)^V$, as required.
\end{proof}

\begin{corollary}[Convergence and Laplacian]
In the stochastic diffusion process, as $t$ tends to infinity,
$\|\Phi_t - \Phi_t^*\|_1^2$ is stochastically dominated by $\frac{\eta \cdot w(V)}{2 \gamma_2} \cdot \chi^2(n)$,
where $\chi^2(n)$ is the
chi-squared distribution with $n$ degrees of freedom.
Hence, $\lim_{t \ra \infty} E[ \|\Phi_t - \Phi_t^*\|_1]
\leq \sqrt{\frac{\eta n \cdot w(V)}{2 \gamma_2}}$.
\end{corollary}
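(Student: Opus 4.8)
The plan is to push Lemma~\ref{lemma:stoch_dom} through the elementary dictionary between the measure space and the normalized space, and then let $t \to \infty$ in the explicit dominating distribution. First I would recall the two deterministic facts already available: $\Phi_t - \Phi_t^* = \Wh \Pi X_t$ (from the convergence-metric discussion), and the Cauchy--Schwarz inequality $\|\Wh x\|_1 \le \sqrt{w(V)}\,\|x\|_2$ from the Measure Space paragraph. Together these give, pointwise in the randomness, $\|\Phi_t - \Phi_t^*\|_1^2 \le w(V)\,\|\Pi X_t\|_2^2$. Since $x \mapsto w(V)x^2$ is nondecreasing on $[0,\infty)$, Lemma~\ref{lemma:stoch_dom} then implies that for each fixed $t \ge 0$ the random variable $\|\Phi_t - \Phi_t^*\|_1^2$ is stochastically dominated by $w(V)\,\|\widehat{X}_t\|_2^2$, with $\widehat{X}_t$ as in that lemma.

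Next I would take $t \to \infty$ in this dominating family. In the law $e^{-\gamma_2 t}\Pi X_0 + \sqrt{\frac{\eta}{2\gamma_2}(1 - e^{-2\gamma_2 t})}\cdot N(0,1)^V$, the deterministic shift $e^{-\gamma_2 t}\Pi X_0$ tends to $0$ and the scalar in front of the Gaussian converges to $\sqrt{\eta/(2\gamma_2)}$; hence $\widehat{X}_t$ converges in distribution to $\sqrt{\eta/(2\gamma_2)}\cdot N(0,1)^V$, so $w(V)\|\widehat{X}_t\|_2^2$ converges in distribution to $\frac{\eta\, w(V)}{2\gamma_2}\,\chi^2(n)$. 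Because the limiting CDF is continuous on $(0,\infty)$, this is precisely the asserted (asymptotic) domination: for every $x > 0$, $\limsup_{t\to\infty} \Pr[\|\Phi_t - \Phi_t^*\|_1^2 > x] \le \Pr[\frac{\eta\, w(V)}{2\gamma_2}\chi^2(n) > x]$.

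For the expectation statement I would not pass to the limit inside a stochastic-dominance relation but instead compute a second moment directly. From $\|\Phi_t - \Phi_t^*\|_1 \le \sqrt{w(V)}\,\|\Pi X_t\|_2$, stochastic dominance, and Jensen's inequality, $E[\|\Phi_t - \Phi_t^*\|_1] \le \sqrt{w(V)}\,E[\|\widehat{X}_t\|_2] \le \sqrt{w(V)}\sqrt{E[\|\widehat{X}_t\|_2^2]}$. Since the Gaussian part of $\widehat{X}_t$ has mean zero and $E\|N(0,1)^V\|_2^2 = n$, the second moment equals $e^{-2\gamma_2 t}\|\Pi X_0\|_2^2 + \frac{\eta n}{2\gamma_2}(1 - e^{-2\gamma_2 t})$, which tends to $\frac{\eta n}{2\gamma_2}$; taking $\limsup$ over $t$ yields $\limsup_{t\to\infty} E[\|\Phi_t - \Phi_t^*\|_1] \le \sqrt{\frac{\eta n\, w(V)}{2\gamma_2}}$, as claimed.

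The only point requiring care is the meaning of ``stochastically dominated as $t \to \infty$'': at any finite $t$ the dominating variable still carries the shift $e^{-\gamma_2 t}\Pi X_0$, so clean domination by $\frac{\eta\, w(V)}{2\gamma_2}\chi^2(n)$ holds only in the limit, which is why I would phrase it via convergence in distribution of $\widehat{X}_t$ (using continuity of $\chi^2(n)$) rather than as an exact finite-$t$ inequality. The remaining ingredients --- the two norm inequalities, monotonicity of squaring, Jensen, and the elementary second-moment computation --- are entirely routine.
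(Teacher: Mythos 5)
Your proof is correct and follows essentially the same route as the paper's: bound $\|\Phi_t - \Phi_t^*\|_1^2 \leq w(V)\,\|\Pi X_t\|_2^2$, invoke Lemma~\ref{lemma:stoch_dom}, and let $t \to \infty$ in the explicit law of $\widehat{X}_t$ to obtain the $\chi^2(n)$ domination and the expectation bound. Your extra care about interpreting the asymptotic domination via convergence in distribution, and the explicit second-moment computation in place of citing the mean of the dominating chi-squared, only make precise what the paper leaves implicit.
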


\noindent \textbf{Remark.}
Observe that the total measure introduced
into the system is $\sum_{u \in V} \sqrt{\eta w_u} \cdot B_t(u)$,
which has standard deviation $\sqrt{\eta t \cdot w(V)}$.
Hence, as $t$ increases, the ``error rate''
is at most $\sqrt{\frac{n}{2 \gamma_2 t}}$.

\begin{proof}
Observe that, as $t$ tends to infinity,
$\widehat{Y_t} = \| \widehat{X}_t \|_2^2$
converges to the distribution $\frac{\eta}{2 \gamma_2} \cdot \chi^2(n)$, where $\chi^2(n)$ is the
chi-squared distribution with $n$ degrees of freedom (having
mean $n$ and standard deviation $\sqrt{2n}$).

Finally,
observing that $\|\Phi_t - \Phi_t^*\|_1^2 \leq w(V) \cdot \|\Pi X_t\|_2^2$,
it follows that as $t$ tends to infinity,
$\|\Phi_t - \Phi_t^*\|_1^2$ is stochastically dominated by
the distribution $\frac{\eta \cdot w(V)}{2 \gamma_2} \cdot \chi^2(n)$,
which has mean $\frac{\eta n \cdot w(V)}{2 \gamma_2}$
and standard deviation $\frac{\eta \sqrt{n} \cdot w(V)}{\sqrt{2} \gamma_2}$.
\end{proof}

\begin{corollary}[Upper Bound for Mixing Time for $\eta = 0$]
\label{cor:mixing}
Consider the deterministic diffusion process with $\eta = 0$,
and some initial probability measure $\vp_0 \in \R_+^V$
such that $\langle \vec{1}, \vp_0 \rangle = 1$.
Denote $\vp^* := \frac{1}{w(V)} \cdot \W \vec{1}$,
and $\vp^*_{\min} := \min_{u \in V} \vp^*(u)$.
Then, for any $\delta > 0$ and $t \geq \frac{1}{\gamma_2} \log \frac{1}{\delta \sqrt{\vp^*_{\min}}}$, we have
$\|\Phi_t - \vp^*\|_1 \leq \delta$.
\end{corollary}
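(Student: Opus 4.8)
The plan is to read this off from Lemma~\ref{lemma:stoch_dom} specialized to $\eta = 0$ and then perform the standard conversions between the measure, normalized, and weighted spaces. When $\eta = 0$ the comparison process in Lemma~\ref{lemma:stoch_dom} collapses to the deterministic $\widehat X_t = e^{-\gamma_2 t}\,\Pi X_0$, so its conclusion becomes the pointwise bound $\|\Pi X_t\|_2 \le e^{-\gamma_2 t}\,\|\Pi X_0\|_2$ for every $t \ge 0$, where $X_t := \Wmh \Phi_t$. (Alternatively one can reprove this inequality directly: by Lemma~\ref{lemma:deriv}(1), Lemma~\ref{lemma:lap_proj}(1)--(2) and the variational definition of $\gamma_2$, $\frac{d}{dt}\|\Pi X_t\|_2^2 = -2\langle \Pi X_t, \Lc\,\Pi X_t\rangle \le -2\gamma_2\,\|\Pi X_t\|_2^2$, and Gr\"onwall's inequality together with the piecewise continuity of $\frac{df}{dt}$ established earlier finishes it.)

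Next I would identify the stationary component. Since $\langle \vec{1}, \vp_0\rangle = 1$, the projection of $X_0$ onto $x_1 := \Wh \vec{1}$ is exactly $\Wmh \vp^*$; and by Lemma~\ref{lemma:lap_proj}(1) the diffusion keeps $\langle x_1, X_t\rangle$ invariant, so $\Phi_t^* = \vp^*$ and $\Phi_t - \vp^* = \Wh\,\Pi X_t$ for all $t$. Using the Cauchy--Schwarz estimate recorded for the measure space, $\|\Phi_t - \vp^*\|_1 = \|\Wh\,\Pi X_t\|_1 \le \sqrt{w(V)}\,\|\Pi X_t\|_2 \le \sqrt{w(V)}\,e^{-\gamma_2 t}\,\|\Pi X_0\|_2$.

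It remains to bound $\|\Pi X_0\|_2$. Here I would exploit that $\vp_0$ is a probability distribution: $\|\Pi X_0\|_2^2 \le \|X_0\|_2^2 = \sum_{u \in V}\vp_0(u)^2/w_u \le \frac{1}{w_{\min}}\sum_{u \in V}\vp_0(u)^2 \le \frac{1}{w_{\min}}$, since $\vp_0 \ge 0$ and $\sum_{u}\vp_0(u) = 1$ force $\sum_{u}\vp_0(u)^2 \le 1$. Because $w_{\min} = \vp^*_{\min}\cdot w(V)$, this yields $\|\Phi_t - \vp^*\|_1 \le e^{-\gamma_2 t}/\sqrt{\vp^*_{\min}}$, and plugging in $t \ge \frac{1}{\gamma_2}\log\frac{1}{\delta\sqrt{\vp^*_{\min}}}$ makes the right-hand side at most $\delta$.

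I do not anticipate a genuine obstacle here: the only delicate point is the exponential-decay step for the non-linear, non-smooth operator $\Lc$, but that is precisely what Lemma~\ref{lemma:stoch_dom} (at $\eta = 0$) and the earlier robustness lemma deliver; everything else is norm bookkeeping, the one mildly clever move being the use of $\sum_{u}\vp_0(u)^2 \le 1$ --- rather than the cruder $\|\vp_0 - \vp^*\|_1 \le 2$ --- so that exactly the factor $\sqrt{\vp^*_{\min}}$ in the statement appears.
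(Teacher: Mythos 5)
Your proposal is correct and follows essentially the same route as the paper: specialize the stochastic dominance lemma (Lemma~\ref{lemma:stoch_dom}) to $\eta=0$ to get $\|\Pi X_t\|_2 \le e^{-\gamma_2 t}\|\Pi X_0\|_2$, convert to the measure space via $\|\Phi_t-\vp^*\|_1 \le \sqrt{w(V)}\,\|\Pi X_t\|_2$, and bound $\|\Pi X_0\|_2^2 \le \langle \vp_0, \Wm \vp_0\rangle \le 1/w_{\min}$ using that $\vp_0$ is a probability measure, which is exactly the paper's chain of inequalities. The only addition is your explicit (and correct) justification that measure conservation fixes the stationary component $\vp^*$, which the paper leaves implicit.
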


\begin{proof}
In the deterministic process with $\eta = 0$, stochastic
dominance becomes
$\|\Pi X_t \|_2 \leq e^{\gamma_2 t} \cdot \|\Pi X_0\|_2$.

Relating the norms, we have
$\|\Phi_t - \vp^*\|_1 \leq \sqrt{w(V)} \cdot \|\Pi X_t\|_2
\leq \sqrt{w(V)} \cdot e^{- \gamma_2 t} \cdot \|\Pi X_0\|_2$.

Observe that
$\|\Pi X_0\|_2^2
\leq \langle X_0, X_0 \rangle
= \langle \vp_0, \Wm \vp_0 \rangle \leq \frac{1}{\min_u w_u}$.

Hence, it follows that
$\|\Phi_t - \vp^*\|_1 \leq \frac{1}{\sqrt{\vp^*_{\min}}} \cdot e^{- \gamma_2 t}$, which is at most $\delta$,
for $t \geq \frac{1}{\gamma_2} \log \frac{1}{\delta \sqrt{\vp^*_{\min}}}$.
\end{proof}

\subsection{Bottlenecks for the Hypergraph Diffusion Process}

In this section we prove that if the hypergraph diffusion process mixes slowly, then
it must have a set of vertices having small expansion (Theorem~\ref{thm:hyper-walk-cut}).

\begin{theorem}[Restatement of Theorem~\ref{thm:hyper-walk-cut}]
Given a hypergraph $H = (V,E,w)$ and a probability distribution $\vp_0 : V \to [0,1]$, 
let $\vp_t$ denote the probability distribution at time $t$ according to the 
diffusion process (Figure~\ref{fig:hyper_diffusion}) and $\vp^*$ be the stationary distribution.

Let $\delta > 0$.
Suppose initially $\norm{\vp_0 - \vp^*}_1 > \delta$
and for some time $T > 0$,
$\norm{\vp_T - \vp^*}_1 > \delta$.
Then,
there exists a set $S \subset V$ such that $\vp^*(S) \leq \frac{1}{2}$
and

\[ \phi(S) \leq \bigo{\frac{1}{T} \ln \frac{\norm{\vp_0 - \vp^*}_1}{\sqrt{\vp^*_{\min}} \cdot \delta}}.\]
\end{theorem}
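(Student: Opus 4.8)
The plan is to run the Lov\'asz--Simonovits curve argument --- the standard route from a conductance-type hypothesis to a mixing bound --- in continuous time and for hyperedges: if the diffusion has not mixed by time $T$, then one of the sweep cuts it passes through must be sparse.

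\textbf{Curve and reduction.} Work in the weighted space, with $f_t:=\Wm\vp_t$ and $f^*:=\Wm\vp^*=\frac{1}{w(V)}\vec 1$. For each $t$ let $\ell_t:[0,w(V)]\to[0,1]$ be the fractional Lov\'asz--Simonovits curve, $\ell_t(s):=\max\{\sum_v\theta_v w_v f_t(v):0\le\theta_v\le1,\ \sum_v\theta_v w_v=s\}$, the largest $\vp_t$-mass carried by $s$ units of vertex weight; it is concave, nondecreasing, $\ell_t(0)=0$, $\ell_t(w(V))=1$, and the maximizer at budget $s$ is the top super-level set $S_s^t=\{v:f_t(v)>\tau\}$ of the density $f_t$. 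Put $\psi_s(t):=\ell_t(s)-s/w(V)\ge0$ and $\Psi(t):=\max_s\psi_s(t)=\tfrac12\|\vp_t-\vp^*\|_1$, the maximum being attained at the budget corresponding to $S^+_t:=\{v:f_t(v)>f^*\}$. Since every sweep cut $S_\tau=\{v:f_t(v)>\tau\}$ and its complement share the same boundary and expansion, and one of them has $\vp^*$-measure at most $\tfrac12$, it suffices to show: with $\phi:=C\cdot\tfrac1T\ln\frac{\|\vp_0-\vp^*\|_1}{\sqrt{\vp^*_{\min}}\,\delta}$, if \emph{every} sweep cut of \emph{every} $f_t$, $t\in[0,T]$, has expansion $>\phi$, then $\|\vp_T-\vp^*\|_1\le\delta$, contradicting the hypothesis; the offending sweep cut is then the desired $S$. (The assumption $\|\vp_0-\vp^*\|_1>\delta$ makes the logarithm positive.)

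\textbf{Evolution of the curve.} Next I would differentiate $\ell_t(s)$ in $t$. Fix $s$ and let $S=S_s^t$ be the current maximizing super-level set. Any hyperedge $e$ cut by $S$ has $S_e(f_t)\subseteq S$ and $I_e(f_t)\subseteq V\setminus S$ --- the maximum-density vertices of $e$ lie in the top set, the minimum-density ones outside it --- so by rules \textsf{(R1)}--\textsf{(R2)} the full outflow rate $c_e=w_e\Delta_e(f_t)$ leaves $S$ while no inflow enters $S$; a hyperedge not cut by $S$ contributes nothing. Hence $\tfrac{d}{dt}\vp_t(S)=-\sum_{e\in\partial S}w_e\Delta_e(f_t)$, and by an envelope argument --- with the continuity of $\tfrac{df}{dt}$ proved earlier governing how $S_s^t$ moves --- this equals the right derivative of $\ell_t(s)$ off a measure-zero set of times, which is enough to integrate. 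In particular $\ell_t(s)$ only decreases: the curve flattens monotonically.

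\textbf{Key estimate, integration, calibration.} The heart --- and the step I expect to be the main obstacle --- is a Lov\'asz--Simonovits rounding inequality showing the escaping flux dominates the overhang at rate $\Omega(\phi)$ at every scale: under the standing assumption, the scaled overhang $\widehat\psi_s(t):=\psi_s(t)\big/\sqrt{\min\{s,w(V)-s\}/w(V)}$ obeys $\tfrac{d}{dt}\widehat\psi_s(t)\le-c\phi\,\widehat\psi_s(t)$. The idea is the classical one: if the flux $\sum_{e\in\partial S_s^t}w_e\Delta_e(f_t)$ across $S_s^t$ were too small relative to $\psi_s(t)$, then either $\partial S_s^t$ is itself a sparse sweep cut (contradiction), or --- by concavity of $\ell_t$ --- the overhang is really concentrated at a finer budget $s'$, whose super-level set $S_{s'}^t$ is a sparse sweep cut (contradiction). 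Granting this, $\widehat\psi_s(T)\le\widehat\psi_s(0)e^{-c\phi T}$; and a short computation from $\psi_s(0)=\vp_0(S_s^0)-\vp^*(S_s^0)\le\tfrac12\|\vp_0-\vp^*\|_1$ gives $\max_s\widehat\psi_s(0)=O\!\left(\|\vp_0-\vp^*\|_1/\sqrt{\vp^*_{\min}}\right)$ (the scaling factor is $\ge\sqrt{\vp^*_{\min}}$ once $s\ge w_{\min}$, and for smaller budgets $\psi_s(0)=O(s)$ absorbs it). Hence $\tfrac12\|\vp_T-\vp^*\|_1=\Psi(T)=\max_s\psi_s(T)\le\max_s\widehat\psi_s(T)=O\!\left(\|\vp_0-\vp^*\|_1/\sqrt{\vp^*_{\min}}\right)e^{-c\phi T}$; if this exceeds $\delta/2$, then $e^{-c\phi T}=\Omega\!\left(\delta\sqrt{\vp^*_{\min}}/\|\vp_0-\vp^*\|_1\right)$, i.e. $\phi=O\!\left(\tfrac1T\ln\frac{\|\vp_0-\vp^*\|_1}{\sqrt{\vp^*_{\min}}\,\delta}\right)$, contradicting the choice of $\phi$ --- so a sweep cut with the asserted expansion and $\vp^*$-measure at most $\tfrac12$ exists. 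The genuine difficulty is the key estimate: for $2$-graphs it is the usual Lov\'asz--Simonovits rounding step, but for hypergraphs the flux is governed by $\Delta_e(f_t)=\max_{u,v\in e}(f_u-f_v)$, a maximum over the whole hyperedge rather than a single-edge difference, so it cannot be split cleanly threshold by threshold; the argument must instead exploit that $c_e=w_e\Delta_e(f_t)$ is the \emph{total} rate at which measure leaves $S_e(f_t)$ and perform a layered threshold-rounding over the super-level sets of $f_t$ to recover a sparse sweep cut whenever the flux is small.
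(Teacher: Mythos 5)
Your proposal takes a genuinely different route from the paper's, but it does not close, and the gap is exactly where you flag it. The whole argument hinges on the ``key estimate'' $\tfrac{d}{dt}\widehat\psi_s(t)\le -c\,\phi\,\widehat\psi_s(t)$, which you grant rather than prove --- and that estimate is false as stated. A linear-in-$\phi$ decay of the Lov\'asz--Simonovits overhang would imply that a graph in which every sweep cut has expansion at least $\phi$ mixes in time $O\!\left(\tfrac{1}{\phi}\log(1/\vp^*_{\min})\right)$; the cycle $C_n$ (a $2$-graph, hence a special case of the diffusion here) has every sweep cut of expansion $\Omega(1/n)$ yet mixes in time $\Theta(n^2)$, not $\tilde O(n)$. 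The correct Lov\'asz--Simonovits rate is quadratic, $\tfrac{d}{dt}\widehat\psi_s(t)\le -c\,\phi^2\,\widehat\psi_s(t)$, and with that rate your calibration step yields only $\phi(S)=\bigo{\sqrt{\tfrac{1}{T}\ln\frac{\norm{\vp_0-\vp^*}_1}{\sqrt{\vp^*_{\min}}\,\delta}}}$, i.e.\ the square-root form rather than the stated one. On top of this, even the quadratic-rate chord inequality is nontrivial for hyperedges (as you note, the flux $w_e\Delta_e(f_t)$ cannot be split threshold by threshold), so a substantial lemma remains entirely unproved. The one part of your sketch that is sound and worth keeping is the observation that for a sweep cut $S$ of $f_t$ every cut hyperedge has $S_e(f_t)\subseteq S$ and $I_e(f_t)\subseteq V\setminus S$, so the full rate $c_e$ exits $S$.

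The paper's proof avoids all of this machinery. It works with $\widehat{x}_t:=\Pi\,\Wmh\vp_t$ in the normalized space and uses two facts special to this diffusion (Lemma~\ref{lemma:deriv}): $\tfrac{d}{dt}\norm{\widehat{x}_t}^2=-2\,\rayc(\widehat{x}_t)\norm{\widehat{x}_t}^2$, and $t\mapsto\rayc(\widehat{x}_t)$ is nonincreasing along the flow. Integrating gives $T\cdot\rayc(\widehat{x}_T)\le\int_0^T\rayc(\widehat{x}_t)\,dt=\ln\frac{\norm{\widehat{x}_0}}{\norm{\widehat{x}_T}}$, the norm comparison between $\ell_1$ in the measure space and $\ell_2$ in the normalized space converts this to $\rayc(\widehat{x}_T)\le\tfrac{1}{T}\ln\frac{\norm{\vp_0-\vp^*}_1}{\sqrt{\vp^*_{\min}}\,\delta}$, and a single sweep-cut rounding (Proposition~\ref{prop:hyper-sweep-rounding}) applied to $\widehat{x}_T$ produces $S$ with $\vp^*(S)\le\tfrac12$ and $\phi(S)=\bigo{\sqrt{\rayc(\widehat{x}_T)}}$. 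Note that this, too, natively delivers the square-root bound; the monotonicity of the Rayleigh quotient along the (nonlinear) flow is the ingredient that replaces your missing key estimate, and it is the step you would need to import if you wanted to salvage your approach.
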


\begin{proof}
We consider the transformation
$x_t := \Wmh \vp_t$.
We denote by $\Pi$ the orthogonal projection
operator into the subspace orthogonal to $x_1 := \Wh \vec{1}$.
Consider
the projection $\widehat{x}_t := \Pi x_t$
onto the subspace orthogonal to $x_1$.
Denote $x^* := \Wmh \vp^* = \frac{1}{w(V)} \cdot \Wh \vec{1}$,
which is the projection of $x_0$ into the 
subspace spanned by $x_1 := \Wh \vec{1}$.

Observe that $x_t = x^* + \widehat{x}_t$,
where $x^*$ is the stationary component
and $\widehat{x}_t$ is the transient component.
Moreover, $\vp_t - \vp^* = \Wh \widehat{x}_t$.

The diffusion process on the measure space
induces the differential equation on $\widehat{x}_t$
as follows:

$\frac{d \widehat{x}_t}{d t} = - \Lc \widehat{x}_t$.

By expressing Lemma~\ref{lemma:deriv}~(1)
in terms of the normalized space,
we have

$\frac{d \norm{\widehat{x}_t}^2}{d t} = - 2 \rayc(\widehat{x}_t) \cdot \norm{\widehat{x}_t}^2$.

Integrating from $t=0$ to $T$ and simplifying,
we have

$\ln \frac{\norm{\widehat{x}_0}}{\norm{\widehat{x}_T}}
= \int^T_0 \rayc(\widehat{x}_t) dt \geq T \cdot \rayc(\widehat{x}_T)$,

\noindent where the last inequality holds
because $\rayc(\widehat{x}_t)$ is decreasing
according to Lemma~\ref{lemma:deriv}~(3).

Since the norms are related by
$\sqrt{w_{\min}} \cdot \|x\|_2  \leq \|\vp\|_1 \leq \sqrt{w(V)} \cdot \|x\|_2$,
we have

$\rayc(\widehat{x}_T) \leq \frac{1}{T} \ln \frac{\norm{\widehat{x}_0}}{\norm{\widehat{x}_T}}
\leq \frac{1}{T} \ln (\frac{1}{\sqrt{\vp^*_{\min}}} \cdot \frac{\norm{\vp_0 - \vp^*}_1}{\norm{\vp_T - \vp^*}_1})
\leq \frac{1}{T} \ln \frac{\norm{\vp_0 - \vp^*}_1}{\sqrt{\vp^*_{\min}} \cdot \delta}$.

Finally, observing that $\widehat{x}_T \perp x_1$,
Proposition~\ref{prop:hyper-sweep-rounding} implies that
there exists a set $S \subset V$ such that $\vp^*(S) \leq \frac{1}{2}$,
and $\phi(S) \leq \bigo{\sqrt{\rayc(\widehat{x}_T)}}
\leq \bigo{\frac{1}{T} \ln \frac{\norm{\vp_0 - \vp^*}_1}{\sqrt{\vp^*_{\min}} \cdot \delta}}$.
\end{proof}

\subsection{Lower Bounds on Mixing Time}

Next we prove Theorem~\ref{thm:hyperwalk-lower-informal}.

\begin{theorem}[Formal statement of Theorem~\ref{thm:hyperwalk-lower-informal}]
\label{thm:hyperwalk-lower}
Given a hypergraph $H=(V,E,w)$,
suppose there exists a vector $y \perp x_1$
in the normalized space such that $\rayc(y) \leq \gamma$.
Then, there exists an initial
probability distribution $\vp_0 \in \R_+^V$
in the measure space
such that $\norm{\vp_0 - \vp^*}_1 \geq \frac{1}{2}$.
Moreover, for any $\delta > 0$ and $t \leq \frac{1}{4 \gamma} \ln \frac{\sqrt{\vp^*_{\min}}}{2 \delta}$, at time $t$ of the diffusion process, we have

$\norm{\vp_t - \vp^*}_1 \geq \delta$.
\end{theorem}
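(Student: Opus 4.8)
The plan is to manufacture a slow‑mixing initial distribution $\vp_0$ out of the test vector $y$ by a Cheeger‑style one‑sided truncation, and then to follow the transient part of the diffusion using only the first‑order derivative estimates of Lemma~\ref{lemma:deriv} together with the two norm comparisons between the measure and normalized spaces. First I would pass to the weighted space, writing $g := \Wmh y$, so that $\langle g,\one\rangle_w = 0$ and, by Lemma~\ref{lemma:ray_disc}, $\D_w(g) = \rayc(y) \le \gamma$. Let $t$ be a weighted median of $g$ (both $\{u:g_u>t\}$ and $\{u:g_u<t\}$ have $w$‑weight at most $\tfrac12 w(V)$) and set $g_\pm := (g - t\one)^{\pm}$. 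A short per‑hyperedge case analysis on the maximum and minimum of $g$ over each $e$ shows $\sum_{e} w_e \max_{u,v\in e}(g_+(u)-g_+(v))^2 + \sum_{e} w_e \max_{u,v\in e}(g_-(u)-g_-(v))^2 \le \sum_{e} w_e \max_{u,v\in e}(g_u-g_v)^2$, while coordinatewise $(g_u-t)^2 = g_+(u)^2 + g_-(u)^2$ and $\langle g,\one\rangle_w = 0$ give $\|g_+\|_w^2 + \|g_-\|_w^2 = \sum_u w_u(g_u-t)^2 \ge \|g\|_w^2$. Hence one of the ratios $\D_w(g_\pm)$ is at most $\gamma$; relabel that (nonzero) one as $p$, note $p \ge \vec{0}$, and observe its support $S := \supp(p)$ has $w(S) \le \tfrac12 w(V)$.

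Next I would re‑center and rescale $p$ to form the transient component of $\vp_0$: let $c := \langle p,\one\rangle_w / w(V) > 0$, $\widehat{f}_0 := \tfrac{1}{c\,w(V)}(p - c\one)$, and $\vp_0 := \vp^* + \W\widehat{f}_0$. Then I check three things. \textbf{(i)} $\vp_0$ is a probability measure: $\langle \one, \vp_0\rangle = 1 + \langle\widehat{f}_0,\one\rangle_w = 1$, and $\vp_0(u) = w_u\big(\tfrac{1}{w(V)} + \widehat{f}_0(u)\big) \ge 0$ for every $u$ because $\min_u (p(u) - c) = -c$, attained on $V \setminus S$. \textbf{(ii)} $\|\vp_0 - \vp^*\|_1 = \tfrac{1}{c\,w(V)} \sum_u w_u |p(u) - c| \ge 1$: indeed $\langle p - c\one, \one\rangle_w = 0$ forces the negative part of $p - c\one$ to carry $w$‑mass at least $(w(V) - w(S))c \ge \tfrac12 w(V)c$, hence $\sum_u w_u|p(u)-c| \ge w(V)c$. \textbf{(iii)} $\D_w(\widehat{f}_0)$ is small: by scale‑invariance of $\D_w$, $\D_w(\widehat{f}_0) = N(p)\big/\big(\|p\|_w^2 - c^2 w(V)\big)$ where $N(p) := \sum_{e} w_e \max_{u,v\in e}(p(u)-p(v))^2 \le \gamma\|p\|_w^2$, and Cauchy--Schwarz applied to $\langle p,\one\rangle_w = \langle p,\chi_S\rangle_w$ gives $c^2 w(V) \le \tfrac{w(S)}{w(V)}\|p\|_w^2 \le \tfrac12\|p\|_w^2$, so $\D_w(\widehat{f}_0) \le 2\gamma$.

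Finally I would run the diffusion from $\vp_0$. With $x_t := \Wmh\vp_t$ and $\widehat{x}_t := \Pi x_t$, conservation of total measure makes $x^* = \tfrac{1}{w(V)}\Wh\one$ constant in time, so Lemma~\ref{lemma:lap_proj} gives $\tfrac{d}{dt}\widehat{x}_t = -\Lc\widehat{x}_t$, with $\widehat{x}_0 = \Wh\widehat{f}_0$ and $\rayc(\widehat{x}_0) = \D_w(\widehat{f}_0) \le 2\gamma$. Since $\rayc(\widehat{x}_t)$ is non‑increasing (Lemma~\ref{lemma:deriv}(3)), it stays $\le 2\gamma$, and Lemma~\ref{lemma:deriv}(1), read in the normalized space, yields $\tfrac{d}{dt}\|\widehat{x}_t\|_2^2 = -2\rayc(\widehat{x}_t)\|\widehat{x}_t\|_2^2 \ge -4\gamma\|\widehat{x}_t\|_2^2$, so $\|\widehat{x}_t\|_2 \ge e^{-2\gamma t}\|\widehat{x}_0\|_2$. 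Converting back with $\sqrt{w_{\min}}\,\|z\|_2 \le \|\Wh z\|_1 \le \sqrt{w(V)}\,\|z\|_2$, and using $\vp_t - \vp^* = \Wh\widehat{x}_t$, $\sqrt{w_{\min}/w(V)} = \sqrt{\vp^*_{\min}}$ and property (ii),
\[ \|\vp_t - \vp^*\|_1 \;\ge\; \sqrt{w_{\min}}\,e^{-2\gamma t}\,\|\widehat{x}_0\|_2 \;\ge\; \sqrt{\vp^*_{\min}}\,\|\vp_0 - \vp^*\|_1\, e^{-2\gamma t} \;\ge\; \tfrac12\sqrt{\vp^*_{\min}}\, e^{-2\gamma t}, \]
which is at least $\delta$ whenever $t \le \tfrac{1}{2\gamma}\ln\tfrac{\sqrt{\vp^*_{\min}}}{2\delta}$, in particular for all $t \le \tfrac{1}{4\gamma}\ln\tfrac{\sqrt{\vp^*_{\min}}}{2\delta}$ (the claim being vacuous when this logarithm is $\le 0$). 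The main obstacle is precisely the truncation step: one must convert the small‑Rayleigh‑quotient vector $y$ into a genuine probability measure that is $\Omega(1)$‑far from $\vp^*$ without blowing up the discrepancy ratio by more than a constant when re‑centering to weighted mean zero, and the weighted‑median threshold is exactly what buys this, since $w(S)\le\tfrac12 w(V)$ simultaneously controls $c^2 w(V)\le\tfrac12\|p\|_w^2$ and bounds $\|\vp_0-\vp^*\|_1$ from below; everything else is bookkeeping with the norm comparisons and the derivative identities already established.
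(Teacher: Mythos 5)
Your proposal is correct and follows essentially the same route as the paper's own proof (Lemma~\ref{lemma:init_prob_dist} plus Lemma~\ref{lemma:lb_diff_norm}): truncate $y$ at a weighted-median shift to obtain a nonnegative vector supported on at most half the weight, recenter and rescale it into a probability distribution at $\ell_1$-distance at least $\tfrac12$ from $\vp^*$ whose transient component has Rayleigh quotient $O(\gamma)$, and then combine Lemma~\ref{lemma:deriv}(1),(3) with the norm comparisons between the measure and normalized spaces. The only difference is cosmetic: you keep the side with the smaller discrepancy ratio via the inequality $(a^+-b^+)^2+(a^--b^-)^2\le(a-b)^2$ and a Cauchy--Schwarz bound on the recentering loss, which yields the constant $2\gamma$ where the paper's larger-norm-side argument yields $4\gamma$, so your bound is in fact slightly stronger than the stated $\tfrac{1}{4\gamma}$ requirement.
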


We consider the diffusion process from the perspective of the
normalized space. Recall that $x_1 := \Wh \vec{1}$
is an eigenvector of the normalized Laplacian $\Lc$ with eigenvalue 0.
From Lemma~\ref{lemma:lap_proj}~(1),
$\Lc(x) \perp x_1$ for all $x \in \R^V$.
Therefore, the diffusion process has no effect on the subspace
spanned by $x_1$, and we can
focus on its orthogonal space.

\begin{lemma}
\label{lemma:lb_diff_norm}
Suppose $y \in \R^V$ is a non-zero vector in the
normalized space such that $y \perp x_1$ and
$\rayc(y) = \gamma$.
If we start the diffusion process with $y_0 := y$,
then after time $t \geq 0$,
we have $\norm{y_t}_2 \geq e^{- \gamma t} \cdot \norm{y_0}_2$.
\end{lemma}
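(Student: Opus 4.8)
The plan is to reduce the statement to a one-dimensional differential inequality for the scalar quantity $\norm{y_t}_2^2$ and then integrate it. First I would record that the trajectory stays orthogonal to $x_1$: by Lemma~\ref{lemma:lap_proj}~(1) we have $\Lc y_t \perp x_1$, hence $\frac{d y_t}{dt} = -\Lc y_t \perp x_1$, and since $y_0 = y \perp x_1$ by hypothesis, orthogonality is preserved for all $t \ge 0$. In particular $\rayc(y_t)$ is defined along the whole trajectory and genuinely measures the transient component.

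Next I would invoke Lemma~\ref{lemma:deriv}, transported to the normalized space via the isomorphism $x = \Wh f$. Part~(1) gives $\frac{d \norm{y_t}_2^2}{dt} = -2\langle y_t, \Lc y_t\rangle = -2\,\rayc(y_t)\,\norm{y_t}_2^2$, and part~(3) gives that $\rayc(y_t)$ is non-increasing along the trajectory, so $\rayc(y_t) \le \rayc(y_0) = \gamma$ for every $t \ge 0$. Combining, the exact identity becomes the differential inequality
\[ \frac{d}{dt}\,\norm{y_t}_2^2 \;\ge\; -2\gamma\,\norm{y_t}_2^2 . \]
As long as $\norm{y_t}_2 > 0$ we may divide and write $\frac{d}{dt}\ln\norm{y_t}_2^2 \ge -2\gamma$; integrating from $0$ to $t$ yields $\ln\norm{y_t}_2^2 - \ln\norm{y_0}_2^2 \ge -2\gamma t$, i.e. $\norm{y_t}_2 \ge e^{-\gamma t}\norm{y_0}_2$. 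Since $y_0 \ne 0$, this bound itself shows $\norm{y_t}_2$ never vanishes, so the division was legitimate throughout, and the proof is complete.

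The one technical point I expect to be the real (and only) obstacle is that all derivatives here are right-hand derivatives, and $\Lc$ is not smooth along the trajectory — it can change whenever the equivalence classes induced by $f_t$ merge or split. I would handle this exactly as the robustness lemma for the diffusion process does: $[0,\infty)$ decomposes into countably many intervals on each of which $\frac{df}{dt}$ is continuous, so the fundamental theorem of calculus applies on each such interval; one then chains the resulting inequalities, using continuity of $t \mapsto \norm{y_t}_2$ at the breakpoints, to obtain the bound on all of $[0,t]$. This bookkeeping around the non-smoothness, rather than the differential inequality itself, is where the care is needed.
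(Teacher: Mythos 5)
Your proof is correct and follows essentially the same route as the paper: apply Lemma~\ref{lemma:deriv}~(1) in the normalized space to get $\frac{d}{dt}\norm{y_t}_2^2 = -2\,\rayc(y_t)\,\norm{y_t}_2^2$, use Lemma~\ref{lemma:deriv}~(3) to bound $\rayc(y_t) \leq \rayc(y_0) = \gamma$, and integrate. Your extra remarks on orthogonality preservation and on piecing together the right-hand derivatives across the finitely many regime changes only make explicit points the paper leaves implicit.
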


\begin{proof}
By Lemma~\ref{lemma:deriv}~(1) interpreted for the normalized space,
we have

$\frac{d \norm{y_t}^2}{dt} = - 2 \rayc(y_t) \cdot \norm{y_t}^2
\geq - 2 \gamma \cdot \norm{y_t}^2$,
where the last inequality holds
because from Lemma~\ref{lemma:deriv}~(3),
$t \mapsto \rayc(y_t)$ is a decreasing funtion,
which implies that $\rayc(y_t) \leq \rayc(y_0) = \gamma$.

Integrating the above gives

$\norm{y_t}^2 \geq e^{-2 \gamma t} \cdot \norm{y_0}^2$.
\end{proof}

The next lemma shows that
given a vector in the normalized space
that is orthogonal to $x_1$,
a corresponding probability distribution
in the measure space that has large distance
from the stationary distribution $\vp^* := \frac{\W \vec{1}}{w(V)}$ can be constructed.

\begin{lemma}
\label{lemma:init_prob_dist}
Suppose $y \in \R^V$ is a non-zero vector in the
normalized space such that $y \perp x_1$ and
$\rayc(y) = \gamma$.  Then,
there exists $\widehat{y} \perp x_1$
such that $\rayc(\widehat{y}) \leq  4\gamma$
and $\vp_0 := \vp^* + \Wh \widehat{y}$ is a probability distribution 
(i.e., $\vp_0 \geq 0$), and $\norm{\Wh \widehat{y}}_1 \geq \frac{1}{2}$.
\end{lemma}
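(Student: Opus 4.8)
The plan is to turn the test vector $y$ into a nonnegative vector supported on a set of $w$-weight at most $w(V)/2$, rescale it to have total measure $1$, and take that rescaled vector to be $\vp_0$; the only nontrivial estimate is one application of Cauchy--Schwarz, and the constant $4$ will be a product of two separate factors of $2$. Throughout I pass to the weighted space via $f_0 := \Wmh y$, so that $\D_w(f_0) = \rayc(y) = \gamma$ and $f_0 \perp_w \vec{1}$; recall $\D_w$ has numerator $\sum_{e \in E} w_e \max_{u,v \in e}(f_u - f_v)^2$ and denominator $\|f\|_w^2$.

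First I would re-center $f_0$ at its weighted median rather than at $0$. Choose $m \in \R$ with $w(\{u : f_0(u) > m\}) \le w(V)/2$ and $w(\{u : f_0(u) < m\}) \le w(V)/2$ (such $m$ exists, e.g.\ $m = \inf\{t : w(\{f_0 \le t\}) \ge w(V)/2\}$), and set $\bar f := f_0 - m\vec{1}$. Shifting by a multiple of $\vec{1}$ does not change the numerator of $\D_w$, and since $f_0 \perp_w \vec{1}$ we have $\|\bar f\|_w^2 = \|f_0\|_w^2 + m^2 w(V) \ge \|f_0\|_w^2$, so $\D_w(\bar f) \le \gamma$. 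Let $g$ be whichever of $\bar f^{+}$, $\bar f^{-}$ has the larger $\|\cdot\|_w$. Since $t \mapsto t^{\pm}$ is $1$-Lipschitz, the numerator of $g$ is at most that of $\bar f$, while $\|g\|_w^2 \ge \tfrac12 \|\bar f\|_w^2$ by the choice of $g$; hence $\D_w(g) \le 2\gamma$. The crucial payoff of the median is that $\supp(g)$ lies inside $\{f_0 > m\}$ or inside $\{f_0 < m\}$, so $w(\supp g) \le w(V)/2$; also $g \ne 0$, since $g = 0$ would force $\bar f = 0$, hence $f_0 = m\vec{1}$, hence $m\,w(V) = \langle f_0,\vec{1}\rangle_w = 0$, so $f_0 = 0$ and $y = 0$, a contradiction. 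Finally rescale $g$ so that $\sum_u w_u g_u = 1$.

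Now set $\widehat f := g - \tfrac{1}{w(V)}\vec{1}$ and $\widehat y := \Wh \widehat f$, so that $\vp_0 := \vp^* + \W\widehat f = \W g$ (as $\vp^* = \tfrac{1}{w(V)}\W\vec{1}$). Then: (i) $\widehat y \perp x_1$, because $\langle \widehat y, x_1 \rangle = \langle \widehat f, \vec{1}\rangle_w = \sum_u w_u g_u - 1 = 0$; (ii) $\vp_0$ is a probability distribution, since $\vp_0 = \W g \ge 0$ and $\sum_u(\vp_0)_u = \sum_u w_u g_u = 1$; (iii) $\|\W\widehat f\|_1 = \sum_u w_u \bigl| g_u - \tfrac1{w(V)}\bigr| \ge \sum_{u \notin \supp g} w_u \cdot \tfrac1{w(V)} = \tfrac{w(V) - w(\supp g)}{w(V)} \ge \tfrac12$; and (iv) for the Rayleigh quotient, $\rayc(\widehat y) = \D_w(\widehat f)$, whose numerator equals that of $g$ (shift-invariance again) and whose denominator is $\|\widehat f\|_w^2 = \|g\|_w^2 - \tfrac1{w(V)}$, so $\D_w(\widehat f) = \D_w(g)\cdot \tfrac{\|g\|_w^2}{\|\widehat f\|_w^2}$; by Cauchy--Schwarz on the coordinates in $\supp g$, $1 = \bigl(\sum_u w_u g_u\bigr)^2 \le w(\supp g)\,\|g\|_w^2 \le \tfrac{w(V)}{2}\|g\|_w^2$, i.e.\ $\|g\|_w^2 \ge \tfrac2{w(V)}$, hence $\|\widehat f\|_w^2 \ge \tfrac12\|g\|_w^2$ and $\rayc(\widehat y) \le 2\gamma \cdot \tfrac{\|g\|_w^2}{\tfrac12\|g\|_w^2} = 4\gamma$.

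The real obstacle --- and the reason the naive recipe (truncate $y$ at $0$, re-project onto $x_1^{\perp}$) fails --- is the normalization: making $\vp_0$ an honest probability vector forces the shift $-\tfrac1{w(V)}\vec{1}$, which subtracts $\tfrac1{w(V)}$ from the denominator $\|g\|_w^2$ of the Rayleigh quotient, and this is harmless only when $\|g\|_w^2$ is bounded below by roughly $\tfrac2{w(V)}$. Centering at the weighted median is exactly what forces $w(\supp g) \le w(V)/2$, which via Cauchy--Schwarz delivers this lower bound on $\|g\|_w^2$ and, at the same time, the lower bound $\tfrac12$ on $\|\W\widehat f\|_1$. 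Everything else is bookkeeping; the only points needing a little care are the existence of the weighted median and the nonvanishing of the one-sided part $g$.
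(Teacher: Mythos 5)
Your proof is correct and follows essentially the same route as the paper's: shift by (a multiple of the all-ones vector corresponding to) a weighted median, keep one signed part—whose support automatically has weight at most $w(V)/2$—rescale it to a probability measure, and take its deviation from the stationary measure. The only difference is bookkeeping of the constant: the paper gets the factor $4$ in one step by bounding $\norm{\Pi z^+}\ge \tfrac12\norm{y}$, whereas you split it as $2\times 2$, paying one factor $2$ for keeping the larger signed part and another for the recentering, which you control with the Cauchy--Schwarz bound $\norm{g}_w^2 \ge 2/w(V)$; both accountings are valid.
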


\begin{proof}
One could try to consider $\vp^* +  \Wh (\alpha y)$ for some $\alpha \in \R$,
but the issue is that to ensure that every coordinate is non-negative,
the scalar $\alpha$ might need to have very small magnitude,
leading to a very small $\norm{\Wh (\alpha y)}_1$.

We construct the desired vector in several steps.  We first
consider $z := y + c x_1$ for an appropriate scalar $c \in \R$
such that both $w(\supp(z^+))$ and $w(\supp(z^-))$ are at most $\frac{1}{2} \cdot w(V)$,
where $z^+$ is obtained from $z$ by keeping only the positive coordinates,
and $z^-$ is obtained similarly from the negative coordinates. Observe that
we have $z = z^+ + z^-$.

We use $\Pi$ to denote the projection operator into the space orthogonal
to $x_1$ in the normalized space.
Then, we have $y = \Pi z = \Pi z^+ + \Pi z^-$.
Without loss of generality, by replacing $z$ with $-z$,
we can assume that $\norm{\Pi z^+} \geq \frac{1}{2} \norm{y}$.

Observe that $\langle \Pi z^+, \Lc \Pi z^+ \rangle
= \langle z^+, \Lc z^+ \rangle \leq \langle z, \Lc z \rangle
= \langle y, \Lc y \rangle$,

\noindent where the middle inequality follows because
$\langle z, \Lc z \rangle = \sum_{e \in E} w_e \max_{u,v \in e} (\frac{z_u}{\sqrt{w_u}} - \frac{z_v}{\sqrt{w_v}})^2$.

Hence, we have $\rayc(\Pi z^+) \leq 4 \rayc(y)$,
and we consider an appropriate
scaled vector $\widehat{y} :=  \Pi \widehat{z}$,
where $\widehat{z} = c z^+$ for some $c > 0$ such that
$\langle \vec{1}, \Wh \widehat{z} \rangle = 1$.

Hence, it follows that
$\widehat{y} = \widehat{z} - \frac{\langle \Wh \vec{1}, \widehat{z} \rangle}{w(V)} \cdot \Wh \vec{1}$,
which implies that $\Wh \widehat{y} = \Wh \widehat{z} - \vp^*$.

Therefore, we have $\vp_0 := \vp^* + \Wh \widehat{y} = \Wh \widehat{z} \geq 0$.

Moreover,
$\norm{\Wh \widehat{y}}_1 \geq \langle \vec{1}, \Wh \widehat{z} \rangle - \frac{w(\supp(z^+))}{w(V)} + \frac{w(\supp(z^-))}{w(V)} \geq \frac{1}{2}$,
where the last inequality follows from $w(\supp(z^+)) \leq \frac{1}{2} w(V)$.
\end{proof} 

\begin{proofof}{Theorem~\ref{thm:hyperwalk-lower}}
Using Lemma~\ref{lemma:init_prob_dist},
we can construct $\widehat{y}$ from $y$
such that $\widehat{y} \perp x_1$
and $\rayc(\widehat{y}) \leq 4 \gamma$.

Then, we can define the initial probability distribution
$\vp_0 := \vp^* + \Wh \widehat{y}$ in the measure space
with the corresponding $y_0 := \widehat{y}$ vector in
the normalized subspace orthogonal to~$x_1$.

By Lemma~\ref{lemma:lb_diff_norm},
at time $t$ of the difffusion process, we have
$\norm{y_t}_2 \geq e^{- 4 \gamma t} \cdot \norm{y_0}_2$.

Relating the norms of the measure space and the normalized space,
we have

$\norm{\vp_t - \vp^*}_1 \geq \sqrt{w_{\min}} \cdot \norm{y_t}_2
\geq \sqrt{w_{\min}} \cdot e^{- 4 \gamma t} \cdot \norm{y_0}_2
\geq \sqrt{\vp^*_{\min}} \cdot e^{- 4 \gamma t} \cdot \norm{\vp_0 - \vp^*}_1
\geq \sqrt{\vp^*_{\min}} \cdot e^{- 4 \gamma t} \cdot \frac{1}{2}$.

Hence, for $t \leq \frac{1}{4 \gamma} \ln \frac{\sqrt{\vp^*_{\min}}}{2 \delta}$, we have
$\norm{\vp_t - \vp^*}_1 \geq \delta$, as required.
\end{proofof}

\begin{remark}
Observe that we do not know how to efficiently find $x_2 \perp x_1$
to attain $\rayc(x_2) = \gamma_2$.  However, 
the approximation algorithm in Theorem~\ref{thm:hyper-eigs-alg_formal}
allows us to efficiently compute some $y$ such that
$\rayc(y) \leq O(\log r) \cdot \gamma_2$.

Hence, we can  compute a probability distribution $\vp_0$ in polynomial time such 
\[ \normo{\vp_0 - \vp^*} \geq \frac{1}{2} \qquad \textrm{and} \qquad \tmix{\vp_0} \geq 
\Omega(\frac{1}{\gamma_2 \log r} \log \frac{\vp^*_{\min}}{\delta}).
\]
\end{remark}

\subsection{Hypergraph Diameter }
\label{sec:hyper-diam}
In this section we prove Theorem~\ref{thm:hyper-diam}.

\begin{theorem}[Restatement of Theorem~\ref{thm:hyper-diam}]
Given a hypergraph $H = (V,E,w)$, 
its hop-diameter is 
\[ \diam(H) 
= \bigo{ \frac{\log N_w}{\eig_2} },\]
where $N_w := \max_{u \in V} \frac{w(V)}{w_u}$ and $\eig_2$
is the eigenvalue of the normalized Laplacian as
defined in Theorem~\ref{th:hyper_lap}.
\end{theorem}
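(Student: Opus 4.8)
The plan is to adapt the classical argument that a graph with normalized spectral gap $\gamma_2$ has diameter $\bigo{\log n/\gamma_2}$, with the discretized diffusion operator $\Mo := \I - \frac12\Lo$ on the measure space (and its normalized counterpart $\Mc := \I - \frac12\Lc$) playing the role of the lazy random-walk matrix. I would first record two cheap facts about one application of $\Mo$ to a measure $\vp$ (with $f := \Wm\vp$, $x := \Wmh\vp$). \textbf{(i)} Total measure is preserved: $\langle\vec1,\Mo\vp\rangle = \langle\vec1,\vp\rangle$, because $\langle\vec1,\Lo\vp\rangle = \langle\Wh\vec1,\Lc x\rangle = 0$ by Lemma~\ref{lemma:lap_proj}. \textbf{(ii)} The support spreads by at most one hyperedge-hop: if a vertex $w$ lies in no hyperedge meeting $\supp(\vp)$, then $\Delta_e(f) = 0$ for every $e\ni w$, so no flow passes through $w$, $(\Lo\vp)_w = 0$, and $(\Mo\vp)_w = \vp_w$. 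Hence, starting from the unit mass at $u$, $\vp_0 := \chi_{\{u\}}$, and iterating $\vp_{t+1} := \Mo\vp_t$, fact (ii) together with $\supp(\vp_0)=\{u\}$ gives, by induction, $\supp(\vp_t)\subseteq B_t(u)$, where $B_t(u)$ denotes the set of vertices at hop-distance at most $t$ from $u$.

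Second, I would prove a geometric contraction of the transient component. Put $x_1 := \Wh\vec1$, let $\Pi$ be the orthogonal projection onto $x_1^\perp$, and let $x^* := \frac1{w(V)}x_1$ be the stationary vector; by fact (i) and Lemma~\ref{lemma:lap_proj} (which gives $\Lc x\perp x_1$ and $\Lc(\alpha x_1 + \beta y)=\beta\Lc y$), $x^*$ is fixed by $\Mc$ and $\Pi\Mc x = \Mc\Pi x$, so the goal reduces to $\norm{\Mc\widehat x}_2^2 \le (1-\gamma_2/2)\norm{\widehat x}_2^2$ for $\widehat x\perp x_1$. Expanding, $\norm{\Mc\widehat x}_2^2 = \norm{\widehat x}_2^2 - \langle\widehat x,\Lc\widehat x\rangle + \frac14\norm{\Lc\widehat x}_2^2$, so the essential point is the inequality $\norm{\Lc x}_2^2 \le 2\langle x,\Lc x\rangle$ — the non-linear analogue of ``the normalized Laplacian has eigenvalues at most $2$''. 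I would prove it in the weighted space using the symmetric matrix $A_f$ of Figure~\ref{fig:hyper_diffusion}: one has $\Lo_w f = M_f f$ with $M_f := \I - \Wm A_f$, and $M_f$ is self-adjoint (as $\W - A_f$ is symmetric) and positive semidefinite for $\langle\cdot,\cdot\rangle_w$, since $\langle f, M_f f\rangle_w = f^\T(\W - A_f)f = \frac12\sum_{u\neq v}a_{uv}(f_u-f_v)^2\ge 0$ with all $a_{uv}\ge 0$; moreover $\langle f, M_f f\rangle_w \le 2\norm{f}_w^2$ using $(f_u-f_v)^2\le 2(f_u^2+f_v^2)$ and $\sum_{v\neq u}a_{uv}\le w_u$. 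Thus $M_f$ has spectrum in $[0,2]$, so $M_f^2\preceq 2M_f$, giving $\norm{\Lo_w f}_w^2 = \langle f, M_f^2 f\rangle_w \le 2\langle f,\Lo_w f\rangle_w$, i.e. $\norm{\Lc x}_2^2\le 2\langle x,\Lc x\rangle$. Combined with $\langle\widehat x,\Lc\widehat x\rangle = \Dc(\widehat x)\norm{\widehat x}_2^2\ge\gamma_2\norm{\widehat x}_2^2$ (definition of $\gamma_2$, Theorem~\ref{th:hyper_lap}) this yields the contraction, hence, by induction, $\norm{\Pi x_t}_2\le(1-\gamma_2/2)^t\norm{\Pi x_0}_2 \le (1-\gamma_2/2)^t w_{\min}^{-1/2}$.

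To finish, suppose for contradiction $\diam(H) > T$ with $T := \ceil{3\log N_w/\gamma_2}$, and pick $u,v$ at hop-distance exceeding $T$. On one hand $v\notin B_T(u)$, so $\vp_T(v) = 0$. On the other hand $\vp_T - \vp^* = \Wh\Pi x_T$, hence
\[ \Abs{\vp_T(v) - \vp^*(v)} \le \norm{\vp_T - \vp^*}_1 \le \sqrt{w(V)}\,\norm{\Pi x_T}_2 \le \sqrt{w(V)/w_{\min}}\,(1-\gamma_2/2)^T = \sqrt{N_w}\,(1-\gamma_2/2)^T, \]
which is strictly smaller than $\vp^*(v) = w_v/w(V) \ge 1/N_w$ once $(1-\gamma_2/2)^T < N_w^{-3/2}$; the latter holds for our $T$ because $-\log(1-\gamma_2/2)\ge\gamma_2/2$. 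Therefore $\vp_T(v) > 0$, contradicting $\vp_T(v) = 0$, and so $\diam(H) \le T = \bigo{\log N_w/\gamma_2}$.

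I expect the contraction step to be the main obstacle: verifying $\norm{\Lc x}_2^2\le 2\langle x,\Lc x\rangle$ for the \emph{non-linear} operator is precisely what makes $\Mo$ non-expanding on $x_1^\perp$ and thus legitimizes replacing the continuous diffusion by a one-hop-per-step discretization. The support-growth bound and the final counting are routine once that is in hand; and one should note that non-negativity of the $\vp_t$ is never used — only that the support bound forces $\vp_T(v) = 0$, which is incompatible with $\Abs{\vp_T(v)-\vp^*(v)} < \vp^*(v)$ since $\vp^*(v) > 0$.
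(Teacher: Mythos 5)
Your proposal is correct and is essentially the paper's own argument: the same discretized operator $\Mo = \I - \frac{1}{2}\Lo$, the same key contraction lemma (the paper's Lemma~\ref{lem:hyper-higher-norms}, proved exactly as you do via the induced symmetric matrix with spectrum in $[0,2]$, i.e.\ your inequality $\norm{\Lc x}_2^2 \le 2\langle x,\Lc x\rangle$), and the same stationary/transient decomposition together with the one-hop-per-step support growth; the paper only phrases the endgame as showing $\langle \chi_u, \Mc^l \chi_v\rangle > 0$ rather than your contradiction at a far vertex. One minor slip: the contraction controls the squared norm, so the iteration gives $\norm{\Pi x_t}_2 \le (1-\gamma_2/2)^{t/2}\norm{\Pi x_0}_2$ rather than $(1-\gamma_2/2)^{t}\norm{\Pi x_0}_2$, which only changes the constant in your choice of $T$ and not the $\bigo{\log N_w/\gamma_2}$ conclusion.
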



We start by defining the notion of discretized
diffusion operator.

\begin{definition}[Discretized Diffusion Operator]
Recalling that a diffusion process in the measure space
is defined in Section~\ref{sec:disp}
by $\frac{d \vp}{d t} = -\Lo \vp$,
we define a discretized diffusion operator on
the measure space by
$\Mo := \I - \frac{1}{2} \cdot \Lo$.

Moreover, using the isomorphism between
the measure space and the normalized space,
we define the corresponding operator
on the normalized space $\Mc 
:= \I - \frac{1}{2} \cdot \Lc$. 
%
\end{definition}

When we consider the diffusion process,
it is more convenient to think
in terms of the measure space.  However,
the normalized space is more convenient
for considering orthogonality.

Next, we bound the norm of the discretized diffusion operator.
\begin{lemma}
\label{lem:hyper-higher-norms}
For a vector $x$ in the
normalized space such that $x \perp x_1 := \Wh \vec{1}$,
we have $\norm{\Mc x}_2 \leq \sqrt{1 - \frac{\gamma_2}{2}}
\cdot \norm{x}_2$.
%
%
%
\end{lemma}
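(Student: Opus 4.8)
The plan is to fix the given vector $x$, note that although $\Lc$ is non‑linear the single vector $\Lc x$ is governed by one fixed symmetric matrix, and then estimate $\norm{\Mc x}_2$ spectrally. First I would set $f := \Wmh x$ and take $A_f$ to be the symmetric matrix with row sums equal to the vertex weights produced by the diffusion construction of Section~\ref{sec:disp}, so that $\Lo_w f = (\I - \Wm A_f)f$ (Lemma~\ref{lemma:define_lap}, Lemma~\ref{lemma:ray_disc}). Passing to the normalized space gives $\Lc x = (\I - B)x$ with $B := \Wmh A_f \Wmh$ symmetric, and $\langle x, \Lc x\rangle = f^\T(\W - A_f)f$.

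The first real step is to show $\norm{B}_{op}\le 1$, equivalently $-\W \preceq A_f \preceq \W$. The bound $A_f \preceq \W$ is immediate from Lemma~\ref{lemma:ray_disc}, since $g^\T(\W - A_f)g = \langle g, \Lo_w g\rangle_w = \D_w(g)\,\norm{g}_w^2 \ge 0$ for every $g$; and since $0 \le \D_w(g) \le 2$, also $g^\T(\W + A_f)g = (2 - \D_w(g))\,\norm{g}_w^2 \ge 0$, which gives $-\W \preceq A_f$. Conjugating by $\Wmh$ yields $\I \pm B \succeq 0$, hence $\I - B^2 \succeq 0$, hence $(\I - B)^2 \preceq 2(\I - B)$ (both sides are polynomials in $B$, and $(1-\mu)^2 \le 2(1-\mu)$ whenever $\mu \in [-1,1]$). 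Evaluating these quadratic forms at $x$ gives the key estimate
\[ \norm{\Lc x}_2^2 = \langle x, (\I - B)^2 x\rangle \le 2\,\langle x, (\I - B)x\rangle = 2\,\langle x, \Lc x\rangle, \]
which is the hypergraph analogue of the familiar $2$‑graph fact $\Lc \preceq 2\I$.

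Given this, the lemma follows by expanding $\norm{\Mc x}_2^2 = \norm{x - \tfrac12 \Lc x}_2^2 = \norm{x}_2^2 - \langle x, \Lc x\rangle + \tfrac14\norm{\Lc x}_2^2 \le \norm{x}_2^2 - \tfrac12\langle x, \Lc x\rangle$, and then using $x \perp x_1 := \Wh\vec{1}$ together with Lemma~\ref{lemma:ray_disc} and the definition of $\gamma_2$ to get $\langle x, \Lc x\rangle = \rayc(x)\,\norm{x}_2^2 \ge \gamma_2\,\norm{x}_2^2$; taking square roots gives $\norm{\Mc x}_2 \le \sqrt{1 - \gamma_2/2}\,\norm{x}_2$.

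The only real content is the bound $\norm{B}_{op} \le 1$, and within it the half $A_f \succeq -\W$, which is exactly where the a priori fact $\D_w \le 2$ (no eigenvalue of the hypergraph Laplacian exceeds $2$) is used; everything else is bookkeeping between the three isomorphic spaces. One point to watch is that $A_f$ need not be unique, but any admissible choice produces the same vector $Bx$ and the same value $f^\T A_f f$, so the argument is insensitive to the choice.
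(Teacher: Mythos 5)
Your argument follows essentially the same route as the paper's proof: fix $x$, observe that $\Lc$ acts on $x$ through one fixed symmetric matrix $A_f$ (the paper's $\widehat{A}$) that is entrywise non-negative with row sums $w_u$, conclude that the spectrum of $B := \Wmh A_f \Wmh$ lies in $[-1,1]$, and combine this with $\langle x,\Lc x\rangle \ge \gamma_2\norm{x}_2^2$ for $x \perp x_1$. The paper diagonalizes $\widehat{L}=\I-B$ and writes $\sum_i \lambda_i^2 c_i^2 \le \sum_i \lambda_i c_i^2$, while you use the operator inequality $(\I-B)^2 \preceq 2(\I-B)$ and expand $\norm{x-\tfrac12\Lc x}_2^2$; these are the same computation in different notation, and your final bookkeeping is correct.

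However, your justification of the key bound $-\W \preceq A_f \preceq \W$ is not valid as written. The identities $g^\T(\W - A_f)g = \langle g,\Lo_w g\rangle_w = \D_w(g)\,\norm{g}_w^2$ and $g^\T(\W+A_f)g = (2-\D_w(g))\,\norm{g}_w^2$ hold only for $g \in \spn(f)$: for any other $g$ the operator $\Lo_w$ acts through the matrix $A_g$, not $A_f$, and indeed $g^\T(\W - A_f)g = \tfrac12\sum_{u\neq v} a_{uv}(g_u-g_v)^2$ is in general strictly smaller than $\D_w(g)\,\norm{g}_w^2$. In particular, the appeal to the a priori fact $\D_w \le 2$ to obtain $-\W \preceq A_f$ does not go through with the fixed matrix $A_f$. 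The conclusions are nevertheless true, and the correct (standard) justification is exactly the property the paper invokes: $A_f$ is symmetric, entrywise non-negative (its off-diagonal entries are the $a_{uv}\ge 0$, and its diagonal entries $w_u - \sum_{v\neq u}a_{uv}$ are non-negative because the weight distributed by edges incident to $u$ to pairs containing $u$ totals at most $\sum_{e \ni u} w_e = w_u$), with row sums $w_u$. Using the row sums, $g^\T(\W - A_f)g = \tfrac12\sum_{u\neq v}a_{uv}(g_u-g_v)^2 \ge 0$ and $g^\T(\W + A_f)g = \tfrac12\sum_{u\neq v}a_{uv}(g_u+g_v)^2 + 2\sum_u A_f(u,u)g_u^2 \ge 0$. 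With that repair (two lines replacing the faulty ones), the rest of your proof is sound and matches the paper's.
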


\begin{proof}
Fix $x \perp x_1 := \Wh \vec{1}$.
Observe that $\Mc x = \widehat{M} x$
for some symmetric matrix
$\widehat{M} := \I - \frac{1}{2} \cdot \widehat{L}$,
where the matrix
$\widehat{L}$ depends on $x$
and has the form 
$\widehat{L} := \I - \Wmh \widehat{A} \Wmh$.
The precise definition of $\widehat{A}$
(depending on $x$) is given in Section~\ref{sec:disp},
but the important property
is that $\widehat{A}$ is a non-negative symmetric
matrix such that sum of entries
in row $u$ is $w_u$.

Standard spectral graph theory and linear algebra
state that $\R^V$ has a basis
consisting of orthonormal eigenvectors
$\{v_1, v_2, \ldots, v_n\}$ of $\widehat{L}$,
whose eigenvalues are in $[0,2]$.
Hence, the matrix $\widehat{M}$
has the same eigenvectors;
suppose the eigenvalue of $v_i$ is $\lambda_i \in [0,1]$.

We write $x := \sum_{i=1}^n c_i v_i$ for some real $c_i$'s.
Then,
we have $\norm{\Mc x}_2^2 = \sum_i \lambda_i^2 c_i^2
\leq \sum_i \lambda_i c_i^2 = \langle x, \Mc x \rangle
= \langle x,  x \rangle - \frac{1}{2} \langle x, \Lc x \rangle
\leq (1 - \frac{\gamma_2}{2}) \norm{x}_2^2$,

where the last
inequality follows from 
$\langle x, \Lc x \rangle \geq \gamma_2 \norm{x}_2^2$,
because of the definition of $\gamma_2$ and $x \perp x_1$.

Hence, the result follows.
%
%
%
%
%
\end{proof}

\begin{proofof}{Theorem~\ref{thm:hyper-diam}}
The high level idea is based on the following observation.
Suppose $S$ is the support
of a non-negative vector $\vp$ in the measure space.
Then, applying the discretized diffusion operator $\Mo$
to $\vp$ has the effect of spreading the measure
on $S$ to vertices that are within one hop from $S$,
where two vertices $u$ and $v$ are within one hop
from each other if there is an edge $e$
that contains
both $u$ and $v$.

Therefore, to prove that a hypergraph
has hop-diameter at most $l$,
it suffices to show that,
starting from a measure vector $\vp$ whose support
consists of only one vertex,
applying the operator $\Mo$ to $\vp$ for $l$ times
spreads the support to all vertices.
Since we consider orthogonal projection,
it will be more convenient
to perform the calculation in the normalized space.

Given a vertex $u \in V$,
denote $\chi_u \in \R^V$
as the corresponding characteristic unit vector
in the normalized space.  
The goal is to show that if $l$ is large enough,
then for all vertices $u$ and $v$,
we have $\langle \chi_u, \Mc^l (\chi_v) \rangle > 0$.

We use $\Pi$ to denote the projection operator
into the subspace that is orthogonal to $x_1 := \Wh \vec{1}$.
Then, we have
$\chi_u = \frac{\sqrt{w_u}}{w(V)} \cdot x_1 + \Pi \chi_u$.

Lemma~\ref{lemma:lap_proj}
implies that
for all $x$, $\Mc(x) \perp x_1$,
and for all real $\alpha$, $\Mc(\alpha x_1 + x) = \alpha x_1 + \Mc(x)$.

Hence, we have
$\langle \chi_u, \Mc^l \chi_v \rangle
= \frac{\sqrt{w_u w_v}}{w(V)} + \langle \Pi \chi_u, \Mc^l (\Pi \chi_v) \rangle$.
Observe that the first term $\frac{\sqrt{w_u w_v}}{w(V)} \geq \frac{1}{N_w}$,
where $N_w := \max_{u \in V} \frac{w(V)}{w_u}$.

For the second term,
we have 
$\langle \Pi \chi_u, \Mc^l (\Pi \chi_v) \rangle
\leq \norm{\Pi \chi_u}_2 \cdot \norm{\Mc^l (\Pi \chi_v)}_2
\leq (1 - \frac{\gamma_2}{2})^{l/2}$,
where the first inequality
follows from Cauchy-Schwartz
and the second inequality follows
from applying Lemma~\ref{lem:hyper-higher-norms} for $l$ times.

Hence, for $l$ larger than $\frac{2 \log N_w}{\log \frac{1}{1 - \frac{\gamma_2}{2}}}
= \bigo{\frac{\log N_w}{\gamma_2}}$,
we have $\langle \Pi \chi_u, \Mc^l (\Pi \chi_v) \rangle > 0$, as required.

\end{proofof}

\section{Cheeger Inequalities for Hypergraphs}
\label{sec:cheeger}

In this section, we generalize the Cheeger inequalities
to hypergraphs.  For the basic version,
we relate the expansion of a hypergraph
with the eigenvalue $\gamma_2$ of
the Laplacian $\Lc$ defined in Section~\ref{sec:laplacian}.
However, at the moment, we cannot exploit
the higher order spectral properties of $\Lc$.
Instead, we achieve higher order Cheeger inequalities
in terms of the orthogonal minimaximizers defined in
Section~\ref{sec:higher-cheeger}.

\subsection{Basic Cheeger Inequalities for Hypergraphs}
\label{sec:basic_cheeger}

We prove the basic Cheeger inequalities for hypergraphs.

\begin{theorem}[Restatement of Theorem~\ref{thm:hyper-cheeger}]
\label{thm2:hyper-cheeger}
Given an edge-weighted hypergraph $H$, we have:
\[ \frac{\lh}{2} \leq \phi_H \leq \lh +  2 \sqrt{\frac{\lh}{\rmin}}
 \leq 2 \sqrt{ \lh},\]
where $\phi_H$ is the hypergraph expansion
and $\gamma_2$ is the eignenvalue of $\Lc$ as in
Theorem~\ref{th:hyper_lap}.
\end{theorem}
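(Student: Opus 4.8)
The plan is to prove the two outer inequalities separately; the inner chain is what carries the hypergraph content. For the lower bound $\gamma_2/2 \le \phi_H$ I would use the usual test vector. Given $\emptyset \subsetneq S \subsetneq V$, let $S'$ be whichever of $S,V\setminus S$ has smaller weight, so $w(S')\le w(V)/2$, and set $f := \chi_{S'} - \frac{w(S')}{w(V)}\vec{1}$ in the weighted space. Adding a constant to every coordinate changes no difference $f_u-f_v$, so the numerator of $\D_w(f)$ equals $w(\partial S')=w(\partial S)$, while a direct computation gives $\langle f,\vec{1}\rangle_w=0$ and $\norm{f}_w^2 = \frac{w(S')\,w(V\setminus S')}{w(V)} \ge \frac{w(S')}{2}$. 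Hence $\gamma_2 \le \D_w(f) \le \frac{2\,w(\partial S)}{w(S')} = 2\phi(S') \le 2\max\{\phi(S),\phi(V\setminus S)\}$, and minimizing over $S$ yields $\gamma_2 \le 2\phi_H$.

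For the upper bound I would run a Cheeger-style rounding. By Theorem~\ref{th:hyper_lap} pick a minimizer $x_2$ with $\Dc(x_2)=\gamma_2$ and move to the weighted space, $g := \Wmh x_2$, so $g\perp_w\vec{1}$ and $\D_w(g)=\gamma_2$. Shift $g$ by the constant $c$ that makes both $\{u: g_u>c\}$ and $\{u: g_u<c\}$ have weight at most $w(V)/2$ (a weighted median): the numerator of $\D_w$ is unchanged, while $\norm{g-c\vec{1}}_w^2 = \norm{g}_w^2 + c^2 w(V) \ge \norm{g}_w^2$ because $g\perp_w\vec{1}$, so the shifted vector still has discrepancy ratio at most $\gamma_2$. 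Writing it as $g^+ + g^-$ (positive and negative parts), one checks on each hyperedge $e$ that $\max_{u,v\in e}(g^+_u-g^+_v)^2 + \max_{u,v\in e}(g^-_u-g^-_v)^2 \le \max_{u,v\in e}(g_u-g_v)^2$ — an equality if $e$ lies on one side of $0$, and the inequality $a^2+b^2\le(a-b)^2$ for $ab\le 0$ when $e$ straddles $0$ (with $a=\max_{u\in e}g_u$, $b=\min_{u\in e}g_u$). Summing against the edge weights and using the disjointness of supports shows that, after rescaling and possibly negating, one of $g^+, g^-$ is a vector $h\ge 0$ with $\D_w(h)\le\gamma_2$ and $w(\supp h)\le w(V)/2$.

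The core step is the threshold sweep on $h$. Set $M := \max_u h_u^2$, pick a random threshold $t$ with $t^2$ uniform on $[0,M]$, and let $S_t := \{u : h_u>t\}$; then $\Pr[u\in S_t] = h_u^2/M$, so $\E[w(S_t)] = \norm{h}_w^2/M$, and writing $a_e := \max_{u\in e}h_u \ge b_e := \min_{u\in e}h_u \ge 0$ we get $\Pr[e\in\partial S_t] = (a_e^2-b_e^2)/M$. The estimate that matters is $\sum_e w_e(a_e^2-b_e^2)$: using the decomposition $a_e^2-b_e^2 = (a_e-b_e)^2 + 2b_e(a_e-b_e)$, the first part sums to $\sum_e w_e\max_{u,v\in e}(h_u-h_v)^2 = \D_w(h)\norm{h}_w^2 \le \gamma_2\norm{h}_w^2$, while for the cross term Cauchy--Schwarz together with $b_e^2 = \min_{u\in e}h_u^2 \le \frac{1}{|e|}\sum_{u\in e}h_u^2 \le \frac{1}{\rmin}\sum_{u\in e}h_u^2$ (and $\sum_e w_e\sum_{u\in e}h_u^2 = \sum_u w_u h_u^2 = \norm{h}_w^2$) gives $2\sum_e w_e b_e(a_e-b_e) \le 2\sqrt{\norm{h}_w^2/\rmin}\cdot\sqrt{\gamma_2\norm{h}_w^2} = 2\sqrt{\gamma_2/\rmin}\,\norm{h}_w^2$. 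Hence $\E[w(\partial S_t)] \le (\gamma_2 + 2\sqrt{\gamma_2/\rmin})\,\E[w(S_t)]$, so a standard averaging argument over thresholds produces a nonempty $S := S_t$ with $w(S)\le w(\supp h)\le w(V)/2$ and $\phi(S) = w(\partial S)/w(S) \le \gamma_2 + 2\sqrt{\gamma_2/\rmin}$, giving $\phi_H \le \gamma_2 + 2\sqrt{\gamma_2/\rmin}$; the final inequality $\gamma_2 + 2\sqrt{\gamma_2/\rmin} \le 2\sqrt{\gamma_2}$ is then an elementary manipulation using $\rmin\ge 2$ (and when $\gamma_2$ is large one uses instead the trivial bound $\phi_H\le 1 \le 2\sqrt{\gamma_2}$). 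One could also obtain the cruder $\bigo{\sqrt{\gamma_2}}$ bound directly from the sweep-rounding proposition.

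The main obstacle is exactly this hypergraph-specific rounding: unlike a graph edge, whose cut probability in a sweep is controlled by two endpoint values, a hyperedge contributes $a_e^2-b_e^2$ depending on \emph{both} its largest and smallest values, and a naive Cauchy--Schwarz split loses an extra constant and fails to recover the graph bound at $\rmin=2$. The trick is the decomposition $a_e^2-b_e^2 = (a_e-b_e)^2 + 2b_e(a_e-b_e)$, which isolates a Dirichlet-energy term (bounded by $\gamma_2\norm{h}_w^2$) from a cross term in which the averaging bound $\min_{u\in e}h_u^2 \le \frac{1}{\rmin}\sum_{u\in e}h_u^2$ supplies the $\rmin$-improvement (so large hyperedges push the bound toward $\gamma_2$ itself). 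A secondary point to get right is the subadditivity-of-discrepancy lemma for the $\pm$ parts, whose proof is precisely the case analysis of whether a hyperedge stays on one side of $0$ or straddles it.
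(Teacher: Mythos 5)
Your proposal is correct and follows essentially the same route as the paper: the indicator test vector for the lower bound, a weighted-median shift plus positive/negative-part splitting, and the decomposition $a_e^2-b_e^2=(a_e-b_e)^2+2b_e(a_e-b_e)$ with Cauchy--Schwarz and $\min_{u\in e}h_u^2\le\frac{1}{\rmin}\sum_{u\in e}h_u^2$, followed by the same case analysis to reach $2\sqrt{\gamma_2}$. The only difference is presentational: the paper packages the threshold step as a deterministic sweep cut applied to $h^2$ (Proposition~\ref{prop:hyper-1d}), whereas you phrase it as a random threshold $t$ with $t^2$ uniform, which is the identical averaging argument.
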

Towards proving this theorem, 
we first show that a {\em good} line-embedding of the hypergraph  
suffices to upper bound the expansion.

\begin{proposition}
\label{prop:hyper-1d}
Let $H = (V,E,w)$ be a hypergraph with edge weights $w : E \to \R^+$  
 and let $f \in \R_+^{V}$ be a non-zero vector. 
Then, there exists a set $S \subseteq {\sf supp}(f)$ such that 
\[ \phi(S) \leq \frac{\sum_{e \in E} w_e \max_{u,v \in e} \Abs{f_u - f_v} }{  \sum_u w_u f_u } . \]
\end{proposition}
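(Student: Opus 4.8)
The plan is to use a randomized threshold (``sweep cut'') argument, exactly as in the classical proof of Cheeger's inequality. Since $f \in \R_+^V$ is non-zero, put $f_{\max} := \max_{v \in V} f_v > 0$ and draw a threshold $t$ uniformly at random from the interval $[0, f_{\max})$. Define the level set $S_t := \{ v \in V : f_v > t \}$. Two facts are immediate and will matter: $S_t \subseteq \supp(f)$ for every $t$, and $S_t$ always contains a vertex attaining the value $f_{\max}$, so $S_t \neq \emptyset$ and $w(S_t) > 0$ for all $t \in [0, f_{\max})$.

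Next I would compute the two relevant expectations. For a vertex $v$ we have $\Pr[v \in S_t] = \Pr[t < f_v] = f_v / f_{\max}$, hence $\E[w(S_t)] = \tfrac{1}{f_{\max}} \sum_{v} w_v f_v$. For a hyperedge $e$, it is cut by $S_t$ exactly when $t$ lies in the half-open interval $[\min_{u \in e} f_u,\ \max_{u \in e} f_u)$, whose length is $\max_{u,v \in e} \abs{f_u - f_v}$; therefore $\Pr[e \in \partial S_t] = \tfrac{1}{f_{\max}} \max_{u,v \in e} \abs{f_u - f_v}$ and $\E[w(\partial S_t)] = \tfrac{1}{f_{\max}} \sum_{e \in E} w_e \max_{u,v \in e} \abs{f_u - f_v}$. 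The common factor $1/f_{\max}$ cancels, so the ratio $\E[w(\partial S_t)] / \E[w(S_t)]$ equals precisely the right-hand side of the claimed bound.

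Finally I would pass from this ratio of expectations to a single good set. Writing $c$ for the right-hand side, we have $\E\big[\, w(\partial S_t) - c\, w(S_t)\,\big] = 0$, so there is a positive-measure set of thresholds $t$ with $w(\partial S_t) - c\, w(S_t) \le 0$; since $w(S_t) > 0$ this yields $\phi(S_t) = w(\partial S_t)/w(S_t) \le c$, and $S := S_t$ is the desired set. If one prefers a finitary argument, the uniform law on $[0, f_{\max})$ may be replaced by its discretization onto the finitely many distinct coordinate values of $f$ without changing anything.

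I do not anticipate a real obstacle here; the argument is entirely standard. The two points that merely require a little care are (i) restricting the threshold to $[0, f_{\max})$ so that $S_t$ is never empty, which is exactly what lets us move from a ratio of expectations to a single set with a strictly positive denominator, and (ii) the elementary identity that a hyperedge $e$ is cut precisely on the interval between $\min_{u\in e} f_u$ and $\max_{u\in e} f_u$, which converts the ``$\max$ over $e$'' appearing in the numerator into the cut probability. Everything else is bookkeeping. Downstream, this proposition is applied (to a truncated, shifted coordinate of a minimizer attaining $\gamma_2$) together with a Cauchy--Schwarz step relating $\sum_{e \in E} w_e \max_{u,v\in e}\abs{f_u - f_v}$ to the discrepancy ratio, which is where the $\sqrt{\gamma_2/\rmin}$ term in Theorem~\ref{thm2:hyper-cheeger} comes from.
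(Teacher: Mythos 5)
Your proof is correct and is essentially the paper's own argument: the paper performs the same sweep-cut construction, phrased as integrals of the threshold indicators $F_r(f_v)$ over $r$ and an averaging step on the ratio of integrals, which is just the deterministic form of your random-threshold computation of $\E[w(\partial S_t)]$ and $\E[w(S_t)]$. The two points you flag (nonemptiness of the level set and the cut-interval identity giving the max-discrepancy in the numerator) are exactly the observations the paper relies on as well.
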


\begin{proof}
The proof is similar to the proof of the 
corresponding statement for vertex expansion in graphs~\cite{lrv13}.
Observe that in the result, the upper bound on the right hand side
does not change if $f$ is multiplied by a positive scalar.
Hence, we can assume, without loss of generality, 
that $f \in [0,1]^V$.

We define a family of functions $\set{F_r : [0,1] \to \set{0,1} }_{r \in [0,1]}$ as follows.
\[ F_r(x) = \begin{cases} 1 & x \geq r \\ 0 & \textrm{otherwise}     \end{cases}. \]

For $r \geq 0$ and
a vector $f \in [0,1]^V$,
we consider the induced vector
$F_r(f) \in \{0,1\}^V$, whose coordinate corresponding
to $v$ is $F_r(f_v)$.
Let $S_r$ denote the support of the vector $F_r(f)$.
For any $a \in [0,1]$ we have
\begin{equation}
\label{eq:hyper-1d-helper1}
 \int_0^1 F_r(a)\, \dr  = a \mper 
\end{equation}

Now, observe that if $a - b \geq 0$, then $F_r(a) - F_r(b) \geq 0, \forall r \in [0,1]$; similarly,
if $a - b \leq 0$ then  $F_r(a) - F_r(b) \leq 0, \forall r \in [0,1]$. Therefore, 
\begin{equation}
\label{eq:hyper-1d-helper2}
 \int_0^1 \Abs{F_r(a) - F_r(b)} \dr = \Abs{ \int_0^1 F_r(a) \dr - \int_0^1 F_r(b) \dr }  =  \Abs{a-b} \mper 
\end{equation}
Also, for a hyperedge $e$, if $u = \arg\max_{u \in e} f_u$ and $v = \arg\min_{u \in e} f_u$, then 
\begin{equation}
\label{eq:hyper-1d-helper3}
 \Abs{F_r(f_u) - F_r(f_v)} \geq \Abs{ F_r(f_{u'}) - F_r(f_{v'}) }, \quad \forall r \in [0,1] \textrm{ and }   \forall u', v' \in e \mper 
\end{equation}

\noindent Therefore, we have
\begin{align*}
\frac{\int_0^1  \sum_e w_e \max_{u,v \in e} \Abs{F_r(f_u) - F_r(f_v)} \dr }{ \int_0^1 \sum_u w_u F_r(f_u) \dr }
& =   \frac{ \sum_e w_e \max_{u,v \in e} \int_0^1 \Abs{F_r(f_u) - F_r(f_v)} \dr }{ \int_0^1 \sum_u w_u F_r(f_u) \dr } 
	& \textrm{(Using \ref{eq:hyper-1d-helper3})}	\\
& =   \frac{ \sum_e w_e \max_{u,v \in e} \Abs{ \int_0^1 F_r(f_u) \dr - \int_0^1 F_r(f_v) \dr}}{  \sum_u w_u \int_0^1 F_r(f_u) \dr}  
	& \textrm{(Using \ref{eq:hyper-1d-helper2})}	\\
& =   \frac{ \sum_e w_e \max_{u,v \in e} \Abs{ f_u - f_v} }{  \sum_u w_u f_u }. 
	& \textrm{(Using \ref{eq:hyper-1d-helper1})}
\end{align*}

Therefore, there exists $r' \in [0,1]$ such that 
\[ \frac{  \sum_e w_e \max_{u,v \in e} \Abs{F_{r'}(f_u) - F_{r'}(f_v)} }{  \sum_u w_u F_{r'}(f_u) } 
 \leq \frac{ \sum_e w_e \max_{u,v \in e} \Abs{ f_u - f_v} }{  \sum_u w_u f_u} \mper  \]
 
\noindent Since $F_{r'}(\cdot)$ takes value in $\set{0,1}$, we have 
\[ \frac{  \sum_e w_e \max_{u,v \in e} \Abs{F_{r'}(f_u) - F_{r'}(f_v)} }{  \sum_{u \in V} w_u F_{r'}(f_u) }
 = \frac{  \sum_e w_e \cdot \Ind{ e \textrm{ is cut by } S_{r'} } }{ \sum_{u \in S_{r'}} w_u}  =  \phi(S_{r'}) \mper \] 
Therefore, 
\[ \phi(S_{r'}) \leq  \frac{ \sum_e w_e \max_{u,v \in e} \Abs{ f_u - f_v}}{  \sum_u w_u f_u} 
\qquad \textrm{and} \qquad S_{r'} \subseteq \supp(f) \mper   \]

\hfill
\end{proof}

\begin{proposition}
\label{prop:hyper-sweep-rounding}
Given an edge-weighted hypergraph $H = (V,E,w)$ and a non-zero vector $f \in \R^{V}$ such that $f \perp_w \vec{1}$, 
there exists a set $S \subset V$ such that $w(S) \leq \frac{w(V)}{2}$ and
\[ \phi(S) \leq \D_w(f) +  2 \sqrt{\frac{ \D_w(f)}{\rmin} } , \]

\noindent where $\D_w(f)=\frac{\sum_{e\in E} \w{e} \max_{u,v\in e}{(\f{u}-\f{v})^2}}{\sum_{u \in V} \w{u} \f{u}^2}$ and $\rmin = \min_{e \in E} |e|$.
\end{proposition}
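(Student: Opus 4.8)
The plan is to reduce to a non-negative vector whose support has small weight --- where Proposition~\ref{prop:hyper-1d} directly applies --- and then run the standard ``square-root'' Cheeger rounding, extracting the $\rmin$ term along the way. The first step is to use $f\perp_w\one$ to shift $f$ to a weighted median. Adding a constant to $f$ leaves $\sum_e w_e\max_{u,v\in e}(f_u-f_v)^2$ unchanged, while $\sum_u w_u(f_u-c)^2=\sum_u w_u f_u^2+c^2w(V)\ge\sum_u w_u f_u^2$; hence $\D_w(f-c\one)\le\D_w(f)$ for every $c$. Choosing $c$ to be a weighted median of $f$ and replacing $f$ by $f-c\one$ (which is still non-zero, since $f=c\one\perp_w\one$ would force $c=0$), both $\{v:f_v>0\}$ and $\{v:f_v<0\}$ have $w$-weight at most $w(V)/2$. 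Write $\gamma:=\D_w(f)$ for this shifted $f$.

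Next I would pass to a one-sided part. Write $f=f^+-f^-$ for the coordinate-wise positive and negative parts, which have disjoint supports, each of $w$-weight $\le w(V)/2$. For an edge $e$, put $M=\max_{u\in e}f_u$ and $m=\min_{u\in e}f_u$. A short case check on the signs of $M$ and $m$ gives $M^+-m^+\ge0$, $m^--M^-\ge0$, and $(M^+-m^+)+(m^--M^-)=M-m$, hence $\max_{u,v\in e}(f^+_u-f^+_v)^2+\max_{u,v\in e}(f^-_u-f^-_v)^2=(M^+-m^+)^2+(m^--M^-)^2\le(M-m)^2=\max_{u,v\in e}(f_u-f_v)^2$. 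Summing over $e$ and using $\sum_u w_u(f^+_u)^2+\sum_u w_u(f^-_u)^2=\sum_u w_u f_u^2$, the mediant inequality produces a sign for which $g\in\{f^+,f^-\}$ is non-zero, $w(\supp(g))\le w(V)/2$, and $\frac{\sum_e w_e\max_{u,v\in e}(g_u-g_v)^2}{\sum_u w_u g_u^2}\le\gamma$.

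Then I would apply Proposition~\ref{prop:hyper-1d} to the coordinate-wise square $g^2\in\R_+^V$, obtaining $S\subseteq\supp(g^2)=\supp(g)$ (so $w(S)\le w(V)/2$) with $\phi(S)\le\frac{\sum_e w_e\max_{u,v\in e}|g_u^2-g_v^2|}{\sum_u w_u g_u^2}$. Writing $a_e:=\max_{u\in e}g_u\ge b_e:=\min_{u\in e}g_u\ge0$, monotonicity of $t\mapsto t^2$ on $\R_+$ gives $\max_{u,v\in e}|g_u^2-g_v^2|=(a_e-b_e)(a_e+b_e)$. By Cauchy--Schwarz with weights $w_e$, and then the $\ell_2$-triangle inequality applied to $a_e+b_e=(a_e-b_e)+2b_e$, the numerator is at most $\big(\sum_e w_e(a_e-b_e)^2\big)^{1/2}\Big[\big(\sum_e w_e(a_e-b_e)^2\big)^{1/2}+2\big(\sum_e w_e b_e^2\big)^{1/2}\Big]$. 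Since $\sum_e w_e(a_e-b_e)^2=\sum_e w_e\max_{u,v\in e}(g_u-g_v)^2\le\gamma\sum_u w_u g_u^2$, and $b_e^2=\min_{u\in e}g_u^2\le\frac{1}{|e|}\sum_{u\in e}g_u^2\le\frac{1}{\rmin}\sum_{u\in e}g_u^2$ so that $\sum_e w_e b_e^2\le\frac{1}{\rmin}\sum_u w_u g_u^2$, dividing through by $\sum_u w_u g_u^2$ gives $\phi(S)\le\gamma+2\sqrt{\gamma/\rmin}=\D_w(f)+2\sqrt{\D_w(f)/\rmin}$, as desired.

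The main obstacle is the second step: showing that restricting to a single sign does not inflate the discrepancy ratio, i.e.\ the edge-wise inequality $\max_{u,v\in e}(f^+_u-f^+_v)^2+\max_{u,v\in e}(f^-_u-f^-_v)^2\le\max_{u,v\in e}(f_u-f_v)^2$ together with the mediant argument, plus the degenerate bookkeeping when $f^+$ or $f^-$ vanishes. Everything else is a routine chain of Cauchy--Schwarz, the triangle inequality, and the elementary bound $\min_{u\in e}g_u^2\le\frac{1}{|e|}\sum_{u\in e}g_u^2$.
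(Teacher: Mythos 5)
Your proposal is correct and follows essentially the same route as the paper's proof: shift by a weighted median using $f\perp_w\vec{1}$, pass to the better of the positive/negative parts via the per-edge inequality $(M^+-m^+)^2+(m^--M^-)^2\le(M-m)^2$, and apply Proposition~\ref{prop:hyper-1d} to the squared vector, extracting the $2\sqrt{\D_w(f)/\rmin}$ term with Cauchy--Schwarz and $\min_{u\in e}g_u^2\le\frac{1}{\rmin}\sum_{u\in e}g_u^2$. The only difference is cosmetic (you apply Cauchy--Schwarz and then the $\ell_2$ triangle inequality, while the paper expands $(a_e-b_e)(a_e+b_e)=(a_e-b_e)^2+2b_e(a_e-b_e)$ before Cauchy--Schwarz), yielding the identical bound.
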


\begin{proof}
Let $g = f + c \vec{1}$ for an appropriate $c \in \R$ such that both
$w(\supp(g^+))$ and $w(\supp(g^-))$ are at most $\frac{w(V)}{2}$.
For instance, sort the coordinates of $f$
such that $f(v_1) \leq f(v_2) \leq \cdots \leq f(v_n)$
and pick $c = f(v_i)$, where $i$ is the smallest
index such that $\sum_{j=1}^i w(v_j) \geq \frac{w(V)}{2}$.

Since $f \perp_w \vec{1}$,
it follows that $\langle g, \vec{1} \rangle_w = c \langle \vec{1}, \vec{1} \rangle_w$.
Hence,
we have $\langle f, f \rangle_w = 
\langle g, g \rangle_w - 2 c \langle g, \vec{1} \rangle_w +
c^2 \langle \vec{1}, \vec{1} \rangle_w
= \langle g, g \rangle_w -
c^2 \langle \vec{1}, \vec{1} \rangle_w
\leq \langle g, g \rangle_w$.

Therefore,  we have
\[ \D_w(f) =  \frac{ \sum_{e \in E} w_e \max_{u,v \in e} (g_u - g_v)^2 }{\langle f, f \rangle_w }
\geq \frac{ \sum_{e \in E} w_e \max_{u,v \in e} (g_u - g_v)^2 }{ \langle g, g \rangle_w }
= \D_w(g) \mper  \]

For any $a,b \in R$, we have 
\[ (a^+ - b^+)^2 + (a^- - b^-)^2 \leq  (a - b)^2  .\]

Therefore, we have 
\begin{eqnarray*}
\D_w(f)  & \geq & \D_w(g) = \frac{ \sum_{e \in E} w_e \max_{u,v \in e} (g_u - g_v)^2 }{ \sum_u w_u g_u^2 } \\ 
	& \geq & \frac{ \left( \sum_{e \in E} w_e \max_{u,v \in e} (g_u^+ - g_v^+)^2  \right)
	+ \left( \sum_{e \in E} w_e \max_{u,v \in e} (g_u^- - g_v^-)^2  \right) }{ \sum_u w_u (g_u^+)^2   + \sum_u w_u (g_u^-)^2 } \\ 
 & \geq & \min \set{ \frac{ \sum_{e \in E} w_e \max_{u,v \in e} (g_u^+ - g_v^+)^2 }{\sum_u w_u (g_u^+)^2} , 
	\frac{ \sum_{e \in E} w_e\max_{u,v \in e} (g_u^- - g_v^-)^2 }{\sum_u w_u (g_u^-)^2} } \\
	& = & \min \set{\D_w(g^+), \D_w(g^-)}.
\end{eqnarray*}

Let $h \in \set{g^+, g^-}$ be the vector corresponding the minimum in the previous inequality. Then, we have
\begin{align*}
 \sum_{e \in E} w_e \max_{u,v \in e} \Abs{h_u^2 - h_v^2} 
& = \sum_{e \in E} w_e \max_{u,v \in e} \Abs{h_u - h_v}(h_u + h_v) \\
& = \sum_{e \in E} w_e \max_{u,v \in e} (h_u - h_v)^2 +2 \sum_{e \in E} w_e \min_{u \in e} h_u \max_{u,v \in e} \Abs{h_u - h_v} \\  
& \leq \sum_{e \in E} w_e \max_{u,v \in e} (h_u - h_v)^2 
		+2 \sqrt{ \sum_{e \in E} w_e \max_{u,v \in e} (h_u - h_v)^2} \sqrt{\sum_{e \in E} w_e 
		\cdot \frac{ \sum_{u \in e} h_u^2}{\rmin} } \\
& = \sum_{e \in E} w_e \max_{u,v \in e} (h_u - h_v)^2
+ 2 \sqrt{ \sum_{e \in E} w_e \max_{u,v \in e} (h_u - h_v)^2} \sqrt{ \frac{ \sum_{u \in V} w_u h_u^2}{\rmin} },
\end{align*}

\noindent  where the inequality follows from the Cauchy-Schwarz's Inequality.

Using $\D_w(h) \leq \D_w(f)$,
\[ \frac{\sum_{e \in E} w_e \max_{u,v \in e} \Abs{h_u^2 - h_v^2}}{ \sum_u w_u h_u^2} 
\leq \D_w(h) + 2 \sqrt{ \frac{  \D_w(h)}{\rmin} } \leq  \D_w(f) +  2 \sqrt{\frac{ \D_w(f)}{\rmin} } \mper  \] 
Invoking Proposition~\ref{prop:hyper-1d} with vector $h^2$, we get that there exists a set $S \subset \supp \paren{h}$ such that
\[ \phi(S) \leq \D_w(f) + 2 \sqrt{\frac{ \D_w(f)}{\rmin} }  \qquad \textrm{and } \qquad  
w(S) \leq w(\supp \paren{h}) \leq \frac{w(V)}{2} \mper \]
\end{proof}

The ``hypergraph orthogonal separators'' construction due to \cite{lm14b}
can also be used to prove Proposition~\ref{prop:hyper-sweep-rounding}, albeit with a much larger
absolute constant in the bound on the expansion of the set $S$.

We are now ready to prove Theorem~\ref{thm2:hyper-cheeger}.

\begin{proofof}{Theorem~\ref{thm2:hyper-cheeger} (and \ref{thm:hyper-cheeger})}~
\begin{enumerate}
\item 
Let $S \subset V$ be any set such that $w(S) \leq \frac{w(V)}{2}$, and let $g \in \{0,1\}^V$ be the indicator vector 
of $S$. Let $f$ be the component of $g$ orthogonal to $\vec{1}$ (in the weighted space).
Then, $g = f + c \vec{1}$,
where $c = \frac{\langle g, \vec{1} \rangle_w}{\langle \vec{1}, \vec{1} \rangle_w}
= \frac{w(S)}{w(V)}$.

Moreover, as in the proof of Proposition~\ref{prop:hyper-sweep-rounding},
we have
$\langle f, f \rangle_w = \langle g, g \rangle_w - c^2 \langle \vec{1}, \vec{1} \rangle_w
= w(S) \cdot (1 - \frac{w(S)}{w(V)}) \geq \frac{w(S)}{2}$.

Then, since $g \neq \vec{1}$,
we have $0 \neq f \perp_{w} \vec{1}$ and
so we have 
\begin{align*}
\lh & \leq \D_w(f) 
 = \frac{ \sum_e w_e \max_{u,v \in e} (g_u - g_v)^2  }{ \langle f, f \rangle_w } \\
  & \leq \frac{ w(\partial S) }{ w(S)/2} = 2 \phi(S).
  \end{align*}
	
Since the choice of the set $S$ was arbitrary, we have $\frac{\lh}{2} \leq \phi_H$.

\item 
Invoking Proposition~\ref{prop:hyper-sweep-rounding} with the minimizer $h_2$ such that $\eig_2 = \D_w(h_2)$, we get that
$\phi_H \leq \lh +  2 \sqrt{\frac{\lh}{\rmin}}$.

For $\lh \leq \frac{1}{4}$, we observe that $\rmin \geq 2$ and
have $\phi_H \leq (\frac{1}{2} + \sqrt{2}) \cdot \sqrt{\lh}
\leq 2 \sqrt{\lh}$;
for $\lh > \frac{1}{4}$,
observe that we have $\phi_H \leq 1 \leq 2 \sqrt{\lh}$.

We remark that the constant 2 in the upper bound can be improved slightly by optimizing the threshold for $\lh$ in the above case analysis, and further considering cases
whether $\rmin = 2$ or $\rmin \geq 3$.
\end{enumerate}
\end{proofof}

\subsection{Higher Order Orthogonal Minimaximizers}
\label{sec:min}

As mentioned in Section~\ref{sec:higher-cheeger},
we do not yet know about higher order spectral properties
of the Laplacian $\Lc$.
Hence, to achieve results like higher order Cheeger-like inequalities,
we consider
the notion
of orthogonal minimaximizers
with respect to the discrepancy ratio.

In Section~\ref{sec:higher-cheeger},
the parameters $\xi_k$ and $\zeta_k$ are defined
in terms of the normalized space.
We can equivalently define them
in terms of the weighted space as
$\xi_k := \min_{f_1, \ldots, f_k}
\max_{i \in [k]} \D_w(f_i)$
and
$\zeta_k := \min_{f_1, \ldots, f_k} 
\max \{\D_w(f) : f \in \spn\{f_1, \ldots, f_k\}\}$,
where the minimum is over $k$ non-zero mutually
orthogonal vectors $f_1, f_2, \ldots, f_k$ in the weighted space.
The proofs shall work with either the normalized or the weighted space,
depending on which is more convenient.

We do not know an efficient method to find $k$ orthonormal vectors that achieve $\xi_k$ or $\zeta_k$.
In Section~\ref{sec:hyper-eigs-poly-alg},
we describe how approximations of these vectors can be obtained.

We prove Lemma~\ref{lemma:min} that compares
the parameters $\gamma_k$, $\xi_k$ and $\zeta_k$ by the following claims.

\begin{claim}
\label{claim:xg}
For $k \geq 1$, $\xi_k \leq \gamma_k$.
\end{claim}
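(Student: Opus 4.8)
The plan is to use the procedural minimizers themselves as an admissible tuple in the minimization defining $\xi_k$, and then to control the relevant maximum by showing that the sequence $\{\gamma_i\}$ is non-decreasing.

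First, fix any sequence $\{(x_i,\gamma_i)\}_{i \in [k]}$ produced by the procedural minimizers (Definition~\ref{defn:proc_min}). By construction every $x_i$ is non-zero, and $x_i \perp x_j$ for all $i \neq j$ since $x_j$ is chosen orthogonal to $\{x_i : i \in [j-1]\}$. Hence $x_1,\ldots,x_k$ is one feasible choice of $k$ mutually orthogonal non-zero vectors in the definition $\xi_k := \min_{x_1,\ldots,x_k}\max_{i \in [k]}\Dc(x_i)$, so $\xi_k \leq \max_{i \in [k]}\Dc(x_i) = \max_{i \in [k]}\gamma_i$, where we used $\gamma_i = \Dc(x_i)$.

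Second, I would show $\gamma_1 \leq \gamma_2 \leq \cdots \leq \gamma_k$. This follows from the nesting of feasible sets: for each $j$ with $2 \leq j \leq k$, the set $\{\vec{0} \neq x \perp \{x_i : i \in [j-1]\}\}$ over which $\gamma_j$ is minimized is contained in the set $\{\vec{0} \neq x \perp \{x_i : i \in [j-2]\}\}$ over which $\gamma_{j-1}$ is minimized (with the convention that for $j=2$ the latter set is all non-zero vectors, so that $\gamma_1 = 0 \leq \gamma_2$). Minimizing the fixed function $\Dc$ over a smaller set can only increase its value, so $\gamma_j \geq \gamma_{j-1}$; iterating gives $\max_{i \in [k]}\gamma_i = \gamma_k$. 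Combining with the previous paragraph yields $\xi_k \leq \gamma_k$.

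There is essentially no obstacle here; the only point to keep in mind is that for $i \geq 3$ the sequence $\{\gamma_i\}$ need not be unique (as noted before Theorem~\ref{th:hyper_lap}), but the argument above applies verbatim to \emph{any} sequence produced by the procedural minimizers, which is precisely the quantification in the statement of Lemma~\ref{lemma:min}.
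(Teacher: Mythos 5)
Your proposal is correct and follows essentially the same route as the paper: use the procedural minimizers $x_1,\ldots,x_k$ as a feasible tuple in the definition of $\xi_k$, and observe that $\max_{i\in[k]}\Dc(x_i)=\gamma_k$ because the orthogonality constraints are nested (the paper phrases the monotonicity by noting $x_k$ is itself a candidate in the minimization defining each $\gamma_i$, which is the same observation). No gap.
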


\begin{proof}
Suppose the procedure produces $\{\gamma_i: i \in [k]\}$,
which is attained by orthonormal vectors $X_k := \{x_i : i \in [k]\}$
in the normalized space.
Observe that $\max_{i \in [k]} \Dc(x_i) = \Dc(x_k) = \gamma_k$,
since $x_k$ could have been a candidate in the minimum for defining $\gamma_i$ because $x_k \perp x_j$, for all $j \in [k-1]$.

Since $X_k$ is a candidate for taking the minimum
over sets of $k$ orthonormal vectors in the definition of
$\xi_k$, it follows that $\xi_k \leq \gamma_k$.
\end{proof}

\begin{claim}
\label{claim:gz1}
For $k \geq 1$, $\gamma_k \leq \zeta_k$.
\end{claim}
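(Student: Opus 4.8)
The plan is a short dimension-counting argument, entirely linear-algebraic. Fix any sequence $\{\gamma_i\}$ produced by the procedural minimizers, with associated vectors $x_1, \ldots, x_{k-1}$ in the normalized space. Since each $x_i$ is non-zero and orthogonal to all the earlier ones, these vectors are linearly independent and span a $(k-1)$-dimensional subspace. On the other side, because $\Dc$ is continuous and scale-invariant (so the relevant maxima and minima can be taken over the compact unit sphere, as already noted), the minimum defining $\zeta_k$ is attained: choose $k$ mutually orthogonal non-zero vectors $y_1, \ldots, y_k$ with $\max\{\Dc(y) : y \in \spn\{y_1,\ldots,y_k\}\} = \zeta_k$, and set $W := \spn\{y_1,\ldots,y_k\}$, which has dimension $k$.

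The key step is to exhibit a single vector lying in $W$ that is also an admissible competitor for $\gamma_k$. Consider the linear map $W \to \R^{k-1}$ given by $x \mapsto (\langle x, x_1\rangle, \ldots, \langle x, x_{k-1}\rangle)$. Its domain has dimension $k$ and its target has dimension $k-1$, so the kernel is nontrivial; pick $\vec{0} \neq x \in W$ with $x \perp x_i$ for every $i \in [k-1]$.

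Then I would chain two inequalities. First, since $x \in W$, we have $\Dc(x) \le \max\{\Dc(y) : y \in W\} = \zeta_k$. Second, $x$ is a non-zero vector orthogonal to $\{x_i : i \in [k-1]\}$, hence one of the vectors over which the minimum defining $\gamma_k$ is taken, so $\gamma_k \le \Dc(x)$. Together these give $\gamma_k \le \zeta_k$. The case $k=1$ is degenerate and handled separately: $\gamma_1 = 0 = \Dc(\Wh\vec{1})$, while $\zeta_1 = \min_{\vec{0} \neq y}\Dc(y) = 0$ as well (witnessed by $\Wh\vec{1}$), so equality holds.

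I do not expect a real obstacle. The only points needing care are that the procedural minimizers are genuinely linearly independent — immediate from orthogonality and non-vanishing — and that the extremal vectors for $\zeta_k$ exist, which the excerpt already justifies via compactness; the rest is the standard ``$k$ dimensions meet $k-1$ constraints'' observation.
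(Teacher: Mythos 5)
Your argument is correct and is essentially the paper's own proof: both rest on the same dimension count showing that the $k$-dimensional span of the $\zeta_k$-optimal vectors must meet the subspace orthogonal to $x_1,\ldots,x_{k-1}$ nontrivially, after which $\gamma_k \le \Dc(x) \le \zeta_k$ follows for the intersecting vector $x$. The only cosmetic difference is that you invoke attainment of the minimum defining $\zeta_k$, whereas the paper runs the same inequality for an arbitrary orthonormal $k$-tuple and then takes the minimum; both are fine.
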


\begin{proof}
For $k=1$, $\gamma_1 = \zeta_1 = 0$.

For $k > 1$, suppose the $\{\gamma_i : i \in [k-1]\}$
have already been constructed with
the corresponding orthonormal minimizers $X_{k-1} := \{x_i : i \in [k-1]\}$.

Let $Y_k := \{y_i: i \in [k]\}$ be an arbitrary set of $k$ orthonormal
vectors.  Since the subspace orthogonal to $X_{k-1}$
has rank $n - k + 1$ and the span of $Y_k$
has rank $k$, there must be a non-zero $y \in \spn(Y_k) \cap X_{k-1}^\perp$.

Hence, it follows that $\gamma_k = \min_{\vec{0} \neq x \in X_{k-1}^\perp}
\Dc(x) \leq \max_{y \in \spn(Y_k)} \Dc(y)$.
Since this holds for any set $Y_k$ of $k$ orthonormal vectors,
the result follows.
\end{proof}

\begin{claim}
\label{claim:zx}
Given any $k$ orthogonal vectors $\{f_i : i \in [k]\}$ in the weighted space. We have,
$$
\zeta_k \leq k \max_{i\in[k]} \D_w(f_i).
$$
Moreover, if the $f_i$'s have disjoint support, we have
$$
\zeta_k \leq 2 \max_{i\in[k]} \D_w(f_i).
$$
\end{claim}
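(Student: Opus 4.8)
The plan is to bound $\zeta_k$ from above by exhibiting a concrete tuple of $k$ orthogonal vectors --- namely the given $\{f_i : i \in [k]\}$ themselves --- and then to control $\D_w(f)$ for every $f$ in their span. Since $\D_w$ is invariant under scaling, I may assume without loss of generality that $\|f_i\|_w = 1$ for each $i$; orthogonality then gives, for any $f = \sum_{i\in[k]} c_i f_i$, that $\|f\|_w^2 = \sum_{i\in[k]} c_i^2$. It remains to bound the numerator $\sum_{e\in E} w_e \max_{u,v\in e}(f_u - f_v)^2$. For this I introduce, for each edge $e$ and index $i$, the quantity $M_i(e) := \max_{u,v\in e}(f_i(u) - f_i(v))^2$, so that $\sum_{e\in E} w_e M_i(e) = \D_w(f_i)$.

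For the first bound, I would apply the Cauchy-Schwarz inequality to $f_u - f_v = \sum_{i} c_i(f_i(u) - f_i(v))$, obtaining $(f_u - f_v)^2 \le k \sum_i c_i^2 (f_i(u) - f_i(v))^2$ for every pair $u,v$; maximizing over $u,v\in e$ gives $\max_{u,v\in e}(f_u-f_v)^2 \le k \sum_i c_i^2 M_i(e)$. Summing against $w_e$, exchanging the order of summation, and using $\sum_{e} w_e M_i(e) = \D_w(f_i) \le \max_{j}\D_w(f_j)$, the numerator is at most $k\,(\max_j \D_w(f_j))\sum_i c_i^2 = k\,(\max_j \D_w(f_j))\,\|f\|_w^2$. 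Dividing by $\|f\|_w^2$ gives $\D_w(f) \le k\max_j \D_w(f_j)$, and taking the maximum over nonzero $f\in\spn\{f_i\}$ yields the claimed bound on $\zeta_k$.

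For the disjoint-support bound, the key observation is that if a hyperedge $e$ meets the supports of two distinct $f_i$ and $f_j$, then $e$ contains a vertex lying outside $\supp(f_i)$ (any vertex of $e$ in $\supp(f_j)$), whence $M_i(e) \ge f_i(u)^2$ for \emph{every} $u\in e$, and symmetrically for $j$. Fixing $e$ and the pair $u^*,v^*\in e$ attaining $\max_{u,v\in e}(f_u-f_v)^2$, I split into cases (treating a vertex in no support as having value zero): if $u^*, v^*$ lie in the support of the same $f_i$, then $(f_{u^*}-f_{v^*})^2 \le c_i^2 M_i(e) \le \sum_\ell c_\ell^2 M_\ell(e)$; otherwise they lie in the supports of distinct blocks $i\neq j$, and the observation above together with $(a-b)^2 \le 2a^2 + 2b^2$ gives $(f_{u^*}-f_{v^*})^2 \le 2c_i^2 f_i(u^*)^2 + 2c_j^2 f_j(v^*)^2 \le 2c_i^2 M_i(e) + 2c_j^2 M_j(e) \le 2\sum_\ell c_\ell^2 M_\ell(e)$. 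In all cases $\max_{u,v\in e}(f_u-f_v)^2 \le 2\sum_\ell c_\ell^2 M_\ell(e)$; summing against $w_e$ and proceeding exactly as in the previous paragraph gives $\D_w(f) \le 2\max_j \D_w(f_j)$.

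I expect the disjoint-support case to be the main obstacle, since it is the only place the argument is not a mechanical application of Cauchy-Schwarz: one has to notice that a hyperedge straddling two support blocks automatically carries a ``witness'' vertex of value zero for each block it touches, which is exactly what upgrades the naive per-block bound to the factor-$2$ bound. Once that observation is in place the case analysis is routine; I would also verify the degenerate situations (a vertex of $e$ lying in no support, or $\max_{u,v\in e}(f_u - f_v) = 0$), but these are immediate.
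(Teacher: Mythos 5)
Your proposal is correct and follows essentially the same route as the paper: exhibit the given orthogonal vectors as the candidate tuple, expand $f=\sum_i c_i f_i$, use Cauchy--Schwarz termwise for the factor $k$, and exploit that at most two blocks contribute to any difference when supports are disjoint to get the factor $2$. The only difference is that you spell out the case analysis (the ``witness'' vertex with value zero in the straddling case) that the paper asserts without detail; that verification is sound.
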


\begin{proof}
Here it will be convenient to consider
the equivalent discrepancy ratios for the
weighted space.

It suffices to show that
for any $h \in \spn(\{f_i : i \in [k]\})$, $\D_w(h) \leq k \max_{i \in [k]} \D_w(f_i)$.

Suppose for some scalars $\alpha_i$'s, 
$
h=\sum_{i \in [k]} \alpha_i f_i.
$

For $u,v \in V$ we have
\begin{align*}
(h(u)-h(v))^2 
&= (\sum_{i\in[k]} \alpha_i(f_i(u)-f_i(v)))^2 \\
&\leq k\sum_{i\in [k]} \alpha_i^2 (f_i(u)-f_i(v))^2, \label{eq:1}
\end{align*}

\noindent where the last inequality follows from Cauchy-Schwarz inequality.
In the case $f_i$'s have disjoint support, we have 
$$
(h(u)-h(v))^2 
\leq 2 \sum_{i\in [k]} \alpha_i^2 (f_i(u)-f_i(v))^2.
$$
For each $e\in E$ we have 
\begin{align*}
\max_{u,v\in e} (h(u)-h(v))^2 
& \leq \max_{u,v\in e} k \sum_{i\in [k]} \alpha_i^2 (f_i(u)-f_i(v))^2 \\
& \leq k \sum_{i\in [k]} \alpha_i^2 \max_{u,v\in e} (f_i(u)-f_i(v))^2.
\end{align*} 

\noindent  Therefore, we have
\begin{align*}
\D_w(h) &=  \frac{\sum_{e} \w{e} \max_{u,v\in e}{(h(u)-h(v))^2}}{\sum_{u \in V} \w{u} h(u)^2}\\
&\leq  \frac{k\sum_{i\in[k]} \alpha_i^2 \sum_{e} \w{e} \max_{u,v\in e}{(f_i(u)-f_i(v))^2}}{\sum_{i\in[k]} \alpha^2_i \sum_{u \in V} \w{u} f_i(u)^2} \\
&\leq k \max_{i \in [k]} \D_w(f_i),
\end{align*}
as required.
\end{proof}

\begin{claim}
\label{claim:gz2}
We have $\gamma_2 = \zeta_2$.
\end{claim}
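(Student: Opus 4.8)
The plan is to obtain the identity from Claim~\ref{claim:gz1}, which already gives $\gamma_2 \le \zeta_2$ upon setting $k = 2$, together with the reverse inequality $\zeta_2 \le \gamma_2$. Since $\zeta_2$ is defined as the minimum, over all pairs of non-zero mutually orthogonal vectors, of the largest discrepancy ratio attained on their span, it is enough to produce one such pair whose span has discrepancy ratio at most $\gamma_2$ everywhere.

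First I would take $x_1 := \Wh \vec{1}$ and let $x_2$ be any (procedural) minimizer attaining $\gamma_2 = \Dc(x_2) = \min_{\vec{0} \ne x \perp x_1} \Dc(x)$; this pair is non-zero and orthogonal by construction, and by Theorem~\ref{th:hyper_lap} the vector $x_2$ is a genuine eigenvector of the Laplacian, $\Lc x_2 = \gamma_2 x_2$. Now consider an arbitrary non-zero $x = \alpha x_1 + \beta x_2$ in $\spn\{x_1, x_2\}$. Using part~3 of Lemma~\ref{lemma:lap_proj}, namely $\Lc(\alpha x_1 + \beta x) = \beta\,\Lc(x)$, I get $\Lc x = \beta \gamma_2 x_2$, so by $x_1 \perp x_2$ and the identity $\rayc = \Dc$ from Lemma~\ref{lemma:ray_disc},
\[ \Dc(x) = \rayc(x) = \frac{\langle x, \Lc x\rangle}{\langle x, x\rangle} = \frac{\beta^2 \gamma_2 \norm{x_2}_2^2}{\alpha^2 \norm{x_1}_2^2 + \beta^2 \norm{x_2}_2^2} \le \gamma_2, \]
with the bound trivial when $\beta = 0$ since $\gamma_2 \ge 0$. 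Maximizing over $\vec{0} \ne x \in \spn\{x_1, x_2\}$ gives $\zeta_2 \le \gamma_2$, and combined with Claim~\ref{claim:gz1} this proves $\gamma_2 = \zeta_2$.

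I do not anticipate a genuine obstacle: the three ingredients — the eigenvector property of $x_2$ from Theorem~\ref{th:hyper_lap}, the homogeneity of $\Lc$ on the complement of $x_1$ from Lemma~\ref{lemma:lap_proj}, and the coincidence of the Rayleigh quotient with the discrepancy ratio — are all already available. The only care points are the degenerate case $\beta = 0$ and the fact that $\zeta_2$ is a minimum over witness pairs, so a single good witness suffices. The conceptual reason the argument works at $k = 2$ but not in general is exactly that $x_2$ is an honest eigenvector, so $\spn\{x_1, x_2\}$ is $\Lc$-invariant and the Rayleigh quotient on it is pinned by $\gamma_2$; Example~\ref{eg:gamma3} shows this invariance fails for the third procedural minimizer, which is precisely why one cannot hope for $\gamma_3 = \zeta_3$ in general.
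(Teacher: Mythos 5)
Your proof is correct, but it verifies the key inequality $\zeta_2 \le \gamma_2$ by a different mechanism than the paper. Both arguments use the same witness: the span of $\Wh\vec{1}$ and a minimizer attaining $\gamma_2$. The paper, however, never touches the operator $\Lc$ at this point: working in the weighted space with $g = af + b\vec{1}$, it observes that the numerator of $\D_w$ is unchanged by adding multiples of $\vec{1}$ (so it scales as $a^2$), while orthogonality $f \perp_w \vec{1}$ makes the cross term in the denominator vanish, so the denominator is at least $a^2\langle f,f\rangle_w$; this gives $\D_w(g)\le\gamma_2$ by a two-line computation that depends only on the definition of the discrepancy ratio. You instead route the bound through the spectral machinery: the eigenvector relation $\Lc x_2 = \gamma_2 x_2$ from Theorem~\ref{th:hyper_lap}, the homogeneity $\Lc(\alpha x_1+\beta x)=\beta\Lc(x)$ from Lemma~\ref{lemma:lap_proj}, and the identity $\rayc=\Dc$ from Lemma~\ref{lemma:ray_disc}. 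Since Theorem~\ref{th:hyper_lap} is established before this claim, there is no circularity, and your treatment of the degenerate case $\beta=0$ is fine. What each approach buys: the paper's computation is elementary and keeps Lemma~\ref{lemma:min} a statement purely about discrepancy ratios, independent of the diffusion construction, whereas your argument makes the conceptual reason transparent --- $\spn\{x_1,x_2\}$ is effectively $\Lc$-invariant because $x_2$ is an honest eigenvector, which is exactly the property that Example~\ref{eg:gamma3} shows breaks down for the third procedural minimizer, so neither argument can extend to $k=3$.
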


\begin{proof}
From Claim~\ref{claim:gz1}, we already have $\gamma_2 \leq \zeta_2$.  Hence, it suffices to show the other direction.
We shall consider the discrepancy ratio for the weighted space.

Suppose $f \perp_w \mathbf{1}$ attains
$\D_w(f)=\gamma_2$. Then, we have
\begin{align*}
\zeta_2 &\leq \max_{g=af+b\mathbf{1}} \frac{\sum_{e \in E} \w{e} \max_{u,v\in e}{(\g{u}-\g{v})^2}}{\sum_{v \in V} \w{v} \g{v}^2} \\
&= \max_{g=af+b\mathbf{1}} \frac{\sum_{e \in E} \w{e} \max_{u,v\in e}{a^2(\f{u}-\f{v})^2}}{\sum_{v \in V} \w{v} (a \f{v}+b)^2}\\
&= \max_{g=af+b\mathbf{1}} \frac{\sum_{e \in E} \w{e} \max_{u,v\in e}{a^2(\f{u}-\f{v})^2}}{\sum_{v \in V} \w{v}(a^2 \f{v}^2+b^2)+2ab\sum_{v \in V} \w{v} \f{v}} \\
&\leq \max_{g=af+b\mathbf{1}} \frac{\sum_{e \in E} \w{e} \max_{u,v\in e}{a^2(\f{u}-\f{v})^2}}{\sum_{v \in V} a^2 \w{v} \f{v}^2}=\gamma_2.
\end{align*} 
\end{proof}

\subsection{Small Set Expansion}
\label{sec:hyper-sse}

Even though we do not have an efficient method to 
generate $k$ orthonormal vectors that attain $\xi_k$.
As a warm up, we show that an approximation can still
give us a bound on the expansion of a set of size at most $O(\frac{n}{k})$.

\begin{theorem}[Formal Statement of \ref{thm:hyper-sse-informal}]
	\label{thm:hyper-sse}
	Suppose $H = (V,E,w)$ is a hypergraph, and 
	$f_1, f_2, \ldots, f_k$ are $k$ orthonormal vectors in the weighted space
	such that $\max_{s \in [k]} \D_w(f_s) \leq \xi$.
	Then, a random set $S \subset V$ can be constructed in polynomial time such that with
	$\Omega(1)$ probability,
		$\Abs{S} \leq \frac{24 \Abs{V}}{k}$  and
	\[ \phi(S) \leq C \min \{ \sqrt{r \log k}, \ratio  \} \cdot \sqrt{\xi},  \]
	where $C$ is an absolute constant and $r$ is the size of the largest hyperedge in $E$. 
\end{theorem}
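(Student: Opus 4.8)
The only information about $f_1,\dots,f_k$ that I will use is packaged in the embedding $v\mapsto U_v:=(f_1(v),\dots,f_k(v))\in\R^k$. Orthonormality in the weighted space, $\langle f_s,f_t\rangle_w=\delta_{st}$, says exactly that $\sum_{v\in V}w_v\,U_vU_v^\T=I_k$, i.e.\ the embedding is \emph{isotropic}; in particular $\sum_{v\in V}w_v\norm{U_v}^2=\operatorname{tr}(I_k)=k$. Since $\norm{U_u-U_v}^2=\sum_{s\in[k]}(f_s(u)-f_s(v))^2$ and the maximum of a sum is at most the sum of the maxima, the total hyperedge stretch satisfies
\[
\sum_{e\in E}w_e\max_{u,v\in e}\norm{U_u-U_v}^2\ \le\ \sum_{s\in[k]}\sum_{e\in E}w_e\max_{u,v\in e}(f_s(u)-f_s(v))^2\ =\ \sum_{s\in[k]}\D_w(f_s)\ \le\ k\xi .
\]
Isotropy is what will make the rounded set small; the stretch bound is what will keep its hyperedge boundary small.

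I would first discard short vectors: as $\sum_v w_v\norm{U_v}^2=k$, the set $V_{\mathrm{big}}:=\{v:\norm{U_v}^2\ge k/(2w(V))\}$ carries mass $\sum_{v\in V_{\mathrm{big}}}w_v\norm{U_v}^2\ge k/2$, and on it the unit vectors $\bar u_v:=U_v/\norm{U_v}$ still obey $\sum_{v\in V_{\mathrm{big}}}w_v\,\bar u_v\bar u_v^\T\preceq \tfrac{2w(V)}{k}\,I_k$. Then I would run two independent randomized roundings and output the better set. \emph{For the $\bigo{\sqrt{r\xi\log k}}$ bound}, reduce to $2$-graphs: replace each hyperedge $e$ by a clique on $e$ with every clique edge $\{u,v\}$ given weight $w_e/(\Abs e-1)$. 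This preserves every vertex weight exactly, charges each hyperedge cut by any $S$ at least $w_e$ in the clique boundary (because $a(\Abs e-a)\ge \Abs e-1$ for $1\le a\le \Abs e-1$), and inflates the discrepancy of any vector by a factor at most $r/2$ (since $\binom{\Abs e}{2}/(\Abs e-1)\le r/2$). Applying the small-set expansion bound for $2$-graphs (the $r=2$ case, cf.~\cite{lrtv12}) to the clique graph yields $S$ with $\Abs S=\bigo{\Abs V/k}$ and clique expansion $\bigo{\sqrt{r\xi\log k}}$, hence $\phi(S)=\bigo{\sqrt{r\xi\log k}}$.

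\emph{For the $\bigo{\ratio}\cdot\sqrt\xi$ bound}, feed $\{U_v\}$ into a hypergraph orthogonal separator of the type built in~\cite{lm14b}: a random $S$ with $\Pr[v\in S]=\alpha$ for a tunable $\alpha=\Theta(1/k)$ (so $\Ex{\Abs S}\le 6\Abs V/k$ by isotropy), whose probability of splitting a hyperedge $e$ is $\bigo{\alpha\beta\max_{u,v\in e}\norm{U_u-U_v}}$ with distortion $\beta=\bigo{\ratio}$. In $\beta$, the $\sqrt{\log r}$ comes from a union bound over the $\binom r2$ pairs inside $e$ against Gaussian concentration (the worst pair controls the split probability); the $k\log k$ is the separator's distortion in the absence of an $\ell_2^2$ triangle inequality; and the $\log\log k$ comes from an Arora--Lee--Naor measured-descent step applied to the Euclidean metric $\norm{U_u-U_v}$. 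Bounding $\Ex{w(\partial S)}$ by $\beta$ times the max-of-absolute-values stretch $\sum_e w_e\max_{u,v\in e}\norm{U_u-U_v}$, a Cauchy--Schwarz step relating this to the max-of-squares stretch $\le k\xi$ produces the square root of $\xi$ (exactly as the $\sqrt{\D_w(f)/\rmin}$ term arises in Proposition~\ref{prop:hyper-sweep-rounding}); equivalently, the separator hands over a nonnegative one-dimensional test vector to which Proposition~\ref{prop:hyper-1d} applies directly. A truncation-plus-Markov step — condition on the $\Omega(1)$-probability event that $\Abs S\le 24\Abs V/k$ and $w(S)$ is not too small — then produces, with $\Omega(1)$ probability, a set that is simultaneously small and of expansion $\bigo{\beta\sqrt\xi}$; taking the minimum with the clique bound finishes the proof.

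The main obstacle is the hyperedge-boundary estimate in the second rounding: because a hyperedge is cut the moment \emph{any} pair of its vertices is separated, $\Ex{w(\partial S)}$ is not a sum over pairs and the pairwise $\ell_2^2$-distances cannot be handled as in the graph case. Simultaneously (a) bounding each hyperedge's split probability by its worst pair — the source of the $\sqrt{\log r}$ — and (b) keeping the separator's spreading guarantee $\Ex{\Abs S}=\bigo{\Abs V/k}$ intact, i.e.\ making one random set good for the $\max$-type objective and for the size constraint at once, is the technical heart of the argument, and this is precisely what the hypergraph orthogonal separator construction of~\cite{lm14b} is engineered to deliver.
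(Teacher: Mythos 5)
Your route is the same as the paper's: the $\bigo{\sqrt{r\xi\log k}}$ bound via replacing each hyperedge by a clique with pair weights $w_e/(\Abs{e}-1)$ and invoking the $2$-graph result is exactly how the paper argues that part, and your second rounding is the paper's Algorithm~\ref{alg:hyper-sse} in outline (spectral embedding $u_i=(f_1(i),\dots,f_k(i))$, normalization, the orthogonal separator of Fact~\ref{lem:gen-orth-sep} with $\tau=k$, then a sweep cut via Proposition~\ref{prop:hyper-1d}).

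However, two quantitative steps in your second prong are asserted rather than argued, and as written they do not go through. (i) The denominator: a ``truncation-plus-Markov step'' cannot certify that the separator captures constant measure, since Markov only gives upper-tail bounds; moreover the quantity you need to lower-bound is not $w(S)$ but $\sum_{i:\tilde u_i\in\widehat S} w_i\norm{u_i}^2$, the denominator of the sweep-cut ratio for the test vector $X_i=\norm{u_i}^2\,\Ind{\tilde u_i\in\widehat S}$. The paper's Lemma~\ref{lem:hyper-sse-denom} obtains this with probability at least $1/12$ via Paley--Zygmund, and the second-moment bound there uses both the separator's second property (vertices with normalized inner product at most $\beta$ land together with probability at most $\alpha/\tau$) and the isotropy identity $\sum_i w_i\inprod{u_j,u_i}^2=\norm{u_j}^2$; neither ingredient appears in your plan. (ii) The numerator: the separator's cut guarantee is stated for the \emph{unit} vectors $\tilde u_i$, so any bound such as ``$\Pr[e\text{ split}]\lesssim\alpha\beta\max_{u,v\in e}\norm{U_u-U_v}$'' cannot be right as written (the left side is scale-invariant, the right is not), and your global ``discard short vectors'' preprocessing does not control norm disparity \emph{within} a hyperedge. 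The paper handles this by weighting the test vector by $\norm{u_i}^2$, splitting edges into comparable-norm and disparate-norm classes (Claim~\ref{claim:E1E2}, using Lemma~\ref{lem:2vecs}), and then applying Cauchy--Schwarz in the specific form $\sum_e w_e\max_{i\in e}\norm{u_i}\cdot\max_{i,j\in e}\norm{u_i-u_j}\le k\sqrt{\xi}$ (Lemma~\ref{lem:hyper-spectral-embedding}~(4), used in Lemma~\ref{lem:hyper-sse-num}). Your proposed Cauchy--Schwarz from $\sum_e w_e\max_{u,v\in e}\norm{U_u-U_v}$ to the max-of-squares stretch only yields $\sqrt{w(E)}\cdot\sqrt{k\xi}$, and $w(E)$ is uncontrolled; the extra $\max_{i\in e}\norm{u_i}$ factor coming from the norm-squared weighting is load-bearing and has to be built into both the cut-probability estimate and the Cauchy--Schwarz step, not added afterwards.
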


Our proof is achieved by a
randomized polynomial time Algorithm~\ref{alg:hyper-sse}
that computes a set $S$ satisfying the conditions of the theorem,
given vectors whose discrepancy ratios are at most $\xi$.  
We will use the following \emph{orthogonal separator}~\cite{lm14b} subroutine.
We say that a set $S$ \emph{cuts} another set $e$, if there exist $u,v \in e$ such that
$u \in S$ and $v \not\in S$.

\begin{fact}[Orthogonal Separator~\cite{lm14b}]
	\label{lem:gen-orth-sep}
	There exists a randomized polynomial time algorithm that, given a set of unit vectors 
	$\set{\U}_{u \in V}$, parameters $\beta \in (0,1)$ and $\tau \in \Z^+$, 
	outputs a random set 
	$\widehat{S} \subset \set{\U}_{u \in V} $ such that
	for some absolute constant $c_1$ and $\alpha = \Theta(\frac{1}{\tau})$,
	we have the following.

	\begin{enumerate}
		\item For every $\U$, $\Pr[\U \in \widehat{S}]  = \alpha$.
		
		\item For every $\U,\V$ such that $\inprod{\U,\V} \leq \beta$,
		\[ \Pr[ \U \in \widehat{S} \textrm{ and } \V \in \widehat{S} ] \leq \frac{\alpha}{\tau} \mper     \]
		
		\item For any $e \subset \set{\U}_{u \in V}$ 
		\[ \Pr[e \textrm{ is ``cut'' by  } \widehat{S} ] \leq 
		 \frac{c_1}{\sqrt{1 - \beta}} \cdot \alpha \tau \log \tau \log \log \tau \sqrt{\log \Abs{e}} \cdot \max_{\U,\V \in e}\, \norm{\U - \V}.   \]
			\end{enumerate}
\end{fact}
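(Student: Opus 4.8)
Since this is a result imported from~\cite{lm14b}, the plan is to reconstruct its proof via an explicit randomized construction of $\widehat{S}$, following the Chlamtac--Makarychev--Makarychev ``orthogonal separator'' paradigm specialized so as to control hyperedge splitting. The construction has two stages. In the \emph{labeling stage}, I would sample a sequence $g_1, g_2, \dots$ of i.i.d.\ standard Gaussian functionals on $\operatorname{span}\set{\U}_{u\in V}$ together with a threshold $t = t(\tau, |e|)$, and assign to each vertex $u$ a label $\ell(u)$ that records (roughly) the first index $i$ with $\inprod{g_i, \U}\ge t$, plus a bounded amount of finer information read off from that $g_i$; the distribution of the ``word length'' (number of Gaussians consulted) is tuned so that at each step a fixed vertex is ``captured'' with probability $\Theta(1/\tau)$. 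In the \emph{selection stage}, I would pick a label uniformly at random and set $\widehat{S} := \ell^{-1}(\text{that label})$, which fixes $\alpha = \Theta(1/\tau)$.

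Property~1 ($\Pr[\U\in\widehat{S}] = \alpha$) would follow from a direct computation: by rotational invariance, $\inprod{g_i,\U}\sim N(0,\norm{\U}^2) = N(0,1)$, so the per-step capture probability depends only on $\norm{\U} = 1$ and is the same for every vertex; uniform label selection then makes the marginal exactly $\alpha$. Property~2 is the near-orthogonal-pairs bound: for $\inprod{\U,\V}\le\beta$, the pair $(\inprod{g_i,\U},\inprod{g_i,\V})$ is bivariate Gaussian with correlation at most $\beta$, and correlated-Gaussian tail estimates (Sheppard's formula / the Gaussian orthant inequality) bound $\Pr[\inprod{g_i,\U}\ge t \text{ and }\inprod{g_i,\V}\ge t]$ by a quantity strictly smaller than $\Pr[N(0,1)\ge t]$, with a factor that degrades only as $\beta\to 1$; propagating this through the word-length distribution and the uniform label choice gives $\Pr[\U\in\widehat{S}\text{ and }\V\in\widehat{S}]\le\alpha/\tau$, the $\tau\log\tau\log\log\tau$ and $1/\sqrt{1-\beta}$ losses being absorbed into the choices of $t$ and of the word-length distribution.

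The crux is Property~3, the hyperedge non-cutting bound. Fix $e$ and set $\delta := \max_{\U,\V\in e}\norm{\U-\V}$. Conditioning on the step $i^\star$ at which the first vertex of $e$ is captured, it suffices to show that with high conditional probability all of $e$ is captured at step $i^\star$ with the same finer label, i.e.\ to bound
\[
\Pr\bigl[\,\exists\, u,v\in e:\ \inprod{g_{i^\star},\U}\ \text{and}\ \inprod{g_{i^\star},\V}\ \text{fall on opposite sides of the relevant quantization boundary}\,\bigr].
\]
Since $\inprod{g,\U}-\inprod{g,\V}\sim N(0,\norm{\U-\V}^2)$ with $\norm{\U-\V}\le\delta$, each such ``splitting'' event for a fixed pair has probability on the order of $\delta$ times the Gaussian density near the boundary (conditioned on being near a capture threshold). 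A union bound over the $\binom{|e|}{2}$ pairs gives a crude $|e|^2$ blow-up; to obtain only $\sqrt{\log|e|}\cdot\delta$ one must choose the quantization resolution $t$ on the $\sqrt{\log|e|}$ scale and optimize, a Dudley/chaining-flavored argument. Combining this with the $\alpha$ and word-length normalization from the first two properties yields the claimed $\frac{c_1}{\sqrt{1-\beta}}\cdot\alpha\tau\log\tau\log\log\tau\sqrt{\log|e|}\cdot\delta$.

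I expect the main obstacle to be precisely this balancing in Property~3: the quantization must be fine enough that near-orthogonal pairs essentially never co-occur (Property~2) yet coarse enough that a tightly clustered hyperedge is split with probability scaling as $\sqrt{\log|e|}\cdot\max\norm{\U-\V}$ rather than $\log|e|$ or $|e|$. Achieving the $\sqrt{\log|e|}$ rather than the trivial $\log|e|$ is the delicate point, and is why the iterated-Gaussian construction is needed instead of a single random threshold; for this step I would follow the analysis of~\cite{lm14b}.
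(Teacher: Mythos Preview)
The paper does not prove this statement; it is quoted verbatim as a \textsf{Fact} from~\cite{lm14b} and used as a black box (with only the remark that the $\ell_2^2$ triangle-inequality constraints are not needed). You correctly recognized this and proposed to reconstruct the proof from the cited source rather than from the present paper, which is the appropriate move.

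Your sketch follows the right architecture (the Chlamtac--Makarychev--Makarychev labeling-by-Gaussian-words paradigm, with a uniformly chosen label class as $\widehat{S}$), and your identification of the main tension --- fine enough quantization to separate $\beta$-correlated pairs, coarse enough that a $\delta$-cluster of size $|e|$ is split with probability only $\sqrt{\log|e|}\cdot\delta$ --- is exactly the point of the~\cite{lm14b} analysis. One slip to flag: you write the threshold as $t = t(\tau,|e|)$, but the algorithm receives only $\set{\U}$, $\beta$, and $\tau$ as input and must output a single random $\widehat{S}$ whose guarantees hold simultaneously for \emph{every} hyperedge $e$; the threshold and word-length can depend on $\tau$ and $\beta$ but not on any particular $|e|$, and the $\sqrt{\log|e|}$ factor must emerge from the \emph{analysis} of Property~3 (via the max of $\binom{|e|}{2}$ correlated Gaussians near the capture boundary), not from the construction. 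With that corrected, your plan matches the intended argument, and since the paper itself offers no proof, there is nothing further to compare.
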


\begin{remark}
We remark that the vectors do not have to satisfy the
$\ell_2^2$-constraints in this version of orthogonal separators~\cite{lm14b}.
\end{remark}

	\begin{algorithm}[H]
		\caption{Small Set Expansion}
		\begin{enumerate}
			\item {\bf Spectral Embedding}. Let $f_1, \ldots, f_k$ be 
			orthonormal vectors in the \textbf{weighted space}
			such that  $\max_{s \in [k]} \D_w(f_s) \leq \xi$. 
			We map a vertex $i \in V $ to a vector $u_i \in \R^k$ defined as follows. 
			For $i  \in V$ and $s \in [k]$,
			
			\[   u_i(s) = f_s(i) \mper \]
			In other words, we map the vertex $u$ to the vector formed by taking the coordinate
			corresponding to vertex $u$
			from $f_1, \ldots, f_k$.  We consider the Euclidean $\ell_2$ norm in $\R^k$.

			\item {\bf Normalization}. For every $i \in V$, let $\tilde{u}_i = \frac{u_i}{\norm{u_i}}$.
			\item {\bf Random Projection}. 
			Using Fact~\ref{lem:gen-orth-sep} (orthogonal separator), sample a random set $\widehat{S}$ from the set of vectors $\set{\tilde{u}_i}_{i \in V}$
			with $\beta = 99/100$ and $\tau = k$, 
			and define the vector $X \in \R^V$ as follows.

			\[  X_i \defeq \begin{cases}  \norm{u_i}^2 & \textrm{if } \tilde{u}_i \in \widehat{S} \\
			0 & \textrm{otherwise}   \end{cases} \mper \]
			\label{step:hyper-algstep-helper1}
			\item {\bf Sweep Cut}. Sort the coordinates of the vector $X$ in decreasing order and output
			the prefix having the least expansion (See  Proposition~\ref{prop:hyper-1d}). 
		\end{enumerate}
		\label{alg:hyper-sse}
	\end{algorithm}

We first prove some basic facts about the spectral embedding (Lemma~\ref{lem:hyper-spectral-embedding}), where
the analogous facts for graphs are well known.

\begin{lemma}[Spectral embedding]
\label{lem:hyper-spectral-embedding}
We have the following.
	\begin{enumerate}
		\item
		\[  \frac{ \sum_{e \in E} w_e \max_{i,j \in e}  \norm{u_i - u_j}^2 }{ \sum_{i \in V} w_i \norm{u_i}^2 }  \leq \max_{s \in [k]} \D_w(f_s) \mper  \]
		\item
		\[ \sum_{i \in V} w_i \norm{u_i}^2 = k \mper \]
		\item
		\[ \sum_{i \in V} w_i \inprod{u_j,u_i}^2 = \norm{u_j}^2, \qquad \forall j \in V \mper \]
		
		\item \[\sum_{e \in E} w_e \max_{i \in e} \norm{u_i} \cdot \max_{i,j \in e} \norm{u_i - u_j}
		\leq k \cdot \sqrt{\max_{s \in [k]} \D_w(f_s)}.\]
		
	\end{enumerate}
\end{lemma}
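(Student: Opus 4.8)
The plan is to prove the four items essentially in order, since each is a short computation that unwinds the definition $u_i(s) = f_s(i)$ and uses orthonormality of the $f_s$ in the weighted space. For item~(2), I would write $\sum_{i \in V} w_i \norm{u_i}^2 = \sum_{i \in V} w_i \sum_{s \in [k]} f_s(i)^2 = \sum_{s \in [k]} \sum_{i \in V} w_i f_s(i)^2 = \sum_{s \in [k]} \langle f_s, f_s \rangle_w = k$, using that the $f_s$ are unit vectors in the weighted space. For item~(3), similarly $\inprod{u_j, u_i} = \sum_{s} f_s(j) f_s(i)$, so $\sum_{i \in V} w_i \inprod{u_j, u_i}^2 = \sum_{s,t} f_s(j) f_t(j) \sum_{i \in V} w_i f_s(i) f_t(i) = \sum_{s,t} f_s(j) f_t(j) \langle f_s, f_t \rangle_w = \sum_s f_s(j)^2 = \norm{u_j}^2$, where orthonormality kills the off-diagonal terms.

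For item~(1), the key observation is that $\norm{u_i - u_j}^2 = \sum_{s \in [k]} (f_s(i) - f_s(j))^2$, hence $\max_{i,j \in e} \norm{u_i - u_j}^2 \leq \sum_{s \in [k]} \max_{i,j \in e} (f_s(i) - f_s(j))^2$ (the max of a sum is at most the sum of the maxes). Summing against $w_e$ and using item~(2) for the denominator,
\[
\frac{\sum_{e} w_e \max_{i,j \in e} \norm{u_i - u_j}^2}{\sum_i w_i \norm{u_i}^2} \leq \frac{\sum_{s \in [k]} \sum_e w_e \max_{i,j \in e}(f_s(i)-f_s(j))^2}{k} = \frac{1}{k}\sum_{s \in [k]} \D_w(f_s) \cdot \|f_s\|_w^2 \leq \max_{s} \D_w(f_s),
\]
using $\sum_e w_e \max_{i,j\in e}(f_s(i)-f_s(j))^2 = \D_w(f_s) \cdot \sum_i w_i f_s(i)^2 = \D_w(f_s)$ by orthonormality, and then averaging $\leq$ max.

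Item~(4) is the one requiring a bit of care, and I expect it to be the main (though still modest) obstacle. The plan is Cauchy--Schwarz on the edge sum: write $\sum_e w_e \max_{i \in e} \norm{u_i} \cdot \max_{i,j\in e}\norm{u_i-u_j}$ as $\sum_e \big(\sqrt{w_e}\max_i\norm{u_i}\big)\big(\sqrt{w_e}\max_{i,j}\norm{u_i-u_j}\big)$ and bound it by $\sqrt{\sum_e w_e \max_{i\in e}\norm{u_i}^2}\cdot\sqrt{\sum_e w_e \max_{i,j\in e}\norm{u_i-u_j}^2}$. The second factor is at most $\sqrt{k \cdot \max_s \D_w(f_s)}$ by items~(1) and~(2). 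For the first factor, I would bound $\max_{i\in e}\norm{u_i}^2 \leq \sum_{i \in e}\norm{u_i}^2$ and note $\sum_e w_e \sum_{i \in e}\norm{u_i}^2 = \sum_{i \in V}\norm{u_i}^2 \sum_{e \ni i} w_e = \sum_i w_i \norm{u_i}^2 = k$ by item~(2); hence the first factor is at most $\sqrt{k}$. Multiplying gives $\sqrt{k}\cdot\sqrt{k\max_s\D_w(f_s)} = k\sqrt{\max_s\D_w(f_s)}$, as claimed. I should double-check whether the intended bound uses $\max_{i\in e}\norm{u_i}^2 \le \sum_{i\in e}\norm{u_i}^2$ or a sharper estimate, but the crude bound already yields exactly the stated inequality, so this suffices.
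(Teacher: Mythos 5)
Your proposal is correct and follows essentially the same route as the paper: orthonormality computations for items (2) and (3), the max-of-sum $\leq$ sum-of-max estimate for item (1), and Cauchy--Schwarz together with $\sum_{e} w_e \max_{i \in e}\norm{u_i}^2 \leq \sum_{i} w_i \norm{u_i}^2 = k$ for item (4). The only (cosmetic) difference is how the last step is packaged -- you bound the two Cauchy--Schwarz factors by $\sqrt{k}$ and $\sqrt{k\max_s \D_w(f_s)}$ separately, which is if anything a cleaner finish than the paper's factorization -- so nothing further is needed.
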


\begin{proof}
 
	\begin{enumerate}
		\item For the first statement,
		we have
		
		$\frac{ \sum_{e \in E} w_e \max_{i,j \in e}  \norm{u_i - u_j}^2 }{ \sum_{i \in V} w_i \norm{u_i}^2 } 
		= \frac{\sum_{e \in E} w_e \max_{i,j \in e}  \sum_{s \in [k]} (f_s(i) - f_s(j))^2   }{\sum_{i \in V} w_i \sum_{s \in [k]} f_s(i)^2}$
		
		$\leq
		\frac{\sum_{s \in [k]} \sum_{e \in E} w_e \max_{i,j \in e}   (f_s(i) - f_s(j))^2   }{\sum_{s \in [k]} \sum_{i \in V} w_i  f_s(i)^2}
		\leq \max_{s \in [k]} \D_w(f_s)$.

		\item The second statement follows because
		each $f_s$ has norm $1$ in the weighted space.

		\item For the third statement,
		\begin{align*}
			\sum_{i \in V} w_i \inprod{u_j, u_i}^2 & 
			=  \sum_{i \in V} w_i \paren{ \sum_{s \in [k]} f_s(j) f_s(i)}^2 \\
			& =  \sum_{i \in V} w_i  \sum_{s,t \in [k]} f_s(j) f_s(i) f_t(j) f_t(i) \\
			& =  \sum_{s,t \in [k]} f_s(j) f_t(j) \sum_{i \in V} w_i f_s(i)  f_t(i) \\
			& =  \sum_{s,t \in [k]} f_s(j) f_t(j) \cdot \inprod{f_s, f_t}_w \\
			& =  \sum_{s,t \in [k]} f_s(j) f_t(j) \cdot \Ind{s = t}\\
			& = \sum_{s \in [k]} u_j(s)^2 \\
			& =  \norm{u_j}^2.
		\end{align*}

	\item For the fourth statement,
		using the Cauchy-Schwarz inequality,
		we have
		
		\begin{align*}
		\sum_{e \in E} w_e \max_{i \in e} \norm{u_i} \cdot \max_{i,j \in e} \norm{u_i - u_j} &
		\leq \sqrt{ \sum_{e \in E} w_e \max_{i \in e} \norm{u_i}^2 } \cdot
		\sqrt{ \sum_{e \in E} w_e \max_{i,j \in e} \norm{u_i - u_j}^2 } \\
		& =  \sum_{e \in E} w_e \max_{i \in e} \norm{u_i}^2  \cdot
		\sqrt{ \frac{\sum_{e \in E} w_e \max_{i,j \in e} \norm{u_i - u_j}^2}{\sum_{e \in E} w_e \max_{i \in e} \norm{u_i}^2} } \\
		& \leq \sum_{e \in E} w_e \max_{i \in e} \norm{u_i}^2  \cdot \sqrt{\max_{s \in [k]} \D_w(f_s)},
		\end{align*}
		
		where the last inequality follows from the first statement.
		
		To finish with the proof,
		observe that 
		
		$\sum_{e \in E} w_e \max_{i \in e} \norm{u_i}^2 \leq \sum_{i \in V} w_i \norm{u_i}^2 = k$,
		where the last equality follows from the second statement.
		\end{enumerate}
\end{proof}

We denote $D:= \frac{\tau}{\sqrt{1 - \beta}} \cdot \log \tau \log \log \tau \cdot \sqrt{\log r}$.

\paragraph{Main Analysis} 
To prove that Algorithm~\ref{alg:hyper-sse} outputs a set which meets the 
requirements of Theorem~\ref{thm:hyper-sse}, we will show that the vector $X$ meets the requirements
of Proposition~\ref{prop:hyper-1d}. We prove an upper bound on the numerator 
$\sum_{e \in E} w_e \max_{i,j \in e} |X_i - X_j|$ in
Lemma~\ref{lem:hyper-sse-num} and a lower bound on the denominator 
$\sum_{i \in V} w_i X_i$
in Lemma~\ref{lem:hyper-sse-denom}.  We first show a technical lemma.

	\begin{lemma}
		\label{lem:2vecs} For any non-zero vectors $u$ and $v$,
	$\norm{\tilde{u}-\tilde{v}} \leq 2\frac{\norm{u-v}}{\sqrt{\norm{u}^2 + \norm{v}^2}}$.
	\end{lemma}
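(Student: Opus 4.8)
The plan is to square both sides and reduce the statement to an elementary inequality in three scalars. Write $a := \norm{u}$, $b := \norm{v}$, and $c := \inprod{u,v}$; since $u$ and $v$ are non-zero, $a,b>0$. Expanding inner products gives $\norm{\tilde{u}-\tilde{v}}^2 = 2 - \frac{2c}{ab}$ and $\norm{u-v}^2 = a^2+b^2-2c$. As both sides of the claimed inequality are non-negative, it is equivalent to $2 - \frac{2c}{ab} \leq \frac{4(a^2+b^2-2c)}{a^2+b^2}$. Clearing the positive denominators $a^2+b^2$ and $ab$ turns this, after cancellation, into the single polynomial inequality $c\,(4ab - a^2 - b^2) \leq ab(a^2+b^2)$.

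To finish, I would set $S := a^2+b^2$ and $P := ab$, so the target reads $c\,(4P-S) \leq SP$, and invoke the only two facts available about $c$: the Cauchy--Schwarz bound $|c| \leq ab = P$ and the AM--GM bound $S = a^2+b^2 \geq 2ab = 2P$. Split on the sign of $4P - S$. If $4P \geq S$, the left-hand side is non-decreasing in $c$, hence at most $P(4P-S) = 4P^2 - SP$, and $4P^2 - SP \leq SP$ is exactly $2P \leq S$. If $4P < S$, the left-hand side is decreasing in $c$, hence at most $(-P)(4P-S) = SP - 4P^2 \leq SP$. In either case $c(4P-S) \leq SP$, which is precisely the desired bound.

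I do not expect a genuine obstacle here; the only points needing a little care are to check that the reduction to $c(4P-S) \leq SP$ is an \emph{equivalence} (so that nothing is lost when squaring and clearing denominators), and, in the case $4P < S$, to remember that the extremal value of $c$ is the negative endpoint $-P$ rather than $+P$.
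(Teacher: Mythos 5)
Your argument is correct, and it is essentially the paper's proof in different clothing: the paper sets $\theta := \inprod{\tilde{u},\tilde{v}} \in [-1,1]$, reduces the squared inequality to $(1+\theta)(a^2+b^2) - 4ab\theta \geq 0$, and checks the two endpoints of this affine function, which is exactly your reduction $c(4P-S)\leq SP$ with $c=\theta ab$ and your sign split selecting the relevant endpoint ($c=\pm P$), the binding case in both being $(a-b)^2\geq 0$. So the proposal is correct and matches the paper's approach up to normalization of the inner product.
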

	
	\begin{proof} Denote $a := \norm{u}$, $b := \norm{v}$
	and $\theta := \inprod{\tilde{u}, \tilde{v}}$.
	Then, we have
	
	\begin{align*}
	\norm{\tilde{u}-\tilde{v}}^2 (\norm{u}^2 + \norm{v}^2) & = (2 - 2 \theta)(a^2 + b^2) \\
	& \leq 4(a^2 - 2ab \theta + b^2) = 4 \norm{u-v}^2,
	\end{align*}
	
	where the inequality is equivalent to $(1+\theta)(a^2 + b^2) - 4ab \theta \geq 0$.
	
	To see why this is true, consider
	the function $h(\theta) := (1+\theta)(a^2 + b^2) - 4ab \theta$
	for $\theta \in [-1,1]$.  Since $h'(\theta)$ is independent of $\theta$,
	$h$ is either monotonically increasing or decreasing.
	Hence, to show that $h$ is non-negative,
	it suffices to check that both $h(-1)$ and $h(1)$ are non-negative.
	\end{proof}

\begin{lemma}
	\label{lem:hyper-sse-num}
	We have 
	$\E[ \sum_{ e \in E} w_e \max_{i, j \in e} \Abs{X_i - X_j}] \leq 
	O(D) \cdot \sqrt{\xi}$.
\end{lemma}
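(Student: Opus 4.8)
The plan is to fix a hyperedge $e$, bound $\max_{i,j\in e}\Abs{X_i-X_j}$ by a quantity that splits according to how the random set $\widehat S$ meets $e$, and then take expectations using the three guarantees of Fact~\ref{lem:gen-orth-sep}. Since each $X_i$ is either $0$ or $\norm{u_i}^2$: if $\widehat S\cap e=\emptyset$ then $\max_{i,j\in e}\Abs{X_i-X_j}=0$; if $e\subseteq\widehat S$ then $X_i=\norm{u_i}^2$ for all $i\in e$; and if $\widehat S$ cuts $e$ then $\min_{i\in e}X_i=0$ while $\max_{i\in e}X_i\le\max_{i\in e}\norm{u_i}^2$. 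Hence
\[
\max_{i,j\in e}\Abs{X_i-X_j}\;\le\;\Ind{\widehat S\text{ cuts }e}\cdot\max_{i\in e}\norm{u_i}^2\;+\;\Ind{e\subseteq\widehat S}\cdot\max_{i,j\in e}\bigl|\norm{u_i}^2-\norm{u_j}^2\bigr|.
\]
Taking expectations and summing over $e$ with weights $w_e$, I would bound $\Pr[e\subseteq\widehat S]\le\alpha$ for any $i\in e$ (Fact~\ref{lem:gen-orth-sep}(1)) for the second term and invoke Fact~\ref{lem:gen-orth-sep}(3) with $\beta=99/100$, $\tau=k$ and $\Abs{e}\le r$ for the first; recalling $\alpha=\Theta(1/\tau)=\Theta(1/k)$, we have $\alpha D=\Theta(\log k\log\log k\sqrt{\log r})$ and $\Pr[\widehat S\text{ cuts }e]\le\bigo{\alpha D}\cdot\max_{i,j\in e}\norm{\tilde u_i-\tilde u_j}$.

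For the ``contained'' term, I would use $\bigl|\norm{u_i}^2-\norm{u_j}^2\bigr|=\bigl|\norm{u_i}-\norm{u_j}\bigr|(\norm{u_i}+\norm{u_j})\le2\norm{u_i-u_j}\max_{\ell\in e}\norm{u_\ell}$, so that $\sum_{e}w_e\max_{i,j\in e}\bigl|\norm{u_i}^2-\norm{u_j}^2\bigr|\le2\sum_{e}w_e\max_{\ell\in e}\norm{u_\ell}\max_{i,j\in e}\norm{u_i-u_j}\le2k\sqrt\xi$ by Lemma~\ref{lem:hyper-spectral-embedding}(4); multiplied by $\alpha=\Theta(1/k)$ this contributes $\bigo{\sqrt\xi}$, which is absorbed into $\bigo{D}\sqrt\xi$.

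For the ``cut'' term, the quantity produced by Fact~\ref{lem:gen-orth-sep}(3) is $\max_{i,j\in e}\norm{\tilde u_i-\tilde u_j}\cdot\max_{\ell\in e}\norm{u_\ell}^2$, and the crux is to bound this by the mixed expression $\max_{\ell\in e}\norm{u_\ell}\cdot\max_{i,j\in e}\norm{u_i-u_j}$ that Lemma~\ref{lem:hyper-spectral-embedding}(4) controls. Let $\ell^\ast\in e$ attain $M:=\max_{\ell\in e}\norm{u_\ell}$. By the triangle inequality $\max_{i,j\in e}\norm{\tilde u_i-\tilde u_j}\le2\max_{i\in e}\norm{\tilde u_i-\tilde u_{\ell^\ast}}$, and by Lemma~\ref{lem:2vecs}, $\norm{\tilde u_i-\tilde u_{\ell^\ast}}\le2\norm{u_i-u_{\ell^\ast}}/\sqrt{\norm{u_i}^2+M^2}\le(2/M)\norm{u_i-u_{\ell^\ast}}$; hence $\max_{i,j\in e}\norm{\tilde u_i-\tilde u_j}\cdot M^2\le4M\max_{i,j\in e}\norm{u_i-u_j}$. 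Summing with weights, $\sum_e w_e\max_{i,j\in e}\norm{\tilde u_i-\tilde u_j}\max_{\ell\in e}\norm{u_\ell}^2\le4k\sqrt\xi$ by Lemma~\ref{lem:hyper-spectral-embedding}(4), so this term is $\bigo{\alpha D}\cdot4k\sqrt\xi=\bigo{D}\sqrt\xi$ since $\alpha k=\Theta(1)$. Combining the two contributions gives the claimed bound $\bigo{D}\sqrt\xi$.

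The main obstacle is precisely this last estimate: the weight $\max_{\ell\in e}\norm{u_\ell}^2$ attached to the orthogonal-separator cut probability is much larger than the denominator $\norm{u_i}^2+\norm{u_j}^2$ appearing in Lemma~\ref{lem:2vecs}, so a direct pairwise application of that lemma loses too much. Routing every comparison through the maximum-norm vertex $\ell^\ast$ (which forces the denominator to be at least $M^2$) is what keeps the bound tight enough to match $\ratio$ up to constants. A minor point to dispatch first is the degenerate case $u_i=\vec{0}$, where $\tilde u_i$ is undefined: such a vertex never enters $\widehat S$ and always has $X_i=0$, so it may be discarded at the outset.
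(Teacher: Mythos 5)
Your proof is correct and follows the same overall skeleton as the paper's: the same decomposition of $\E[\max_{i,j\in e}|X_i-X_j|]$ into a ``fully contained'' term (controlled by $\Pr[\tilde u_i\in\widehat S]=\alpha$) and a ``cut'' term (controlled by Fact~\ref{lem:gen-orth-sep}~(3)), and the same finish via Lemma~\ref{lem:hyper-spectral-embedding}~(4). Where you genuinely diverge is at the crux, namely converting the separator's bound in terms of $\max_{i,j\in e}\norm{\tilde u_i-\tilde u_j}$ (multiplied by the large weight $\max_{i\in e}\norm{u_i}^2$) into the mixed quantity $\max_{\ell\in e}\norm{u_\ell}\cdot\max_{i,j\in e}\norm{u_i-u_j}$. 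The paper does this through Claim~\ref{claim:E1E2}, splitting $E$ into edges $E_1$ with comparable norms (where Lemma~\ref{lem:2vecs} applies with denominator comparable to the maximum norm) and edges $E_2$ with a factor-$2$ disparity (where the cut probability is bounded trivially by $1\le O(D)/k$ and the disparity itself forces $\max_i\norm{u_i}^2\le 4\max_i\norm{u_i}\max_{i,j}\norm{u_i-u_j}$). You avoid the case split entirely by routing every comparison through the maximum-norm vertex $\ell^\ast$, so the denominator in Lemma~\ref{lem:2vecs} is automatically at least $M=\max_{\ell\in e}\norm{u_\ell}$ and one gets $\max_{i,j\in e}\norm{\tilde u_i-\tilde u_j}\le (4/M)\max_{i,j\in e}\norm{u_i-u_j}$ uniformly for all edges; this is a cleaner and slightly sharper argument for the present lemma. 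What the paper's dichotomy buys in exchange is reusability: Claim~\ref{claim:E1E2} is invoked again verbatim in the higher-order analysis (Lemma~\ref{lemma_bounding_cut}), whereas your streamlined estimate would have to be restated there. Your closing remark about vertices with $u_i=\vec 0$ is a legitimate degeneracy that the paper silently ignores; your dispatch is a bit glib (simply deleting such a vertex from $e$ changes what ``$e$ is cut'' means), but the case is harmless, e.g.\ because for such an edge $\max_{i,j\in e}\norm{u_i-u_j}\ge M$, so its entire contribution is at most $M^2\le \frac{O(D)}{k}\max_\ell\norm{u_\ell}\max_{i,j}\norm{u_i-u_j}$ using the same trivial bound $1\le O(D)/k$ that the paper uses for $E_2$.
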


\begin{proof}	
	For an edge $e \in E$ we have 
	
	\begin{equation}
		\label{eq:hyper-sse-helper16}
		 \E [ \max_{i,j \in e} \Abs{X_i - X_j} ]  \leq  \max_{i,j \in e} \Abs{ \norm{u_i}^2 - \norm{u_j}^2}  \cdot
		\Pr[\tilde{u}_i \in \widehat{S} \ \forall i \in e]
		+ \max_{i \in e} \norm{u_i}^2 \cdot \Pr[  e \textrm{ is cut by } \widehat{S}].
	\end{equation}

	By Fact~\ref{lem:gen-orth-sep}~(1), the probability in the
	first term is at most $\Theta(\frac{1}{k})$.
	Hence, the
	first term is at most
	\begin{equation}
		\label{eq:hyper-sse-helper10}
		\frac{\Theta(1)}{k} \cdot \max_{i, j\in e} \Abs{ \norm{u_i}^2 - \norm{u_j}^2} \leq
		\frac{\Theta(1)}{k} \cdot \max_{i, j \in e} \norm{u_i - u_j} \cdot \norm{ u_i + u_j } \leq 
		 \frac{\Theta(1)}{k} \cdot \max_{i, j \in e} \norm{u_i - u_j} \max_{i \in e} \norm{u_i}  \mper 
	\end{equation}
	
	To bound the second term in  (\ref{eq:hyper-sse-helper16}), 
	we divide the edge set $E$ into $E_1$ and $E_2$ as follows.
	\[ E_1 \defeq \set{ e \in E :  \max_{i,j \in e} \frac{ \norm{u_i}^2 }{  \norm{u_j}^2  } \leq 2   } 
	\quad \textrm{and} \quad   
	E_2 \defeq \set{ e \in E :  \max_{i,j \in e} \frac{ \norm{u_i}^2 }{  \norm{u_j}^2  } > 2 } \mper \] 
	
	$E_1$ is the set of those edges whose vertices have roughly equal lengths and 
	$E_2$ is the set of those edges whose vertices have large disparity in lengths.
	
	\begin{claim}\label{claim:E1E2}  
	Suppose $E_1$ and $E_2$ are as defined above.  Then, the following holds.
	
	\begin{compactitem}
	
	\item[(a)] For $e \in E_1$, we have
	
	$\Pr[  e \textrm{ is cut by } \widehat{S}] \leq O(\alpha D) \cdot
	\frac{\max_{i,j \in e} \norm{u_i - u_j} }{\max_{i \in e} \norm{u_i}}$.
	
	\item[(b)] For $e \in E_2$,
	we have $\max_{i \in e} \norm{u_i}^2 \leq 4 \max_{i \in e} \norm{u_i} \max_{i,j \in e} \norm{u_i - u_j}$.
	\end{compactitem}
	\end{claim}
	
	\begin{proof} We prove the two statements.
	
	\noindent (a) For $e \in E_1$,
	using Lemma~\ref{lem:2vecs} and  Fact~\ref{lem:gen-orth-sep},
	the probability that $e$ is cut by $\widehat{S}$ is at most
	
	$$O(\alpha D) \cdot
	\max_{i,j \in e} 
		\frac{ \norm{u_i - u_j} }{\sqrt{ \norm{u_i}^2 + \norm{u_j}^2 }} 
		\leq
		O(\alpha D) \cdot
	\frac{\max_{i,j \in e} \norm{u_i - u_j} }{\max_{i \in e} \norm{u_i}},$$
	
	where the inequality follows because $e \in E_1$.

	\noindent (b) Fix any $e \in E_2$,
	and suppose the vertices in $e = [r]$ are labeled such that
	$\norm{u_1} \geq \norm{u_2} \geq \ldots \geq \norm{u_r}$. Then, 
	from the definition of $E_2$, we have 
	\[   \frac{\norm{u_1}^2 }{ \norm{u_r}^2 } > 2 \mper \]
	
	Hence, $\max_{i,j \in e} \norm{u_i - u_j} \geq \norm{u_1 - u_r} \geq (1 - \frac{1}{\sqrt{2}}) \cdot \norm{u_1}$.
	Therefore, $\max_{i \in e} \norm{u_i}^2 \leq 4 \max_{i \in e} \norm{u_i} \max_{i,j \in e} \norm{u_i - u_j}$.
	\end{proof}

	For a hyperedge  $e \in E_1$, using Claim~\ref{claim:E1E2}~(a),
	the second term 
	in  (\ref{eq:hyper-sse-helper16}) is at most
	
	$\frac{O( D ) }{k}  \max_{i \in e} \norm{u_i} \max_{i,j \in e} \norm{u_i - u_j}$.
	
	For $e \in E_2$, in the second term of (\ref{eq:hyper-sse-helper16}),
	we can just upperbound the probability trivially by $1 \leq \frac{O(D)}{k}$,
	and use Claim~\ref{claim:E1E2}~(b) to conclude that the second term is also at most
	
	$\frac{O( D ) }{k}  \max_{i \in e} \norm{u_i} \max_{i,j \in e} \norm{u_i - u_j}$.

	Hence, inequality (\ref{eq:hyper-sse-helper16}) becomes:
	
	\begin{align*}
		\E[\max_{i,j \in e} \Abs{X_i - X_j}] & \leq \frac{O(D)}{k} \cdot
		\max_{i \in e} \norm{u_i} \max_{i,j \in e} \norm{u_i - u_j}.
		\end{align*}
	
	Summing over all hyperedges $e \in E$, we have 
	\begin{align*}
		\E[\sum_{e \in E} w_e \max_{i,j \in e} \Abs{X_i - X_j}] & 
		\leq \frac{O( D)}{k} \cdot
		\sum_{e \in E} w_e \max_{i \in e} \norm{u_i} \cdot
		\max_{i,j \in e} \norm{u_i - u_j} \\
		& \leq O(D) \cdot \sqrt{\xi},
	\end{align*}
where the last inequality follows
from Lemma~\ref{lem:hyper-spectral-embedding}~(4).
\end{proof}

\begin{lemma}
	\label{lem:hyper-sse-denom}
	We have
	\[ \Pr [ \sum_{i \in V} w_i X_i  > \frac{1}{2} ] \geq \frac{1}{12} \mper  \]
\end{lemma}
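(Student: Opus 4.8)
The plan is to apply the second moment method (the Paley--Zygmund inequality) to the random variable $Y := \sum_{i \in V} w_i X_i$. Since $X_i = \norm{u_i}^2 \cdot \Ind{\tilde{u}_i \in \widehat{S}}$, linearity of expectation together with Fact~\ref{lem:gen-orth-sep}~(1) gives $\E[Y] = \alpha \sum_{i \in V} w_i \norm{u_i}^2 = \alpha k$, using Lemma~\ref{lem:hyper-spectral-embedding}~(2), where $\alpha = \Theta(1/\tau) = \Theta(1/k)$ is the selection probability of the orthogonal separator. By choosing the orthogonal separator parameters appropriately we may assume $\alpha k \ge 1$, so $\E[Y] \ge 1$.

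Next I would bound the second moment $\E[Y^2] = \sum_{i,j \in V} w_i w_j \norm{u_i}^2 \norm{u_j}^2 \Pr[\tilde{u}_i \in \widehat{S} \wedge \tilde{u}_j \in \widehat{S}]$ by splitting the sum according to whether $\inprod{\tilde{u}_i, \tilde{u}_j} \le \beta$. For such ``far'' pairs, Fact~\ref{lem:gen-orth-sep}~(2) bounds the joint probability by $\alpha/\tau$, so their total contribution is at most $\frac{\alpha}{\tau} \bigl( \sum_i w_i \norm{u_i}^2 \bigr)^2 = \frac{\alpha k^2}{\tau}$. For the remaining ``near'' pairs (including the diagonal $i = j$), I would bound the joint probability trivially by $\alpha$ and control the total mass involved: if $\inprod{\tilde{u}_i, \tilde{u}_j} > \beta$ then $\inprod{u_i, u_j}^2 > \beta^2 \norm{u_i}^2 \norm{u_j}^2$, so for each fixed $j$ with $u_j \neq 0$ one has $\sum_{i:\, \inprod{\tilde{u}_i, \tilde{u}_j} > \beta} w_i \norm{u_i}^2 \le \frac{1}{\beta^2 \norm{u_j}^2} \sum_{i \in V} w_i \inprod{u_i, u_j}^2 = \frac{1}{\beta^2}$, where the last equality is exactly Lemma~\ref{lem:hyper-spectral-embedding}~(3) (the hypergraph analogue of $\sum_i \inprod{u_i, u_j}^2 = \norm{u_j}^2$). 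Summing over $j$, the near pairs contribute at most $\frac{\alpha}{\beta^2} \sum_j w_j \norm{u_j}^2 = \frac{\alpha k}{\beta^2}$, and hence $\E[Y^2] \le \alpha k \bigl( \frac{k}{\tau} + \frac{1}{\beta^2} \bigr)$.

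Finally, with $\tau = k$ and $\beta = 99/100$ this is at most $2.03\,\alpha k \le 2.03\,(\E[Y])^2$ (using $\E[Y] = \alpha k \ge 1$), so the Paley--Zygmund inequality with parameter $\frac{1}{2}$ yields $\Pr\bigl[Y > \frac{1}{2} \E[Y]\bigr] \ge \frac{1}{4} \cdot (\E[Y])^2 / \E[Y^2] \ge \frac{1}{4 \cdot 2.03} > \frac{1}{12}$; since $\E[Y] \ge 1$ we have $\{Y > \frac{1}{2} \E[Y]\} \subseteq \{Y > \frac{1}{2}\}$, giving the claim. The main point requiring care is the bookkeeping of constants: one must appeal to the precise orthogonal-separator guarantee of~\cite{lm14b} to justify that $\E[Y] = \alpha k$ is a constant at least (essentially) $1$ rather than merely $\Theta(1)$, and one must keep $1 + \beta^{-2}$ close to $2$ --- which is exactly why Algorithm~\ref{alg:hyper-sse} takes $\beta$ so close to $1$ --- so that the Paley--Zygmund estimate still clears the target $\frac{1}{12}$. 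The only genuine content beyond this is the near-pair estimate, which is driven entirely by the approximate-orthonormality identity Lemma~\ref{lem:hyper-spectral-embedding}~(3) that keeps the second moment within a constant factor of $(\E[Y])^2$, so that $Y$ cannot concentrate near $0$.
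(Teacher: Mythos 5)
Your proposal is correct and follows essentially the same route as the paper: compute $\E[Y]=\alpha k$ via Fact~\ref{lem:gen-orth-sep}~(1) and Lemma~\ref{lem:hyper-spectral-embedding}~(2), bound $\E[Y^2]$ by splitting pairs at inner product $\beta$ (Fact~\ref{lem:gen-orth-sep}~(2) for far pairs, the trivial bound plus Lemma~\ref{lem:hyper-spectral-embedding}~(3) for near pairs, giving $\E[Y^2]\leq \alpha k(1+\beta^{-2})$), and finish with Paley--Zygmund. Your per-$j$ accounting of the near pairs is just a reorganization of the paper's insertion of the factor $\inprod{\tilde{u}_i,\tilde{u}_j}^2/\beta^2\geq 1$, and your explicit handling of the normalization $\alpha k\geq 1$ is, if anything, slightly more careful than the paper's, which implicitly treats $\alpha$ as $1/k$.
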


\begin{proof}
	We denote $Y \defeq \sum_{i \in V} w_i X_i$.
	We first compute $\E [Y]$ as follows. 
	\begin{align*}
		\E [Y] & = \sum_{i \in V} w_i \norm{u_i}^2 \Pr[ \tilde{u} \in \widehat{S} ] \\
		& = \sum_{i \in V} w_i \norm{u_i}^2 \cdot \alpha & \textrm{(From Fact~\ref{lem:gen-orth-sep}~(1))} \\
		& = k \alpha   &  \textrm{(Using Lemma~\ref{lem:hyper-spectral-embedding}~(2))} \mper
	\end{align*}
	
	Next we give an upper  bound of $\E[Y^2]$. 
	\begin{align*}
		\E [Y^2] & = \sum_{i,j \in V} w_i w_j \norm{u_i}^2 \norm{u_j}^2 \Pr[ \tilde{u}_i, \tilde{u}_j \in \widehat{S} ] \\
		& \leq  \sum_{ \substack{i,j: \\ \inprod{ \tilde{u}_i, \tilde{u}_j} \leq \beta }  } 
		w_i w_j \norm{u_i}^2 \norm{u_j}^2 \Pr[ \tilde{u}_i, \tilde{u}_j \in \widehat{S} ]
			+ 	\sum_{ \substack{i,j: \\ \inprod{ \tilde{u}_i, \tilde{u}_j} > \beta }  } 
		w_i w_j \norm{u_i}^2 \norm{u_j}^2 \Pr[ \tilde{u}_i, \tilde{u}_j \in \widehat{S} ].
	\end{align*}
	
	We use Fact~\ref{lem:gen-orth-sep}~(2) to bound the first term, and use the trivial bound of 	$\frac{1}{k}$ (Fact~\ref{lem:gen-orth-sep}~(1)) to bound $\Pr[ \tilde{u}_i, \tilde{u}_j \in S ]$ in the second term. Therefore, 
	\begin{align*}
		\E[Y^2] & \leq \sum_{ \substack{i,j: \\ \inprod{ \tilde{u}_i, \tilde{u}_j} \leq \beta }  } 
		w_i w_j \norm{u_i}^2 \norm{u_j}^2 \cdot \frac{\alpha}{k}
		+ \sum_{ \substack{i,j: \\ \inprod{ \tilde{u}_i, \tilde{u}_j} > \beta }  } 
		w_i w_j \norm{u_i}^2 \norm{u_j}^2 \cdot \frac{ \inprod{ \tilde{u}_i, \tilde{u}_j}^2 }{\beta^2} \cdot \alpha \\
		& \leq \sum_{i,j} w_i w_j \paren{   \frac{\alpha \norm{u_i}^2 \norm{u_j}^2  }{k}  + \frac{\alpha}{\beta^2 } \inprod{u_i,u_j}^2  } \\
		& =  \frac{\alpha}{k} \paren{ \sum_{i} w_i \norm{u_i}^2 }^2 + \frac{\alpha}{\beta^2 } \sum_{i,j} w_i w_j \inprod{u_i,u_j}^2  \\
		& =  \frac{\alpha}{k} \cdot k^2 + \frac{\alpha}{\beta^2 } \cdot k  =\alpha k ( 1 + \frac{1}{\beta^2} )
		\leq  3 k \alpha.  
		& \textrm{(Using Lemma~\ref{lem:hyper-spectral-embedding})}
	\end{align*}
	
	Since $Y$ is a non-negative random variable, we get using the Paley-Zygmund inequality that 
	\[ \Pr[Y \geq \frac{1}{2} \E[Y] ] \geq \paren{\frac{1}{2}}^2 \frac{ \E[Y]^2  }{\E[Y^2]} 
	= \frac{1}{4} \cdot \frac{1}{3} = \frac{1}{12} \mper \]
	This finishes the proof of the lemma.
\end{proof}

We are now ready to finish the proof of Theorem~\ref{thm:hyper-sse}.\\

\begin{proofof}{Theorem~\ref{thm:hyper-sse}}

\noindent \textbf{(1)} We first show that Algorithm~\ref{alg:hyper-sse} gives $S \subset V$
such that $|S| = O(\frac{n}{k})$ and 
$\phi(S) = O(k \log k \log \log k \cdot \sqrt{\xi \log r})$.

	By the definition of Algorithm~\ref{alg:hyper-sse},
	\[ \E[ \Abs{ \supp(X)} ] = \frac{n}{k} \mper \]
	Therefore, by Markov's inequality,
	\begin{equation}
		\label{eq:hyper-sse-helper4}
		\Pr[ \Abs{ \supp(X)}  \leq \frac{24n}{k}   ]\geq 1 - \frac{1}{24}     \mper
	\end{equation}
	
	Using Markov's inequality and Lemma~\ref{lem:hyper-sse-num},
	for some large enough constant $C_1 > 0$,
	\begin{equation}
		\label{eq:hyper-sse-helper5}
		\Pr[ \sum_{ e \in E} w_e \max_{u,v \in e} \Abs{X_u - X_v} \leq C_1 D \cdot \sqrt{\xi} ]
		\geq 1 - \frac{1}{48}  \mper
	\end{equation}

	Therefore, using a union bound over (\ref{eq:hyper-sse-helper4}), (\ref{eq:hyper-sse-helper5})
	and Lemma~\ref{lem:hyper-sse-num}, we get that with probability at least $\frac{1}{48}$,
	the following happens.
	
	\begin{compactitem}
	\item[(1)] $\frac{ \sum_{e \in E} w_e \max_{i,j \in e} \Abs{X_i - X_j} }{ \sum_{i \in V} w_i X_i } \leq O(D) \cdot \sqrt{\xi}$, and
	\item[(2)] $\Abs{\supp(X)} \leq \frac{24n}{k}$.
	\end{compactitem}
	
	When these two events happen, from Proposition~\ref{prop:hyper-1d},
	Algorithm~\ref{alg:hyper-sse} outputs a set $S$ such that
	$\phi(S) \leq O(D) \cdot \sqrt{\xi}$
	and $\Abs{S} \leq \Abs{\supp(X)} =  O(\frac{n}{k})$, as required.

	\vspace{1cm}
	
	\noindent \textbf{(2)} We next show that algorithmic
	version~\cite{lrtv12,lot12} of Fact~\ref{thm:graph-higher-cheeger}
	for $2$-graphs can give us $S \subset V$ such that
	$|S| = O(\frac{n}{k})$ and $\phi(S) = O(\sqrt{r \xi \log k})$.

	Given edge-weighted hypergraph $H = (V,E,w)$,
	we define an edge-weighted 2-graph $G = (V,E')$ as follows.
	For each $e \in E$, where $r_e = |e|$,
	add a complete graph on $e$ with each pair having weight $\frac{w_e}{r_e - 1}$.
	Observe that eventually a pair $\{u,v\}$ in $G$ has weight derived from all $e \in E$ such that both $u$ and $v$ are in $e$.
	In this construction, each vertex $u$ has the same weight in $H$ and $G$.
	
	We first relate the discrepancy ratios of the two graphs by
	showing that $\D_w^G(f) \leq \frac{r}{2} \cdot \D_w^H(f)$.
	Since the denominators are the same, we compare the contribution of each hyperedge $e \in E$ to the numerators.  For $e \in E$ with $r_e = |e|$,
	its contribution to the numerator of $\D_w^G(f)$ is
	$\frac{w_e}{r_e - 1} \sum_{\{u,v\} \in {e \choose 2}} (f_u - f_v)^2
	\leq w_e \cdot \frac{r_e}{2} \cdot \max_{u,v \in e} (f_u - f_v)^2$,
	which is $\frac{r_e}{2}$ times the contribution of $e$ to the numerator of $\D_w^H(f)$.
	
	Hence, Fact~\ref{thm:graph-higher-cheeger} for 2-graphs implies that
	given vectors orthogonal vectors $f_1, f_2, \ldots, f_k$ in the weighted space
	(where $\max_{i \in [k]} \D_w^G(f_i) \leq \frac{r \xi}{2}$),
	there is a procedure to return $S$ such that $|S| = O(\frac{n}{k})$
	and $\phi^G(S) = O(\sqrt{r \xi \log k})$.
	
	Therefore, it suffices to prove that $\phi^H(S) \leq \phi^G(S)$.  Again, the denominators involved are the same.  Hence, we compare the numerators.
	For each hyperedge $e \in \partial S$, suppose $r_e = |e|$ and $a_e = |e \cap S|$,
	where $0 < a_e < r_e$.
	Then, the contribution of $e$ to the numerator of $\phi^G(S)$
	is $\frac{w_e}{r_e -1} \cdot a_e( r_e - a_e) \geq w_e$, which is exactly
	the contribution of $e$ to the numerator of $\phi^H(S)$.  Hence, the result follows.
	
\end{proofof}

\subsection{Higher Order Cheeger Inequalities for Hypergraphs}
\label{sec:higher_order_cheeger}

In this section,
we achieve an algorithm that,
given $k$ orthonormal vectors $f_1, f_2, \ldots, f_k$
in the weighted space such that $\max_{s \in [k]} \D_w(f_s) \leq \xi$,
returns $\Theta(k)$ non-empty disjoint subsets
with small expansion.

\begin{theorem}[Restatement of Theorem~\ref{thm:hyper-higher-cheeger-informal}]
\label{thm:hyper-higher-cheeger}
Suppose $H = (V,E,w)$ is a hypergraph.  Then, we have the following.
\begin{compactitem}

\item [(a)] Suppose
	$f_1, f_2, \ldots, f_k$ are $k$ orthonormal vectors in the weighted space
	such that $\max_{s \in [k]} \D_w(f_s) \leq \xi$.
	There is a randomized procedure that runs in polynomial time
	such that for every $ \epsilon \geq \frac{1}{k}$, with $\Omega(1)$ probability,
	returns
	$\floor{(1-\epsilon)k}$  non-empty disjoint sets $S_1, \ldots, S_{\floor{(1-\epsilon)k}} \subset V$ such that 
	\[ \max_{i \in [\floor{(1-\epsilon)k}]} \phi(S_i) = \bigo{ \frac{k^2}{\eps^{1.5}} \log \frac{k}{\eps} \log{\log \frac{k}{\eps}} \sqrt{\log r} \cdot \sqrt{\xi} } \mper \]
	
	\item[(b)]
		For any $k$ disjoint non-empty sets $S_1, \ldots, S_k \subset V$
	\[ \max_{i \in [k]} \phi(S_i) \geq \frac{\zeta_k}{2},  \]
	where $\zeta_k$ is defined in Section~\ref{sec:higher-cheeger}.
	\end{compactitem}
\end{theorem}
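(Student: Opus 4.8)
The statement has two halves: the combinatorial lower bound (b), which I would prove directly with well-chosen test vectors, and the algorithmic upper bound (a), which I would get by running the small-set-expansion machinery of Algorithm~\ref{alg:hyper-sse} in parallel and pruning to a disjoint subfamily; I expect (a) to carry essentially all the difficulty. For (b): the plan is to exhibit $k$ orthogonal vectors in the weighted space whose span has discrepancy ratio at most $2\max_i\phi(S_i)$, so that the definition of $\zeta_k$ yields $\zeta_k\le 2\max_i\phi(S_i)$, which is exactly the claim and is the hypergraph analogue of the easy direction of Fact~\ref{thm:graph-higher-cheeger}. I would take $f_i:=\chi_{S_i}$; since the $S_i$ are pairwise disjoint and nonempty and all vertex weights are positive, these are nonzero and mutually $\langle\cdot,\cdot\rangle_w$-orthogonal, so it suffices to bound $\D_w(f)$ for an arbitrary $f=\sum_{i\in[k]}\alpha_i\chi_{S_i}$. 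Write $\phi^\ast:=\max_i\phi(S_i)$. By disjointness the denominator is $\|f\|_w^2=\sum_i\alpha_i^2 w(S_i)$. For the numerator, fix an edge $e$: on $e$, $f$ attains its maximum value $\alpha_p$ at a vertex of some $S_p$ (or the value $0$, which I treat as a dummy index with $\alpha_0:=0$) and its minimum value $\alpha_q$ analogously, so $\max_{u,v\in e}(f_u-f_v)^2=(\alpha_p-\alpha_q)^2$, which vanishes unless $p\neq q$. The key observation is that whenever this quantity is nonzero and $p\neq 0$, the edge $e$ is cut by $S_p$ — it meets $S_p$, and it also contains the minimum-witness, which has value $\alpha_q<\alpha_p$ and therefore lies outside $S_p$ — and symmetrically for $S_q$. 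Using $(\alpha_p-\alpha_q)^2\le 2\alpha_p^2+2\alpha_q^2$, I charge $2\alpha_p^2 w_e$ to $S_p$ and $2\alpha_q^2 w_e$ to $S_q$ (nothing to a dummy index); each charge lands on a set that cuts $e$. Summing over edges, $S_i$ receives total charge at most $2\alpha_i^2\sum_{e\in\partial S_i}w_e=2\alpha_i^2\,w(\partial S_i)=2\alpha_i^2\,\phi(S_i)\,w(S_i)\le 2\phi^\ast\alpha_i^2 w(S_i)$, hence $\sum_e w_e\max_{u,v\in e}(f_u-f_v)^2\le 2\phi^\ast\|f\|_w^2$ and $\D_w(f)\le 2\phi^\ast$. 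The only points needing care are the value-$0$ index and charging only to the two extremal sets (rather than every set meeting $e$), which is what keeps the constant at $2$ instead of $4$.

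For (a), the plan is to adapt Algorithm~\ref{alg:hyper-sse}. First I would form the spectral embedding $i\mapsto u_i\in\R^k$, $u_i(s)=f_s(i)$, and its normalization $\tilde u_i=u_i/\|u_i\|$; Lemma~\ref{lem:hyper-spectral-embedding} supplies $\sum_i w_i\|u_i\|^2=k$, $\sum_i w_i\langle u_j,u_i\rangle^2=\|u_j\|^2$, the Dirichlet bound $\sum_e w_e\max_{i,j\in e}\|u_i-u_j\|^2\le\xi k$, and $\sum_e w_e\max_{i\in e}\|u_i\|\max_{i,j\in e}\|u_i-u_j\|\le k\sqrt{\xi}$. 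Then I would sample $N=\Theta(\tfrac{k}{\epsilon}\log\tfrac{k}{\epsilon})$ independent orthogonal separators (Fact~\ref{lem:gen-orth-sep}) with $\tau=\Theta(k/\epsilon)$ and $\beta$ a constant bounded away from $1$, so each has vertex-inclusion probability $\alpha=\Theta(\epsilon/k)$; for each sampled $\widehat S_t$ I form $X^{(t)}_i=\|u_i\|^2\,\Ind{\tilde u_i\in\widehat S_t}$ and run the one-dimensional sweep of Proposition~\ref{prop:hyper-1d} to get $T_t\subseteq\supp(X^{(t)})$ with $\phi(T_t)$ at most the discrepancy-type ratio of $X^{(t)}$. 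Re-running the numerator estimate in the spirit of Lemma~\ref{lem:hyper-sse-num} and the Paley--Zygmund lower bound in the spirit of Lemma~\ref{lem:hyper-sse-denom} with the new $\tau,\alpha$ should show that for a constant fraction of indices $t$, both $\sum_i w_i X^{(t)}_i=\Theta(\alpha k)$ and the numerator of $X^{(t)}$ is $O(\tfrac{\alpha}{\sqrt{1-\beta}}\tau\log\tau\log\log\tau\sqrt{\log r})\sqrt{\xi}$, so $\phi(T_t)=O(\tfrac{k}{\epsilon}\log\tfrac{k}{\epsilon}\log\log\tfrac{k}{\epsilon}\sqrt{\log r})\sqrt{\xi}$; a union bound then makes $\Omega(N)$ of the $T_t$ simultaneously ``good'' with $\Omega(1)$ probability. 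Finally, from this cloud of candidates I would carve a pairwise-disjoint subfamily: since $\sum_i w_i\|u_i\|^2=k$ while each good $T_t$ carries $X$-mass $\Theta(\epsilon)$ and the orthogonal-separator correlation bound prevents distinct separators from overlapping heavily, a greedy ``first come, first served'' selection should retain $\floor{(1-\epsilon)k}$ of the good $T_t$'s, each still of size $O(n/k)$, with the stated expansion bound (the worse-than-graph dependence on $k$ and $\epsilon$ being exactly the overhead of going from one sparse set to $\floor{(1-\epsilon)k}$ disjoint ones).

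The hard part will be the last step of (a): converting $\Omega(N)$ candidate low-expansion sets into precisely $\floor{(1-\epsilon)k}$ pairwise-disjoint ones while keeping each selected set small and low-expansion, controlling how the sweep-cut localization of Proposition~\ref{prop:hyper-1d} can shrink or distort a $T_t$, and still having all the required good events hold together with constant probability over the polynomially many trials. By contrast (b) is a one-paragraph charging argument once the test vectors $\chi_{S_i}$ are in hand.
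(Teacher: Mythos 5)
Your part (b) is correct and is essentially the paper's argument: the paper also takes the indicator vectors of the $S_i$'s and invokes the disjoint-support case of Claim~\ref{claim:zx} (the inequality $(a-b)^2\le 2a^2+2b^2$ applied to a vector in their span), which is the same charging you carry out by hand; nothing is gained or lost either way.

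Part (a), however, has a genuine gap exactly at the step you flag as hard, and the mechanism you propose for it does not work. You plan to extract, from $N$ independent orthogonal-separator samples, individually low-expansion sweep sets $T_t$ and then greedily prune to a pairwise-disjoint subfamily, arguing that ``the orthogonal-separator correlation bound prevents distinct separators from overlapping heavily.'' But Fact~\ref{lem:gen-orth-sep}~(2) controls the joint inclusion of two \emph{vertices in the same sample}; across \emph{independent} samples there is no anti-correlation at all, and vertices whose vectors $\tilde u_i$ are close will typically appear together in a constant fraction of the separators. So many of your candidates $T_t$ can essentially coincide, and a first-come-first-served selection may retain far fewer than $\floor{(1-\eps)k}$ disjoint sets. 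Moreover, any attempt to disjointify by removing earlier sets from later ones destroys the per-candidate guarantee: the expansion of $S\setminus S'$ is not controlled by that of $S$, and its measure can collapse. The paper avoids both problems by a different architecture (Algorithm~\ref{alg_cheeger}): it disjointifies first ($S''_l=S'_l\setminus\bigcup_{j<l}S'_j$), merges the pieces into $t\ge(1-\eps)k$ disjoint blocks of $\mu$-measure in $[\tfrac14,1+\tfrac\eps4]$ using the fact that with high probability every vertex is covered (Lemmas~\ref{lemma_pro_all_cover} and~\ref{lemma:enough_subsets}), bounds $\E[\max_l\nu(B_l)]$ \emph{globally} through the union of cut edges $\cup_l\partial B_l$ and a first-separator-that-touches-$e$ decomposition (Lemma~\ref{lemma_bounding_cut}), and only then sweeps inside each block via the $\nu/\mu$ bound (Lemma~\ref{lemma:mu_nu_ratio}). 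A second, related issue: you take $\beta$ bounded away from $1$, but the count guarantee needs $\beta=1-\Theta(\eps)$, since Lemma~\ref{lemma_pro_all_cover} uses $\mu(V_1)\le\beta^{-2}\le 1+\tfrac\eps8$ (via Lemma~\ref{lem:hyper-spectral-embedding}~(3)) to keep each accepted separator's measure below $1+\tfrac\eps4$, which is what makes the merged count at least $(k-\tfrac14)/(1+\tfrac\eps4)\ge(1-\eps)k$; with constant $\beta$ you cannot certify $\floor{(1-\eps)k}$ sets, and the $\eps^{-1.5}$ in the stated bound is precisely the price of $\frac{1}{\sqrt{1-\beta}}$ with this choice. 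Your embedding lemmas and per-edge estimates (the analogues of Claim~\ref{claim:E1E2}) are the right ingredients, but the proof needs the disjointify-merge-then-sweep structure rather than prune-after-sweeping.
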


\begin{proofof}{Theorem~\ref{thm:hyper-higher-cheeger}~(b)}
	
	For an arbitrary collection of $k$ disjoint non-empty sets $\{S_l\}_l$, let $f_l$ be the corresponding indicator function $S_l$. Then, the vectors $f_l$'s have disjoint support, and by Claim~\ref{claim:zx}, we have
	$$
	\frac{\zeta_k}{2} \leq \max_{l \in [k]}\D_w(f_l) = \max_{l \in [k]} \phi(S_l).
	$$
\end{proofof}

For statement (a), the proof is similar to Section~\ref{sec:hyper-sse},
and we also have a similar sampling algorithm.

\begin{algorithm}[H]
	\caption{Sample algorithm}
	\label{alg_cheeger}
	\begin{algorithmic}[1]
		\State Suppose $f_1, \ldots, f_k$ are orthonormal vectors in the weighted space
		such that $\max_{s \in [k]} \D_w(f_s) \leq \xi$.
		We map a vertex $i \in V $ to a vector $u_i \in \R^k$ defined as follows. 
			For $i  \in V$ and $s \in [k]$,
			
			\[   u_i(s) = f_s(i) \mper \]
				
		\State For each $i \in V$, normalize $\tilde{u}_i \gets \frac{u_i}{\norm{u_i}}$.
		\State Using Fact~\ref{lem:gen-orth-sep} (orthogonal separator), 
		sample
		$T:=\frac{2 \log 4n}{\alpha}$ independent subsets $S_1, \ldots, S_T \subset V$ with the set of vectors
		$\set{\tilde{u}_i}_{i \in V}$,
		$\beta = 1 - \frac{\epsilon}{72}$ and $\tau = \frac{16k}{\epsilon}$.

		\State Define measure $\mu(S) := \sum_{i\in S} w_i \norm{u_i}^2$.
		
		\noindent For each $l\in[T]$, define $S'_l$ as follows: 
		$$
		S'_l = 
		\begin{cases} 
		S_l &\mbox{if } \mu(S_l) \leq 1+ \frac{\epsilon}{4}; \\ 
		\emptyset & \mbox{otherwise.}
		\end{cases} 
		$$
		
		\State For each $l\in [T]$, let $S''_l=S'_l \backslash (\cup_{j\in [l-1]} S'_j)$.
		
		\State Arbitrarily merge sets from $\{S''_l\}$ to form sets having $\mu$-measure in $[\frac{1}{4}, 1 + \frac{\epsilon}{4}]$ (while discarding sets with total measure at most $\frac{1}{4}$). We name the resulting
		sets to be $B=\{B_1, \ldots, B_t\}$.
		
		\State For each $j\in[t]$, set $\hat{B}_j=\{ i \in B_j : \norm{u_i}^2 \geq r_j\}$, where $r_j$ is chosen to minimize $\phi(\hat{B}_j)$.
		\State Output the non-empty sets $\hat{B}_j$ with the smallest expansion $\phi(\hat{B}_j)$, for $j \in [t]$.
	\end{algorithmic}
\end{algorithm}

\noindent \textbf{Forming Disjoint Subsets.}  The algorithm first uses
orthogonal separator to generate subsets $S_l$'s independently.  If the $\mu$-measure
of a subset is larger than $1 + \frac{\eps}{4}$, then it is discarded.
We first show that with high probability, each vertex is contained in some subset
that is not discarded.

\begin{lemma}[Similar to {\cite[Lemma 2.5]{lm14}}]
	\label{lemma_pro_all_cover}
	For every vertex $i\in V$, and $l\in [T]$, we have 
	$$\Pr[i \in S'_l] \geq \frac{\alpha}{2}.$$
\end{lemma}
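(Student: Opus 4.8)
The plan is to bound $\Pr[i\in S'_l]$ from below by writing
$\Pr[i\in S'_l]=\Pr[i\in S_l]-\Pr[i\in S_l \text{ and } \mu(S_l)>1+\tfrac{\epsilon}{4}]$,
invoking $\Pr[i\in S_l]=\alpha$ from Fact~\ref{lem:gen-orth-sep}(1), and then showing the subtracted term is at most $\tfrac{3\alpha}{10}<\tfrac{\alpha}{2}$. (Vertices with $\norm{u_i}=0$ carry zero $\mu$-measure and $\tilde u_i$ is undefined for them; these are excluded by a harmless preprocessing, so I assume $\norm{u_i}>0$ throughout.)

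First I would introduce the angular neighbourhood $N_i:=\{j\in V:\inprod{\tilde u_i,\tilde u_j}>\beta\}$ (so that $i\in N_i$). Since $\inprod{u_i,u_j}^2=\norm{u_i}^2\norm{u_j}^2\inprod{\tilde u_i,\tilde u_j}^2\ge\beta^2\norm{u_i}^2\norm{u_j}^2$ for $j\in N_i$, Lemma~\ref{lem:hyper-spectral-embedding}(3) (which gives $\sum_j w_j\inprod{u_i,u_j}^2=\norm{u_i}^2$) yields the \emph{deterministic} bound $\mu(N_i)=\sum_{j\in N_i}w_j\norm{u_j}^2\le\tfrac{1}{\beta^2}$. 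With $\beta=1-\tfrac{\epsilon}{72}$ and $\epsilon$ small, $\tfrac{1}{\beta^2}\le 1+\tfrac{\epsilon}{24}$, so the measure of the vertices angularly close to $i$ sits only barely above $1$.

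The crux — and the step I expect to require the most care — is that a direct Markov bound on $\mu(S_l)$ conditioned on $i\in S_l$ is useless, because $\E[\mu(S_l)\,\Ind{i\in S_l}]=O(\alpha)$, exactly the order of the probability we are trying to lower bound, leaving no constant-factor slack. The fix is to split $\mu(S_l)=\mu(S_l\cap N_i)+\mu(S_l\setminus N_i)$: the near part is at most $\mu(N_i)\le 1+\tfrac{\epsilon}{24}$ \emph{deterministically}, so the event $\mu(S_l)>1+\tfrac{\epsilon}{4}$ forces $\mu(S_l\setminus N_i)>\tfrac{5\epsilon}{24}$, and Markov is then applied only to the far part, whose mean is genuinely small: since $j\notin N_i$ means $\inprod{\tilde u_i,\tilde u_j}\le\beta$, Fact~\ref{lem:gen-orth-sep}(2) gives $\Pr[i,j\in S_l]\le\tfrac{\alpha}{\tau}$, hence $\E[\mu(S_l\setminus N_i)\,\Ind{i\in S_l}]\le\tfrac{\alpha}{\tau}\sum_j w_j\norm{u_j}^2=\tfrac{\alpha k}{\tau}=\tfrac{\alpha\epsilon}{16}$, using Lemma~\ref{lem:hyper-spectral-embedding}(2) and $\tau=\tfrac{16k}{\epsilon}$. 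Therefore $\Pr[i\in S_l,\ \mu(S_l)>1+\tfrac{\epsilon}{4}]\le\Pr[i\in S_l,\ \mu(S_l\setminus N_i)>\tfrac{5\epsilon}{24}]\le\tfrac{\alpha\epsilon/16}{5\epsilon/24}=\tfrac{3\alpha}{10}$, and so $\Pr[i\in S'_l]\ge\alpha-\tfrac{3\alpha}{10}=\tfrac{7\alpha}{10}\ge\tfrac{\alpha}{2}$. The real obstacle is thus the bookkeeping: peeling off the deterministically bounded near part before applying Markov, and checking that the slack $\tfrac{\epsilon}{4}-\tfrac{1}{\beta^2}\gtrsim\epsilon$ comfortably dominates the far-part mean $\tfrac{\alpha\epsilon}{16}$ for the prescribed constants $\beta=1-\tfrac{\epsilon}{72}$ and $\tau=\tfrac{16k}{\epsilon}$.
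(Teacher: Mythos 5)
Your proposal is correct and follows essentially the same route as the paper: the same split of vertices into the angularly-near set (your $N_i$, the paper's $V_1$) bounded deterministically via the spectral-embedding identity $\sum_j w_j\norm{u_j}^2\inprod{\tilde u_i,\tilde u_j}^2=1$, and the same use of Fact~\ref{lem:gen-orth-sep}~(2) plus Markov on the far part with $\tau=\frac{16k}{\epsilon}$. The only difference is bookkeeping — you work with the joint event $\{i\in S_l,\ \mu(S_l\setminus N_i)>\frac{5\epsilon}{24}\}$ and get $\frac{7\alpha}{10}$, while the paper conditions on $i\in S_l$ and settles for the threshold $\frac{\epsilon}{8}$, giving exactly $\frac{\alpha}{2}$.
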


\begin{proof}
Recall that we sample $S_l$ using Fact~\ref{lem:gen-orth-sep} with
$\beta = 1 - \frac{\epsilon}{72}$ and $\tau = \frac{16k}{\epsilon}$.

	Fix $i \in V$.  If $i \in S_l$, then $i \in S'_l$ unless $\mu(S_l) > 1+ \frac{\epsilon}{4}$.
	Hence ,
	we only need to show that $\Pr[\mu(S_l) > 1+ \frac{\epsilon}{4} | i \in S_l] \leq \frac{1}{2}$.

	Define the sets $V_1$ and $V_2$ as follows
	$$
	V_1 = \{j \in V : \inprod{\tilde{u}_i,\tilde{u}_j} > \beta \}
	$$
	and 
	$$
	V_2 = \{j \in V : \inprod{\tilde{u}_i,\tilde{u}_j} \leq \beta \}.
	$$
	
	We next give an upper bound for $\mu(V_1)$. From Fact~\ref{lem:hyper-spectral-embedding}~(3), we have
	
	$$1 = \sum_{j \in V} w_j \norm{u_j}^2 \inprod{\tilde{u}_i,\tilde{u}_j}^2
	\geq \beta^2 \sum_{j \in V_1} w_j \norm{u_j}^2 = \beta^2 \cdot \mu(V_1).$$
	
	Hence, $\mu(V_1) \leq \beta^{-2} \leq 1 + \frac{\epsilon}{8}$.

	For any $j \in V_2$, we have $\inprod{\tilde{u}_i,\tilde{u}_j} \leq \beta$. 
	Hence, by Fact~\ref{lem:gen-orth-sep}~(2) of orthogonal separators, 
	$$
	\Pr[j \in S_l | i \in S_l] \leq \frac{1}{\tau}.
	$$
	
	Therefore,
	$$
	E[\mu(S_l \cap V_2) | i \in S_l] \leq \frac{\mu(V_2)}{\tau} \leq \frac{\mu(V)}{\tau} = \frac{\epsilon}{16},
	$$
	
	where the equality holds because $\mu(V) = k$ and $\tau = \frac{16k}{\epsilon}$.
	
	By Markov's inequality, $\Pr[\mu(S_l \cap V_2)\geq \frac{\epsilon}{8} | i \in S_l] \leq \frac{1}{2}$.
	
	Since $\mu(S_l)=\mu(S_l \cap V_1)+\mu(S_l \cap V_2)$, 
	we get 
	
	$\Pr[\mu(S_l) > 1+ \frac{\epsilon}{4} | i \in S_l]
	\leq \Pr[\mu(S_l \cap V_2)\geq \frac{\epsilon}{8} | i \in S_l] \leq \frac{1}{2}$,
	as required.
\end{proof}

\begin{lemma}
\label{lemma:enough_subsets}
With probability at least $\frac{3}{4}$,
every vertex is contained in at least one $S_l'$.
Moreover, when this happens, Algorithm~\ref{alg_cheeger} returns
at least $t \geq \floor{k(1 - \eps)}$ non-empty disjoint subsets.
\end{lemma}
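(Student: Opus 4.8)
The plan is to combine Lemma~\ref{lemma_pro_all_cover} with independence and a union bound. By Lemma~\ref{lemma_pro_all_cover}, $\Pr[i\in S'_l]\ge \alpha/2$ for every vertex $i\in V$ and every $l\in[T]$. Since the subsets $S_1,\ldots,S_T$ are sampled independently in step~3 and each $S'_l$ is a deterministic function of $S_l$, the events $\{i\in S'_l\}_{l\in[T]}$ are mutually independent, so $\Pr[i\notin S'_l\ \text{for all }l]\le(1-\alpha/2)^T\le e^{-\alpha T/2}\le \frac{1}{4n}$ for the choice $T=\frac{2\log 4n}{\alpha}$ (with $\log$ the natural logarithm). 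A union bound over the $n$ vertices gives $\Pr[\text{some vertex lies in no }S'_l]\le \frac14$, i.e.\ with probability at least $\frac34$ every vertex is contained in at least one $S'_l$.

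\textbf{Second assertion (enough subsets).} Now condition on the event that every vertex lies in some $S'_l$. Then the sets $S''_l=S'_l\setminus\bigcup_{j<l}S'_j$ from step~5 are disjoint and their union is $\bigcup_l S'_l=V$, so $\{S''_l\}_{l\in[T]}$ is a partition of $V$ and hence $\sum_{l\in[T]}\mu(S''_l)=\mu(V)=\sum_{i\in V}w_i\|u_i\|^2=k$ by Lemma~\ref{lem:hyper-spectral-embedding}~(2). Moreover every non-empty $S'_l$ satisfies $\mu(S'_l)\le 1+\frac\eps4$ by construction, so $\mu(S''_l)\le 1+\frac\eps4$ for all $l$. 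I would next argue that the merging in step~6 can always be performed so that each produced set $B_j$ has $\mu(B_j)\in[\frac14,1+\frac\eps4]$ while the total measure of the discarded sets is at most $\frac14$: put each $S''_l$ with $\mu(S''_l)\ge\frac14$ into its own group, and accumulate the remaining (small) $S''_l$'s one at a time into a bundle, closing a bundle as soon as its accumulated measure first reaches $\frac14$; since each small set has measure $<\frac14$, a closed bundle has measure in $[\frac14,\frac12)\subseteq[\frac14,1+\frac\eps4]$, and the single incomplete bundle left at the end, which is discarded, has measure $<\frac14$.

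\textbf{Counting.} With that merging in hand, the number $t$ of groups satisfies $t\,(1+\tfrac\eps4)\ge\sum_{j=1}^t\mu(B_j)\ge k-\tfrac14$, hence $t\ge\frac{k-1/4}{1+\eps/4}$. Using $\eps\ge\frac1k$, a short computation shows $\frac{k-1/4}{1+\eps/4}\ge k(1-\eps)$: this inequality is equivalent to $k-\tfrac14\ge k(1-\eps)(1+\tfrac\eps4)=k\bigl(1-\tfrac{3\eps}{4}-\tfrac{\eps^2}{4}\bigr)$, i.e.\ to $k\eps(3+\eps)\ge 1$, which holds since $k\eps\ge 1$. As $t$ is an integer this gives $t\ge\lceil k(1-\eps)\rceil\ge\floor{k(1-\eps)}$; each $B_j$ is non-empty because $\mu(B_j)\ge\frac14>0$, and the $B_j$ are pairwise disjoint because the $S''_l$ are. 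The main obstacle I anticipate is the bookkeeping in the merging step: one must certify that the greedy bundling simultaneously keeps every group inside the window $[\frac14,1+\frac\eps4]$ and wastes at most $\frac14$ of the total $\mu$-measure, since a careless merge could overshoot the upper threshold; the probabilistic first half and the final arithmetic are routine.
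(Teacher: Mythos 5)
Your proof is correct and follows essentially the same route as the paper's: Lemma~\ref{lemma_pro_all_cover} plus independence and a union bound for the covering event, then total measure $k$, discarded measure at most $\frac14$, and the bound $t \ge \frac{k-1/4}{1+\eps/4} \ge (1-\eps)k$ using $k\eps \ge 1$. The only addition is your explicit greedy bundling argument certifying that the merging in step~6 can indeed produce groups with $\mu$-measure in $[\frac14, 1+\frac\eps4]$ while discarding at most $\frac14$, a detail the paper leaves implicit in the algorithm description.
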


\begin{proof}
From Lemma~\ref{lemma_pro_all_cover},
the probability that a vertex is not included
in $S_l'$ for all $l \in [T]$
is at most $(1 - \frac{\alpha}{2})^T \leq \exp(-\frac{\alpha T}{2}) \leq \frac{1}{4n}$.
Hence, by the union bound,
the probability that there exists a vertex
not included in at least one $S_l'$ is at most $\frac{1}{4}$.

When every vertex is included in some $S_l'$, then
the total $\mu$-measure  of the $S_l''$'s is exactly $\mu(V) = k$.
Since we merge the $S_l''$'s to form subsets of $\mu$-measure
in the range $[\frac{1}{4}, 1 + \frac{\eps}{4}]$,
at most a measure of $\frac{1}{4}$ will be discarded.

Hence, the number of subsets formed
is at least $t \geq \frac{k-\frac{1}{4}}{1+\frac{\epsilon}{4}} \geq (1-\epsilon)k$,
where the last inequality holds because $\frac{1}{k} \leq \eps < 1$.
\end{proof}

\noindent \textbf{Bounding Expansion.}  After we have
shown that the algorithm returns enough number of subsets (each
of which having $\mu$-measure at least $\frac{1}{4}$),
it remains to show that their expansion is small.
In addition to measure $\mu$,
we also consider
measure 

$\nu(S) :=
\sum_{e \subset S} w_e \max_{i,j \in e} (\norm{u_i}^2-\norm{u_j}^2)
	+\sum_{e\in \partial S } w_e \max_{i\in S\cap e} \norm{u_i}^2.$

The next lemma shows that there is a non-empty subset of $S$
having expansion at most $\frac{\nu(S)}{\mu(S)}$.

\begin{lemma}
\label{lemma:mu_nu_ratio}
Suppose $S$ is a subset of $V$.
For $r \geq 0$,
denote $S_r := \{i \in S: \norm{u_i}^2 \geq r\}$.
Then, there exists $r > 0$ such that $S_r \neq \emptyset$ and
$\phi(S_r) \leq \frac{\nu(S)}{\mu(S)}$.
\end{lemma}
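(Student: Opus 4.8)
The plan is to run a threshold (sweep) argument on the non\-negative vector $X\in\R^V$ defined by $X_i:=\norm{u_i}^2$ for $i\in S$ and $X_i:=0$ for $i\notin S$; for $r>0$ the super\-level set $\{i : X_i\ge r\}$ is exactly $S_r$, since vertices outside $S$ are excluded by fiat. (Note the statement is only meaningful when $\mu(S)>0$, which is the case in the intended application where $S$ has $\mu$\-measure at least $\tfrac14$; I will use this to guarantee $S_r\ne\emptyset$ at the end.) I will integrate the numerator and denominator of $\phi(S_r)$ over $r\in(0,\infty)$ and show the ratio of integrals equals $\nu(S)/\mu(S)$, after which an averaging argument finishes the proof.

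First I would establish the denominator identity $\int_0^\infty w(S_r)\,dr=\mu(S)$ by Fubini: $w(S_r)=\sum_{i\in S}w_i\,\Ind{\norm{u_i}^2\ge r}$, and $\int_0^\infty \Ind{\norm{u_i}^2\ge r}\,dr=\norm{u_i}^2$. The key step is the numerator identity $\int_0^\infty w(\partial S_r)\,dr=\nu(S)$. Writing $w(\partial S_r)=\sum_{e\in E}w_e\,\Ind{e\text{ cut by }S_r}$, I would compute for each hyperedge $e$ the Lebesgue measure of the set of thresholds $r>0$ at which $e$ is cut by $S_r$, splitting into three cases: (i) if $e\subseteq S$, then $e$ is cut exactly when $r\in(\min_{i\in e}\norm{u_i}^2,\ \max_{i\in e}\norm{u_i}^2]$, of measure $\max_{i,j\in e}(\norm{u_i}^2-\norm{u_j}^2)$; (ii) if $e\in\partial S$, then $e$ contains a vertex outside $S$, which lies outside every $S_r$, so $e$ is cut exactly when some vertex of $e\cap S$ survives, i.e.\ when $r\in(0,\ \max_{i\in e\cap S}\norm{u_i}^2]$, of measure $\max_{i\in e\cap S}\norm{u_i}^2$; (iii) if $e\cap S=\emptyset$, then $e$ is never cut since $S_r\subseteq S$, contributing $0$. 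Summing with weights $w_e$ reproduces exactly the two sums in the definition of $\nu(S)$.

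Finally I would conclude by a mean\-value argument. Set $g(r):=w(\partial S_r)-\frac{\nu(S)}{\mu(S)}\,w(S_r)$; by the two identities, $\int_0^\infty g(r)\,dr=0$. Moreover $g$ vanishes for $r>M:=\max_{i\in S}\norm{u_i}^2$ (there both $S_r$ and $\partial S_r$ are empty), and $M>0$ because $\mu(S)>0$ and all $w_i>0$. Hence $\int_0^M g(r)\,dr=0$ over an interval of positive length, so there is some $r\in(0,M]$ with $g(r)\le 0$, i.e.\ $w(\partial S_r)\le\frac{\nu(S)}{\mu(S)}\,w(S_r)$. For this $r$ we have $S_r\ne\emptyset$ (as $r\le M$) and thus $w(S_r)>0$, giving $\phi(S_r)=w(\partial S_r)/w(S_r)\le\nu(S)/\mu(S)$, as required.

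The main obstacle is case (ii) of the numerator computation: one must be careful that the threshold $r$ ranges only over $(0,\infty)$, not over all of $\R$, so that vertices outside $S$ never enter $S_r$ — this asymmetry is precisely what produces the one\-sided boundary term $\sum_{e\in\partial S}w_e\max_{i\in e\cap S}\norm{u_i}^2$ in $\nu(S)$ rather than a symmetric $\max$\-of\-differences term. The remaining steps are routine Fubini and averaging.
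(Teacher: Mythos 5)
Your proof is correct and is essentially the paper's argument: the paper samples the threshold $r$ uniformly from $(0,M)$ with $M:=\max_{i\in S}\norm{u_i}^2$ and computes $\E[w(\partial S_r)]=\nu(S)/M$ and $\E[w(S_r)]=\mu(S)/M$ via the same three-case analysis of which hyperedges are cut, then concludes by averaging. Your integral formulation over $(0,\infty)$ and the explicit function $g$ are just a rephrasing of that sweep-cut argument, and your caveat about $\mu(S)>0$ matches the paper's implicit assumption.
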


\begin{proof}
Suppose $r$ is sampled uniformly from
the interval $(0, M)$,
where $M := \max_{i \in S} \norm{u_i}^2$.
Observe that for $r \in (0,M)$, $S_r$ is non-empty.

Then, it follows that an edge $e$ can be in $\partial S_r$
only if $e \subset S$ or $e \in \partial S$.

For $e \subset S$,
$e \in \partial S_r$ \emph{iff}
there exists $i,j \in e$ such that
$\norm{u_i}^2 < r \leq \norm{u_j}^2$.

On the other hand, if $e \in \partial S$,
then $e \in \partial S_r$ \emph{iff}
$r \leq \max_{i \in S \cap e} \norm{u_i}^2$.

Hence, $\E[w(\partial S_r)] = \frac{\nu(S)}{M}$.

Similarly, $i \in S$ is in $S_r$ \emph{iff} $r \leq \norm{u_i}^2$.
Hence, $\E[w(S_r)] = \frac{\mu(S)}{M}$.

Therefore, there exists $M > \rho > 0$
such that $\phi(S_\rho) = \frac{w(\partial S_\rho)}{w(S_\rho)}
\leq \frac{\E[w(\partial S_r)]}{\E[w(S_r)]} =  \frac{\nu(S)}{\mu(S)}$.
\end{proof}

In view of Lemma~\ref{lemma:mu_nu_ratio},
it suffices to show that the algorithm generates
subsets with small $\nu$-measure.

\begin{lemma}\label{lemma_bounding_cut}
	Algorithm~\ref{alg_cheeger} produces subsets $B_j$'s such that

		$$
		\mathbb{E}[\max_{l\in[t]} \nu(B_l)] \leq O(D) \cdot k \sqrt{\xi_k},
		$$ 
		where $D=\frac{\tau}{\sqrt{ 1 - \beta}} \cdot \log \tau  \log \log \tau  \sqrt{\log r}$, and $r=\max_{e\in E} |e|$.
\end{lemma}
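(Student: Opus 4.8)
The plan is to adapt the analysis of Lemma~\ref{lem:hyper-sse-num} to the measure $\nu$ and the $T$-round sampling, passing from $\max_{l\in[t]}\nu(B_l)$ to a sum over the pairwise disjoint pieces $\{S''_m\}_{m\in[T]}$.

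\textbf{Reducing $\nu$ and using subadditivity.} For $S\subseteq V$, let $X^S\in\R^V$ be given by $X^S_i:=\norm{u_i}^2$ if $i\in S$ and $X^S_i:=0$ otherwise. Checking the three cases $e\subseteq S$, $e\in\partial S$, $e\cap S=\emptyset$ gives $\nu(S)=\sum_{e\in E}w_e\max_{i,j\in e}\abs{X^S_i-X^S_j}$; in particular $\nu$ is subadditive ($\nu(A\cup B)\le\nu(A)+\nu(B)$, since an edge straddling $A$ and $B$ already pays $w_e\max_{i\in e}\norm{u_i}^2$ to one of $\nu(A),\nu(B)$). Since each $B_l$ is a union of some of the disjoint $S''_m$, I would conclude $\max_{l\in[t]}\nu(B_l)\le\sum_{l\in[t]}\nu(B_l)\le\sum_{m\in[T]}\nu(S''_m)$, so it suffices to bound $\E[\sum_m\nu(S''_m)]$. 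Regrouping by hyperedges, $\sum_m\nu(S''_m)=\sum_{e\in E}w_e\,c(e)$, where the \emph{charge} $c(e)$ equals $\max_{i\in e}\norm{u_i}^2-\min_{i\in e}\norm{u_i}^2$ if all of $e$ lies in one $S''_m$, and $\sum_{m:\,e\cap S''_m\neq\emptyset}\max_{i\in e\cap S''_m}\norm{u_i}^2$ otherwise.

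\textbf{Layer-cake identity and the fragmentation estimate.} Write $e_{\ge t}:=\{i\in e:\norm{u_i}^2\ge t\}$ and, for $A\subseteq V$, $\kappa(A):=\abs{\{m:S''_m\cap A\neq\emptyset\}}$. Then $\sum_{m:\,e\cap S''_m\neq\emptyset}\max_{i\in e\cap S''_m}\norm{u_i}^2=\int_0^\infty\kappa(e_{\ge t})\,dt$, and since a maximum-norm vertex of $e$ lies in $e_{\ge t}$ whenever $e_{\ge t}\neq\emptyset$, one has $c(e)\le\int_0^\infty\kappa(e_{\ge t})\,dt$ in both cases. The technical heart is $\E[\kappa(A)]\le 1+O(D)\max_{u,v\in A}\norm{\tilde u-\tilde v}$: a vertex $v$ is placed in $S''_{l(v)}$ with $l(v)$ the first round whose orthogonal separator selects $v$ and is not discarded, and the parts of $A$ arise from a peeling process in which a round creates a new part for $A$ only if $S_l$ \emph{cuts} the not-yet-assigned remainder $A'$; conditioned on $S_l$ meeting $A'$ (probability $\ge\alpha$) it cuts $A'$ with probability at most $\tfrac1\alpha$ times the cut bound of Fact~\ref{lem:gen-orth-sep}(3), i.e.\ $O(D)\max_{u,v\in A}\norm{\tilde u-\tilde v}$ — the $\tfrac1\alpha$ cancelling the $\alpha$ there. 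Hence $\Pr[\kappa(A)\ge j]$ decays geometrically in $j$ and $\E[\kappa(A)]$ is a geometric series; the event that some vertex of $A$ is never selected has probability $\le\tfrac1{4n}$ by the choice $T=\tfrac{2\log4n}{\alpha}$ (as in Lemma~\ref{lemma:enough_subsets}) and contributes only lower-order terms.

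\textbf{Summation via the spectral embedding.} Integrating, and using $\int_0^\infty\Ind{e_{\ge t}\neq\emptyset}\,dt=\max_{i\in e}\norm{u_i}^2$, gives $\E[c(e)]\le\big(\max_{i\in e}\norm{u_i}^2-\min_{i\in e}\norm{u_i}^2\big)+O(D)\max_{u,v\in e}\norm{\tilde u-\tilde v}\cdot\max_{i\in e}\norm{u_i}^2+(\text{lower order})$, where the indicator that $e$ is actually split ensures the $\max_{i\in e}\norm{u_i}^2$ term is charged only with probability $O(D)\max_{u,v\in e}\norm{\tilde u-\tilde v}$. Now invoke the length dichotomy of Claim~\ref{claim:E1E2}: if $\max_{i,j\in e}\norm{u_i}^2/\norm{u_j}^2\le2$ then Lemma~\ref{lem:2vecs} gives $\max_{u,v\in e}\norm{\tilde u-\tilde v}\le O\!\big(\max_{i,j\in e}\norm{u_i-u_j}/\max_{i\in e}\norm{u_i}\big)$, while otherwise $\max_{i,j\in e}\norm{u_i-u_j}\ge(1-\tfrac1{\sqrt2})\max_{i\in e}\norm{u_i}$; in either case $\E[c(e)]\le O(D)\,\max_{i\in e}\norm{u_i}\cdot\max_{i,j\in e}\norm{u_i-u_j}+(\text{lower order})$. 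Summing with weights $w_e$ and applying Lemma~\ref{lem:hyper-spectral-embedding}(4) (bounding the lower-order terms the same way, using the choice of $T$) yields $\E[\sum_m\nu(S''_m)]\le O(D)\cdot k\sqrt{\xi_k}$, hence $\E[\max_{l\in[t]}\nu(B_l)]\le O(D)\cdot k\sqrt{\xi_k}$, which together with Lemma~\ref{lemma:mu_nu_ratio} and $\mu(B_l)\ge\tfrac14$ controls the expansions of the output sets.

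\textbf{Main obstacle.} The delicate step is the fragmentation estimate $\E[\kappa(A)]\le1+O(D)\max_{u,v\in A}\norm{\tilde u-\tilde v}$: although the $T$-round merge could in principle shatter an $r$-vertex hyperedge into $r$ pieces, one must show that the probability of $j$ pieces decays geometrically at rate a single orthogonal-separator cut probability, so that only one factor $O(D)$ — not $r\cdot O(D)$ — is lost. Handling the uncovered-vertex events and the regime where the cut-probability bound is not small needs the familiar case analysis but is routine.
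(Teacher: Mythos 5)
Your reduction $\max_{l}\nu(B_l)\le\sum_{m}\nu(S''_m)$ is where the proposal parts ways with what can actually be proved, and the fragmentation estimate you yourself flag as the technical heart, $\E[\kappa(A)]\le 1+O(D)\max_{u,v\in A}\norm{\tilde u-\tilde v}$, is a genuine gap rather than a routine case analysis. Your geometric-decay argument controls the number of fragmenting rounds only when the conditional cut probability $q:=O(D)\max_{u,v\in A}\norm{\tilde u-\tilde v}$ is bounded below a constant; in the complementary regime the conditional probability of fragmenting rather than swallowing the remainder can be close to $1$, and $\kappa(A)$ can genuinely be of order $\min\{\abs{A},T\}$. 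Concretely, since $\beta=1-\frac{\eps}{72}$, ``pairwise separated'' normalized vectors is a very weak condition, and for a hyperedge $e$ whose $m$ vertices have pairwise-separated $\tilde u_i$'s of comparable norms, in each of the first $\Theta\bigl(\frac{1}{\alpha}\log(m\alpha)\bigr)=\Omega\bigl(\tau\log(m/\tau)\bigr)$ rounds the separator meets the unassigned remainder with constant probability while containing all of it only with negligible probability (Fact~\ref{lem:gen-orth-sep}(2) makes co-selection of separated vertices rare), so each such round creates a new piece and $\E[\kappa(e)]=\Omega\bigl(\tau\log(m/\tau)\bigr)$, which for fixed $k,\eps$ and large $r$ exceeds $O(D)=O\bigl(\tau\log\tau\log\log\tau\sqrt{\log r}\bigr)$ even though $\max_{u,v\in e}\norm{\tilde u-\tilde v}=O(1)$. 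The root cause is that $\sum_m\nu(S''_m)$ charges a shattered hyperedge $\max_{i\in e}\norm{u_i}^2$ once per piece it meets, so your route is forced to control the number of pieces per edge, a quantity the $T$-round sampling simply does not control.

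The paper's proof avoids this entirely: it bounds every $\nu(B_l)$ (hence the maximum) by the single global quantity $\sum_{e\in E_{cut}}w_e\max_{i\in e}\norm{u_i}^2+\sum_{e\in E}w_e\max_{i,j\in e}\bigl(\norm{u_i}^2-\norm{u_j}^2\bigr)$ with $E_{cut}=\cup_{l\in[t]}\partial B_l$, in which each cut edge is charged only once, and therefore it only needs the probability that $e$ is ever cut. That probability is bounded by the first-touching-round argument $\Pr[e\in E_{cut}]\le\sum_{l}(1-\tfrac{\alpha}{2})^{l-1}\Pr[e\in\partial\widehat S]\le\tfrac{2}{\alpha}\Pr[e\in\partial\widehat S]$, which for $e\in E_1$ is $O(D)\max_{i,j\in e}\norm{u_i-u_j}/\max_{i\in e}\norm{u_i}$ via Claim~\ref{claim:E1E2}(a), with $E_2$ handled by Claim~\ref{claim:E1E2}(b). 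If you replace your piece-by-piece accounting with this uniform bound, the rest of your machinery (the $E_1/E_2$ dichotomy, Lemma~\ref{lem:2vecs}, and Lemma~\ref{lem:hyper-spectral-embedding}(4)) goes through essentially unchanged.
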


\begin{proof}
		
		Let $E_{cut} := \cup_{l \in [t]} \partial B_l$ be the set of edges cut by $B_1,\ldots, B_t$.
		Then, for all $l \in [t]$,
		
		$\nu(B_l) \leq \sum_{e\in E_{cut}} w_e  \max_{i \in e} \norm{u_i}^2
		+ \sum_{e\in E} w_e \max_{i,j \in e} (\norm{u_i}^2-\norm{u_j}^2)$.
		
		Hence, $\max_{l \in [t]} \nu(B_j)$ also has the same upper bound.
		Taking expectation, we have
		
		\begin{equation}
		\mathbb{E}[\max_{l \in [t]} \nu(B_j)]
		\leq 
		\mathbb{E}[\sum_{e\in E_{cut}} w_e  \max_{i \in e} \norm{u_i}^2 ] +
		\sum_{e\in E} w_e \max_{i,j \in e} (\norm{u_i}^2-\norm{u_j}^2). 
		\label{eq:max_nu}
		\end{equation}

		The second term in (\ref{eq:max_nu}) is 
		\begin{align*}
		\sum_{e\in E} w_e \max_{i,j \in e} (\norm{u_i}^2-\norm{u_j}^2)
		&\leq \sum_{e\in E} w_e \max_{i,j \in e} \norm{u_i-u_j} \cdot \norm{u_i+u_j} \\
		&\leq 2\sum_{e\in E} w_e \max_{i,j \in e} \norm{u_i-u_j} \max_{i\in e} \norm{u_i}.
		\end{align*}
		
		To bound the first term in (\ref{eq:max_nu}), we divide the edge set $E$ into two parts $E_1$ and $E_2$ as follows
		$$
		E_1 =\{e\in E : \max_{i,j\in e} \frac{\norm{u_i}^2}{\norm{u_j}^2} \leq 2  \} 
		\text{\quad and \quad }
		E_2 =\{e\in E : \max_{i,j\in e} \frac{\norm{u_i}^2}{\norm{u_j}^2} > 2 \}.
		$$

		The first term in (\ref{eq:max_nu}) is 
		\begin{align} \label{eq_edge_cut_expectation}
		\mathbb{E}[\sum_{e\in E_{cut}} w_e  \max_{i\in e}\norm{u_i}^2 ]
		\leq \sum_{e\in E_1} \Pr[e \in \cup_{l \in [t]} \partial B_l ] \cdot w_e \max_{i\in e} \norm{u_i}^2 + \sum_{e\in E_2} w_e \max_{i\in e} \norm{u_i}^2.  
		\end{align}
		
		We next bound the contribution from edges in $E_1$.  Fix an edge $e \in E_1$.
		Recall that for $l \in [T]$, the set $S_l$ is generated
		independently by the orthogonal separator (Lemma~\ref{lem:gen-orth-sep}).
		For $l \in [T]$,
		we define $\mathcal{E}_l$ to be the event that
		for $l' \in [l-1]$, $S_{l'} \cap e = \emptyset$ and $e \in \partial S_l$.

		Observe that $e \in \cup_{l \in [t]} \partial B_l$
		implies that there exists $l \in [T]$ such that the event $\mathcal{E}_l$ happens.
		Next, if $\widehat{S}$ is sampled from the orthogonal separator in
		Lemma~\ref{lem:gen-orth-sep},
		then Lemma~\ref{lemma_pro_all_cover} implies that $\Pr[\widehat{S} \cap e = \emptyset] \leq 1 - \frac{\alpha}{2}$,
		and Claim~\ref{claim:E1E2}~(a) states that
		
		$\Pr[e \in \partial \widehat{S}] \leq O(\alpha D) \cdot
	\frac{\max_{i,j \in e} \norm{u_i - u_j} }{\max_{i \in e} \norm{u_i}}.$

		\noindent Therefore, we have
		
		\begin{align*}
		\Pr[e \in \cup_{l \in [t]} \partial B_l] 
		&\leq \sum_{l \in[T]} \Pr[\mathcal{E}_l] \\
		&\leq \sum_{l \in [T]} (1-\frac{\alpha}{2} )^{l-1} \cdot \Pr[e \in \partial \widehat{S}] \\
		& \leq \frac{2}{\alpha} \cdot \Pr[e \in \partial \widehat{S}]  \\
		&\leq O(D) \cdot
	\frac{\max_{i,j \in e} \norm{u_i - u_j} }{\max_{i \in e} \norm{u_i}}.
		\end{align*}
		
		Hence, the first term in (\ref{eq_edge_cut_expectation})
		is
		
		$\sum_{e\in E_1} \Pr[e \in \cup_{l \in [t]} \partial B_l ] \cdot w_e \max_{i\in e} \norm{u_i}^2 \leq
		\sum_{e\in E_1} w_e \max_{i,j \in e} \norm{u_i - u_j} \cdot \max_{i \in e} \norm{u_i}.$
		
		For $e \in E_2$, Claim~\ref{claim:E1E2}~(b)
		implies that the second term in (\ref{eq_edge_cut_expectation})
		is
		
		$\sum_{e\in E_2} w_e \max_{i\in e} \norm{u_i}^2
		\leq \sum_{e\in E_2} 4 w_e  \max_{i \in e} \norm{u_i} \max_{i,j \in e} \norm{u_i - u_j}$.
		
		Therefore, it follows that
		
		\begin{align*}
		\mathbb{E}[\max_{l \in [t]} \nu(B_l)] 
		&= O(D) \cdot  \sum_{e\in E} w_e \max_{i\in e} \norm{u_i} \max_{i,j\in e} \norm{u_i - u_j}  \\
		&\leq  O(D) \cdot k \sqrt{\max_{s \in [k]} \D_w(f_s)} \\
		& \leq  O(D) \cdot k \cdot \sqrt{\xi},
		\end{align*}
		where the second to last inequality comes from Lemma~\ref{lem:hyper-spectral-embedding}~(4).
\end{proof}

\begin{proofof}{Theorem~\ref{thm:hyper-sse}~(a)}
	We run Algorithm~\ref{alg_cheeger}. By Lemma~\ref{lemma:enough_subsets},
	with probability at least $\frac{3}{4}$,
	it produces at least
	$t \geq (1-\epsilon)k$ subsets $B_1, \ldots, B_t$,
	each of which has $\mu$-measure at least $\frac{1}{4}$. 
	
	Using Markov's inequality and Lemma~\ref{lemma_bounding_cut},
	with probability at least $\frac{3}{4}$,
	 we have $\max_{l \in [t]}\nu(B_l) \leq
	 4 \E[\max_{l \in [t]}\nu(B_l)] = O(D k) \cdot \sqrt{\xi}$.
	 
	 By union bound, with probability at least $\frac{1}{2}$,
	 the algorithm produces at least $t \geq (1-\epsilon)k$ 
	 disjoint subsets $B_l$, each of
	 which satisfies $\nu(B_l) = O(D k) \cdot \sqrt{\xi}$
	 and $\mu(B_l) \geq \frac{1}{4}$.
	
	Hence, Lemma~\ref{lemma:mu_nu_ratio} implies that each such $B_l$
	contains a non-empty subset $\hat{B}_l$
	such that $\phi(\hat{B}_j) \leq \frac{\nu(B_l)}{\mu(B_l)} = O(D k) \cdot \sqrt{\xi}$, as required.

\end{proofof}

\section{Vertex Expansion in $2$-Graphs and Hardness}
\label{sec:vert-exp}

As mentioned in Section~\ref{sec:vertex-results},
vertex expansion in $2$-graphs
is closely related to hyperedge expansion.
Indeed, Reduction~\ref{red:hyper-vert}
implies that vertex expansion in $d$-regular graphs
can be reduced to hyperedge expansion.
We show that this reduction also relates
the parameter $\linf$ (see (\ref{eq:linf})) defined by
Bobkov \etal~\cite{bht00}
with the parameter $\gamma_2$ associated
with the Laplacian we define (in Section~\ref{sec:disp}) for hypergraphs.

\begin{theorem}[Restatement of Theorem~\ref{thm:linf-eig2}]
Let $G = (V,E)$ be a undirected $d$-regular $2$-graph with parameter~$\linf$, and let $H = (V,E')$ be the hypergraph obtained
in Reduction~\ref{red:hyper-vert} having parameter~$\eig_2$. Then,
\[  \frac{\eig_2}{4} \leq  \frac{\linf}{d} \leq \eig_2 \mper    \]
\end{theorem}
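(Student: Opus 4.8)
The plan is to pass to the variational characterization of $\eig_2$ supplied by Theorem~\ref{th:hyper_lap} and compare it, on the same test vector, with the definition of $\linf$. First I would record the combinatorial structure of $H$: since $G$ is $d$-regular, the hyperedge $e_v := \set{v}\cup\Nout(\set{v})$ created for $v\in V$ has exactly $d+1$ vertices, and a vertex $u$ lies in $e_v$ iff $u=v$ or $u\sim v$; hence every vertex of $H$ has weight $w_u=d+1$ and every hyperedge has weight $1$. In particular $w$-orthogonality to $\vec{1}$ coincides with ordinary orthogonality, so Theorem~\ref{th:hyper_lap} gives
\[
\eig_2=\min_{\vec{0}\neq f\perp\vec{1}}\frac{\sum_{v\in V}\bigl(\max_{a,b\in e_v}(f_a-f_b)\bigr)^2}{(d+1)\norm{f}_2^2},
\qquad
\linf=\min_{\vec{0}\neq f\perp\vec{1}}\frac{\sum_{u\in V}\bigl(\max_{w\sim u}(f_u-f_w)\bigr)^2}{\norm{f}_2^2}.
\]

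The crux is a pointwise two-sided bound. For $f\in\R^V$ and $v\in V$ set $\Phi_v(f):=\max_{a,b\in e_v}(f_a-f_b)^2$ and $m_v(f):=\max_{w\sim v}(f_v-f_w)^2$; I claim $m_v(f)\le\Phi_v(f)\le 4\,m_v(f)$. The left inequality is immediate since $v$ and all of its neighbours lie in $e_v$. For the right one, let $a,b\in e_v$ attain $\Phi_v(f)$: if one of $a,b$ equals $v$ then $\Phi_v(f)=(f_v-f_w)^2\le m_v(f)$ for that neighbour $w$, and otherwise $a,b\in\Nout(\set{v})$, so $(f_a-f_b)^2\le 2(f_a-f_v)^2+2(f_v-f_b)^2\le 4\,m_v(f)$. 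Summing over $v$ gives $\sum_v m_v(f)\le\sum_v\Phi_v(f)\le 4\sum_v m_v(f)$ for every $f$, hence
\[
\tfrac{1}{d+1}\,\D^V(f)\ \le\ \frac{\sum_{v}\Phi_v(f)}{(d+1)\norm{f}_2^2}\ \le\ \tfrac{4}{d+1}\,\D^V(f)
\qquad\text{for all }\vec{0}\neq f\perp\vec{1}.
\]
Evaluating the middle quantity at a minimizer of $\linf$ gives $\eig_2\le\frac{4}{d+1}\linf$, so $\frac{\eig_2}{4}\le\frac{\linf}{d+1}\le\frac{\linf}{d}$; evaluating it at a minimizer of $\eig_2$ (which is a legitimate test vector for $\linf$) gives $\linf\le(d+1)\eig_2$, so $\frac{\linf}{d+1}\le\eig_2$. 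Together these yield the asserted inequalities, up to the precise value of the denominator ($d$ versus $d+1$).

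The step I expect to be the main obstacle is fixing the constants in the upper comparison $\Phi_v(f)\le 4\,m_v(f)$: the factor $4$ is genuinely necessary (two neighbours of $v$ straddling $f_v$ force $\Phi_v\approx 4m_v$), and one must keep the case ``the extremal pair touches $v$'' (constant $1$) separate from ``the extremal pair avoids $v$'' (constant $4$), while simultaneously carrying the uniform vertex weight $d+1$ of $H$ correctly through both directions; squaring this with the slightly tighter $d$ appearing in the statement is the only remaining subtlety. Everything else---identifying $e_v$, checking that $w$-orthogonality reduces to ordinary orthogonality by regularity, and reading off the two variational forms from Theorem~\ref{th:hyper_lap}---is routine bookkeeping.
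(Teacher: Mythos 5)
Your proposal is correct and follows essentially the same route as the paper: both pass to the variational form of $\eig_2$ given by Theorem~\ref{th:hyper_lap}, namely $\min_{f\perp\vec{1}}\frac{\sum_{u}\max_{i,j\in\{u\}\cup N(u)}(f_i-f_j)^2}{d\sum_u f_u^2}$, and compare it to $\linf$ term by term, the lower comparison being trivial and the upper one coming from $(x+y)^2\le 4\max\{x^2,y^2\}$ applied to $f_i-f_u$ and $f_u-f_j$, which is the same estimate as your $2(f_a-f_v)^2+2(f_v-f_b)^2\le 4\,m_v(f)$. As for the $d$ versus $d+1$ subtlety you flag: the paper does not resolve it any more carefully than you do — its proof simply writes the denominator as $d\sum_u f_u^2$ (rather than $(d+1)\sum_u f_u^2$, which your computation of $w_u=d+1$ would give), treating the normalization loosely in the spirit of the remark following Fact~\ref{thm:hyper-vert-exp}; so your accounting is, if anything, the more precise one, and the stated constants should be read up to this $d/(d+1)$ normalization.
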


\begin{proof}
Using Theorem~\ref{th:hyper_lap}
for hypergraphs, the parameter
$\eig_2$ of $H$ can be reformulated in terms of the weighted space as:
\[  \eig_2 = \min_{f \perp \one} 
	\frac{ \sum_{u \in V}  \max_{  i,j \in \paren{ \set{u} \cup N(u)}} \paren{f_i - f_j}^2 }{d \sum_{u \in V} f_u^2} \mper    \]
Therefore, it follows that $  \frac{\linf}{d} \leq \eig_2$.

Next, using $(x + y)^2 \leq 4 \max \set{x^2,y^2}$ for any $x,y \in \R$, we get  
\[  \eig_2 = \min_{f \perp \one} 
	\frac{ \sum_{u \in V}  \max_{  i,j \in \paren{ \set{u} \cup N(u)}} \paren{f_i - f_u + f_u - f_j}^2  }{d \sum_{u \in V} f_u^2} 
\leq \min_{f \perp \one} 
	\frac{ \sum_{u \in V} 4 \max_{ v \sim u} \paren{f_v - f_u}^2 }{d \sum_{u \in V} f_u^2}
	= \frac{4 \linf}{d}  \mper    \]
\end{proof}

\subsection{Hardness via the \SSEH}

We state the \SSEH proposed by Raghavendra and Steurer \cite{rs10}.

\begin{hypothesis}[Small-Set Expansion (\sse) Hypothesis]
	\label{hyp:sse}
	For every constant $\eta > 0$, there exists sufficiently small $\delta>0$
	such that, given a graph $G$ (with unit edge weights), it is NP-hard to distinguish the
	following two cases:
	
	\begin{description}\item[\yes:] 
		there exists a vertex set $S$ with $\delta \leq \frac{|S|}{n} \leq 10\delta$ and
		edge expansion 
		$\phi(S)\le \eta$,
		\item[\no:]  all vertex sets $S$  with $\delta \leq \frac{|S|}{n} \leq 10\delta$ have expansion
		$\phi(S)\ge 1-\eta$.
	\end{description}
\end{hypothesis} 

\paragraph{\SSEH}
Apart from being a natural optimization problem, the small-set expansion problem is closely tied to the Unique
Games Conjecture.  Recent work by Raghavendra-Steurer
\cite{rs10} established the reduction from the small-set expansion problem to the well known Unique
Games problem, thereby showing that \SSEH implies the Unique Games Conjecture.  
We refer the reader to \cite{rst12} for a comprehensive discussion on the 
implications of \SSEH.
We shall use the following hardness result
for vertex expansion based on \SSEH.

	\begin{fact}[\cite{lrv13}]
	\label{fact:lrv13}
		For every $\eta > 0$, there exists an absolute constant $C_1$ such that $\forall \varepsilon>0 $ it is \sse-hard to distinguish 
		between the following two cases for a given graph $G = (V,E,w)$ with maximum degree $d \geq 100/\varepsilon$ and minimum 
		degree $c_1 d$ (for some absolute constant $c_1$).
		\begin{description}
			\item[\yes] : There exists a set $S \subset V$ of size $\Abs{S} \leq \Abs{V}/2$ such that 
			\[ \phiv(S) \leq \varepsilon \]
			\item[\no] : For all sets $S \subset V$, 
			\[  \phiv(S) \geq  \min \set{10^{-10}, C_1 \sqrt{\varepsilon\log d}} - \eta.\]
		\end{description}
	\end{fact}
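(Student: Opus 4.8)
The plan is to prove Fact~\ref{fact:lrv13} as a gadget reduction from the \SSEH (Hypothesis~\ref{hyp:sse}), which is how the matching lower bound is obtained in \cite{lrv13}. The reduction composes an \SSEH instance with a ``noisy hypercube'' gadget --- equivalently, a finite discretization of the $\rho$-correlated Gaussian graph --- that serves as a dictatorship test, except that the quantity being tested is vertex expansion rather than edge expansion.

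First I would set up the gadget $\Gamma$. Fix a noise parameter $\rho$ (with $1-\rho$ polynomially related to $\varepsilon$), take a product domain $\Sigma$ on which $\rho$-noise acts coordinatewise, and discretize so that $\Gamma$ is $d$-regular up to constant factors with $d$ of order the prescribed bound $100/\varepsilon$ and minimum degree $\Omega(d)$. Two properties of $\Gamma$ are needed. \emph{Completeness (dictators are tight):} the ``dictator'' set $\{x : x_1 \ge 0\}$ has $\Gamma$-measure $\tfrac12$, and since a $\rho$-correlated perturbation flips the sign of the first coordinate only for points near the threshold, its combined internal/external vertex boundary has measure $\bigo{\sqrt{1-\rho}}$; tuning $\rho$ (and the \SSEH slack $\eta$) makes the resulting vertex expansion at most $\varepsilon$. \emph{Soundness (non-dictators expand):} by the small-set expansion / hypercontractivity theorem for the noisy hypercube, any set $T$ of $\Gamma$-measure bounded away from $0$ and $1$ that has small mass on every individual coordinate has a $\rho$-noisy neighborhood strictly larger than $T$ by a definite amount, which forces $\phiv_\Gamma(T) = \Omega(\sqrt{(1-\rho)\log d}) = \Omega(\sqrt{\varepsilon\log d})$.

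Next I would carry out the composition. Given an \SSEH instance $G$ on vertex set $V$, build $H$ on $V\times\Sigma$ with $(a,x)$ adjacent to $(b,y)$ whenever $\{a,b\}$ is an edge of $G$ and $x,y$ form a $\rho$-correlated pair in $\Gamma$; then $H$ has degree $\Theta(d)$ with constant min/max-degree ratio, matching the hypotheses. In the \yes case, lift the guaranteed sparse set $S\subseteq V$ (with $\delta\le|S|/n\le10\delta$ and $\phi_G(S)\le\eta$) through the dictator set of $\Gamma$; the resulting set in $H$ has measure at most $\tfrac12$ and vertex expansion at most (dictator expansion) $+ \bigo{\phi_G(S)}$, which is $\le\varepsilon$ for the chosen parameters. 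In the \no case I would argue the contrapositive: if some $A\subseteq V(H)$ had $\phiv_H(A)$ much smaller than $\sqrt{\varepsilon\log d}$, then restricting to typical ``slices'' and applying the gadget soundness shows that on a noticeable fraction of slices $A$ is correlated with some coordinate-dictator; an influence-decoding step --- a robust variant of the decoding used in Unique-Games-style soundness analyses, adapted so that the $\max$ over neighborhoods in the definition of $\phiv$ is controlled --- then extracts a set $S'\subseteq V$ with $\delta\le|S'|/n\le10\delta$ and $\phi_G(S')\le\eta$, contradicting the \no case of \SSEH. The additive $-\eta$ in the statement absorbs the loss of this decoding, and the cap at the absolute constant $10^{-10}$ reflects the regime in which the gadget analysis is valid.

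The main obstacle is the soundness direction together with its interaction with the vertex-expansion objective. Since vertex expansion is governed by a \emph{maximum} over neighborhoods rather than an average, the usual Fourier-analytic / invariance-principle machinery has to be replaced or strengthened by a ``robust junta'' statement: a set whose $\rho$-noisy neighborhood is only slightly larger than itself must be close to a small junta, quantitatively enough to yield the $\Omega(\sqrt{(1-\rho)\log d})$ lower bound. Obtaining this estimate at the right noise scale --- so that the final soundness is $\Omega(\sqrt{\varepsilon\log d})$ rather than a weaker $\Omega(\varepsilon)$ --- while simultaneously ensuring that the discretization of the Gaussian graph into a bounded-degree $\Gamma$ preserves both the completeness estimate and this lower bound with the correct constants, is the crux of the argument, and it is exactly here that the degree hypotheses $d\ge100/\varepsilon$ and minimum degree $\Omega(d)$ enter. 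Once the gadget soundness is established, the influence-decoding in the composition is comparatively routine.
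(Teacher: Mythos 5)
First, note that the paper does not prove this statement at all: it is imported verbatim as a \emph{Fact} from \cite{lrv13}, and the present paper only uses it as a black box (in Theorems~\ref{thm:hyper-expansion-hardness} and~\ref{thm:hyper-eigs-lower}). So there is no in-paper proof to compare against; what you have written is a reconstruction of the argument of \cite{lrv13} itself. As a reconstruction it is aimed in the right direction --- \cite{lrv13} does prove this via an \sse-based dictatorship-test composition whose gadget is governed by Gaussian noise correlation, with completeness witnessed by a dictator/halfspace cut of vertex boundary $O(\sqrt{1-\rho})$ and soundness coming from an isoperimetric statement for the noisy Gaussian graph.

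The genuine gap is that the crux you yourself identify is exactly the new theorem of \cite{lrv13}, and your proposal only asserts it. The claim that ``by the small-set expansion / hypercontractivity theorem for the noisy hypercube'' any non-junta set of bounded measure has vertex expansion $\Omega(\sqrt{(1-\rho)\log d})$ does not follow from standard hypercontractivity or noise-stability bounds: those control \emph{average} (edge-type) boundary quantities, whereas the $\sqrt{\log d}$ gain for the max-based vertex boundary requires the dedicated Gaussian vertex-isoperimetry / robust-structure theorem that \cite{lrv13} prove, together with an invariance principle to transfer it to the composed instance. Relatedly, the composition as you describe it (directly tensoring the \sse graph with a discretized gadget and then ``influence decoding'') glosses over how the maximum over neighborhoods is controlled across slices; \cite{lrv13} route this through an intermediate problem (a balanced \emph{analytic} vertex-expansion problem) precisely to make the soundness analysis and decoding go through, and the parameter bookkeeping ($1-\rho$ versus $\varepsilon$, $d \geq 100/\varepsilon$, the $10^{-10}$ cap and the additive $-\eta$) is inherited from that analysis rather than derivable from the sketch as written. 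In short: the plan is a faithful outline of the known proof strategy, but the two load-bearing steps --- the quantitative vertex-isoperimetry soundness bound and the invariance-principle-based decoding for a max-type objective --- are named, not established, so as a standalone proof of the Fact it is incomplete.
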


Reduction~\ref{red:hyper-vert} implies
that vertex expansion in $2$-graphs is closely
related to hyperedge expansion.
Therefore, the hardness of vertex expansion as
stated in Fact~\ref{fact:lrv13}
should imply the hardness of hyperedge expansion.
We formalize this intuition in the following theorems.

\begin{theorem}[Formal statement of \ref{thm:hyper-expansion-hardness-informal}]
	\label{thm:hyper-expansion-hardness}
	For every $\eta> 0$, there exists an absolute constant $C$ such that
	for all $\widehat{\ve} >0 $ it is \sse-hard to distinguish 
	between the following two cases for a given hypergraph $H = (V,E,w)$ with maximum hyperedge size $r$ such that $\widehat{\ve} r \log r \in [\eta^2, c_2]$ (for some absolute constant $c_2$)
	and $\rmin \geq c_1 r$ (for some absolute constant $c_1$).
	\begin{description}
		\item[\yes] : There exists a set $S \subset V$ such that 
		\[ \phi_H(S) \leq \widehat{\ve} \]
		\item[\no] : For all sets $S \subset V$,
		\[  \phi_H(S) \geq  C \sqrt{\widehat{\ve} \cdot \frac{\log r}{r}}.\]
	\end{description}
\end{theorem}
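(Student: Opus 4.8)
The plan is to derive this from the \sse-hardness of vertex expansion in bounded-degree graphs (Fact~\ref{fact:lrv13}) by pushing a hard instance through Reduction~\ref{red:hyper-vert} and tracking parameters with Fact~\ref{thm:hyper-vert-exp}. Given a target pair $(\widehat{\ve}, r)$ with $\widehat{\ve} r\log r \in [\eta^2, c_2]$, I would invoke Fact~\ref{fact:lrv13} with vertex-expansion parameter $\ve := c_1 r \widehat{\ve}$, on graphs $G=(V,E)$ of maximum degree $d := r-1$ and minimum degree at least $c_1 d$, and with its additive-slack parameter set to $\eta' := c'\eta$ for a small absolute constant $c'$ fixed at the end. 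One first checks this invocation is legal: $\ve\log d = c_1 r\widehat{\ve}\log(r-1) \le c_1\,(\widehat{\ve} r\log r) \le c_1 c_2$, and $d\ve = (r-1)\,c_1 r\widehat{\ve} \ge c_1(r-1)\eta^2/\log r$, which exceeds $100$ once $r$ is larger than an absolute constant depending only on $\eta$ (so the statement carries content precisely in the stated range of $r$, where also $\log(r-1)=\Theta(\log r)$). Applying Reduction~\ref{red:hyper-vert} to $G$ yields a hypergraph $H=(V,E')$ whose hyperedge $\{v\}\cup N(v)$ has size $1+\deg(v)$; hence its largest hyperedge has size exactly $d+1=r$ and its smallest has size $\ge 1+c_1 d \ge c_1 r$, meeting the size constraints of the theorem.

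Next I would translate the two cases through Fact~\ref{thm:hyper-vert-exp}. In the \yes case, Fact~\ref{fact:lrv13} supplies $S$ with $\phiv_G(S)\le\ve$, so $\phi_H(S) \le \frac{1}{c_1(d+1)}\phiv_G(S) \le \frac{\ve}{c_1 r} = \widehat{\ve}$. In the \no case, \emph{every} $S$ satisfies $\phiv_G(S)\ge \min\{10^{-10}, C_1\sqrt{\ve\log d}\}-\eta'$; since $\ve\log d \le c_1 c_2$, choosing the absolute constant $c_2$ small enough (below $10^{-20}/(C_1^2 c_1)$) forces the minimum to be $C_1\sqrt{\ve\log d}$, and using $\log(r-1)\ge\tfrac12\log r$ together with $\ve = c_1 r\widehat{\ve}$ gives $\phiv_G(S) \ge C_1\sqrt{c_1/2}\cdot\sqrt{\widehat{\ve} r\log r}-\eta'$. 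Applying the other side of Fact~\ref{thm:hyper-vert-exp} ($\phi_H(S)\ge\frac{1}{d+1}\phiv_G(S)$ with $d+1=r$, and $\tfrac1r\sqrt{\widehat{\ve} r\log r}=\sqrt{\widehat{\ve}\log r/r}$) yields
\[ \phi_H(S)\ \ge\ C_1\sqrt{\tfrac{c_1}{2}}\cdot\sqrt{\tfrac{\widehat{\ve}\log r}{r}}\ -\ \tfrac{\eta'}{r}. \]

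Finally I would absorb the additive term using the \emph{lower} constraint $\widehat{\ve} r\log r\ge\eta^2$: it gives $\sqrt{\widehat{\ve}\log r / r} = \sqrt{\widehat{\ve} r\log r}/r \ge \eta/r$, so if $c'$ (hence $\eta'=c'\eta$) is chosen so that $\eta' \le \tfrac12 C_1\sqrt{c_1/2}\,\eta$, the subtracted term $\eta'/r$ is at most half of the main term and
\[ \phi_H(S)\ \ge\ \tfrac12 C_1\sqrt{\tfrac{c_1}{2}}\cdot\sqrt{\tfrac{\widehat{\ve}\log r}{r}}\ =:\ C\cdot\sqrt{\widehat{\ve}\cdot\tfrac{\log r}{r}}, \]
with $C$ an absolute constant. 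Hence distinguishing the two cases for $H$ distinguishes them for $G$, and the claimed \sse-hardness follows. I do not expect a deep obstacle here; the real work is the bookkeeping: tying $(\ve, d, \eta')$ to $(\widehat{\ve}, r, \eta)$ so that the degree condition of Fact~\ref{fact:lrv13} holds, ensuring the $\min$ in its \no case resolves to the $\sqrt{\cdot}$ branch (which is exactly what the upper constraint $\widehat{\ve} r\log r\le c_2$ buys), and ensuring the additive slack is dominated by the main term (which is exactly what the lower constraint $\widehat{\ve} r\log r\ge\eta^2$ buys).
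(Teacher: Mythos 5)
Your proposal is correct and follows essentially the same route as the paper's proof: reduce from the \sse-hardness of vertex expansion (Fact~\ref{fact:lrv13}) via Reduction~\ref{red:hyper-vert} and Fact~\ref{thm:hyper-vert-exp}, using the upper constraint $\widehat{\ve} r\log r \le c_2$ to make the $\min$ resolve to the $\sqrt{\ve\log d}$ branch and the lower constraint $\widehat{\ve} r\log r \ge \eta^2$ to absorb the additive $\eta$ slack. Your version merely makes the parameter correspondence $(\ve,d,\eta')\leftrightarrow(\widehat{\ve},r,\eta)$ and the degree condition $d\ge 100/\ve$ explicit, which the paper handles implicitly.
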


\begin{proof}
Given an undirected graph $G$ with maximum degree $d$
and minimum degree $\Omega(d)$ as in Fact~\ref{fact:lrv13},
we apply Reduction~\ref{red:hyper-vert} to obtain
a hypergraph $H$ with maximum edge cardinality $r = d+1$.
Then,
Fact~\ref{thm:hyper-vert-exp}
implies that for any subset $S$ of vertices,
$c_i \cdot \phi_H(S) \leq \frac{\phiv_G(S)}{d+1} \leq \phi_H(S)$.

Fix some small enough $\eta > 0$ and corresponding $C_1 > 0$
as in Fact~\ref{fact:lrv13}.
Let $\ve > \frac{100}{d+1} = \frac{100}{r}$.

Under the \yes case of vertex expansion in Fact~\ref{fact:lrv13},
there is some subset $S$ such that $|S| \leq \frac{|V|}{2}$ and
$\phiv_G(S) \leq \ve$.
This implies that $\phi_H(S) \leq \frac{\ve}{c_1 r}$,
and we denote $\widehat{\ve} := \frac{\ve}{c_1 r} > \frac{100}{c_1 r^2}$.

Under the \no case of vertex expansion in Fact~\ref{fact:lrv13},
we have the fact that any $S \subset V$ has vertex expansion

$\phiv_G(S) \geq  \min \set{10^{-10}, C_1 \sqrt{\varepsilon \log d}} - \eta$.

This implies that for some constant $C'$ depending on $C_1$ and $c_1$,

$\phi_H(S) \geq \frac{\phiv_G(S)}{r} \geq
\min \set{\frac{10^{-10}}{r}, C' \sqrt{\widehat{\ve} \cdot  \frac{\log r}{r}}} - \frac{\eta}{r}$.

Observe that this lower bound is non-trivial under the case

$\frac{10^{-10}}{r} \geq C' \sqrt{\widehat{\ve} \cdot  \frac{\log r}{r}}
\geq 2 \cdot \frac{\eta}{r}$,
which is equivalent to $\widehat{\ve} r \log r \in [\eta^2, c_2]$,
for some constant $c_2$ depending on $C_1$ and $c_1$.
Hence, under this case,
we have $\phi_H(S) \geq \frac{C'}{2} \cdot \sqrt{\widehat{\ve} \cdot  \frac{\log r}{r}}$.

Hence, the \sse-hardness in Fact~\ref{fact:lrv13} finishes the proof.
\end{proof}

\begin{theorem}[Formal statement of \ref{thm:hyper-eigs-lower-informal}]
	\label{thm:hyper-eigs-lower}
	For every $\eta > 0$, there exists an absolute constant $C$ such that $\forall \overline{\ve} >0 $ it is \sse-hard to distinguish 
	between the following two cases for a given hypergraph $H = (V,E,w)$ with maximum hyperedge size $r$
	such that $\overline{\ve} r \log r \in [\eta^2, c_2]$ (for some absolute constant $c_2$),
	 $\rmin \geq c_1 r$ (for some absolute constant $c_1$) and $\gamma_2 \leq \frac{1}{r}$
	where $\gamma_2$ is the parameter associated with $H$
	as in Theorem~\ref{thm2:hyper-cheeger}.
	
	\begin{description}
		\item[\yes] : $\gamma_2 \leq \overline{\ve}$.
		
		\item[\no] : $\gamma_2 \geq C \overline{\ve} \log r$.
	\end{description}
\end{theorem}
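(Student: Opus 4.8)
The plan is to transfer the \sse-hardness of approximating vertex expansion (Fact~\ref{fact:lrv13}) through Reduction~\ref{red:hyper-vert}, in exactly the way the proof of Theorem~\ref{thm:hyper-expansion-hardness} does, except that we track the Poincar\'e parameter $\linf$ of the graph instead of $\phiv_G$ and push it to $\gamma_2$ of the resulting hypergraph using Theorem~\ref{thm:linf-eig2}. Given an instance $G$ of the vertex-expansion promise problem of Fact~\ref{fact:lrv13} with maximum degree $d$ and minimum degree $\Omega(d)$, apply Reduction~\ref{red:hyper-vert} to obtain $H=(V,E')$; then $H$ has $r=d+1$ and $\rmin \geq c_1 r$, and Theorem~\ref{thm:linf-eig2} gives $\frac{\linf(G)}{d} \leq \gamma_2(H) \leq \frac{4\linf(G)}{d}$. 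So it suffices to show that Fact~\ref{fact:lrv13} already yields \sse-hardness of distinguishing $\linf(G) \leq 2\ve$ from $\linf(G) = \Omega(\ve \log d)$ for the relevant range of $\ve$, and then rescale $\ve$ by a factor $d$.

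In the \yes case I would take the promised $S$ with $|S| \leq |V|/2$ and $\phiv_G(S) \leq \ve$, and let $f$ be the component of $\chi_S$ orthogonal to $\vec{1}$. Adding a multiple of $\vec{1}$ changes no difference $f_u - f_v$, so the numerator of $\D^V(f)$ equals $|\Nin(S)| + |\Nout(S)|$, while $\|f\|^2 = |S|(1 - |S|/|V|) \geq |S|/2$; hence $\linf(G) \leq \D^V(f) \leq 2\phiv_G(S) \leq 2\ve$, and so $\gamma_2(H) \leq \frac{8\ve}{d}$. Set $\overline{\ve} := \frac{8\ve}{d}$, so that the \yes case reads $\gamma_2 \leq \overline{\ve}$. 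In the \no case every vertex set expands, so $\phiv_G \geq \min\{10^{-10}, C_1\sqrt{\ve \log d}\} - \eta$; the easy direction of the Bobkov--Houdr\'e--Tetali Cheeger inequality, $\phiv_G \leq \sqrt{2\linf(G)}$, then forces $\linf(G) \geq \frac{1}{2}(\min\{10^{-10}, C_1\sqrt{\ve \log d}\} - \eta)^2$, hence $\gamma_2(H) \geq \frac{1}{2d}(\min\{10^{-10}, C_1\sqrt{\ve \log d}\} - \eta)^2$. Restricting to $\ve$ with $C_1\sqrt{\ve \log d} \in [2\eta, 10^{-10}]$ makes the minimum equal to $C_1\sqrt{\ve \log d}$ and, since $\eta \leq \frac{1}{2} C_1\sqrt{\ve \log d}$, loses only a factor of $2$, giving $\gamma_2(H) \geq \frac{C_1^2}{8}\cdot\frac{\ve \log d}{d} = \Omega(\overline{\ve} \log r)$ because $\overline{\ve} = \Theta(\ve/d)$, $r = d+1$, and $\log d = \Theta(\log r)$.

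Finally one translates the window $C_1\sqrt{\ve\log d}\in[2\eta,10^{-10}]$ into $\overline{\ve}\, r\log r \in [\eta^2, c_2]$ for absolute constants $c_1,c_2$ (after rescaling $\eta$), which matches the statement, and with $C := \Theta(C_1^2)$ the \no case becomes $\gamma_2 \geq C\overline{\ve}\log r$. To honour the additional promise $\gamma_2 \leq \frac{1}{r}$ on both sides: on \yes instances $\gamma_2 \leq \overline{\ve} \leq \frac{c_2}{r\log r} \leq \frac{1}{r}$; on \no instances, since in this regime $\ve\log d \leq c_2$ is a small constant, $\linf(G) \leq 2\phiv_G = O(\sqrt{\ve\log d})$ provided the hard instances of \cite{lrv13} have $\phiv_G$ within a constant factor of the claimed lower bound (or, failing that, after taking $d$ large and $c_2$ small enough), so $\gamma_2(H) = O(\sqrt{c_2}/r) \leq \frac{1}{r}$. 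The conceptual core --- the near-equality $\gamma_2(H) \asymp \linf(G)/d$ supplied by Theorem~\ref{thm:linf-eig2} --- is clean and essentially immediate; the part that needs care, and the main obstacle, is this last bookkeeping: fitting the two-sided parameter window of Fact~\ref{fact:lrv13} to $\overline{\ve}\,r\log r\in[\eta^2,c_2]$ while simultaneously ensuring that the \no instances produced by the reduction genuinely satisfy $\gamma_2 \leq \frac{1}{r}$.
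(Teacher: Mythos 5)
Your route is genuinely different from the paper's. The paper obtains this theorem as a short corollary of its hypergraph-expansion hardness (Theorem~\ref{thm:hyper-expansion-hardness}) combined with the two-sided hypergraph Cheeger bounds of Theorem~\ref{thm2:hyper-cheeger} and Proposition~\ref{prop:hyper-sweep-rounding}: on instances with $\rmin=\Omega(r)$ and the promise $\gamma_2\le\frac{1}{r}$ one has $\frac{\gamma_2}{2}\le\phi_H\le\gamma_2+2\sqrt{\gamma_2/\rmin}=O(\sqrt{\gamma_2/r})$, so the \yes/\no gap for $\phi_H$ translates directly into $\gamma_2\le 2\widehat{\ve}$ versus $\gamma_2=\Omega(\widehat{\ve}\log r)$. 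You instead stay on the graph side: you convert the $\phiv$ gap of Fact~\ref{fact:lrv13} into a gap for $\linf$ via the Bobkov--Houdr\'e--Tetali inequality, and then transfer it to $\gamma_2$ of the hypergraph from Reduction~\ref{red:hyper-vert} via Theorem~\ref{thm:linf-eig2}. This parallel argument is legitimate, and it even has the mild advantage that your \no-case lower bound on $\gamma_2$ never uses the promise $\gamma_2\le\frac1r$, which the paper's derivation does use. Two steps, however, need repair.

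First, Theorem~\ref{thm:linf-eig2} is stated only for $d$-regular graphs, whereas the hard instances of Fact~\ref{fact:lrv13} are merely almost regular (maximum degree $d$, minimum degree $c_1 d$). In that setting the vertex weights $w_u=1+\deg(u)$ are non-uniform, so $f\perp_w\vec 1$ no longer coincides with $f\perp\vec 1$, and you need both directions ($\linf\le O(d\,\gamma_2)$ for \no, $\gamma_2\le O(\linf/d)$ for \yes) re-proved in this regime; this is fixable in a few lines (write both quotients with shift-minimized denominators and use $c_1d+1\le w_u\le d+1$), but as written you invoke the theorem outside its hypotheses. Second, and more substantively, your verification that the \no instances satisfy the side condition $\gamma_2\le\frac1r$ does not go through: the \no case of Fact~\ref{fact:lrv13} gives only a \emph{lower} bound on $\phiv_G$, and $\phiv_G$ (hence $\linf\le 2\phiv_G$) can be a constant independent of $d$ and of $c_2$, so your argument yields only $\gamma_2=O(1/(c_1 d))$, which is not $\le\frac1r$; neither the assumption that the \cite{lrv13} instances have $\phiv_G$ within a constant of the claimed lower bound, nor the fallback of ``taking $d$ large and $c_2$ small'', is justified (these choices do not affect $\phiv_G$ of a \no instance). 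To be fair, the paper does not certify this condition either: it simply restricts the claim to hypergraphs promised to satisfy $\gamma_2\le\frac1r$ and uses that promise inside the \no-case derivation. The cleanest repair for you is the same: treat $\gamma_2\le\frac1r$ purely as an instance promise (on your route it is not needed as an ingredient of the argument) rather than attempting to derive it from the \no guarantee.
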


\begin{proof}
We shall use
the hardness result in Theroem~\ref{thm:hyper-expansion-hardness},
and the Cheeger inequality for hyeprgraphs in
Theorem~\ref{thm2:hyper-cheeger} and
Proposition~\ref{prop:hyper-sweep-rounding}.

Given a hypergraph $H$, we have

$\frac{\gamma_2}{2} \leq \phi_H \leq \gamma_2 + 2 \sqrt{\frac{\gamma_2}{\rmin}}
\leq O(\sqrt{\frac{\gamma_2}{r}})$,
where the last inequality follow
because $\rmin = \Omega(r)$ and $\gamma_2 \leq \frac{1}{r}$.

	Hence, the \yes case in Theorem~\ref{thm:hyper-expansion-hardness}
	implies that $\gamma_2 \leq 2 \widehat{\ve}$.
	
	The \no case in Theorem~\ref{thm:hyper-expansion-hardness}
	implies that $\gamma_2 = \Omega(\widehat{\ve} \log r)$.
	
Therefore, the hardness result in
Theorem~\ref{thm:hyper-expansion-hardness} finishes the proof. 	
\end{proof}

\ignore{

\hubert{Since the following result basically follows from
hardness of computing $\phi_H$,
I propose to omit  it.}

\subsection{Nonexistence of Linear Hypergraph Operators}

\begin{theorem}[Restatement of \ref{thm:hyper-nonlinear}]
	Given a hypergraph $H=(V,E,w)$, assuming the \sse~ hypothesis, there exists no
	polynomial time algorithm to compute a matrix $A \in \R^{V \times V}$, such that 
	\[  c_1 \lambda \leq \phi_H \leq c_2 \sqrt{\lambda}   \]
	where $\lambda$ is any polynomial time computable function of the eigenvalues of $A$
	and $c_1, c_2 \in \R^+$ are absolute constants.
\end{theorem}

\begin{proof}
	For the sake of contradiction, suppose there existed a polynomial time algorithm to compute such 
	a matrix $A$ and there existed a polynomial time algorithm to compute a $\lambda$ from the eigenvalues
	of $A$ such that \[  c_1 \lambda \leq \phi_H \leq c_2 \sqrt{\lambda} \mper \]
	Then this would yield an $\bigo{\sqrt{\OPT}}$-approximation for $\phi_H$.
	But Theorem~\ref{thm:hyper-expansion-hardness} says that this is not possible assuming the
	\sse~ hypothesis. Therefore, no such polynomial time algorithm to compute such a matrix exists.
\end{proof}

}

\ignore{
Theorem~\ref{thm:linf-eig2} shows that $\linf$ of a graph $G$ is an ``approximate eigenvalue'' of
the hypergraph \markov operator for the hypergraph obtained from $G$ using the reduction from
vertex expansion in graphs to hypergraph expansion (Theorem~\ref{thm:hyper-vert-exp}).

We now define a \markov operator for graphs (Definition~\ref{def:vert-markov}), similar to Definition~\ref{def:hyper-markov},
for which $(1 - \linf)$ is the second largest eigenvalue.

\begin{figure}[ht]
\begin{tabularx}{\columnwidth}{|X|}
\hline
\begin{definition}
\label{def:vert-markov}
Given a vector $X \in \R^n$, $\mvert(X)$ is computed as follows.
	\begin{enumerate}
	\item 	For each vertex $u \in V$, let $j_u := {\sf argmax}_{v \sim u} \Abs{X_u - X_v}$, breaking 
	ties randomly (See Remark~\ref{rem:hyper-walk-ties}). 	
	\item We now construct the weighted graph $G_X$ on the vertex set $V$ as follows.
		We add edges $\set{ \set{u,j_u} : u \in V}$ having weight $w(\set{u,j_u}) := 1/d$ to $G_X$. 
		Next, to each vertex $v$ we add self-loops of sufficient weight such that its weighted degree in $G_X$ is equal 
		to $1$.  	
	\item	We define $A_X$ to be the (weighted) adjacency matrix of $G_X$.
	
	\end{enumerate}
Then, 
	\[ \mvert(X) \defeq A_X X \mper \]
\end{definition}
\\
\hline 
\end{tabularx}
\caption{The Vertex Expansion \Markov Operator}
\end{figure}

\begin{theorem}[Restatement of Theorem~\ref{thm:linf-eig}]
For a graph $G$, $\linf$ is the second smallest eigenvalue of $\lapvert \defeq I - \mvert$.
\end{theorem}

The proof of Theorem~\ref{thm:linf-eig} is similar to the proof of Theorem~\ref{thm:hyper-2nd},
and hence is omitted.
}

\section{Polynomial Time Approximation Algorithm for Procedural Minimizers}
\label{sec:hyper-eigs-poly-alg}

Observe the procedures in Section~\ref{sec:cheeger}
take $k$ orthonormal vectors $f_1, f_2, \ldots, f_k$ in the weighted
space such that $\max_{i \in [k]} \D_w(f_i)$ is small.
However, we do not know of an efficient algorithm to generate
such $k$ vectors to attain the minimum $\xi_k$.
In this section, we consider an approximation algorithm to
produce these vectors.

\begin{theorem}[Restatement of Theorem~\ref{th:approx_xik_informal}]
\label{th:approx_xik}
There exists a randomized polynomial time algorithm that, given a hypergraph $H = (V,E,w)$
and a parameter $k < \Abs{V}$, outputs $k$ orthonormal vectors $f_1, \ldots, f_k$
in the weighted space
such that with high probability, for each $i \in [k]$,
\[ \D_w(f_i) \leq \bigo{i \log r\,\cdot \xi_i}.  \]
\end{theorem}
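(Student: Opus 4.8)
The plan is to build the $k$ orthonormal vectors greedily, one at a time, invoking Theorem~\ref{thm:hyper-eigs-alg} at each step as a black-box subroutine. I would set $f_1 := \Wh\vec{1}/\norm{\Wh\vec{1}}$ (normalized in the weighted space), which has $\D_w(f_1) = 0 = \xi_1$. Inductively, suppose $f_1,\ldots,f_{i-1}$ have been constructed and are mutually orthogonal in the weighted space. Let $\gamma^{(i)} := \min\{\D_w(f) : \vec{0}\neq f \perp_w \{f_1,\ldots,f_{i-1}\}\}$ be the exact $i$-th procedural minimizer value relative to the \emph{current} vectors. Applying the randomized procedure of Theorem~\ref{thm:hyper-eigs-alg} (with the $f_j$'s as input) produces, in polynomial time and with high probability, a nonzero $f_i \perp_w \{f_1,\ldots,f_{i-1}\}$ with $\D_w(f_i) = \bigo{\gamma^{(i)}\log r}$; rescale $f_i$ to have unit weighted norm. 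Repeating for $i = 1,\ldots,k$ and taking a union bound over the $k \le n$ rounds keeps the overall failure probability small.

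The key remaining point is to relate $\gamma^{(i)}$ — the minimum discrepancy over vectors orthogonal to the \emph{algorithmically produced} $f_1,\ldots,f_{i-1}$ — to $\xi_i$, which is the min-over-all-$i$-tuples of the max discrepancy. This is exactly the content of Claim~\ref{claim:xg}/Claim~\ref{claim:gz1}-style reasoning: since $f_1,\ldots,f_{i-1}$ span an $(i-1)$-dimensional subspace, its weighted-orthogonal complement has dimension $n-i+1$; for any competitor set $g_1,\ldots,g_i$ of $i$ orthonormal vectors achieving $\max_j \D_w(g_j) = \xi_i$, the span of the $g_j$'s (dimension $i$) must intersect the orthogonal complement of $\{f_1,\ldots,f_{i-1}\}$ nontrivially, so there is a nonzero $g \in \spn\{g_1,\ldots,g_i\}$ with $g \perp_w \{f_1,\ldots,f_{i-1}\}$. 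By the subadditivity of the discrepancy ratio on spans (Claim~\ref{claim:zx}: $\D_w(g) \le i\max_j \D_w(g_j) = i\,\xi_i$), we get $\gamma^{(i)} \le \D_w(g) \le i\,\xi_i$. Feeding this into the bound from the previous paragraph yields $\D_w(f_i) = \bigo{\gamma^{(i)}\log r} = \bigo{i\log r\cdot\xi_i}$, as required.

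I expect the main obstacle to be purely bookkeeping: ensuring that the inductive invariant (mutual weighted-orthogonality of $f_1,\ldots,f_i$, unit norms, and the high-probability guarantee) is cleanly maintained across rounds, and that the "dimension-counting" argument above is stated precisely — in particular that $\spn\{g_1,\ldots,g_i\}$ and $\{f_1,\ldots,f_{i-1}\}^{\perp_w}$ intersect nontrivially because $i + (n-i+1) = n+1 > n$. One subtlety worth a sentence: Theorem~\ref{thm:hyper-eigs-alg} is stated for $k \ge 2$, so the $i=1$ case must be handled separately by the explicit choice $f_1 = \Wh\vec{1}/\norm{\Wh\vec{1}}$, which trivially satisfies the conclusion since $\xi_1 = 0$. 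With these pieces in place, the proof is a short induction combining Theorem~\ref{thm:hyper-eigs-alg}, the dimension argument, and Claim~\ref{claim:zx}.

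\begin{proof}
We construct $f_1,\ldots,f_k$ iteratively, maintaining the invariant that after round $i$ the vectors $f_1,\ldots,f_i$ are mutually orthogonal and of unit norm in the weighted space.

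Set $f_1 := \Wh\vec{1} / \norm{\Wh\vec{1}}_w$. Since adding a constant to every coordinate does not change $\Delta_e(\cdot)$, we have $\D_w(f_1) = 0 = \xi_1$, so the claimed bound holds trivially for $i=1$.

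Now suppose $i \ge 2$ and $f_1,\ldots,f_{i-1}$ have been constructed. Define
\[ \gamma^{(i)} := \min\{\D_w(f) : \vec{0}\neq f \perp_w \{f_j : j \in [i-1]\}\}. \]
Apply the randomized procedure of Theorem~\ref{thm:hyper-eigs-alg} with input $\{f_j\}_{j\in[i-1]}$. In polynomial time it outputs, with high probability, a nonzero vector $f$ orthogonal to $\{f_j\}_{j\in[i-1]}$ in the weighted space with $\D_w(f) = \bigo{\gamma^{(i)}\log r}$; set $f_i := f / \norm{f}_w$, which preserves the invariant and does not change $\D_w$.

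It remains to show $\gamma^{(i)} \le i\,\xi_i$. Let $g_1,\ldots,g_i$ be $i$ mutually orthogonal nonzero vectors in the weighted space attaining $\max_{j\in[i]}\D_w(g_j) = \xi_i$. The subspace $\spn\{g_1,\ldots,g_i\}$ has dimension $i$, while $\{f_1,\ldots,f_{i-1}\}^{\perp_w}$ has dimension $n-i+1$; since $i + (n-i+1) = n+1 > n$, these subspaces intersect in a nonzero vector $g$. By Claim~\ref{claim:zx} applied to $g_1,\ldots,g_i$,
\[ \D_w(g) \le i \max_{j\in[i]}\D_w(g_j) = i\,\xi_i. \]
Since $g \perp_w \{f_j\}_{j\in[i-1]}$ and $g \neq \vec{0}$, we conclude $\gamma^{(i)} \le \D_w(g) \le i\,\xi_i$.

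Combining, $\D_w(f_i) = \bigo{\gamma^{(i)}\log r} = \bigo{i\log r \cdot \xi_i}$. Taking a union bound over the at most $k \le n$ rounds, all these bounds hold simultaneously with high probability, which completes the proof.
\end{proof}
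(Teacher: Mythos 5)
Your construction and analysis coincide with the paper's own proof: the greedy application of Theorem~\ref{thm:hyper-eigs-alg_formal}, the dimension-count producing a nonzero $g$ in the span of an optimal $i$-tuple that is $\perp_w$ to $f_1,\ldots,f_{i-1}$, and the span-subadditivity of $\D_w$ (the inequality proved inside Claim~\ref{claim:zx}; the paper phrases it via $\zeta_i \le i\,\xi_i$) to conclude $\gamma^{(i)} \le i\,\xi_i$. The only fix needed is the base case: in the weighted space the first vector should be the constant vector $f_1 = \vec{1}/\norm{\vec{1}}_w$ (the weighted-space counterpart of $x_1 = \Wh\vec{1}$), since $\D_w(\Wh\vec{1})$ need not be zero and your justification (invariance under adding a constant) only applies to constant vectors.
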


Observe that Theorem~\ref{th:approx_xik} gives
a way to generate $k$ orthonormal vectors in the weighted
space such that the maximum discrepancy ratio $\D_w(\cdot)$ is 
at most $k \log r \cdot \xi_k$.  Hence, these vectors
can be used as inputs for the procedures in Theorem~\ref{thm2:hyper-cheeger}
(more precisely, we use an approximate $f_2$ in Proposition~\ref{prop:hyper-1d}),
Theorems~\ref{thm:hyper-sse} and~\ref{thm:hyper-higher-cheeger}
to give approximation algorithms as described
in Corollaries~\ref{cor:hyper-sparsest-informal},
\ref{cor:hyper-sse-informal}
and~\ref{cor:hyper-higher-cheeger-informal}.

The approximate algorithm in Theorem~\ref{th:approx_xik}
achieves the $k$ vectors by starting with $f_1 \in \spn(\vec{1})$,
and repeatedly using the algorithm in the following
theorem to generate approximate procedural minimizers.

\begin{theorem}[Restatement of Theorem~\ref{thm:hyper-eigs-alg}]
\label{thm:hyper-eigs-alg_formal}
Suppose for $k \geq 2$,
$\{f_i\}_{i \in [k-1]}$ is a set of orthonormal vectors
in the weighted space, and
define $\gamma := \min \{ \D_w(f) : \vec{0} \neq f \perp_w \{f_i:
i \in [k-1]\}\}$.
Then, there is a randomized procedure that
produces a non-zero vector $f$ that
is orthogonal to $\{f_i\}_{i \in [k-1]}$
in polynomial time,
such that with high probability,
$\D_w(f) = \bigo{\gamma \log r}$, 
where $r$ is the size of the largest hyperedge.
\end{theorem}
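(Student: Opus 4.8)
The plan is to reduce the problem to an SDP relaxation of the quantity $\gamma$, and then round a near-optimal SDP solution using the orthogonal-separator machinery (Fact~\ref{lem:gen-orth-sep}) that was already used in Section~\ref{sec:hyper-sse}. First I would write down the natural vector relaxation: we seek an assignment $i \mapsto v_i \in \R^N$ (for $N$ large enough, e.g. $N = |V|$) minimizing $\sum_{e \in E} w_e \max_{i,j \in e}\|v_i - v_j\|^2$ subject to $\sum_{i \in V} w_i \|v_i\|^2 = 1$ and the orthogonality constraints $\sum_{i \in V} w_i \langle v_i, (f_t)_i \cdot \mathbf{w}\rangle$-type conditions encoding $v \perp_w f_t$ for each $t \in [k-1]$ (more precisely, for the $s$-th coordinate of the embedding, $\sum_i w_i v_i(s) f_t(i) = 0$). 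The true minimizer $f$ achieving $\gamma$, viewed as a $1$-dimensional embedding $v_i := (f(i))$, is feasible, so the SDP optimum is at most $\gamma$; the SDP can be solved to any desired accuracy in polynomial time. Let $\{v_i\}$ be an (essentially) optimal solution with objective $\le \gamma(1 + o(1))$.

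Next I would round. Normalize to $\tilde v_i := v_i / \|v_i\|$ (handling $v_i = 0$ separately — those vertices can be dropped since they contribute nothing) and apply the orthogonal separator of Fact~\ref{lem:gen-orth-sep} with $\beta$ a constant bounded away from $1$ and $\tau = \Theta(1)$, obtaining a random set $\widehat S$. Define the random vector $g \in \R^V$ by $g_i := \|v_i\|^2 \cdot \mathbf{1}[\tilde v_i \in \widehat S]$. Exactly as in Lemma~\ref{lem:hyper-sse-denom} one shows $\E[\sum_i w_i g_i] = \alpha$ and $\E[(\sum_i w_i g_i)^2] = O(\alpha)$, so by Paley–Zygmund the denominator $\sum_i w_i g_i = \Omega(\alpha)$ with constant probability. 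For the numerator, the computation in Lemma~\ref{lem:hyper-sse-num} (splitting edges into $E_1$, where lengths within $e$ are comparable, and $E_2$, where they are not, and using Lemma~\ref{lem:2vecs} together with the cut bound of Fact~\ref{lem:gen-orth-sep}~(3)) gives $\E[\sum_e w_e \max_{i,j \in e}|g_i - g_j|] = O(D) \cdot \sqrt{\text{SDP}} \cdot \alpha = O(\sqrt{\log r})\cdot \sqrt{\gamma}\cdot \alpha$ after absorbing the $O(1)$ factors from $\tau, \beta$; here I use the analogue of Lemma~\ref{lem:hyper-spectral-embedding}~(4), which in the SDP setting follows by Cauchy–Schwarz from the objective and normalization constraints. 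So with constant probability the ratio $\frac{\sum_e w_e \max_{i,j\in e}|g_i - g_j|}{\sum_i w_i g_i}$ is $O(\sqrt{\gamma \log r})$; applying Proposition~\ref{prop:hyper-1d} to $g$ (or rather to $\sqrt{g}$, mimicking Proposition~\ref{prop:hyper-sweep-rounding}) yields a set $S \subseteq \supp(g)$ with $\phi(S) = O(\sqrt{\gamma \log r})$, and then $\D_w(\chi_S) = \phi(S)$.

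The remaining, and genuinely delicate, step is to convert the set $S$ (which gives a good \emph{characteristic} vector but need not be orthogonal to the $f_t$'s) into a vector $f \perp_w \{f_i\}_{i\in[k-1]}$ with $\D_w(f) = O(\gamma \log r)$. The idea is that the SDP orthogonality constraints are approximately inherited by the sweep sets: one runs the rounding/sweep on the SDP solution but first projects out the span of $\{\Wh f_i\}$ inside each coordinate of the embedding, so that the candidate vectors produced are automatically $\perp_w$-orthogonal to all the $f_i$. Concretely, take $f := \chi_S - \sum_{t=1}^{k-1}\langle \chi_S, f_t\rangle_w f_t$; since $S$ is a sweep set of the projected embedding, the correction terms $\langle \chi_S, f_t\rangle_w$ are small, $\|f - \chi_S\|_w^2 = O(1)\cdot \|\chi_S\|_w^2$ is what we need, and $\D_w(f) \le O(1)\cdot \D_w(\chi_S)/(1 - o(1)) + (\text{cross terms})$; bounding the cross terms in the numerator $\sum_e w_e \max_{u,v\in e}(f_u - f_v)^2$ — they involve $\max_{u,v \in e}(f_t(u) - f_t(v))^2$ summed against $w_e$, which is exactly $\D_w(f_t) \le \gamma \le \gamma$ by feasibility — is the main obstacle and must be done carefully, possibly losing another $O(1)$ or at worst $O(k)$ factor that then gets absorbed into the statement of Theorem~\ref{th:approx_xik}. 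I expect that with the right bookkeeping the orthogonalization costs only a constant factor for this single step (the $O(k)$ accumulation is across the $k$ applications in Theorem~\ref{th:approx_xik}), giving the claimed $\D_w(f) = O(\gamma \log r)$ for the single-step guarantee here.
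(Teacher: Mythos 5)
Your SDP relaxation is the right starting point (it is essentially the paper's SDP, including the linear constraints $\sum_{v} w_v f_t(v)\, \vec{g_v} = \vec{0}$ encoding $\perp_w$-orthogonality), but the rounding you propose does not prove the theorem. The statement requires a vector $f$ with $\D_w(f) = \bigo{\gamma \log r}$, i.e.\ a \emph{multiplicative} $\bigo{\log r}$-approximation of the discrepancy minimum. Your orthogonal-separator/sweep-cut rounding produces a set $S$ with $\phi(S) = \bigo{\sqrt{\gamma \log r}}$, hence a vector $\chi_S$ with $\D_w(\chi_S) = \bigo{\sqrt{\gamma\log r}}$; when $\gamma$ is small this is far larger than $\gamma\log r$, so you are proving a Cheeger-type (square-root) bound rather than the approximation guarantee needed here. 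Indeed the whole point of this theorem in the paper is to supply approximate procedural minimizers that are \emph{then} fed into the Cheeger-type roundings (Propositions~\ref{prop:hyper-1d} and~\ref{prop:hyper-sweep-rounding}, Theorems~\ref{thm:hyper-sse} and~\ref{thm:hyper-higher-cheeger}); taking the square root already at this stage destroys the later guarantees.

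The second gap is the orthogonalization step. In the theorem the $f_t$'s are arbitrary orthonormal vectors; nothing bounds $\D_w(f_t)$ by $\gamma$ (and in the intended application of Theorem~\ref{th:approx_xik} they only satisfy $\D_w(f_t) = \bigo{t\log r\,\xi_t}$), so your claim that the cross terms are ``exactly $\D_w(f_t)\le\gamma$ by feasibility'' is unfounded. Moreover, projecting $\chi_S$ onto $\{f_t\}^{\perp_w}$ can shrink its $w$-norm arbitrarily, blowing up the ratio, and sweep sets of a projected embedding do not inherit the linear orthogonality constraints in any quantitative way you have established. The paper avoids both problems by keeping the rounding \emph{linear}: it samples a single Gaussian $\vec{z}$ and sets $f(v) := \inprod{\vec{g_v},\vec{z}}$, so the SDP constraints $\sum_v w_v f_t(v)\,\vec{g_v}=\vec 0$ give $\inprod{f,f_t}_w = 0$ exactly and with probability $1$ the vector is non-zero; the numerator is controlled by a Massart-type bound $\E\bigl[\max_{u,v\in e}(f_u-f_v)^2\bigr] \le 8\log r \cdot \max_{u,v\in e}\norm{\vec{g_u}-\vec{g_v}}^2$ (each hyperedge contributes at most $\binom{r}{2}$ Gaussian differences), and the denominator is lower-bounded via Paley--Zygmund, yielding $\D_w(f) = \bigo{\log r}\cdot \sdpval \le \bigo{\gamma\log r}$ with constant probability. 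To repair your argument you would need to replace the orthogonal-separator rounding with this (or an equivalent linear) rounding; as written, the proposal proves a weaker and structurally different statement.
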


\begin{proofof}{Theorem~\ref{th:approx_xik}}
On a high level, we start with $f_1 := \frac{\vec{1}}{\norm{\vec{1}}_w}$.
For $1 < i \leq k$, assuming that orthonormal vectors $\{f_l : l \in [i-1]\}$
are already constructed, we apply Theorem~\ref{thm:hyper-eigs-alg_formal}
to generate $f_i$.
Hence, it suffices to show that $\D_w(f_i) \leq \bigo{i \log r \cdot \xi_i}$.

We prove that if $\xi := \min \{ \D_w(f) : \vec{0} \neq f \perp_w \{f_l: l \in [i-1]\}\}$,
then $\xi \leq i \cdot \xi_i$.  Hence, Theorem~\ref{thm:hyper-eigs-alg_formal}
implies that $\D_w(f_i) \leq \bigo{\xi \log r }\leq \bigo{ i \log r \cdot \xi_i}$.

Therefore, it remains to show $\xi \leq i \cdot \xi_i$.
Suppose $g_1, g_2, \ldots, g_i$ are orthonormal vectors in
the weighted space that attain $\zeta_i$ (which
is defined in Section~\ref{sec:min}).

Since $\spn(\{g_1, g_2, \ldots, g_i\})$ has dimension~$i$, 
there exists non-zero $g \in \spn(\{g_1, g_2, \ldots, g_{i}\})$
such that $g \perp_w \{f_1, f_2, \ldots, f_{i-1}\}$.
By the definition of $\zeta_i$,
we have $\D_w(g) \leq \zeta_i \leq i \xi_i$,
where the last inequality follows from Claim~\ref{claim:zx}.
Hence, we have $\xi \leq i \xi_i$, as required.
\end{proofof}

We next give an $\sdp$ relaxation (\ref{sdp:eig-k}) and a rounding
algorithm (Algorithm~\ref{alg:hyper-eigs-rounding})
to prove Theorem~\ref{thm:hyper-eigs-alg_formal}.

\subsection{An $\sdp$ Relaxation to Approximate Procedural Minimizers: Proof of
Theorem~\ref{thm:hyper-eigs-alg_formal}}
\label{sec:sdp_relax}

We present SDP~\ref{sdp:eig-k} to compute a vector 
in the weighted space that is
orthogonal to $f_1, \ldots, f_{k-1}$ having the least discrepancy ratio $\D_w(\cdot)$. 
In the SDP, for each $u \in V$,
the vector $\vec{g_u}$ represents the $u$-th coordinate of the 
vector $f \in \R^V$ that we try to compute.
The objective function of the SDP and equation~(\ref{eq:hyper-sdp-norm}) seek to minimize
the discrepancy ratio $\D_w(\cdot)$. 
We shall see that
equation~(\ref{eq:hyper-sdp-orth}) ensures
that after rounding, the resulting vector $f$ is
orthogonal to $f_1, \ldots, f_{k-1}$ in the weighted space
and achieves $\bigo{\log r}$-approximation with constant probability.

\begin{mybox}
\begin{SDP}
\label{sdp:eig-k}
\[ \sdpval \defeq \min \sum_{e \in E} w_e \max_{u,v \in e} \norm{\vec{g_u} - \vec{g_v}}^2  \] 
\subjectto
\begin{equation}
\label{eq:hyper-sdp-norm}
 \sum_{v \in V} w_v \norm{\vec{g_v}}^2  =  1 
\end{equation}

\begin{equation}
\label{eq:hyper-sdp-orth}
\sum_{v \in V} w_v f_i(v) \, \vec{g_v}  = \vec{0} \qquad \forall i \in [k-1] 
\end{equation}

\end{SDP}
\end{mybox}

\begin{algorithm}[H]
\caption{Rounding Algorithm for Computing Eigenvalues}
\label{alg:hyper-eigs-rounding}
\begin{algorithmic}[1]
\State Solve \sdp \ref{sdp:eig-k} 
to generate vectors $\vec{g_v} \in \R^n$ for $v \in V$.

\State Sample a random Gaussian vector $\vec{z} \sim \cN(0,1)^n$. For $v \in V$, set $f(v) \defeq \inprod{\vec{g_v},\vec{z}}$.

\State Output $f$.
\end{algorithmic}
\end{algorithm}

\begin{lemma}[Feasibility]
\label{lemma:sdp_feas}
With probability 1, Algorithm~\ref{alg:hyper-eigs-rounding} outputs a non-zero vector $f$ such that $f \perp_w \{f_1, f_2, \ldots, f_{k-1}\}$.
\end{lemma}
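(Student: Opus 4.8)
The plan is to establish the two assertions in Lemma~\ref{lemma:sdp_feas} separately: that the output $f$ satisfies $f \perp_w \{f_1,\dots,f_{k-1}\}$, and that $f \neq \vec{0}$ with probability $1$. The orthogonality will hold for \emph{every} realization of the random Gaussian $\vec{z}$ and is an immediate consequence of constraint~(\ref{eq:hyper-sdp-orth}); the ``with probability $1$'' qualifier is needed only for the non-degeneracy of $f$, which will follow from the elementary fact that a non-trivial Gaussian linear functional vanishes on a set of measure zero.

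First I would note that SDP~\ref{sdp:eig-k} is feasible, so Step~1 of Algorithm~\ref{alg:hyper-eigs-rounding} genuinely returns vectors $\{\vec{g_v}\}_{v\in V}$. Since $k-1 < \Abs{V}$, there is a unit vector $h$ in the weighted space with $h \perp_w \{f_i : i\in[k-1]\}$; fixing any unit vector $\vec{e}\in\R^n$ and setting $\vec{g_v} := h(v)\,\vec{e}$ gives a feasible solution, because (\ref{eq:hyper-sdp-norm}) reads $\sum_v w_v h(v)^2 = \|h\|_w^2 = 1$ and (\ref{eq:hyper-sdp-orth}) reads $\bigl(\sum_v w_v f_i(v) h(v)\bigr)\vec{e} = \langle f_i, h\rangle_w\,\vec{e} = \vec{0}$. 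In particular, whatever feasible solution $\{\vec{g_v}\}$ the solver returns must satisfy (\ref{eq:hyper-sdp-norm}), so $\vec{g_v}$ cannot be $\vec{0}$ for every $v$; I would fix a vertex $v_0$ with $\vec{g_{v_0}}\neq\vec{0}$.

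For orthogonality, I would compute, for each $i\in[k-1]$,
\[
\langle f_i, f\rangle_w \;=\; \sum_{v\in V} w_v\, f_i(v)\, f(v) \;=\; \sum_{v\in V} w_v\, f_i(v)\, \langle \vec{g_v}, \vec{z}\rangle \;=\; \Bigl\langle \sum_{v\in V} w_v\, f_i(v)\, \vec{g_v},\; \vec{z}\Bigr\rangle \;=\; 0,
\]
using linearity of the Euclidean inner product and constraint~(\ref{eq:hyper-sdp-orth}); this holds deterministically. For non-degeneracy, observe that $f = \vec{0}$ would force in particular $f(v_0) = \langle \vec{g_{v_0}}, \vec{z}\rangle = 0$; since $\vec{g_{v_0}}\neq \vec{0}$, the scalar $\langle \vec{g_{v_0}}, \vec{z}\rangle$ is a one-dimensional Gaussian of positive variance $\|\vec{g_{v_0}}\|^2$, hence equals $0$ with probability $0$. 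Therefore $\Pr[f = \vec{0}] \le \Pr[\langle \vec{g_{v_0}}, \vec{z}\rangle = 0] = 0$, which completes the argument.

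There is no genuine obstacle in this lemma; the only point needing a line of justification is that the SDP is feasible, so that the algorithm has an output at all and the norm constraint (\ref{eq:hyper-sdp-norm}) forces some $\vec{g_{v_0}}\neq\vec{0}$, and everything else is routine linear algebra together with the standard fact about Gaussians.
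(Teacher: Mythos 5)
Your proposal is correct and follows essentially the same argument as the paper: constraint~(\ref{eq:hyper-sdp-orth}) gives $\langle f, f_i\rangle_w = \bigl\langle \sum_v w_v f_i(v)\vec{g_v}, \vec{z}\bigr\rangle = 0$ deterministically, while the norm constraint~(\ref{eq:hyper-sdp-norm}) yields some $\vec{g_{v_0}}\neq\vec{0}$, so $f(v_0)$ is a nondegenerate Gaussian and vanishes with probability $0$. Your extra remark on SDP feasibility is a harmless (and reasonable) addition that the paper leaves implicit.
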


\begin{proof}
Because of equation~(\ref{eq:hyper-sdp-norm}),
there exists $v \in V$ such that $\vec{g_v} \neq \vec{0}$.
Hence, when $\vec{z}$ is sampled from $\cN(0,1)^n$,
the probability that $f(v) := \langle \vec{z}, \vec{g_v} \rangle$
is non-zero is 1.

For any $i \in [k-1]$, we use equation~\ref{eq:hyper-sdp-orth} to achieve:

\[ \inprod{f,f_i}_w = \sum_{v \in V} w_v \inprod{\vec{g_v},\vec{z}} f_i(v) =  \inprod{\sum_{v \in V} w_v f_i(v) \vec{g_v}, \vec{z}}  = 0 \mper \]
\end{proof}

\begin{lemma}[Approximation Ratio]
\label{lem:hyper-eigs-rounding}
With probability at least $\frac{1}{24}$, Algorithm~\ref{alg:hyper-eigs-rounding} outputs a vector $f$ such that 
$\D_w(f) \leq  384 \log r\,\cdot \sdpval$.
\end{lemma}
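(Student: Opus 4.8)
The plan is to analyze the Gaussian rounding in Algorithm~\ref{alg:hyper-eigs-rounding} by separately controlling the numerator and the denominator of $\D_w(f)$ in expectation, and then invoking concentration/anticoncentration to deduce that a good ratio occurs with constant probability. First I would handle the denominator: for each $v \in V$, $f(v) = \inprod{\vec{g_v}, \vec{z}}$ is a mean-zero Gaussian with variance $\norm{\vec{g_v}}^2$, so $\E\Brac{\sum_{v} w_v f(v)^2} = \sum_v w_v \norm{\vec{g_v}}^2 = 1$ by constraint~(\ref{eq:hyper-sdp-norm}). Since $\sum_v w_v f(v)^2$ is a non-negative random variable with mean $1$, the Paley--Zygmund inequality (after a routine bound on its second moment, which is a sum of squared Gaussians and hence has fourth-moment bounded by a constant times the square of the mean) gives that with probability at least $\frac 1{12}$, say, $\sum_v w_v f(v)^2 \ge \frac 14$.

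Next I would handle the numerator. The key fact is that for a fixed hyperedge $e$, the quantity $\max_{u,v \in e} \abs{f(u) - f(v)} = \max_{u,v\in e} \abs{\inprod{\vec{g_u} - \vec{g_v}, \vec{z}}}$ is the maximum of (at most $\binom{r}{2}$, or more carefully $r$ after fixing one endpoint) correlated Gaussians, each with standard deviation at most $\sqrt{\max_{u,v\in e}\norm{\vec{g_u} - \vec{g_v}}^2}$. A standard Gaussian maximal inequality then yields
\[
\E\Brac{\max_{u,v\in e} (f(u)-f(v))^2} \le O(\log r) \cdot \max_{u,v\in e} \norm{\vec{g_u} - \vec{g_v}}^2 .
\]
Summing over $e$ with weights $w_e$ and using linearity of expectation, $\E\Brac{\sum_e w_e \max_{u,v\in e}(f(u)-f(v))^2} \le O(\log r)\cdot \sdpval$. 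By Markov's inequality, with probability at least, say, $1 - \frac 1{24}$ minus the failure probability above (chosen so the union bound leaves probability $\ge \frac 1{24}$), the numerator is at most $O(\log r)\cdot \sdpval$; tracking the absolute constants through the Gaussian maximal inequality and the Paley--Zygmund step gives the stated constant $384$.

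Combining the two events by a union bound: with probability at least $\frac 1{24}$, the numerator is at most $384 \log r \cdot \sdpval$ and the denominator is at least some constant (absorbed into the $384$), so $\D_w(f) \le 384 \log r \cdot \sdpval$. Together with Lemma~\ref{lemma:sdp_feas} (which guarantees $f \ne \vec 0$ and $f \perp_w \{f_i\}$ with probability $1$) and the observation that $\sdpval \le \gamma$ (since any true minimizer $f^\star$ yields a feasible rank-one SDP solution $\vec{g_v} := f^\star(v)\cdot \vec{e_1}$ with objective value exactly $\D_w(f^\star)\cdot\langle f^\star,f^\star\rangle_w$ after normalization, hence $\sdpval \le \gamma$), this proves Theorem~\ref{thm:hyper-eigs-alg_formal}. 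The main obstacle I expect is getting the Gaussian maximal inequality to interact cleanly with the \emph{max over a hyperedge inside a sum} — one must be careful that the $\log r$ factor comes out of the max per edge (not a global union bound over all pairs in the graph), and that the correlations among the $\abs{e}$ Gaussians do not hurt the upper bound (they only help, by Slepian/Sudakov--Fernique type reasoning, but the cleanest route is just the crude union-bound-plus-Gaussian-tail estimate on $\max$ of $\le r$ Gaussians each with the stated variance, which suffices for an $O(\log r)$ bound).
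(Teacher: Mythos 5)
Your proposal is correct and follows essentially the same route as the paper's proof: a per-hyperedge Gaussian maximal inequality (the paper's Fact~\ref{fact:appGauss}, a Massart-type bound over the $\binom{r}{2}$ pairs) plus Markov for the numerator, constraint~(\ref{eq:hyper-sdp-norm}) plus a second-moment/Paley--Zygmund bound (Fact~\ref{lem:squaregaussian}) for the denominator, and a union bound giving the constant-probability guarantee. The only differences are in the bookkeeping of constants (the paper takes the denominator threshold $\tfrac12$ and numerator bound $192\log r\cdot\sdpval$, yielding exactly $384$), which your argument handles in the same way.
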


\begin{proof}
To give an upper bound on $\D_w(f_k)$,
we prove an upper bound on the numerator and a lower bound on the denominator in
the definition of $\D_w( \cdot )$.

For the numerator, we have
\begin{align*}
\Ex{\sum_{e \in E} w(e)  \max_{u,v \in e} (f(u) - f(v))^2 } &
= \sum_{e \in E} w(e) \cdot \Ex{ \max_{u,v \in e} (f(u) - f(v))^2 } \\
& \leq 8 \log r \sum_{e \in E} w(e) \max_{u,v \in e} \norm{\vec{g_u} - \vec{g_v}}^2 \, \,  \textrm{(Using Fact~\ref{fact:appGauss})}  \\
& = 8 \log r\, \cdot \sdpval,  \\
\end{align*}
where the inequality follows from Fact~\ref{fact:appGauss} in the following manner.
For each $e \in E$,
observe that the $\max_{u,v \in e}$ is over a set of cardinality
${r \choose 2} \leq \frac{r^2}{2}$.  Moreover
for $u,v \in e$, $f(u) - f(v) = \langle \vec{g_u} - \vec{g_v}, \vec{z} \rangle$
is a normal distribution with variance $\norm{\vec{g_u} - \vec{g_v}}^2$ and mean 0.
Hence, Fact~\ref{fact:appGauss} 
implies that $\Ex{\max_{u,v \in e} (f(u) - f(v))^2 }
\leq 8 \log r \cdot \max_{u,v \in e} \norm{\vec{g_u} - \vec{g_v}}^2$.

Therefore, by Markov's Inequality,
\begin{equation}
\label{eq:sdp-hyper-num}
\Pr\left[ \sum_{e \in E} w(e) \max_{u,v \in e} (f(u) - f(v))^2   \leq 192 \log r\, \cdot \sdpval  \right] \geq 1 - \frac{1}{24} \mper
\end{equation}

For the denominator, using linearity of expectation, we get 
\[ \Ex{ \sum_{v\in V} w_v f(v)^2} = \sum_{v \in V} w_v \Ex{ \inprod{\vec{g_v}, \vec{z}}^2} = \sum_{v \in V} w_v \norm{\vec{g_v}}^2 = 1 \qquad \textrm{(Using Equation~\ref{eq:hyper-sdp-norm})}. \]

Now applying Fact~\ref{lem:squaregaussian} to the denominator we conclude
\begin{equation} 
\label{eq:sdp-hyper-denom}
\Pr\left[\sum_{v\in V} w_v f(v)^2 \geq \frac{1}{2} \right] \geq \frac{1}{12} \mper  
\end{equation}
Using the union bound on Inequality~(\ref{eq:sdp-hyper-num}) and Inequality~(\ref{eq:sdp-hyper-denom}) we get that 
\[ \Pr\left[ \D_w(f) \leq  384 \log r \, \cdot \sdpval \right]  \geq \frac{1}{24} .\]
\end{proof}

\begin{fact}[Variant of Massart's Lemma]
\label{fact:appGauss}
Suppose $Y_1, Y_2, \ldots, Y_d$ are normal random variables
that are not necessarily independent.
For each $i \in [d]$, suppose $\Ex{Y_i} = 0$ and $\Ex{Y_i^2} = \sigma_i^2$.
Denote $\sigma := \max_{i \in [d]} \sigma_i$.
Then, we have

\begin{compactitem}
\item[1.] $\Ex{\max_{i \in [d]} Y_i^2} \leq 4 \sigma^2 \ln d $, and
\item[2.] $\Ex{\max_{i \in [d]} |Y_i|} \leq 2 \sigma \cdot \sqrt{\ln d}$.
\end{compactitem}
\end{fact}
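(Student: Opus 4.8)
The plan is to derive both bounds from the standard exponential-moment (Chernoff) method for Gaussians, exploiting the fact that the only place $d$ enters is through a union bound that does \emph{not} require the $Y_i$ to be independent.

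First I would prove statement~1. For a centered Gaussian $Y_i$ with variance $\sigma_i^2$ and any $\lambda \in (0,\tfrac{1}{2\sigma^2})$, a direct Gaussian integral gives the moment generating function of its square, $\mathbb{E}[e^{\lambda Y_i^2}] = (1-2\lambda\sigma_i^2)^{-1/2} \le (1-2\lambda\sigma^2)^{-1/2}$, where the inequality uses $\sigma_i \le \sigma$ and the monotonicity of $x \mapsto (1-x)^{-1/2}$ on $[0,1)$. Then, by Jensen's inequality applied to the convex map $t\mapsto e^{\lambda t}$ followed by a union bound over the (possibly dependent) indices,
\[
e^{\lambda\, \mathbb{E}[\max_i Y_i^2]} \;\le\; \mathbb{E}\big[e^{\lambda\max_i Y_i^2}\big] \;=\; \mathbb{E}\big[\max_i e^{\lambda Y_i^2}\big] \;\le\; \sum_{i=1}^d \mathbb{E}[e^{\lambda Y_i^2}] \;\le\; \frac{d}{\sqrt{1-2\lambda\sigma^2}}.
\]
Taking logarithms and dividing by $\lambda$ gives $\mathbb{E}[\max_i Y_i^2] \le \tfrac{1}{\lambda}\big(\ln d + \tfrac12\ln\tfrac{1}{1-2\lambda\sigma^2}\big)$; choosing $\lambda = \tfrac{1}{4\sigma^2}$ yields $\mathbb{E}[\max_i Y_i^2] \le 4\sigma^2(\ln d + \tfrac12\ln 2)$, which is at most $4\sigma^2\ln d$ once $d$ exceeds a small absolute constant (and a marginally larger universal constant handles the few small cases; in the application $d=\binom{r}{2}$).

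Statement~2 then follows with essentially no extra work: since $\sqrt{\cdot}$ is concave, Jensen's inequality gives $\mathbb{E}[\max_i |Y_i|] = \mathbb{E}\big[\sqrt{\max_i Y_i^2}\,\big] \le \sqrt{\mathbb{E}[\max_i Y_i^2]} \le \sqrt{4\sigma^2\ln d} = 2\sigma\sqrt{\ln d}$. Alternatively, one can run the Chernoff argument directly for $\max_i|Y_i|$ using $\mathbb{E}[e^{t|Y_i|}] \le \mathbb{E}[e^{tY_i}]+\mathbb{E}[e^{-tY_i}] = 2e^{t^2\sigma_i^2/2} \le 2e^{t^2\sigma^2/2}$, which gives $\mathbb{E}[\max_i|Y_i|] \le \tfrac{\ln(2d)}{t}+\tfrac{t\sigma^2}{2}$ and, optimizing at $t = \sigma^{-1}\sqrt{2\ln(2d)}$, the bound $\sigma\sqrt{2\ln(2d)} \le 2\sigma\sqrt{\ln d}$ for $d \ge 2$.

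There is no substantive obstacle here; the only point needing care is precisely that the $Y_i$ are allowed to be \emph{dependent}, so one must use $\mathbb{E}[\max_i e^{\lambda Y_i^2}] \le \sum_i \mathbb{E}[e^{\lambda Y_i^2}]$ rather than any sharper sub-Gaussian estimate for a sum — this is the ``Massart-type'' ingredient, and it is what forces the $\ln d$ (resp. $\sqrt{\ln d}$) factor rather than an absolute constant. The rest is bookkeeping the constants so that the clean forms $4\sigma^2\ln d$ and $2\sigma\sqrt{\ln d}$ emerge, which is why the lemma is implicitly used for $d$ at least a small constant.
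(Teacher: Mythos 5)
Your argument is correct in substance, but it takes a genuinely different route from the paper on statement~1. The paper uses the moment method: writing $Y_i = \sigma_i Z_i$, bounding $\max_i Y_i^2 \le \bigl(\sum_i Y_i^{2p}\bigr)^{1/p}$, applying Jensen for the concave map $t \mapsto t^{1/p}$, plugging in the exact Gaussian moments $\mathbb{E}[Z_i^{2p}] = (2p)!/(p!\,2^p) \le (2p)^p$ to get $\mathbb{E}[\max_i Y_i^2] \le \sigma^2 p\, d^{1/p}$, and choosing $p \approx \ln d$. You instead run the exponential-moment (Chernoff) method: the exact MGF of $Y_i^2$, Jensen for $t \mapsto e^{\lambda t}$, and $\mathbb{E}[\max_i e^{\lambda Y_i^2}] \le \sum_i \mathbb{E}[e^{\lambda Y_i^2}]$, then optimize $\lambda$. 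Both proofs hinge on the same structural point you correctly identify -- dependence among the $Y_i$ is irrelevant because the max is replaced by a sum \emph{inside} the expectation -- and both yield the $\ln d$ growth; your route trades the high-moment computation for knowledge of the chi-square MGF, which is a matter of taste. Statement~2 you derive exactly as the paper does, via $\mathbb{E}|Y| \le \sqrt{\mathbb{E}Y^2}$ applied to the maximum. One bookkeeping slip: with $\lambda = 1/(4\sigma^2)$ your bound is $4\sigma^2(\ln d + \tfrac12\ln 2)$, which exceeds $4\sigma^2\ln d$ for \emph{every} $d$, not only small $d$; to land on the stated constant take $\lambda$ closer to $1/(2\sigma^2)$, e.g.\ $\lambda = 3/(8\sigma^2)$ gives $\tfrac{8\sigma^2}{3}(\ln d + \ln 2) \le 4\sigma^2\ln d$ for all $d \ge 4$, and the cases $2 \le d \le 3$ follow from the trivial bound $\mathbb{E}[\max_i Y_i^2] \le d\sigma^2$. (Some restriction on tiny $d$ is unavoidable -- the stated inequality is false at $d=1$ -- and the paper's own choice $p = \lceil \log d\rceil$ has the same flavor of slack, so this does not affect how the fact is applied.)
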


\begin{proof}
For $i \in [d]$, we write $Y_i = \sigma_i Z_i$,
where $Z_i$ has the standard normal distribution $\cN(0,1)$.
Observe that for any real numbers
$x_1, x_2, \ldots, x_d$, for any positive integer $p$,
we have
$\max_{i \in [d]} x_i^2 \leq (\sum_{i \in [d]} x_i^{2p})^{\frac{1}{p}}$.
Hence, we have

\begin{eqnarray*}
\Ex{\max_{i \in [d]} Y_i^2} & \leq & \Ex{ \left( \sum_{i \in [d]} Y_i^{2p} \right)^{\frac{1}{p}} }
  \leq  \left(\Ex{ \sum_{i \in [d]} Y_i^{2p} } \right)^{\frac{1}{p}} \quad 
	\textrm{ (by Jensen's Inequality, because $t \mapsto t^{\frac{1}{p}}$ is concave )} \\
	& \leq & \sigma^2 \left(\Ex{ \sum_{i \in [d]} Z_i^{2p} } \right)^{\frac{1}{p}} =
	\sigma^2 \left(\sum_{i \in [d]} \frac{(2p)! }{(p)! 2^{p} } \right)^{\frac{1}{p}}
	\quad \textrm{(For $Z_i \sim \cN(0,1)$, $\Ex{Z_i^{2p}} = \frac{(2p)! }{(p)! 2^{p} }$)} \\
 & \leq  & \sigma^2 p d^{\frac{1}{p}}. \quad \textrm{(using $\frac{(2p)!}{p!} \leq (2p)^{p} $ )} \\
\end{eqnarray*}

Picking $p = \ceil{\log d}$ gives
the first result $\Ex{\max_{i \in [d]} Y_i^2} \leq 4 \sigma^2 \log d$.
Moreover, the inequality $\Ex{|Y|} \leq \sqrt{\Ex{Y^2}}$ immediately gives the second
result.
\end{proof}

\begin{fact} 
\label{lem:squaregaussian}
Let $Y_1,\ldots, Y_m$ be normal random variables (that are not necessarily independent) 
having mean 0
such that $ \Ex{ \sum_i Y_i^2} = 1$ then
\[ \Pr\left[\sum_i Y_i^2 \geq \frac{1}{2} \right] \geq \frac{1}{12} \mper \]
\end{fact}

\begin{proof}
We will bound the second moment of the random variable $R := \sum_i Y_i^2$ as follows.

\begin{align*}
\Ex{R^2} &= \sum_{i,j} \Ex{Y_i^2 Y_j^2} \\ 
& \leq \sum_{i,j} \paren{ \Ex{Y_i^4} }^{\frac{1}{2}} \paren{\Ex{Y_j^4} }^{\frac{1}{2}} & \textrm{ (Using Cauchy-Schwarz Inequality)} \\
& = \sum_{i,j} 3\, \Ex{Y_i^2} \Ex{Y_j^2}  & \textrm{ (Using } \Ex{Z^4} = 3 \paren{\Ex{Z^2}}^2 \textrm{for Gaussian $Z$)} \\
& = 3\paren{ \sum_{i}  \Ex{Y_i^2} }^2 = 3.  
\end{align*}

By the Paley-Zygmund inequality,
\[ \Pr\left[ R \geq \frac{1}{2} \cdot \Ex{R} \right] \geq \left(\frac{1}{2}\right)^2 \cdot \frac{\Ex{R}^2}{\Ex{R^2}} \geq \frac{1}{12} \mper \]
\end{proof}

\ignore{
We now have all the ingredients to prove \ref{thm:hyper-eigs-alg}.

\begin{proofof}[Theorem~\ref{thm:hyper-eigs-alg}]
We will prove this theorem inductively. 
For the basis of induction, we have the first vector 
$x_1 = \Wh \vec{1}$ and $f_1 = \Wmh x_1$. 
We assume that we have computed $x_1,\ldots, x_{k-1}$ satisfying 
$\Dc(x_i) \leq \bigo{i \log r\, \xi_i}$. We now show how to compute $x_k$. 

Proposition~\ref{prop:hyper-CF} implies that for \sdp~\ref{sdp:eig-k},
\[ \sdpval \leq k\, \xi_k \mper \]
Therefore, from Lemma~\ref{lem:hyper-eigs-rounding}, we get that Algorithm~\ref{alg:hyper-eigs-rounding}
will output a unit vector which is orthogonal to all $x_i$ for $i \in [k-1]$ and 
\[ \Dc(x_k) \leq 192 k \log r\, \xi_k \mper \]
\\
\end{proofof}
}

\section{Sparsest Cut with General Demands}
\label{sec:hyper-sparsestcut}

In this section, 
we study the Sparsest Cut with General Demands problem (defined in
Section~\ref{sec:sparest_overview}) and give an approximation algorithm for it
(Theorem~\ref{thm:hyper-sparsest-nonuniform}).

\begin{theorem}[Restatement of Theorem~\ref{thm:hyper-sparsest-nonuniform}]
There exists a randomized polynomial time algorithm that given 
an instance of the hypergraph Sparsest Cut problem with hypergraph  $H=(V,E,w)$ and $k$ demand pairs in 
$T = \{(\{s_i,t_i\}, D_i): i \in [k]\}$, 
outputs a set $S \subset V$ such that with high probability,
\[ \Phi(S) \leq \bigo{\sqrt{\log k \log r} \log \log k } \Phi_H ,\]
where $r = \max_{e \in E} \Abs{e} $.
\end{theorem}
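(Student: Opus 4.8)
The plan is to follow the standard SDP-rounding template for sparsest cut with general demands, adapted to hypergraphs via the line-embedding machinery already developed. First I would write down the natural SDP relaxation: a unit-normalized collection of vectors $\{\vec{g_v}\}_{v \in V}$ minimizing $\sum_{e \in E} w_e \max_{u,v \in e} \norm{\vec{g_u} - \vec{g_v}}^2$ subject to the normalization $\sum_{i \in [k]} D_i \norm{\vec{g_{s_i}} - \vec{g_{t_i}}}^2 = 1$ together with the $\ell_2^2$ triangle (negative-type) inequalities $\norm{\vec{g_u} - \vec{g_v}}^2 + \norm{\vec{g_v} - \vec{g_w}}^2 \ge \norm{\vec{g_u} - \vec{g_w}}^2$ for all triples. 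An integral $\{0,1\}$-solution corresponding to a cut $S$ is feasible (after scaling), so the SDP value is at most $\Phi_H$.

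Next I would invoke the Arora–Lee–Naor theorem on embedding negative-type metrics: the metric $d(u,v) := \norm{\vec{g_u} - \vec{g_v}}^2$ embeds into $\ell_1$ with distortion $O(\sqrt{\log k'}\,\log\log k')$, where $k'$ is the number of points whose pairwise distances we actually need to control — namely the $2k$ demand endpoints (or, more carefully, we only need a Fréchet-type embedding that contracts no demand pair and does not expand any single coordinate difference too much on average). Since an $\ell_1$ metric is a nonnegative combination of cut metrics, standard arguments produce a single cut $S$ (via a threshold/sweep cut on one $\ell_1$ coordinate) such that the demand separated by $S$ is comparable to $\sum_i D_i d(s_i,t_i)$ while the hyperedge cut weight $w(\partial S)$ is comparable, up to the $\ell_1$-distortion factor \emph{and} an extra $O(\sqrt{\log r})$ factor, to $\sum_e w_e \max_{u,v\in e} d(u,v)$. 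The $\sqrt{\log r}$ loss is exactly the price of passing from the $\ell_2$ (or $\ell_1$) geometry on vertices to the $\max$-over-hyperedge objective; concretely, it enters through a Gaussian-projection / random-coordinate argument analogous to Fact~\ref{fact:appGauss} and Proposition~\ref{prop:hyper-1d}, where projecting the $\ell_1$ embedding onto a random direction blows up $\max_{u,v\in e}$ of a set of size $\binom{r}{2}$ by a $\sqrt{\log r}$ factor in expectation.

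Concretely, the rounding would be: solve the SDP; apply ALN to get vectors (or an $\ell_1$ embedding) on the demand endpoints; extend to all vertices either by the ALN Fréchet embedding directly or by a Gaussian projection $f(v) := \langle \vec{g_v}, \vec z\rangle$ as in Algorithm~\ref{alg:hyper-eigs-rounding}; then run the sweep-cut of Proposition~\ref{prop:hyper-1d} (extended to signed vectors by the $g^+$/$g^-$ splitting as in Proposition~\ref{prop:hyper-sweep-rounding}, but now measuring the denominator by separated demand rather than by $w(S)$). Combining the three losses — SDP integrality gap $1$, ALN distortion $O(\sqrt{\log k}\,\log\log k)$, and the hyperedge $\max$-penalty $O(\sqrt{\log r})$ — multiplies to $O(\sqrt{\log k \log r}\,\log\log k)\cdot \Phi_H$, which is the claimed bound. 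The constant-probability success of the Gaussian/threshold step is boosted to high probability by independent repetition, as in Lemma~\ref{lem:hyper-eigs-rounding}.

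The main obstacle, and the step I would spend the most care on, is correctly coupling the ALN embedding with the hyperedge $\max$-objective so that the two logarithmic factors genuinely multiply rather than compound worse. The ALN machinery is stated for pairwise $\ell_2^2$-metrics, but our objective is a $\max$ over each hyperedge, which is not itself a metric on vertices; the clean way around this is to first bound $\sum_e w_e \max_{u,v\in e} d(u,v)$ in terms of an $\ell_1$-type quantity using a random-direction (Gaussian) projection and Massart's lemma (Fact~\ref{fact:appGauss}) to absorb the $\sqrt{\log r}$, and only \emph{then} apply ALN to the resulting $\ell_1$ cut-decomposition on the demand pairs — so that the $\sqrt{\log r}$ lives on the hyperedge side and the $\sqrt{\log k}\log\log k$ lives on the demand side, and the sweep cut (Proposition~\ref{prop:hyper-1d}) converts a good fractional $\ell_1$-embedding into a genuine cut without further loss. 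Verifying that the order of these reductions does not introduce a hidden dependence (e.g. a $\log(kr)$ instead of $\sqrt{\log k \log r}$) is the delicate point.
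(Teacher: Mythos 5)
Your proposal follows essentially the same route as the paper's proof: the identical SDP with $\ell_2^2$-triangle inequalities, the Arora--Lee--Naor $1$-Lipschitz map with distortion $O(\sqrt{\log k}\,\log\log k)$ charged only to the demand endpoints, a Gaussian projection whose per-hyperedge maximum costs an extra $O(\sqrt{\log r})$ via the Massart-type bound, and a final sweep cut, with Markov and Paley--Zygmund giving constant success probability (boosted by repetition). One caution: the Gaussian projection must be applied to the ALN image $f(v)$ rather than to the raw SDP vectors $\vec{g_v}$ (your parenthetical alternative), since only the ALN map lets the projected demand separation $\sum_i D_i\,\lvert x(s_i)-x(t_i)\rvert$ be related back to the $\ell_2^2$ normalization while keeping hyperedge diameters non-expanded --- this ordering is exactly how the paper makes the $\sqrt{\log k}\log\log k$ and $\sqrt{\log r}$ factors multiply cleanly.
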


\begin{proof}
We prove this theorem by giving an \sdp relaxation for this problem (SDP~\ref{sdp:hgensc})
and a rounding algorithm for it (Algorithm~\ref{alg:hgensc}).
We introduce a variable $\U$ for each vertex $u \in V$. 
Ideally, we would want all vectors $\U$ to be in the set $\set{0,1}$ so that we can identify the cut,
in which case $\max_{u,v \in e} \norm{\U - \V}^2 $ will indicate whether the edge $e$ is cut 
or not.
Therefore, our objective function will be $\sum_{e \in E} w(e) \max_{u,v \in e} \norm{\U - \V}^2$.
Next, we add (\ref{eq:hyper-gensc-1}) as a scaling constraint. 
Finally, we add $\ell_2^2$-triangle inequality constraints~(\ref{eq:hyper-gensc-3}) between all triplets of vertices, 
as all integral solutions of the relaxation will trivially satisfy this. 
Therefore, SDP~\ref{sdp:hgensc} is a relaxation
of the problem and its objective value is at most $\Phi_H$. 

\begin{mybox}
\begin{SDP}
\label{sdp:hgensc}
\[  \min \sum_{e \in E} w_e \max_{u,v \in e} \norm{ \U - \V  }^2   \]
\subjectto 
\begin{equation}
\label{eq:hyper-gensc-1}
\sum_{i \in [k]} D_i \cdot  \norm{\bar{s_i} - \bar{t_i}}^2   = 1  
\end{equation}
\begin{equation}
\label{eq:hyper-gensc-3}
\norm{\U - \V}^2 + \norm{\V - \bar{w}}^2 \geq  \norm{\U - \bar{w}}^2 \qquad \forall u,v,w \in V 
\end{equation}
\end{SDP}
\end{mybox}

Our main ingredient is the following result due to \cite{aln05}. 
\begin{fact}[\cite{aln05}]
\label{thm:aln}
Let $(V,d)$ be an arbitrary metric space, and let $U \subset V$ be any $k$-point subset.
If the space $(V,d)$ is a metric of the negative type, then there exists a $1$-Lipschitz map
$f : V \to \ell_2$ such that the map $f|_U : U \to \ell_2$ has distortion $\bigo{\sqrt{\log k} \log \log k}$.
\end{fact}

\begin{algorithm}[H]
\caption{Rounding Algorithm for Sparsest Cut}
\label{alg:hgensc}
\begin{algorithmic}[1]

\State Solve SDP~\ref{sdp:hgensc}.

\State Compute the map $f : (V, \ell_2^2) \to \R^n$ using Fact~\ref{thm:aln},
with $U$ being the set of vertices that appear in the demand pairs in $T$.

\State Sample $\vec{z} \sim \cN(0,1)^n$ and define $x \in \R^V$ such that $x(v) \defeq \inprod{\vec{z},f(v)}$ for each $v \in V$.

\State Arrange the vertices of $V$ as $v_1, \ldots, v_n$
such that $x(v_j) \leq x(v_{j+1})$  for each  $1\leq j \leq n-1$. 
Output the sparsest cut of the form 
\[ \paren{ \set{v_1, \ldots, v_i} , \set{v_{i+1}, \ldots, v_n }  } \mper  \]
\label{step:hgensc-four}
\end{algorithmic}
\end{algorithm}

Without loss of generality, we may assume that the map $f$ is such that $f|_U$ has the least distortion (on vertices in demand pairs) among all $1$-Lipschitz
maps $f:(V, \ell_2^2) \to \ell_2$ (\cite{aln05} gives a polynomial time algorithm to compute such a map.)
For the sake of brevity, let $\Lambda = \bigo{\sqrt{\log k} \log \log k}$ denote the distortion factor
guaranteed in Fact~\ref{thm:aln}.
Since SDP~\ref{sdp:hgensc} is a relaxation of $\Phi_H$, we also get that objective value of the 
\sdp is at most $\Phi_H$.  Suppose $x \in \R^V$ is the vector produced
by the rounding algorithm.

We next analyze the following quantity.  The numerator is related to  the objective function, and the denominator is related to
the expression in~(\ref{eq:hyper-gensc-1}):

\begin{align}
\label{eq:sparsest_x}
\varphi(x) := \frac{\sum_{e \in E} w_e \max_{u,v \in e} \Abs{x(u) - x(v)}}{\sum_{i \in [k]} D_i \cdot |x(s_i) - x(t_i)|}.
\end{align}

The following analysis is 
similar to the proof of Lemma~\ref{lem:hyper-eigs-rounding}.

For each edge $e$, obsever that for $u, v \in e$,
$x_u - x_v$ is a random variable having normal distribution
with mean 0 and variance $\norm{f(u) - f(v)}^2$.
Hence,
using Fact~\ref{fact:appGauss}~(2), we get 
\[ \Ex{\max_{u,v \in e} \Abs{x(u) - x(v)}} \leq 4 \sqrt{\log r} \max_{u,v \in e} \norm{f(u) - f(v) } \leq 
4 \sqrt{\log r} \max_{u,v \in e} \norm{\U - \V }^2, \]
where the last inequality follows because $f: (V, \ell_2^2) \rightarrow \ell_2$ is 1-Lipschitz.

The expectation of the
numerator of (\ref{eq:sparsest_x}) is

$\Ex{\sum_{e \in E} w_e \max_{u,v \in e} \Abs{x(u) - x(v)}}
\leq 4 \sqrt{\log r} \cdot \Phi_H$.

Using Markov's inequality, we have
\begin{equation}
\label{eq:hgensc-help-1}
 \Pr[ \sum_{e \in E} w_e \max_{u,v \in e} \Abs{x(u) - x(v)} \leq 96 \sqrt{\log r}\, \cdot \Phi_H ]  \geq 1 - \frac{1}{24} \mper     
\end{equation}

For the denominator,
observing that $x(s_i) - x(t_i)$
has a normal distribution with mean~0 and variance $\norm{f(s_i) - f(t_i)}^2$
and
a random variable $Z$ having
distribution $\cN(0,1)$ satisfies $\Ex{|Z|} = \sqrt{\frac{2}{\pi}}$,
we have

\[ \Ex{ \sum_{i \in [k]} D_i \cdot \Abs{x(s_i) - x(t_i)} }  
 = \sqrt{\frac{2}{\pi}}\sum_{i \in [k]} D_i \norm{ f(s_i) -f(t_i)} 
\geq \sqrt{\frac{2}{\pi}} \cdot \frac{1}{\Lambda} \sum_{i \in [k]} D_i \cdot \norm{\bar{s_i} - \bar{t_i}}^2  =
\sqrt{\frac{2}{\pi}} \cdot \frac{1}{\Lambda} , \]
where the inequality follows from the distortion of $f|_U$ as guaranteed
by Fact~\ref{thm:aln}, and
the last equality follows from (\ref{eq:hyper-gensc-1}).

We next prove a variant of Fact~\ref{lem:squaregaussian}.

\begin{claim}
\label{claim:abs_gauss}
Let $Y_1,\ldots, Y_m$ be normal random variables (that are not necessarily independent) 
having mean 0. Denote $R := \sum_i |Y_i|$.
Then,
\[ \Pr\left[R \geq \frac{1}{2} \Ex{R} \right] \geq \frac{1}{12} \mper \]
\end{claim}

\begin{proof}
For each $i$, let $\sigma_i^2 =  \Ex{Y_i}$.
Then, $\Ex{R} = \sqrt{\frac{2}{\pi}} \sum_i \sigma_i$.

Moreover, we have

$\Ex{R^2} = \sum_{i,j} \Ex{|Y_i| \cdot |Y_j|}
\leq \sum_{i,j} \sqrt{\Ex{Y_i^2} \cdot \Ex{Y_j^2}}
= \sum_{i,j} \sigma_i \sigma_j = \frac{\pi}{2} \cdot \Ex{R}^2$,

where the inequality follows from Cauchy-Schwarz.

Finally, using the Paley-Zygmund Inequality,
we have
\[ \Pr\left[ R \geq \frac{1}{2} \cdot \Ex{R} \right] \geq \left(\frac{1}{2}\right)^2 \cdot \frac{\Ex{R}^2}{\Ex{R^2}} \geq \frac{1}{12} \mper \]
\end{proof}

Hence, using Fact~\ref{lem:squaregaussian}, we get 
\begin{equation}
\label{eq:hgensc-help-2}
\Pr[ \sum_{i \in [k]} D_i \Abs{x(s_i) - x(t_i)} 
\geq \sqrt{\frac{1}{2\pi}} \cdot \frac{1}{\Lambda}] \geq \frac{1}{12}.
\end{equation}

Using (\ref{eq:hgensc-help-1}) and (\ref{eq:hgensc-help-2}), we get that 
with probability at least $\frac{1}{24}$,
\[\varphi(x) = \frac{\sum_e w_e \max_{u,v \in e} \Abs{x(u) - x(v)}}{ \sum_{i \in [k]} D_i \cdot \Abs{x(s_i) - x(t_i)} }
\leq \bigo{\sqrt{\log r}} \cdot\, \Lambda\, \Phi_H \mper \]

We next apply an analysis that is similar to Proposition~\ref{prop:hyper-1d}.
For $r \in \R$, define $S_r := \{v \in V: x(v) \leq r\}$.
Observe that if $r$ is sampled uniformly at random from the interval $[\min_v x(v), \max_v x(v)]$,
then two vertices $u$ and $v$ are separated by
the cut $(S_r, \overline{S_r})$ with probability proportional to
$|x(u) - x(v)|$.

Hence, an averaging argument implies
that there exists $r \in \R$ such that
$\Phi(S_r) \leq \varphi(x) \leq \bigo{\sqrt{\log k \log r}  \log \log k } \Phi_H$,
as required in the output
of Step~\ref{step:hgensc-four}.
\end{proof}

\paragraph{Acknowledgments}
The dispersion process associated with our Laplacian operator was suggested to us by Prasad Raghavendra
in the context of understanding vertex expansion in graphs, and was the
starting point of this project.
We would like to thank Ravi Kannan,  
Konstantin Makarychev, Yury Makarychev, 
Yuval Peres, 
Prasad Raghavendra, Nikhil Srivastava, Piyush Srivastava, Prasad Tetali, Santosh Vempala, 
David Wilson and Yi Wu for helpful discussions.
We would also like to thank Jingcheng Liu and Alistair Sinclair for
giving us valuable comments on an earlier version of this paper.

{
\bibliographystyle{alpha}
\bibliography{bibfile}
}

\appendix

\section{Hypergraph Tensor Forms}
\label{app:hyper-tensor}
Let $A$ be an $r$-tensor. 
For any suitable norm $\normb{\cdot}$, e.g. $\norm{.}^2_2$, $\norm{.}^r_r$, 
we define tensor eigenvalues as follows. 

\begin{definition}
We define $\lambda_1$, the largest eigenvalue of a tensor $A$  
as follows.
\[ \lambda_1 \defeq \max_{X \in \R^n} \frac{ \sum_{i_1, i_2, \ldots, i_r} A_{i_1 i_2 \ldots i_r} X_{i_1} X_{i_2} \ldots X_{i_r} }{\normb{X}}, \]

\[v_1 \defeq \argmax_{X \in \R^n} \frac{\sum_{i_1, i_2, \ldots, i_r} A_{i_1 i_2 \ldots i_r} X_{i_1} X_{i_2} \ldots X_{i_r}  }{\normb{X}}.
  \]

We inductively define successive eigenvalues $\lambda_2 \geq \lambda _3 \geq \ldots$ as follows.
\[ \lambda_k \defeq \max_{X \perp \set{v_1, \ldots, v_{k-1}} } 
	\frac{ \sum_{i_1, i_2, \ldots, i_r} A_{i_1 i_2 \ldots i_r} X_{i_1} X_{i_2} \ldots X_{i_r} }{\normb{X}}, \]
	
\[v_k \defeq \argmax_{x \perp \set{v_1, \ldots, v_{k-1}} }
	\frac{ \sum_{i_1, i_2, \ldots, i_r} A_{i_1 i_2 \ldots i_r} X_{i_1} X_{i_2} \ldots X_{i_r}  }{\normb{X}}.   \]

\end{definition}

Informally, the Cheeger's Inequality states that a graph has a sparse cut if and only if the 
gap between the two largest eigenvalues of the adjacency matrix is small; in particular, a graph
is disconnected if and only if its top two eigenvalues are equal.  In the case of the hypergraph tensors,
we show that there exist hypergraphs having 
no gap between many top eigenvalues while still being connected. This shows that the tensor 
eigenvalues are not relatable to expansion in a Cheeger-like manner.

\begin{proposition}
For any $k \in \N$, there exist connected hypergraphs such that $\lambda_1 = \ldots = \lambda_k$.
\end{proposition}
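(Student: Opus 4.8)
The plan is to realize the degeneracy by gluing together many vertex-disjoint copies of a large, highly symmetric $r$-uniform hypergraph ($r\ge 3$) through a sparse ``scaffold'' of auxiliary hyperedges engineered not to disturb the leading eigenvalue. Write $F(X):=\sum_{i_1,\dots,i_r}A_{i_1\dots i_r}X_{i_1}\cdots X_{i_r}=r!\sum_{e\in E}\prod_{v\in e}X_v$ for the form in the numerator of the Rayleigh-type quotient, normalizing over $\normb{X}=1$ (the other admissible choices of $\normb{\cdot}$ are treated identically). First I would take the base hypergraph to be the complete $r$-uniform hypergraph $H_0=K_n^{(r)}$ on $n$ vertices, with $n$ chosen large enough that $\mu:=\lambda_1(H_0)\ge 2\,r!$; here $\lambda_1(K_n^{(r)})=\Theta(n^{r/2})\to\infty$, attained at the uniform vector (unique up to the sign ambiguity inherent for even $r$).

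Next I would take $k$ vertex-disjoint copies $H_1,\dots,H_k$ and connect them into a path $H_1-H_2-\cdots-H_k$: between consecutive copies $H_j,H_{j+1}$ introduce a short chain of brand-new \emph{auxiliary} vertices together with one or two \emph{auxiliary} hyperedges, arranged so that every auxiliary hyperedge contains at least two auxiliary vertices and at most one vertex lying in some $H_i$. For $r\ge4$ a single edge $\{x_j,x'_j,u^{(1)}_j,\dots,u^{(r-2)}_j\}$ with $x_j\in H_j$, $x'_j\in H_{j+1}$ works; for $r=3$ use two edges $\{x_j,u_j,u'_j\}$ and $\{u'_j,u''_j,x'_j\}$ sharing $u'_j$. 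Let $H$ be the resulting hypergraph; it is connected, and the claim is $\lambda_1(H)=\cdots=\lambda_k(H)=\mu$. For the lower bound, let $w_i\in\R^V$ be the zero-extension of the uniform maximizer of $H_i$: the $w_i$ have pairwise disjoint supports, hence are pairwise orthogonal, and every auxiliary edge contains an auxiliary vertex so vanishes on each $w_i$, giving $F(w_i)=\mu$. Choosing the first $k-1$ of the vectors $v_i$ to be $w_1,\dots,w_{k-1}$ forces $\lambda_k\ge F(w_k)=\mu$.

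The upper bound $\lambda_1(H)\le\mu$ is the step I expect to be the crux. Given a unit $X$, set $\sigma:=\sum_{u\text{ auxiliary}}X_u^2$. For each auxiliary edge $e^*$ — with its unique ``dense'' vertex $x$ and $m\ge2$ auxiliary vertices — AM--GM gives $\bigl|\prod_{v\in e^*}X_v\bigr|\le|X_x|\cdot\bigl(\tfrac1m\textstyle\sum_{u\in e^*\text{ aux}}X_u^2\bigr)^{m/2}\le\sum_{u\in e^*\text{ aux}}X_u^2$, using $|X_x|\le1$, $m\ge2$ and that the auxiliary sum is $\le1$; since each auxiliary vertex lies in at most two auxiliary edges, summing yields a total auxiliary contribution at most $2\,r!\,\sigma\le\mu\,\sigma$. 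Meanwhile $\sum_i F_{H_i}(X|_{H_i})\le\mu\sum_i\|X|_{H_i}\|^r\le\mu\sum_i\|X|_{H_i}\|^2=\mu(1-\sigma)$, using homogeneity, $r\ge3$, $\lambda_1(H_i)=\mu$ and $\|X\|\le1$. Hence $F(X)\le\mu(1-\sigma)+\mu\,\sigma=\mu$, so $\lambda_1(H)\le\mu$; since the $\lambda_i$ are non-increasing, $\lambda_1=\cdots=\lambda_k=\mu$. To make the conclusion independent of which maximizers $v_i$ happen to be selected, I would further note that equality in the chain above forces $X$ to be supported on a single $H_i$ and to equal $\pm$(uniform vector) there, so every global maximizer of $F$ is some $\pm w_i$; a short induction then shows any admissible $v_1,\dots,v_{k-1}$ are, up to signs and relabeling, $w_1,\dots,w_{k-1}$, and $\lambda_k\ge F(w_k)=\mu$ once more. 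The only quantitative input is that $H_0$ be large enough that $\mu\ge 2\,r!$, which is exactly what prevents the scaffold from lifting $\lambda_1$.
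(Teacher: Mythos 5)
Your argument is essentially correct, but it takes a genuinely different route from the paper's. The paper fixes $r=2^w$ with $r>2^{2k}$, recursively glues copies into a ``hypercube of hypergraphs'' by swapping the halves of one hyperedge from each of two copies, and exhibits the degeneracy through global $\pm1$ (Walsh-type) vectors: because $r$ is a power of two and each crossing edge has exactly $r/2$ vertices on each side, every edge-product equals $+1$, so these sign vectors have the same Rayleigh quotient as the all-ones vector and are mutually orthogonal. You instead keep $k$ vertex-disjoint complete $r$-graphs for an arbitrary fixed $r\ge 3$ (independent of $k$), join them by a sparse scaffold of auxiliary edges through fresh vertices, and show quantitatively that the scaffold cannot lift $\lambda_1$ above the common copy value $\mu$ (the $2\,r!\,\sigma\le\mu\sigma$ bound), while the $k$ disjointly supported copy maximizers witness $\lambda_1=\cdots=\lambda_k=\mu$. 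What each buys: your construction works for every $r\ge3$ rather than only for $r$ a power of two growing exponentially in $k$, and you explicitly handle the fact that $\lambda_k$ depends on which argmaxes $v_1,\dots,v_{k-1}$ were selected, a point the paper glosses over; the paper's construction, in exchange, produces delocalized degenerate eigenvectors (each is $\pm1$ on all of $V$), which makes the ``no Cheeger-like gap despite connectivity'' message arguably sharper than eigenvectors localized on copies joined by a negligible bridge.

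Two small repairs. First, your requirement that each auxiliary edge meet the copies in at most one vertex is contradicted by your own $r\ge4$ connector $\{x_j,x_j',u_j^{(1)},\dots,u_j^{(r-2)}\}$, which meets two copies; this is harmless, since the AM--GM step only uses that each auxiliary edge contains at least two auxiliary vertices and that all coordinates have absolute value at most $1$. Second, the parenthetical claim that the maximizer of $K_n^{(r)}$ is the uniform vector, unique up to sign, is asserted but not proved --- and it is not needed. Taking $\mu>2\,r!$ strictly, your equality analysis already shows that any unit $X$ with $F(X)=\mu$ has $\sigma=0$ and $\|X|_{H_i}\|\in\{0,1\}$, hence is supported in a single copy; that weaker statement suffices for the induction, since after $j-1\le k-1$ steps the chosen $v_1,\dots,v_{j-1}$ are global maximizers touching at most $j-1$ copies, so some untouched $w_i$ is orthogonal to all of them, giving $\lambda_j\ge F(w_i)=\mu$ while $\lambda_j\le\lambda_1=\mu$. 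With these adjustments the proof goes through.
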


\begin{proof}
Let $r = 2^w$ for some $w \in \Z^+$. Let $H_1$ be a large enough complete $r$-uniform hypergraph. 
We construct $H_2$ from two copies of $H_1$, say $A$ and $B$, as follows. 
Let $a \in E(A)$ and $b \in E(B)$ be any two hyperedges.   
Let $a_1 \subset a $ (resp. $b_1 \subset b$) be a set of any $r/2$ vertices. 
We are now ready to define $H_2$.
\[ H_2 \defeq \left( V(H_1) \cup V(H_2), (E(H_1) \setminus \set{a}) \cup (E(H_2) \setminus \set{b}) 
\cup \set{ (a_1 \cup b_1), (a_2 \cup b_2)} \right)  \] 
Similarly, one can recursively define $H_i$ by joining two copies of $H_{i-1}$ (this can be done as  
long as $r > 2^{2i}$). The construction of $H_k$ can be viewed as a {\em hypercube of hypergraphs}. 

Let $A_H$ be the tensor form of hypergraph $H$.
For $H_2$, it is easily verified that $v_1 = \one$.
Let $X$ be the vector which has $+1$ on the vertices 
corresponding to $A$ and $-1$ on the vertices corresponding to $B$. 
By construction, for any hyperedge $\set{i_1, \ldots, i_r} \in E$
\[ X_{i_1} \ldots X_{i_r} = 1    \]
and therefore, 
\[  \frac{ \sum_{i_1, i_2, \ldots, i_r} A_{i_1 i_2 \ldots i_r} X_{i_1} X_{i_2} \ldots X_{i_r} }{\normb{X}} = \lambda_1 \mper  \]
Since $\inprod{X,\one} = 0$, we get $\lambda_2 = \lambda_1$ and $v_2 = X$.
Similarly, one can show that $\lambda_1 = \ldots = \lambda_k$ for $H_k$. 
This is in sharp contrast to the fact that $H_k$ is, by construction,  a connected hypergraph. 
\end{proof}

\section{Examples}
\label{sec:example}

We give examples of hypergraphs to show that
some properties are not satisfied.  For convenience,
we consider the properties in terms of the weighted space.
We remark that the examples could also be formulated
equivalently in the normalized space.
In our examples, the procedural minimizers are discovered by trial-and-error using programs.  However, we only
describe how to use Mathematica to 
verify them.  Our source code
can be downloaded at the following link:

\url{http://i.cs.hku.hk/~algth/project/hyper_lap/main.html}

\noindent \textbf{Verifying Procedural Minimizers.}
In our examples, we need to verify that
we have the correct value for
$\gamma_k := \min_{\vec{0} \neq f \perp_w \{f_1, f_2, \ldots, f_{k-1}\}} \D_w(f)$,
and a certain non-zero vector $f_k$ attains the minimum.

We first check that $f_k$ is perpendicular
to $\{f_1, \ldots, f_{k-1}\}$ in the weighted space,
and $\D_w(f_k)$ equals $\gamma_k$.

Then, it suffices to check that
for all $\vec{0} \neq f \perp_w \{f_1, f_2, \ldots, f_{k-1}\}$,
$\D_w(f) \geq \gamma_k$.  As the numerator in  the definition of $\D_w(f)$ involves
the maximum operator, we use a program to consider all cases of the relative
order of the vertices with respect to $f$.

For each permutation $\sigma: [n] \ra V$,
for $e \in E$, we define $S_\sigma(e) := \sigma(\max \{i: \sigma(i) \in e\})$
and $I_\sigma(e) := \sigma(\min \{i: \sigma(i) \in e\})$.

We consider the mathematical program
$P(\sigma) := \min \sum_{e \in E} w_e  \cdot (f(S_\sigma(e)) - f(I_\sigma(e)))^2
- \gamma_k \cdot \sum_{u \in V} w_u f(u)^2$ subject to
$f(\sigma(n)) \geq f(\sigma(n-1)) \geq \cdots f(\sigma(1))$
and $\forall i \in [k-1], \langle f_i, f \rangle = 0$.
Since the objective function is a polynomial, and all constraints
are linear, the Mathematica function \textsf{Minimize} can solve the
program.

Moreover, the following two statements are equivalent.
\begin{compactitem}
\item[1.] For all $\vec{0} \neq f \perp_w \{f_1, f_2, \ldots, f_{k-1}\}$,
$\D_w(f) \geq \gamma_k$.
\item[2.] For all permutations $\sigma$, $P(\sigma) \geq 0$.
\end{compactitem}

Hence, to verify the first statement, 
it suffices to use Mathematica to solve $P(\sigma)$ for all permutations $\sigma$.

\begin{example}
\label{ex:gamma_nonunique}
The sequence $\{\gamma_k\}$ generated
by the procedural minimizers is not unique.
\end{example}

\begin{proof}
Consider the following hypergraph with $5$ vertices and $5$ hyperedges each with unit weight.
\begin{figure}[H]
	\centering
	\begin{tikzpicture}
	\node (v5) at (5,1) {};
	\node (v1) at (4,1) {};
	\node (v2) at (3,1) {};
	\node (v3) at (2,1) {};
	\node (v4) at (1,1) {};

	\begin{scope}[fill opacity=0]
	\filldraw[fill=blue!70] ($(v4)+(-0.5,0)$)
	to[out=90,in=180] ($(v4)+(0,0.45)$)
	to[out=0,in=90] ($(v4)+(0.5,0)$)
	to[out=270,in=0] ($(v4)+(0,-0.45)$)
	to[out=180,in=270] ($(v4)+(-0.5,0)$);
	\filldraw[fill=blue!70] ($(v4)+(-0.55,0)$)
	to[out=90,in=180] ($(v4)+(0.5,0.75)$)
	to[out=0,in=90] ($(v4)+(1.5,0)$)
	to[out=270,in=0] ($(v4)+(0.5,-0.75)$)
	to[out=180,in=270] ($(v4)+(-0.55,0)$);
	\filldraw[fill=blue!70] ($(v4)+(-0.6,0)$)
	to[out=90,in=180] ($(v4)+(1,1)$)
	to[out=0,in=90] ($(v4)+(2.5,0)$)
	to[out=270,in=0] ($(v4)+(1,-1)$)
	to[out=180,in=270] ($(v4)+(-0.6,0)$);
	\filldraw[fill=blue!70] ($(v4)+(-0.65,0)$)
	to[out=90,in=180] ($(v4)+(1.5,1.25)$)
	to[out=0,in=90] ($(v4)+(3.5,0)$)
	to[out=270,in=0] ($(v4)+(1.5,-1.25)$)
	to[out=180,in=270] ($(v4)+(-0.65,0)$);
	\filldraw[fill=blue!70] ($(v4)+(-0.7,0)$)
	to[out=90,in=180] ($(v4)+(2,1.5)$)
	to[out=0,in=90] ($(v4)+(4.5,0)$)
	to[out=270,in=0] ($(v4)+(2,-1.5)$)
	to[out=180,in=270] ($(v4)+(-0.7,0)$);
	\end{scope}

	\foreach \v in {1,2,...,5} {
		\fill (v\v) circle (0.1);
	}
	
	\fill (v1) node [below] {$d$};
	\fill (v2) circle (0.1) node[inner sep = 6pt] [below] {$c$};
	\fill (v3) circle (0.1) node [below] {$b$};
	\fill (v4) circle (0.1) node[inner sep = 6pt] [below] {$a$};
	\fill (v5) circle (0.1) node[inner sep = 6pt] [below] {$e$};
	
	\node at ($(v4)+(-0.2,0.2)$) {$e_1$};
	\node at ($(v3)+(-0.2,0.2)$) {$e_2$};
	\node at ($(v2)+(-0.2,0.2)$) {$e_3$};
	\node at ($(v1)+(-0.2,0.2)$) {$e_4$};
	\node at ($(v5)+(-0.2,0.2)$) {$e_5$};
	\end{tikzpicture}	
\end{figure}

We have verified that different minimizers for $\gamma_2$
can lead to different values for $\gamma_3$.

\begin{table}[H]
	\centering
	\begin{tabular}{c|cc|cc}
		\toprule
		{$i$} & $\gamma_i$ & $f_i^\T$ & $\gamma_i'$ & $f_i'^\T$\\
		\hline
		1 & 0 & $(1,1,1,1,1)$ & 0 & $(1,1,1,1,1)$ \\
		2 & 5/6 & $(1,1,1,-4,-4)$ & 5/6 & $(2,2,-3,-3,-3)$ \\
		3 & 113/99 & $(2,2,-6,3,-6)$ & 181/165 & $(4,-5,-5,5,5)$ \\
		\bottomrule
	\end{tabular}
	
\end{table}
\end{proof}

\begin{example}
\label{ex:xg}
There exists a hypergraph such that $\xi_2 < \gamma_2$.
\end{example}

\begin{proof}
Consider the following hypergraph $H = (V,E)$ with
$V = \{a, b, c, d\}$ and $E = \{e_i : i \in [5]\}$.
For $i \neq 3$, edge $e_i$ has weight 1, and edge $e_3$ has weight 2.
Observe that every vertex has weight $3$.

\begin{figure}[H]
	\centering
	\begin{tikzpicture}
	\node (v1) at (0,2) {};
	\node (v2) at (2,2.8) {};
	\node (v3) at (2,1.2) {};
	\node (v4) at (4.5,2) {};
	
	\begin{scope}[fill opacity=0]
	\filldraw ($(v2)+(+0.5,+0.5)$) 
	to[out=-45,in=45] ($(v3)+(0.5,-0.5)$) 
	to[out=225,in=270] ($(v1)+(-1.2,0)$)
	to[out=90,in=135] ($(v2)+(+0.5,+0.5)$);
	\end{scope}
	
	\path[every node/.style={font=\sffamily\small}]
	(v1) edge node [above left] {$e_1$} (v2)
	(v2) edge node [above right] {$e_2$} (v4)
	(v3) edge node [below right] {$e_3$} (v4)
	(v1) edge [loop left] node  {$e_4$} (v1);
	
	\foreach \v in {1,2,3,4} {
		\fill (v\v) circle (0.1);
	}
	
	\fill (v1) circle (0.1) node [below right] {$a$};
	\fill (v2) circle (0.1) node [above left] {$b$};
	\fill (v3) circle (0.1) node [below right] {$c$};
	\fill (v4) circle (0.1) node [below right] {$d$};
	
	\node at (0.8,0.8) {$e_5$};
	\end{tikzpicture}
\end{figure}

We can verify 
that $\gamma_2 = \frac{2}{3}$
with the corresponding vector $f_2 := (1, 1, -1, -1)^\T$.

Recall that $\xi_2=\min_{g_1,g_2} \max_{i\in [2]} \D_w(g_i)$,
where the minimum is over all non-zero $g_1$ and $g_2$ such that
$g_1 \perp_w g_2$.
We can verify that
$\xi_2 \leq \frac{1}{3}$ by considering the
the two orthogonal vectors 
$g_1=(0,0,1,1)^{\T}$ and $g_2=(1,1,0,0)^{\T}$
in the weighted space.
\end{proof}

\begin{example}[Issues with Distributing Hyperedge Weight Evenly]
\label{eg:Louis}
Suppose $\overline{\Lo}_w$ is the operator on the weighted space
that is derived from the Figure~\ref{fig:hyper_diffusion}
by distributing the weight $w_e$ evenly among $S_e(f) \times I_e(f)$.
Then, there exists a hypergraph
such that any minimizer $f_2$ attaining
$\gamma_2 := \min_{\vec{0} \neq f \perp_w \vec{1}} \D_w(f)$
is not an eigenvector of $\overline{\Lo}_w$ or
even $\Pi^w_{\vec{1}^{\perp_w}} \overline{\Lo}_w$.
\end{example}

\begin{proof}
We use the same hypergraph as in Example~\ref{ex:xg}.
Recall that
$\gamma_2 = \frac{2}{3}$
with the corresponding vector $f_2 := (1, 1, -1, -1)^\T$.

We next show that $f_2$ is the only minimizer, up to scalar multiplication,
attaining $\gamma_2$.

According to the definition,
$$\gamma_2 = \min_{(a,b,c,d)\perp_w 1} 
\frac{(a-b)^2+(b-d)^2+2(c-d)^2+\max_{x,y\in e_5}(x-y)^2 }{3(a^2+b^2+c^2+d^2)}.$$

Without loss of generality, we only need to consider the following three cases:

\begin{enumerate}
\item[1.]$a\geq b \geq c$: Then, by substituting $a = -b-c-d$,
\begin{align*}
& \frac{(a-b)^2+(b-d)^2+2(c-d)^2+ (a-c)^2}{3(a^2+b^2+c^2+d^2)} \geq \frac{2}{3} \\
\iff & (c-d)^2+2(b+c)^2 \geq 0,
\end{align*}
and the equality is attained only when $a=b=-c=-d$.
\item[2.]$a\geq c\geq b$: Then, by substituting $d = -a-b-c$,
\begin{align*}
&  \frac{(a-b)^2+(b-d)^2+2(c-d)^2+ (a-b)^2}{3(a^2+b^2+c^2+d^2)} \geq \frac{2}{3} \\
\iff & (a+2b+c)^2+8c^2+4(a-c)(c-b) \geq 0,
\end{align*}
and the equality cannot be attained.
\item[3.]$b\geq a\geq c$: Then, by substituting $d = -a-b-c$,
\begin{align*}
&  \frac{(a-b)^2+(b-d)^2+2(c-d)^2+ (b-c)^2}{3(a^2+b^2+c^2+d^2)} \geq \frac{2}{3} \\
\iff & 4(b+c)^2+2(a+c)^2+2(b-a)(a-c) \geq 0,
\end{align*}
and the equality is attained only when $a=b=-c=-d$.
\end{enumerate}

Therefore, all minimizers attaining $\gamma_2$ must be in $\spn(f_2)$.

We next showt that $f_2$ is not an eigenvector of $\Pi^w_{\vec{1}^{\perp_w}} \overline{\Lo}_w$.
Observe that only the hyperedge $e_5 = \{a, b, c\}$
involves more than 2 vertices.  In this case,
the weight of $e_5$ is distributed evenly between
$\{a, c\}$ and $\{b,c\}$.  All other edges keep their weights.
Hence, the resulting weighted adjacency matrix $\overline{A}$ and
$\I - \Wm \overline{A}$ are as follows:

$\overline{A} = \begin{pmatrix}
\frac{3}{2} & 1 & \frac{1}{2} & 0 \\
1 & \frac{1}{2} & \frac{1}{2} & 1 \\
\frac{1}{2} & \frac{1}{2} & 0 & 2 \\
0 & 1 & 2 & 0
\end{pmatrix}$
and 
$\I - \Wm \overline{A} = \begin{pmatrix*}[r]
\frac{1}{2} & -\frac{1}{3} & -\frac{1}{6} & 0 \\
-\frac{1}{3} & \frac{5}{6} & -\frac{1}{6} & -\frac{1}{3} \\
-\frac{1}{6} & -\frac{1}{6} & 1 & -\frac{2}{3} \\
0 & -\frac{1}{3} & -\frac{2}{3} & 1
\end{pmatrix*}.$

Hence, $\overline{\Lo}_w f_2 = (\I - \Wm \overline{A}) f_2
= (\frac{1}{3}, 1, -\frac{2}{3}, - \frac{2}{3})^\T \notin \spn(f_2)$.
Moreover, $\Pi^w_{\vec{1}^{\perp_w}} \overline{\Lo}_w f_2 =
(\frac{1}{3}, 1, -\frac{2}{3}, -\frac{2}{3})^\T
\notin \spn(f_2)$.

In comparison, in our approach,
since $b$ is already connected to $d$ with edge $e_2$ of weight 1,
it follows that the weight of $e_5$ should all go to the pair $\{a,c\}$.  Hence, the resulting adjacency matrix is:

$A = \begin{pmatrix}
1 & 1 & 1 & 0 \\
1 & 1 & 0 & 1 \\
1 & 0 & 0 & 2 \\
0 & 1 & 2 & 0
\end{pmatrix}.$

One can verify that $\Lo_w f_2 = (\I - \Wm A)f_2 = \frac{2}{3} f_2$, as claimed in Theorem~\ref{th:hyper_lap}.
\end{proof}

\begin{example}[Third minimizer not eigenvector of Laplacian]
\label{eg:gamma3}
There exists a hypergraph such that for all procedural minimizers $\{(f_i, \gamma_i)\}_{i\in [3]}$
of $\D_w$, the vector~$f_3$ 
is not an eigenvector of $\Lo_w$ or even
$\Pi^w_{F_2^{\perp_w}} \Lo_w$,
where $\Lo_w$ is the operator on the weighted space
defined in Lemma~\ref{lemma:define_lap},
and $F_2 := \{f_1, f_2\}$.
\end{example}

\begin{proof}
Consider the following hypergraph with $4$ vertices and $2$ hyperedges each with unit weight.
\begin{figure}[H]
	\centering
	\begin{tikzpicture}
	\node (v1) at (0,0) {};
	\node (v2) at (1.5,0.866) {};
	\node (v3) at (1.5,-0.866) {};
	\node (v4) at (-2,0) {};
	
	\draw (1,0) circle (1.725);	
	
	\path[every node/.style={font=\sffamily\small}]
		(v1) edge node [above left] {$e_1$} (v4);
	
	\foreach \v in {1,2,3,4} {
		\fill (v\v) circle (0.1);
	}
	
	\fill (v1) circle (0.1) node [below right] {$b$};
	\fill (v2) circle (0.1) node [below right] {$c$};
	\fill (v3) circle (0.1) node [below right] {$d$};
	\fill (v4) circle (0.1) node [below right] {$a$};
	
	\node at (0.6,-1) {$e_2$};
	\end{tikzpicture}
\end{figure}	

We can verify the first 3 procedural minimizers.

\begin{table}[H]
	\centering
	\begin{tabular}{l|cc}
		\toprule
		{$i$} & $\gamma_i$ & $f_i^\T$ \\
		\hline
		1 & 0 & $(1,1,1,1)$  \\
		2 & $\frac{5-\sqrt{5}}{4}$ & $(\sqrt{5}-1,\frac{3-\sqrt{5}}{2},-1,-1)$  \\
		3 & $\frac{11+\sqrt{5}}{8}$ & $(\sqrt{5}-1, -1, 4-\sqrt{5}, -1)$  \\
		3$'$& $\frac{11+\sqrt{5}}{8}$ & $(\sqrt{5}-1, -1, -1, 4-\sqrt{5})$ \\
		\bottomrule
	\end{tabular}
\end{table}

We next show that $f_3$ and $f_{3'}$ are the only minimizers, up to scalar multiplication,
attaining $\gamma_3$.

According to the definition,
$$\gamma_2 = \min_{(a,b,c,d)\perp_w 1}\frac{(a-b)^2+\max_{x,y \in e_2}(x-y)^2}{a^2+2b^2+c^2+d^2}.$$

Observe that $c$ and $d$ are symmetric, we only need to consider the following two cases,
\begin{enumerate}
	\item[1.]$c\geq b \geq d$: Then, by substituting $a=-2b-c-d$,
	\begin{align*}
		& \frac{(a-b)^2+(c-d)^2}{a^2+2b^2+c^2+d^2} \geq 1 \\
		\iff & 5b^2 + 2(c-b)(b-d) \geq 0.
	\end{align*}
	\item[2.]$b \geq c \geq d$: Then, by substituting $a=-2b-c-d$,
	\begin{align*}
		& \frac{(a-b)^2+(b-d)^2}{a^2+2b^2+c^2+d^2} \geq \frac{5-\sqrt{5}}{4} \\
		\iff & (5+3\sqrt{5})b^2+(\sqrt{5}-3)c^2+(\sqrt{5}-1)d^2 + (2\sqrt{5}+2)bc+(2\sqrt{5}-2)bd+(\sqrt{5}-1)cd \geq 0.
	\end{align*}
	Let $f(b,c,d)$ denotes the function above.
	Since $f$ is a quadratic function of $c$ and the coefficient of $c^2$ is negative, the minimum must be achieved when $c=b \text{ or } d$.
	In other words,$f(b,c,d)\geq \min \{f(b,b,d), f(b,d,d)\}.$ Note that
	\begin{align*}
		& f(b,b,d) = (6\sqrt{5}+4)b^2+(3\sqrt{5}-3)bd+(\sqrt{5}-1)d^2 \geq 0 \\
		\text{and } & f(b,d,d) = (5+3\sqrt{5})b^2 +4\sqrt{5}bd+(3\sqrt{5}-5)d^2 \geq 0.
	\end{align*}
	and the equality holds only when $c=d=-\frac{3+\sqrt{5}}{2}b$.
\end{enumerate}
Therefore, $\gamma_2 = \frac{5-\sqrt{5}}{4}, f_2^{T}=(\sqrt{5}-1,\frac{3-\sqrt{5}}{2},-1,-1)$.

Now we are ready to calculate $\gamma_3.$
$$\gamma_3 = \min_{(a,b,c,d)\perp_w 1,f_2}\frac{(a-b)^2+\max_{x,y \in e_2}(x-y)^2}{a^2+2b^2+c^2+d^2}.$$
Note that,
$$(a,b,c,d)\perp_w \vec{1},f_2 \iff
\begin{cases}
a+2d+c+d = 0 \\
(\sqrt{5}-1)a+(3-\sqrt{5})b-c-d=0
\end{cases} \iff
\begin{cases}
a = (1-\sqrt{5})b \\
c+d = (\sqrt{5}-3)b
\end{cases}
$$
\begin{enumerate}
	\item[1.]$c\geq b \geq d$: which is equivalent to $c \geq -\frac{\sqrt{5}+3}{4}(c+d) \geq d,$ then
	\begin{align*}
		& \frac{(a-b)^2+(c-d)^2}{a^2+2b^2+c^2+d^2} \geq \frac{11+\sqrt{5}}{8} \\
		\iff & (c-\frac{\sqrt{5}+3}{4}(c+d))(d-\frac{\sqrt{5}+3}{4}(c+d)) \leq 0.
	\end{align*} 
	\item[2.]$b\geq c \geq d$: which is equivalent to $(4-\sqrt{5})b+d \geq 0 \geq (3-\sqrt{5})b+2d,$ then
	\begin{align*}
		& \frac{(a-b)^2+(b-d)^2}{a^2+2b^2+c^2+d^2} \geq \frac{11+\sqrt{5}}{8} \\
		\iff & ((4-\sqrt{5})b + d)((3+\sqrt{5})((3-\sqrt{5})b+2d)- (\sqrt{5}-1)((4-\sqrt{5})b+d)) \leq 0.
	\end{align*}
\end{enumerate}
Therefore, $\gamma_3=\frac{11+\sqrt{5}}{8}$, and the corresponding $f_3^{T}=(\sqrt{5}-1, -1, 4-\sqrt{5}, -1) \text{ or } (\sqrt{5}-1, -1, -1, 4-\sqrt{5}).$

We let $f= f_3 = (\sqrt{5}-1, -1, 4-\sqrt{5}, -1)^\T$,
and we apply the procedure described in Lemma~\ref{lemma:define_lap} to compute $\Lo_w f$.

Observe that $w_a = w_c = w_d = 1$ and $w_b=2$,
and $f(b) = f(d) < f(a) < f(c)$.

For edge $e_1$, $\Delta_1 = f(a) - f(b) = \sqrt{5}$ and $c_1 = w_1 \cdot \Delta_1 = \sqrt{5}$.
For edge $e_2$, $\Delta_2 = f(c) - f(b) = 5 - \sqrt{5}$,
and $c_2 = w_2 \cdot \Delta_2 = 5 - \sqrt{5}$.
Hence, $r_a = - \frac{c_1}{w_a}$, $r_c = - \frac{c_2}{w_c}$,
and $r_b = r_d = \frac{c_1+c_2}{w_b + w_d}$.

Therefore, $\Lo_w f = - r = 
(\sqrt{5}, - \frac{5}{3}, 5 - \sqrt{5},
- \frac{5}{3})^\T$.

Moreover, $\Pi^w_{F_2^{\perp_w}} \Lo_w f =
(-\frac{1}{2} + \frac{7}{6} \cdot \sqrt{5},
-\frac{4}{3} - \frac{1}{6} \cdot \sqrt{5},
\frac{59}{12} - \frac{11}{12} \cdot \sqrt{5},
-\frac{7}{4} + \frac{1}{12} \cdot \sqrt{5})^\T
\notin \spn(f)$.

The case when $f_3 = (\sqrt{5}-1, -1, -1, 4-\sqrt{5})^\T$
is similar, with the roles of $c$ and $d$ reversed.
\end{proof}

\end{document}